\theoremstyle{plain}
\newtheorem{theorem}{Theorem}[section]
\newtheorem{lemma}[theorem]{Lemma}
\newtheorem{proposition}[theorem]{Proposition}
\theoremstyle{definition}
\newtheorem{definition}[theorem]{Definition}
\theoremstyle{remark}
\newtheorem{remark}[theorem]{Remark}
\numberwithin{equation}{section}
\begin{document}
\title[ The Lagrangian and Hamiltonian Aspects of Electrodynamics]{The
Lagrangian and Hamiltonian Aspects of the Electrodynamic Vacuum-Field Theory
Models }
\author{Nikolai N. Bogolubov (Jr.)}
\address{V.A. Steklov Mathematical Institute of RAS, Moscow, Russian
Federation\\
and\\
The Abdus Salam International Centre of Theoretical Physics, Trieste, Italy}
\email{nikolai\_bogolubov@hotmail.com}
\author{Denis L. Blackmore}
\address{Department of Mathematical Sciences and Center for Applied
Mathematics and Statistics, New Jersey Institute of Technology, Newark, NJ
07102-1982 USA}
\email{deblac@m.njit.edu}
\author{Anatolij K. Prykarpatski}
\address{The Department of Applied Mathematics at AGH University of Science
and Technology, Krakow 30059, Poland }
\email{pryk.anat@ua.fm, prykanat@cybergal.com}
\subjclass{PACS: 11.10.Ef, 11.15.Kc, 11.10.-z; 11.15.-q, 11.10.Wx, 05.30.-d}
\keywords{the Amper law, Lorentz type force, Lorenz constraint, Maxwell
electromagnetic equations, Jefimenko equations, Lagrangian and Hamiltonian
formalisms, radiation theory, vFeynman's approach legacy, vacuum field
theory approach}
\date{present}

\begin{abstract}
We review the modern classical electrodynamics problems and present the
related main fundamental principles characterizing the electrodynamical
vacuum-field structure. We analyze the models of the vacuum field medium and
charged point particle dynamics using the developed field theory concepts.
There is also described a new approach to the classical Maxwell theory based
on the derived and newly interpreted basic equations making use of the
vacuum field theory approach. In particular, there are obtained the main
classical special relativity theory relations and their new explanations.
The well known Feynman approach to Maxwell electromagnetic equations and the
Lorentz type force derivation is also discussed in detail. A related charged
point particle dynamics and a hadronic string model analysis is also
presented. We also revisited and reanalyzed the classical Lorentz force
expression in arbitrary non-inertial reference frames and present some new
interpretations of the relations between special relativity theory and its
quantum mechanical aspects. Some results related with the charge particle
radiation problem and the magnetic potential topological aspects are
discussed. The electromagnetic Dirac-Fock-Podolsky problem of the Maxwell
and Yang-Mills type dynamical systems is analyzed within the classical
Dirac-Marsden-Weinstein symplectic reduction theory. The problem of
constructing Fock type representations and retrieving their
creation-annihilation operator structure is analyzed. An application of the
suitable current algebra representation to describing the non-relativistic
Aharonov-Bohm paradox is presented. The current algebra coherent functional
representations are constructed and their importance subject to the
linearization problem of nonlinear dynamical systems in Hilbert spaces is
demonstrated.
\end{abstract}

\maketitle

\section{Introduction}

Classical electrodynamics is nowadays considered \cite{LaLi,Paul,Jack,Rohr}
as the most fundamental physical theory, largely owing to the depth of its
theoretical foundations and wealth of experimental verifications. In the
work we describe a new approach to the classical Maxwell theory, based on a
vacuum field medium model, and reanalyze some of the modern classical
electrodynamics problems related with the description of a charged point
particle dynamics under external electromagnetic field. We remark here that
under \textit{"a charged point particle"} we as usually understand an
elementary material charged particle whose internal spatial structure is
assumed to be unimportant and is not taken into account, if the contrary is
not specified.

The important physical principles, characterizing the related
electrodynamical vacuum field structure and based on the least action
principle, we discuss subject to different charged point particle dynamics,
based on the fundamental least action principle. In particular, the main
classical relativistic relationships, characterizing the charge point
particle dynamics, we obtain by means of the \ least action principle within
the Feynman's approach to the Maxwell electromagnetic equations and the
Lorentz type force derivation. \ Moreover, for each least action principle
constructed in the work, we describe the corresponding Hamiltonian pictures
and present the related energy conservation laws. Making use of the
developed modified least action approach a classical hadronic string model
is analyzed in detail.

As the classical Lorentz force expression with respect to an arbitrary
inertial reference frame is related with many theoretical and experimental
controversies, \ such as the relativistic potential energy impact into the
charged point particle mass, the Aharonov-Bohm effect \cite{AhBo} and the
Abraham-Lorentz-Dirac radiation force \cite{Jack,Barr,LaLi} expression, the
analysis of its structure subject to \ the assumed vacuum field medium
structure is \ a very interesting and important problem, which was discussed
by many physicists including E. Fermi, G. Schott, R. Feynman, F. Dyson \cite%
{Ferm,Scho,Feyn-1,Dyso-1,Dyso-2,Glau} and many others. Trying the latter to
explain R. Feynman \cite{Feyn-1} \ in his "Lectures on Physics" wrote:%
\textit{\ }

\textit{"Now we would like to state the law that for quantum mechanics
replaces the law }$F=q\mathbf{v}\times \mathbf{B}.$\textit{\ It will be the
law that determines the behavior of quantum mechanical particles in an
electromagnetic field. Since what happens is determined by amplitudes, the
law must tell us how the magnetic influences affect the amplitudes; we are
no longer dealing with the acceleration of the particle. The law is the
following: the phase of the amplitude to arrive via any trajectory is
changed by the presence of a magnetic field by an amount equal to the
integral of the vector potential along the whole trajectory times the charge
of the particle over Planck's constant. That is, }

\begin{equation*}
\mathit{\ }\text{Magnetic change in phase }=-\frac{q}{\hbar }\int \mathbf{A}%
\cdot d\mathbf{s}\text{ \ \ \ \ \ \ \ \ \ \ \ \ \ \ \ \ \ \ }(15.29)
\end{equation*}

\textit{If there were no magnetic eld there would be a certain phase of
arrival. If there is a magnetic eld anywhere, the phase of the arriving wave
is increased by the integral in Eq. (15.29). Although we will not need to
use it for our present discussion, let us mention that the effect of an
electrostatic field is to produce a phase change given by the negative of
the time integral of the scalar potential :}%
\begin{equation*}
\text{Electric change in phase }=-\frac{q}{\hbar }\int \phi \cdot dt
\end{equation*}

\textit{These two expressions are correct not only for static fields, but
together give the correct result for any electromagnetic field, static or
dynamic. This is the law that replaces }$F=q(\mathbf{E}+\mathbf{v}\times 
\mathbf{B})."$\textit{\ \ }

To describe the essence of the electrodynamic problems related with the
description of a charged point particle dynamics under external
electromagnetic field, let us begin with analyzing the classical Lorentz
force expression 
\begin{equation}
dp/dt=F_{\xi }:=\xi E+\xi u\times B,  \label{L0.1}
\end{equation}%
where $\xi \in \mathbb{R}$ is a particle electric charge, $u\in T(\mathbb{R}%
^{3})$ is its velocity \cite{AbMa,BlPrSa} vector, expressed here in the
light speed $c$ units,%
\begin{equation}
E:=-\partial A/\partial t-\nabla \varphi  \label{L0.2}
\end{equation}%
is the corresponding external electric field and 
\begin{equation}
B:=\nabla \times A  \label{L0.3}
\end{equation}%
is the corresponding external magnetic field, acting on the charged
particle, expressed in terms of suitable vector $A:M^{4}\rightarrow \mathbb{E%
}^{3}$ and scalar $\varphi :M^{4}\rightarrow \mathbb{R}$ potentials. Here
\textquotedblright $\nabla $\textquotedblright\ \ is the standard gradient
operator with respect to the spatial variable $r\in \mathbb{E}^{3},$ $%
"\times "$is the usual vector product in three-dimensional Euclidean vector
space $\mathbb{E}^{3}:=(\mathbb{R}^{3},<\cdot ,\cdot >),$ which is naturally
endowed with the classical scalar product $\ <\cdot ,\cdot >.$ These
potentials are defined on the Minkowski space $M^{4}\simeq \mathbb{R}\times 
\mathbb{E}^{3}$, which models a chosen laboratory reference frame $\mathcal{K%
}.$ \ Now, it is a well known fact \cite{LaLi,Paul,Feyn-1,Thir} that the
force expression (\ref{L0.1}) does not take into account the dual influence
of the charged particle on the electromagnetic field and should be
considered valid only if the particle charge $\xi \rightarrow 0.$ This also
means that expression (\ref{L0.1}) cannot be used for studying the
interaction between two different moving charged point particles, as was
pedagogically demonstrated in classical manuals \cite{LaLi,Feyn-1}. As the
classical Lorentz force \ expression (\ref{L0.1}) is a natural consequence
of the interaction of a charged point particle with an ambient
electromagnetic field, its corresponding derivation based on the general
principles of dynamics, was deeply analyzed by R. Feynman and F. Dyson \cite%
{Feyn-1,Dyso-1,Dyso-2}.

Taking this into account, it is natural to reanalyze this problem from the
classical, taking only into account the Maxwell-Faraday wave theory aspect,
specifying the corresponding vacuum field medium. \ Other questionable
inferences from the classical electrodynamics theory, which strongly
motivated the analysis in this work, are related both with an alternative
interpretation of the well-known \textit{Lorenz condition}, imposed on the
four-vector of electromagnetic observable potentials $(\varphi
,A):M^{4}\rightarrow T^{\ast }(M^{4})$ and the classical Lagrangian
formulation \cite{LaLi} of charged particle dynamics under external
electromagnetic field. The Lagrangian approach latter is strongly dependent
on an important Einsteinian notion of the rest reference frame $\mathcal{K}%
_{\tau }$ and the related least action principle, so before explaining it in
more detail, we first to analyze the classical Maxwell electromagnetic
theory from a strictly dynamical point of view.

\section{\label{Sec_1}Classical relativistic electrodynamics models
revisiting: Lagrangian and Hamiltonian analysis}

\subsection{\label{Subsec_1.1}Introductory setting}

Let us consider with respect to a laboratory reference frame $\mathcal{K}%
_{t} $ \ the additional \textit{Lorenz condition} 
\begin{equation}
\partial \varphi /\partial t+<\nabla ,A>=0,  \label{L0.4}
\end{equation}%
\ \textit{a priori} assumed the Lorentz invariant wave scalar field equation 
\begin{equation}
\partial ^{2}\varphi /\partial t^{2}-\nabla ^{2}\varphi =\rho \text{ \ \ }
\label{L0.5}
\end{equation}%
and the charge continuity equation 
\begin{equation}
\partial \rho /\partial t+<\nabla ,j>=0,  \label{L0.6}
\end{equation}%
where $\rho :M^{4}\rightarrow \mathbb{R}$ and $j:M^{4}\rightarrow \mathbb{E}%
^{3}$ are, respectively, the charge and current densities of the ambient
matter. Then one can derive \cite{BoPrTa,BoPrTaPr-Lore} that the Lorentz
invariant wave equation 
\begin{equation}
\partial ^{2}A/\partial t^{2}-\nabla ^{2}A=j\   \label{L0.5a}
\end{equation}%
and the classical electromagnetic Maxwell field equations \cite%
{Jack,LaLi,Feyn-1,Paul,Thir}%
\begin{align}
\nabla \times E+\partial B/\partial t& =0,\text{ \ \ \ }<\nabla ,E>=\rho ,
\label{L0.7} \\
\nabla \times B-\partial E/\partial t& =j,\text{ \ \ \ \ \ \ }<\nabla ,B>=0,
\notag
\end{align}%
hold for all $(t,r)\in M^{4}$ with respect to the chosen laboratory
reference frame $\mathcal{K}_{t}.$

Notice here that, inversely, Maxwell's equations (\ref{L0.7}) do not
directly reduce, via definitions (\ref{L0.2}) and (\ref{L0.3}), to the wave
field equations (\ref{L0.5}) and \ (\ref{L0.5a}) without the Lorenz
condition (\ref{L0.4}). This fact \ is very important and suggests that when
it comes to a choice of governing equations, it may be reasonable to replace
Maxwell's equations (\ref{L0.7}) with the Lorenz condition (\ref{L0.4}) and
the charge continuity equation (\ref{L0.6}). To make the equivalence
statement, claimed above, more transparent we formulate it as the following
proposition.

\begin{proposition}
The Lorentz invariant wave equation (\ref{L0.5}) together with the Lorenz
condition (\ref{L0.4}) for the observable potentials $(\varphi
,A):M^{4}\rightarrow T^{\ast }(M^{4})$ and the charge continuity
relationship (\ref{L0.6}) are completely equivalent to the Maxwell field
equations (\ref{L0.7}).
\end{proposition}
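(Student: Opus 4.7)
The plan is to establish the proposition by checking the two implications independently. For $(\ref{L0.4})+(\ref{L0.5})+(\ref{L0.6})\Rightarrow(\ref{L0.7})$ the potential representations (\ref{L0.2}), (\ref{L0.3}) are substituted directly into each Maxwell equation; for the reverse implication one invokes the Poincar\'e lemma to manufacture potentials from the two homogeneous Maxwell equations and then uses the residual gauge freedom to impose the Lorenz condition.

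In the forward direction the two homogeneous Maxwell equations $<\nabla,B>=0$ and $\nabla\times E+\partial B/\partial t=0$ reduce to automatic vector-calculus identities ($\nabla\cdot(\nabla\times A)\equiv 0$ and $\nabla\times\nabla\varphi\equiv 0$). Gauss's law follows from taking the divergence of (\ref{L0.2}), using the Lorenz condition (\ref{L0.4}) to convert $\partial<\nabla,A>/\partial t$ into $-\partial^{2}\varphi/\partial t^{2}$, and then invoking the scalar wave equation (\ref{L0.5}). For Amp\`ere's law I would rely on the identity $\nabla\times(\nabla\times A)=\nabla<\nabla,A>-\nabla^{2}A$ together with the Lorenz condition to reduce matters to the vector wave equation
\[
\partial^{2}A/\partial t^{2}-\nabla^{2}A=J,
\]
which is precisely (\ref{L0.5a}).

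The main obstacle is therefore extracting (\ref{L0.5a}) from the three stated hypotheses. I would set $\Psi:=\partial^{2}A/\partial t^{2}-\nabla^{2}A-J$, use the Lorenz condition to replace $<\nabla,A>$ by $-\partial\varphi/\partial t$ throughout, and compute
\[
<\nabla,\Psi>=-\partial(\partial^{2}\varphi/\partial t^{2}-\nabla^{2}\varphi)/\partial t-<\nabla,J>=-\partial\rho/\partial t-<\nabla,J>=0
\]
by the continuity equation (\ref{L0.6}). This only establishes that $\Psi$ is divergence-free, not that $\Psi\equiv 0$, and upgrading the conclusion requires supplementary input — suitable decay of $A$ at spatial infinity together with a Helmholtz decomposition, or the tacit adoption of (\ref{L0.5a}) as a standing consequence of the potential equations on the same footing as (\ref{L0.5}). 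I expect this to be the delicate point the authors justify by reference to \cite{BPT,BPT0}.

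For the converse, I would proceed along classical lines: $<\nabla,B>=0$ furnishes $A$ with $B=\nabla\times A$ by the Poincar\'e lemma; the Faraday equation then rewrites as $\nabla\times(E+\partial A/\partial t)=0$, giving a scalar $\varphi$ satisfying (\ref{L0.2}); the residual gauge freedom $A\mapsto A+\nabla\chi$, $\varphi\mapsto\varphi-\partial\chi/\partial t$ lets one solve a scalar wave equation for $\chi$ so as to enforce the Lorenz condition (\ref{L0.4}); Gauss's law then becomes the scalar wave equation (\ref{L0.5}); and the divergence of Amp\`ere's law combined with Gauss's law immediately yields the continuity relation (\ref{L0.6}), closing the equivalence.
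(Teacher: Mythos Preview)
Your forward-direction argument matches the paper's proof almost step for step: the two homogeneous Maxwell equations from vector identities, Gauss's law by combining the Lorenz condition with the scalar wave equation, and Amp\`ere's law via the curl--curl identity reduced to (\ref{L0.5a}). You also correctly isolate the only nontrivial point, namely that the hypotheses yield $\langle\nabla,\Psi\rangle=0$ for $\Psi:=\partial^{2}A/\partial t^{2}-\nabla^{2}A-J$ rather than $\Psi=0$ outright.

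The paper's resolution of this point is different from either of your two suggestions. Rather than invoking decay at infinity or taking (\ref{L0.5a}) as an additional postulate, the paper observes that the continuity equation (\ref{L0.6}) determines $J$ only up to an additive term of the form $\nabla\times S$; since $\Psi$ is divergence-free, one may absorb it into $J$ by this freedom and thereby \emph{define} $J$ so that (\ref{L0.5a}) holds. This is arguably a sleight of hand (it amounts to fixing the curl part of $J$ by fiat), but it is the mechanism the authors use, and it is worth noting that your Helmholtz-plus-decay route would give a cleaner, gauge-independent justification.

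For the converse direction you supply a standard and correct gauge-fixing argument (Poincar\'e lemma for $A$, then for $\varphi$, then solve a wave equation for the gauge parameter to enforce Lorenz). The paper's proof does not carry this direction out at all; it only argues the forward implication and remarks in the surrounding text that Maxwell's equations do not reduce to the wave equations without the Lorenz condition. Your treatment of the equivalence is therefore more complete than the paper's.
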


\begin{proof}
Substituting (\ref{L0.4}), into (\ref{L0.5}), one easily obtains 
\begin{equation}
\partial ^{2}\varphi /\partial t^{2}=-<\nabla ,\partial A/\partial
t>=<\nabla ,\nabla \varphi >+\rho ,  \label{L0.8}
\end{equation}%
which implies the gradient expression 
\begin{equation}
<\nabla ,-\partial A/\partial t-\nabla \varphi >=\rho .  \label{L0.9}
\end{equation}%
Taking into account the electric field definition (\ref{L0.2}), expression (%
\ref{L0.9}) reduces to 
\begin{equation}
<\nabla ,E>=\rho ,  \label{L0.10}
\end{equation}%
which is the second of the first pair of Maxwell's equations (\ref{L0.7}).

Now upon applying $\nabla \times $ to definition (\ref{L0.2}), we find,
owing to definition (\ref{L0.3}), that 
\begin{equation}
\nabla \times E+\partial B/\partial t=0,  \label{L0.11}
\end{equation}%
which is the first pair of the Maxwell equations (\ref{L0.7}). Having
differentiated with respect to the temporal variable $t\in \mathbb{R}$ the
equation \ (\ref{L0.5}) and taken into account the charge continuity
equation \ (\ref{L0.6}), one finds that 
\begin{equation}
<\nabla ,\partial ^{2}A/\partial t^{2}-\nabla ^{2}A-j>=0.  \label{L0.11a}
\end{equation}%
The latter is equivalent to the wave equation \ (\ref{L0.5a}) if to observe
that the current vector $j:M^{4}\rightarrow \mathbb{E}^{3}$ is defined by
means of the charge continuity equation \ (\ref{L0.6}) up to a vector
function $\nabla \times S:M^{4}\rightarrow \mathbb{E}^{3}.\ $Now applying
operation $\nabla \times $ to the definition (\ref{L0.3}), owing to the wave
equation \ (\ref{L0.5a}) one obtains%
\begin{align}
\nabla \times B& =\nabla \times (\nabla \times A)=\nabla <\nabla ,A>-\nabla
^{2}A=  \notag \\
& =-\nabla (\partial \varphi /\partial t)-\partial ^{2}A/\partial
t^{2}+(\partial ^{2}A/\partial t^{2}-\nabla ^{2}A)=  \notag \\
& =\frac{\partial }{\partial t}(-\nabla \varphi -\partial A/\partial
t)+j=\partial E/\partial t+j,  \label{L0.12}
\end{align}%
which leads directly to 
\begin{equation*}
\nabla \times B=\partial E/\partial t+j,
\end{equation*}%
which is the first of the second pair of the Maxwell equations (\ref{L0.7}).
The final \textit{"no magnetic charge}" equation 
\begin{equation*}
<\nabla ,B>=<\nabla ,\nabla \times A>=0,
\end{equation*}%
in (\ref{L0.7}) \ follows directly from the elementary identity $<\nabla
,\nabla \times >=0,$ thereby completing the proof.
\end{proof}

This proposition allows us to consider the observable potential functions $%
(\varphi ,A):M^{4}\rightarrow T^{\ast }(M^{4})$ as fundamental ingredients
of the ambient \textit{vacuum field medium}, by means of which we can try to
describe the related physical behavior of charged point particles imbedded
in space-time $M^{4}.$ The following observation provides strong support for
this approach:

\textbf{Observation.} \textit{The Lorenz condition (\ref{L0.4}) actually
means that the scalar potential field }$\varphi :$\textit{\ }$%
M^{4}\rightarrow \mathbb{R}$\textit{\ continuity relationship, whose origin
lies in some new field conservation law, characterizes the deep intrinsic
structure of the vacuum field medium.}

To make this observation more transparent and precise, let us recall the
definition \cite{LaLi,Paul,Feyn-1,Thir} of the electric current $J:$\textit{%
\ }$M^{4}\rightarrow \mathbb{E}^{3}$ in the dynamical form 
\begin{equation}
J:=\rho u,  \label{L0.13}
\end{equation}%
where the vector $u\in T(\mathbb{R}^{3})$ is the corresponding charge
velocity. Thus, the following continuity relationship%
\begin{equation}
\partial \rho /\partial t+<\nabla ,\rho u>=0  \label{L0.14}
\end{equation}%
holds, which can easily be rewritten \cite{MaCh} \ as the integral
conservation law 
\begin{equation}
\frac{d}{dt}\int_{\Omega _{t}}\rho (t,r)d^{3}r=0  \label{L0.15}
\end{equation}%
for the charge inside of any bounded domain $\Omega _{t}\subset \mathbb{E}%
^{3},$ moving in the space-time $M^{4}$ with respect to the natural
evolution equation 
\begin{equation}
dr\ /dt:=u.  \label{L0.16}
\end{equation}%
Following the above reasoning, we obtain the following result.

\begin{proposition}
The Lorenz condition (\ref{L0.4}) is equivalent to the integral conservation
law%
\begin{equation}
\frac{d}{dt}\int_{\Omega _{t}}\varphi (t,r)d^{3}r=0,  \label{L0.17}
\end{equation}%
where $\Omega _{t}\subset \mathbb{E}^{3}$ \ is any bounded domain, moving
with respect to the charged point particle $\xi $ evolution equation 
\begin{equation}
dr/dt=u(t,r),  \label{L0.18}
\end{equation}%
which represents the velocity vector of related local potential field
changes propagating in the Minkowski space-time $M^{4}.$ Moreover, for a
particle$\ $with the distributed $\ $charge density $\rho :M^{4}\rightarrow 
\mathbb{R},$ the following Umov type local energy conservation relationship 
\begin{equation}
\frac{d}{dt}\int_{\Omega _{t}}\frac{\rho (t,r)\varphi (t,r)}{%
(1-|u(t,r)|^{2})^{1/2}}d^{3}r=0\   \label{L0.18a}
\end{equation}%
holds for any $t\in \mathbb{R}.$
\end{proposition}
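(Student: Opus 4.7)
The plan is to translate the Lorenz condition (\ref{L0.4}) into a scalar continuity equation and then apply the Reynolds transport theorem on the moving domain $\Omega_{t}$, in direct analogy with the passage from the charge continuity equation (\ref{L0.14}) to the charge conservation law (\ref{L0.15}).

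First I would make explicit the vacuum-field identification $A=\varphi u$, paralleling the kinematic definition $J=\rho u$ of (\ref{L0.13}): since $u(t,r)$ is by hypothesis the local propagation velocity of the scalar potential field, the vector potential $A$ must represent the associated flux. Under this identification the Lorenz condition reads $\partial \varphi /\partial t+<\nabla ,\varphi u>=0$, which is formally identical to (\ref{L0.14}) with $\rho $ replaced by $\varphi $. Invoking the Reynolds transport theorem on the comoving domain $\Omega_{t}$ (whose boundary evolves via (\ref{L0.18})) gives
\[
\frac{d}{dt}\int_{\Omega _{t}}\varphi \, d^{3}r=\int_{\Omega _{t}}\Bigl(\frac{\partial \varphi }{\partial t}+<\nabla ,\varphi u>\Bigr) d^{3}r,
\]
so the continuity form of Lorenz yields (\ref{L0.17}). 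Conversely, if the left-hand side vanishes for every bounded $\Omega _{t}$ transported by the flow, contracting $\Omega _{t}$ to an arbitrary point forces the integrand to vanish pointwise and recovers (\ref{L0.4}).

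For the Umov-type energy relation (\ref{L0.18a}), set $\gamma :=(1-|u|^{2})^{-1/2}$ and apply Reynolds once more to the density $\gamma \rho \varphi $:
\[
\frac{d}{dt}\int_{\Omega _{t}}\gamma \rho \varphi \, d^{3}r=\int_{\Omega _{t}}\Bigl[\frac{\partial (\gamma \rho \varphi )}{\partial t}+<\nabla ,\gamma \rho \varphi u>\Bigr] d^{3}r.
\]
Introducing the material derivative $D/Dt:=\partial /\partial t+<u,\nabla >$, the integrand factors as $\gamma \rho \varphi \bigl[D\ln (\gamma \rho \varphi )/Dt+<\nabla ,u>\bigr]$. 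The charge continuity equation (\ref{L0.14}) gives $D\ln \rho /Dt=-<\nabla ,u>$ and the continuity form of the Lorenz condition gives $D\ln \varphi /Dt=-<\nabla ,u>$, so the bracket collapses to $D\ln \gamma /Dt-<\nabla ,u>$.

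The main obstacle is the residual kinematic identity $D\ln \gamma /Dt=<\nabla ,u>$, equivalently $\gamma ^{2}<u,Du/Dt>=<\nabla ,u>$. I expect this to follow from the relativistic equations of motion for the charged point particle obtained from the vacuum-field least action principle treated in the subsequent subsections, where the factor $(1-|u|^{2})^{-1/2}$ arises naturally via the proper-time element $d\tau =\sqrt{1-|u|^{2}}\, dt$. Once this identity is secured, the Reynolds integrand vanishes pointwise and (\ref{L0.18a}) holds for every $t\in \mathbb{R}$.
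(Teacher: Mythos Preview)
Your treatment of the first assertion, (\ref{L0.17}), is correct and is exactly how the paper proceeds: the identification $A=\varphi u$ turns the Lorenz condition into a continuity equation for $\varphi$, and Reynolds transport on the comoving domain $\Omega_{t}$ gives the integral conservation law (and its converse).

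For (\ref{L0.18a}) there is a real gap. You correctly reduce the question to the pointwise identity $D\ln\gamma/Dt=\langle\nabla,u\rangle$, but your expectation that this will come out of the charged-particle equations of motion is misplaced: no dynamical law has been invoked so far, and (\ref{L0.18a}) is claimed as a consequence of the two continuity equations alone. The paper avoids your obstacle by \emph{not} starting from $\gamma$. Instead it works with the flow Jacobian $J(t,r):=|\partial r(t;r_{0})/\partial r_{0}|$, which \emph{automatically} satisfies the Liouville relation $dJ/dt=\langle\nabla,u\rangle\,J$ along trajectories of (\ref{L0.18}). Combining this with $D\ln(\rho\varphi)/Dt=-2\langle\nabla,u\rangle$ (your own computation) gives $d(\rho\varphi J^{2})/dt=0$ pointwise in $\Omega_{t_{0}}$, hence
\[
\frac{d}{dt}\int_{\Omega_{t}}\rho\varphi\,J\,d^{3}r
=\frac{d}{dt}\int_{\Omega_{t_{0}}}\rho\varphi\,J^{2}\,d^{3}r_{0}=0.
\]
Only at this stage does the paper identify the kinematic Jacobian with the Lorentz factor, $J=d^{3}r/d^{3}r_{0}=dt_{0}/dt=(1-|u|^{2})^{-1/2}$, via the Lorentz invariance of the infinitesimal four-volume $d^{3}r\wedge dt=d^{3}r_{0}\wedge dt_{0}$ between the laboratory frame and the frame comoving with the infinitesimal charge $d\xi$. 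In other words, the identity you are missing is not dynamical but is precisely the statement $J=\gamma$, and the paper extracts it from relativistic kinematics, not from any least-action principle developed later.
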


\begin{proof}
Consider first the corresponding solutions to potential field equations (\ref%
{L0.5}), taking into account condition (\ref{L0.13}). Owing to the standard
results from \cite{Feyn-1,LaLi}, one finds that 
\begin{equation}
A=\varphi u,  \label{L0.19}
\end{equation}%
which gives rise to the following form of the Lorenz condition (\ref{L0.4}):%
\begin{equation}
\partial \varphi /\partial t+<\nabla ,\varphi u>=0,  \label{L0.20}
\end{equation}%
This obviously can be rewritten \cite{MaCh} as the integral conservation law
(\ref{L0.17}), so the expression \ (\ref{L0.17}) is stated.

To state the local energy conservation relationship \ (\ref{L0.18a}) it is
necessary to combine the conditions \ \ (\ref{L0.14}), (\ref{L0.20}) and
find that 
\begin{equation}
\partial (\rho \varphi )/\partial t+<u,\nabla (\rho \varphi )>+2\rho \varphi
<\nabla ,u>=0.  \label{L0.20a}
\end{equation}%
Taking into account that the infinitesimal volume transformation $%
d^{3}r=\chi (t,r)d^{3}r_{0},$ where the Jacobian $\ \chi (t,r):=|\partial
r(t;r_{0})/\partial r_{0}|$ of the corresponding transformation $r:\mathbb{\ 
}\Omega _{t_{0}}\rightarrow $ $\Omega _{t},$\ induced by the Cauchy problem
for the differential relationship \ (\ref{L0.18}) for any $t\in \mathbb{R},$
satisfies the evolution equation 
\begin{equation}
d\chi /dt=<\nabla ,u>\chi ,  \label{L0.20b}
\end{equation}%
easily following from \ (\ref{L0.18}), and applying to the equality \ (\ref%
{L0.20a}) the operator $\int_{\Omega _{t_{0}}}(...)\chi ^{2}d^{3}r_{0},$ one
obtains that 
\begin{equation}
\begin{array}{c}
0=\int_{\Omega _{t_{0}}}\frac{d}{dt}(\rho \varphi \chi ^{2})d^{3}r_{0}=\frac{%
d}{dt}\int_{\Omega _{t_{0}}}(\rho \varphi \chi )Jd^{3}r_{0}= \\ 
=\frac{d}{dt}\int_{\Omega _{t}}(\rho \varphi \chi ^{\ })d^{3}r\ :=\ \frac{d}{%
dt}\mathcal{E}(\xi ;\Omega _{t}).%
\end{array}
\label{L0.20c}
\end{equation}%
Here we denoted \ the conserved charge $\xi :=\int_{\Omega _{t}}\rho
(t,r)d^{3}r$ and the local energy conservation quantity $\ \mathcal{E}(\xi
;\Omega _{t}):$ $=\int_{\Omega _{t}}(\rho \varphi \chi ^{\ })d^{3}r.$ The
latter quantity can be simplified, owing to the infinitesimal Lorentz
invariance four-volume measure relationship\ $d^{3}r(t,r_{0})\wedge
dt=d^{3}r_{0}\wedge dt_{0},$ where variables $(t,r)\in \mathbb{R}_{t}\times
\Omega _{t}\subset M^{4}$ are, within the present context, taken with
respect to the moving reference frame $\mathcal{K}_{t},$ related to the
infinitesimal charge quantity $d\xi (t,r):=\rho (t,r)d^{3}r,$\ and variables 
$(t_{0},r_{0})\in \mathbb{R}_{t_{0}}\times \Omega _{t_{0}}\subset M^{4}$ are
taken with respect to the laboratory reference frame $\ \mathcal{K}_{t_{0}},$
related to the infinitesimal charge quantity $d\xi (t_{0},r_{0})=\rho
(t_{0},r_{0})d^{3}r_{0},$ satisfying the charge conservation invariance $%
d\xi (t,r)=d\xi (t_{0},r_{0}).$ The mentioned above infinitesimal Lorentz
invariance relationships make it possible to calculate the local energy
conservation quantity $\mathcal{E}(\xi ;\Omega _{0})$ as $\ $ 
\begin{eqnarray}
\mathcal{E}(\xi ;\Omega _{0}) &=&\int_{\Omega _{t}}(\rho \varphi \chi ^{\
})d^{3}r=\int_{\Omega _{t}}(\rho \varphi \frac{d^{3}r}{d^{3}r_{0}}^{\
})d^{3}r=  \label{L0.20d} \\
&=&\int_{\Omega _{t}}(\rho \varphi \frac{d^{3}r\wedge dt}{d^{3}r_{0}\wedge dt%
}^{\ })d^{3}r=\int_{\Omega _{t}}(\rho \varphi \frac{d^{3}r_{0}\wedge dt_{0}}{%
d^{3}r_{0}\wedge dt}^{\ })d^{3}r=  \notag \\
&=&\int_{\Omega _{t}}(\rho \varphi \frac{\ dt_{0}}{\ dt})^{\
}d^{3}r=\int_{\Omega _{t}}\frac{\rho \varphi \ d^{3}r\ }{\ (1-|u|^{2})^{1/2}}%
,  \notag
\end{eqnarray}%
where we took into account that $dt=dt_{0}(1-|u|^{2})^{1/2}.$\ Thus, owing
to \ (\ref{L0.20c}) and \ (\ref{L0.20d}) the local energy conservation
relationship \ (\ref{L0.18a}) is satisfied, proving the proposition.
\end{proof}

The constructed above local energy conservation quantity \ (\ref{L0.20d})
can be rewritten as 
\begin{equation}
\mathcal{E}(\xi ;\Omega _{t})=\int_{\Omega _{t}}\frac{d\xi (t,r)\varphi
(t,r)\ \ \ }{\ (1-|u|^{2})^{1/2}}:=\int_{\Omega _{t}}d\mathcal{E}(t,r)
\label{L0.20e}
\end{equation}%
where $d\mathcal{E}(t,r)=d\xi (t,r)\varphi (t,r)(1-|u|^{2})^{-1/2}$ is the
distributed in vacuum electromagnetic field energy density, related with the
electric charge $d\xi (t,r),$ located at point $(t,r)\in M^{4}.$

The above proposition suggests a physically motivated interpretation of
electrodynamic phenomena in terms of what should naturally be called \textit{%
the vacuum potential field}, which determines the observable interactions
between charged point particles. More precisely, we can \textit{a priori }\
endow the ambient vacuum medium with a scalar potential energy field density
function $W:=\xi \varphi :M^{4}\rightarrow \mathbb{R},$ where $\xi \in 
\mathbb{R}_{+}$ is the value of an elementary charge quantity, and
satisfying the governing \textit{vacuum field equations} 
\begin{equation}
\begin{array}{c}
\partial ^{2}W/\partial t^{2}-\nabla ^{2}W=\rho \xi ,\text{ \ \ }\partial
W/\partial t+<\nabla ,\hat{A}>=0, \\ 
\partial ^{2}\hat{A}/\partial t^{2}-\nabla ^{2}\hat{A}=\xi \rho v,\text{ \ \ 
}\hat{A}=Wv,%
\end{array}
\label{L0.21}
\end{equation}%
taking into account the external charged sources, which possess a virtual
capability for disturbing the vacuum field medium. Moreover, \ this vacuum
potential field function $W:M^{4}\rightarrow \mathbb{R}$ allows the natural
potential energy interpretation, whose origin should be assigned not only to
the charged interacting medium, but also to any other medium possessing
interaction capabilities, including for instance, material particles,
interacting through the gravity.

The latter leads naturally to the next important step, consisting in
deriving the equation governing the corresponding potential field $\bar{W}%
:M^{4}\rightarrow \mathbb{R},$ assigned to a charged point particle moving
in the vacuum field medium with velocity $u\in \ \ T(\mathbb{R}^{3})$ and
located at point $r(t)=R(t)\in \mathbb{E}^{3}$ at time $t\in \mathbb{R}.$ As
can be readily shown \cite{BoPrTa,BoPrTaPr-Lore,Repc}, the corresponding
evolution equation governing the related potential field function $\bar{W}%
:M^{4}\rightarrow \mathbb{R},$ assigned to a \ moving in the space $\mathbb{E%
}^{3}$ charged particle $\xi $ under the sationarily distributed field
sources, \ has the form 
\begin{equation}
\frac{d}{dt}(-\bar{W}u)=-\nabla \bar{W},  \label{L0.22}
\end{equation}%
where $\bar{W}:=W(t,r)|_{r\rightarrow R(t)},$ $u(t):=dR(t)/dt$ at point
particle location $(t,R(t))\in M^{4}.$

Similarly, if there are two interacting charged point particles, located at
points $r(t)=R(t)$ and $r_{f}(t)=R_{f}(t)\in \mathbb{E}^{3}$ at time $t\in 
\mathbb{R}$ and moving, respectively, with velocities $u:=dR(t)/dt$ \ and \ $%
u_{f}:=dR_{f}(t)/dt,$ the corresponding potential field function $\bar{W}%
^{\prime }:M^{4}\rightarrow \mathbb{R},$ considered with respect to the
reference frame $\mathcal{K}^{\prime }$ specified by Euclidean coordinates $%
(t^{\prime },r-r_{f})\in \mathbb{E}^{4}$ and moving with the velocity $%
u_{f}\in \ T(\mathbb{R}^{3})$ \ subject to the laboratory reference frame $%
\mathcal{K}_{t},$ should satisfy \cite{BoPrTa,BoPrTa-1} with respect to the
reference frame $\mathcal{K}^{\prime }$ the dynamical equality 
\begin{equation}
\frac{d}{dt^{\prime }}[-\bar{W}^{\prime }(u^{\prime }-u_{f}^{\prime
})]=-\nabla \bar{W}^{\prime },  \label{L0.23}
\end{equation}%
where, by definition, we have denoted the velocity vectors $u^{\prime
}:=dr/dt^{\prime },u_{f}^{\prime }:=dr_{f}/dt^{\prime }\in T(\mathbb{R}%
^{3}). $ The latter cames with respect to the laboratory reference frame $%
\mathcal{K}_{t}$ about the dynamical equality 
\begin{equation}
\frac{d}{dt}[-\bar{W}\ (u\ -u_{f})]=-\nabla \bar{W}(1-|u_{f}|^{2}).
\label{L0.24}
\end{equation}%
\ The dynamical potential field equations (\ref{L0.22}) and (\ref{L0.23})
appear to have important properties and can be used as means for
representing classical electrodynamic phenomena. Consequently, we shall
proceed to investigate their physical properties in more detail and compare
them with classical results for Lorentz type forces arising in the
electrodynamics of moving charged point particles in an external
electromagnetic field.

In this investigation, we were in part inspired by works \cite%
{DeJa,DuJa,Puth} and studies \cite{Wilc-1,Wilc-2} devoted to solving the
classical problem of reconciling gravitational and electrodynamic charges
within the Mach-Einstein ether paradigm. First, we will revisit the
classical Mach-Einstein relativistic electrodynamics of a moving charged
point particle, and second, we study the resulting electrodynamic theories
associated with our vacuum potential field dynamical equations (\ref{L0.22})
and (\ref{L0.23}), making use of the fundamental Lagrangian and Hamiltonian
formalisms which were specially devised in \cite{BoPr-1,BoPrTaPr-Lore}.

\subsection{\label{Subsec_1.2}Classical relativistic electrodynamics
revisited}

The classical relativistic electrodynamics of a freely moving charged point
particle in the Minkowski space-time $M^{4}:=\mathbb{R}\times \mathbb{E}^{3}$
is based \ on the Lagrangian approach \cite{LaLi,Feyn-1,Paul,Thir} with
Lagrangian function 
\begin{equation}
\mathcal{L}:=-m_{0}(1-|u|^{2})^{1/2},  \label{L1.1}
\end{equation}%
where $m_{0}\in \mathbb{R}_{+}$ is the so-called particle rest mass and $%
u\in T(\mathbb{R}^{3})$ is its spatial velocity in the Euclidean space $%
\mathbb{E}^{3},$ expressed here and in the sequel in light speed units (with
light speed $c$). The least action principle in the form 
\begin{equation}
\delta S=0,\text{ \ \ }S:=-m_{0}\int_{t_{1}}^{t_{2}}(1-|u|^{2})^{1/2}dt
\label{L1.2}
\end{equation}%
for any fixed \ temporal interval $[t_{1},t_{2}]\subset \mathbb{R}$ gives
rise to the well-known relativistic relationships for the mass of the
particle 
\begin{equation}
m=m_{0}(1-|u|^{2})^{-1/2},  \label{L1.3}
\end{equation}%
the momentum of the particle

\begin{equation}
p:=mu=m_{0}u(1-|u|^{2})^{-1/2}  \label{L1.4}
\end{equation}%
and the energy of the particle%
\begin{equation}
\mathcal{E}_{0}=m=m_{0}(1-|u|^{2})^{-1/2}.  \label{L1.5}
\end{equation}%
It follows from \cite{LaLi,Paul}, that the origin of the Lagrangian (\ref%
{L1.1}) can be extracted from the action%
\begin{equation}
S:=-m_{0}\int_{t_{1}}^{t_{2}}(1-|u|^{2})^{1/2}dt=-m_{0}\int_{\tau
_{1}}^{\tau _{2}}d\tau ,  \label{L1.6}
\end{equation}%
on the suitable temporal interval $[\tau _{1,}\tau _{2}]\subset \mathbb{R}%
\mathbf{,}$ where,\textbf{\ }by definition, 
\begin{equation}
d\tau :=dt(1-|u|^{2})^{1/2}  \label{L1.6a}
\end{equation}%
and $\tau \in \mathbb{R}$ is the so-called, proper temporal parameter
assigned to a freely moving particle with respect to the \ rest \ reference
frame $\mathcal{K}_{\tau }.$ \ The action (\ref{L1.6}) is rather
questionable from the dynamical point of view, since it is physically
defined with respect to the \ rest \ reference frame $\mathcal{K}_{\tau },$
giving rise to the constant action $S=-m_{0}(\tau _{2}-\tau _{1}),$ as the
limits of integrations $\tau _{1}<\tau _{2}\in \mathbb{R}$ were taken to be
fixed from the very beginning. Moreover, considering this particle to have
charge $\xi \in \mathbb{R}$ and be moving in the Minkowski space-time $M^{4}$
under action of an electromagnetic field $(\varphi ,A)\in \mathbb{R}\times 
\mathbb{E}^{3},$ the corresponding classical (relativistic) action
functional is chosen\ (see \cite{LaLi,Feyn-1,Paul,Thir,BoPr-1,BoPrTaPr-Lore}%
) \ as follows: 
\begin{equation}
S:=\int_{\tau _{1}}^{\tau _{2}}[-m_{0}d\tau +\xi <A,\dot{r}>d\tau -\xi
\varphi (1-|u|^{2})^{-1/2}d\tau ],  \label{L1.7}
\end{equation}%
with respect to the \textit{rest reference system}, parameterized by the
Euclidean space-time variables $(\tau ,r)\in \mathbb{E}^{4},$ where we have
denoted $\dot{r}:=dr/d\tau $ in contrast to the definition $u:=dr/dt.$ The
action (\ref{L1.7}) can be rewritten with respect to the laboratory
reference frame $\mathcal{K}_{t}$ \ the moving with velocity\ vector $u\in 
\mathbb{E}^{3}$ as 
\begin{equation}
S=\int_{t_{1}}^{t_{2}}\mathcal{L}dt,\text{ \ }\mathcal{L}%
:=-m_{0}(1-|u|^{2})^{1/2}+\xi <A,u>-\xi \varphi ,  \label{L1.8}
\end{equation}%
on the suitable temporal interval \bigskip $\lbrack t_{1},t_{2}]\subset 
\mathbb{R},$ which gives rise to the following \cite{LaLi,Feyn-1,Paul,Thir}
\ dynamical expressions 
\begin{equation}
P=p+\xi A,\text{ \ \ \ \ }p=mu,\text{ \ \ }m=m_{0}(1-|u|^{2})^{-1/2},
\label{L1.9}
\end{equation}%
for the particle momentum and 
\begin{equation}
\mathcal{E}_{0}=(m_{0}^{2}+|P-\xi A|^{2})^{1/2}+\xi \varphi  \label{L1.10}
\end{equation}%
for the charged particle $\xi $ \ energy, where, by definition, $P\in 
\mathbb{E}^{3}$ is the common momentum of the particle and the ambient
electromagnetic field at a space-time point $(t,r)\in M^{4}.$

The expression (\ref{L1.10}) for the particle energy $\mathcal{E}_{0}$ also
appears open to question, since the potential energy $\xi \varphi ,$
entering additively, has no affect on the particle mass $%
m=m_{0}(1-|u|^{2})^{-1/2}.$ This was noticed by L. Brillouin \cite{Bril},
who remarked that the fact that the potential energy has no affect on the
particle mass tells us that "... any possibility of existence of a particle
mass related with an external potential energy, is completely excluded".
Moreover, it is necessary to stress here that the least action principle (%
\ref{L1.8}), formulated with respect to the laboratory reference frame $%
\mathcal{K}_{t}$ \ time parameter $t\in \mathbb{R},$ appears logically
inadequate, for there is a strong physical inconsistency with other time
parameters of the Lorentz equivalent reference frames. This was first
mentioned by R. Feynman in \cite{Feyn}, in his efforts to rewrite the
Lorentz force expression with respect to the rest reference frame $\mathcal{K%
}_{\tau }.$ This and other special relativity theory and electrodynamics
problems stimulated many prominent physicists\ of the past \cite%
{Bril,Feyn,Weyl,Paul,BrDi} and present \cite%
{Hajr,Hamm-1,Merm,Merm-1,Wilc-1,DuJa,Logu-2,Logu-3,BoBo,Repc,Neum,UrCoSaDj}
to try to develop alternative relativity theories based on completely
different space-time and matter structure principles.

There also is another controversial inference from the action expression (%
\ref{L1.8}). As one can easily show \cite{LaLi,Paul,Feyn-1,Thir}, the
corresponding dynamical equation \ for the Lorentz force is given as 
\begin{equation}
dp/dt=F:=\xi E+\xi u\times B.  \label{L1.11}
\end{equation}%
We have defined here, as before, 
\begin{equation}
E:=-\partial A/\partial t-\nabla \varphi  \label{L1.12}
\end{equation}%
for the corresponding electric field and 
\begin{equation}
B:=\nabla \times A  \label{L1.13}
\end{equation}%
for the related magnetic field, acting on the charged point particle $\xi $.
The expression (\ref{L1.11}) means, in particular, that the Lorentz force $F$%
\ depends linearly on the particle velocity vector $u\in T(\mathbb{R}^{3}),$
and so there is a strong dependence on the reference frame with respect to
which the charged particle $\xi $ moves. Attempts to reconcile this and some
related controversies \cite{Bril,Feyn,Repc,Klym} forced Einstein to devise
his special relativity theory and proceed further to creating his general
relativity theory trying to explain the gravity by means of geometrization
of space-time and matter in the Universe. Here we must mention that the
classical Lagrangian function $\mathcal{L}$ \ in (\ref{L1.8}) is written in
terms of a combination of terms expressed by means of both the Euclidean
rest reference frame variables $(\tau ,r)\in \mathbb{E}^{4}$ and arbitrarily
chosen Minkowski reference frame variables $(t,r)\in M^{4}.$

These problems were recently analyzed using a completely different
"no-geometry" \ approach \cite{BoPrTa,BoPrTa-1,Repc}, where new dynamical
equations were derived, which were free of the controversial elements
mentioned above. Moreover, this approach avoided the introduction of the
well known Lorentz transformations of the space-time reference frames with
respect to which the action functional (\ref{L1.8}) is invariant. From this
point of view, there are interesting for discussion conclusions in \cite%
{JaPo,Sant,Hamm-1,Aqui}, in which some electrodynamic models, possessing
intrinsic Galilean and Poincar\'{e}-Lorentz symmetries, are reanalyzed from
diverse geometrical points of view. Subject to a possible geometric
space-type structure and the related vacuum field background, exerting the
decisive influence on the particle dynamics, we need to mention here recent
works \cite{Smol,AmFrKoSm} and the closely related with their ideas the
classical articles \cite{Kibb,Penr}. Next, we shall revisit the results
obtained in \cite{BoPrTa,BoPrTaPr-Lore} from the classical Lagrangian and
Hamiltonian formalisms \cite{BoPr-1} in order to shed new light on the
physical underpinnings of the vacuum field theory approach to the study of
combined electromagnetic and gravitational effects.

\section{\label{Sec_2}The vacuum field theory electrodynamics equations:
Lagrangian analysis}

\subsection{\label{Subsec_2.1}A moving in vacuum point particle - an
alternative electrodynamic model}

In the vacuum field theory approach to combining electromagnetism and the
gravity, devised in \cite{BoPrTa,BoPrTaPr-Lore}, the main vacuum potential
field function $\bar{W}:M^{4}\mathbb{\rightarrow R},$ related to a charged
point particle $\xi $ under the external stationarily distributed field
sources, satisfies the dynamical equation \ (\ref{L0.21}), namely 
\begin{equation}
\frac{d}{dt}(-\bar{W}u)=-\nabla \bar{W}  \label{L2.1}
\end{equation}%
in the case when the external charged particles are at rest, where, as
above, $u:=dr/dt$ is the particle velocity with respect to some reference
system.

To analyze the dynamical equation (\ref{L2.1}) from the Lagrangian point of
view, we write the corresponding action functional as 
\begin{equation}
S:=-\underset{t_{1}}{\overset{t_{2}}{\int }}\bar{W}dt=-\underset{\tau _{1}}{%
\overset{\tau _{2}}{\int }}\bar{W}(1+|\dot{r}|^{2})^{1/2}\text{ }d\tau ,
\label{L2.2}
\end{equation}%
expressed with respect to the rest reference frame $\mathcal{K}_{\tau }.$
Fixing the proper temporal parameters $\tau _{1}<\tau _{2}\in \mathbb{R},$
one finds from the least action principle ( $\delta S=0$) that 
\begin{align}
p& :=\partial \mathcal{L}/\partial \dot{r}=-\bar{W}\dot{r}(1+|\dot{r}%
|^{2})^{-1/2}=-\bar{W}u,  \label{L2.3} \\
\dot{p}& :=dp/d\tau =\partial \mathcal{L}/\partial r=-\nabla \bar{W}(1+|\dot{%
r}|^{2})^{1/2},  \notag
\end{align}%
where, owing to (\ref{L2.2}), the corresponding Lagrangian function is 
\begin{equation}
\mathcal{L}:=-\bar{W}(1+|\dot{r}|^{2})^{1/2}.  \label{L2.4}
\end{equation}%
Recalling now the definition of the particle mass 
\begin{equation}
m:=-\bar{W}  \label{L2.5}
\end{equation}%
and \ the relationships%
\begin{equation}
d\tau =dt(1-|u|^{2})^{1/2},\text{ }\dot{r}d\tau =udt,  \label{L2.6}
\end{equation}%
from (\ref{L2.3}) we easily obtain exactly the dynamical equation (\ref{L2.1}%
). Moreover, one now readily finds that the dynamical mass, defined by means
of expression (\ref{L2.5}), is given as 
\begin{equation*}
m=m_{0}(1-|u|^{2})^{-1/2},
\end{equation*}%
which coincides with the equation (\ref{L1.3}) of the preceding section. Now
one can formulate the following proposition using the above results

\begin{proposition}
The alternative freely moving point particle electrodynamic model (\ref{L2.1}%
) allows the least action formulation (\ref{L2.2}) with respect to the
"rest\textquotedblright\ \ reference frame variables, where the Lagrangian
function is given by expression (\ref{L2.4}). Its electrodynamics is
completely equivalent to that of a classical relativistic freely moving
point particle, described in Subsection \ \ref{Subsec_1.2}.
\end{proposition}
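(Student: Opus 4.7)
The plan is to verify the two assertions of the proposition in sequence: first, that the action \eqref{L2.2} with Lagrangian \eqref{L2.4} yields via the least action principle the dynamical equation \eqref{L2.1}; and second, that the resulting motion reproduces the classical relativistic free-particle kinematics of Subsection~\ref{Subsec_1.1}, namely formulas \eqref{L1.3}--\eqref{L1.5}.

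For the first claim I would compute the Euler--Lagrange equation of $\mathcal{L}=-\bar{W}(1+|\dot r|^{2})^{1/2}$ in the rest-frame variables $(\tau,r)$, giving the canonical momentum $p=\partial\mathcal{L}/\partial\dot r=-\bar{W}\dot r(1+|\dot r|^{2})^{-1/2}$ and the equation $\dot p=-\nabla\bar{W}(1+|\dot r|^{2})^{1/2}$, as already written in \eqref{L2.3}. To translate back to the laboratory frame $\mathcal{K}$, I would use the kinematic relations \eqref{L2.6}, namely $d\tau=dt(1+|\dot r|^{2})^{-1/2}$ and $\dot r\,d\tau=u\,dt$, which turn $\dot r(1+|\dot r|^{2})^{-1/2}$ into $u$ and $(1+|\dot r|^{2})^{1/2}d/d\tau$ into $d/dt$. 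With the identification $m:=-\bar{W}$, one then immediately reads off $p=-\bar{W}u=mu$ and $dp/dt=-\nabla\bar{W}$, which is \eqref{L2.6a} and coincides with \eqref{L2.1}.

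For the second claim the task is to show that $m=-\bar{W}$ satisfies the familiar relativistic mass formula \eqref{L1.3}. Since $\bar{W}$ is assumed stationary ($\partial\bar W/\partial t=0$), one has $d\bar W/dt=\langle u,\nabla\bar W\rangle$, hence $dm/dt=-\langle u,\nabla\bar W\rangle$. Taking the scalar product of \eqref{L2.6a} with $u$ and using $d(mu)/dt=(dm/dt)u+m\,du/dt$, I obtain
\begin{equation*}
|u|^{2}\frac{dm}{dt}+\frac{m}{2}\frac{d|u|^{2}}{dt}=\frac{dm}{dt},
\end{equation*}
that is $(1-|u|^{2})\,dm=-\tfrac{m}{2}\,d(1-|u|^{2})$. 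Integrating this separable ODE yields $m(1-|u|^{2})^{1/2}=\text{const}$, so that with $m_{0}:=-\bar{W}(R(t_{0}))$ at the instant $t_{0}$ where $u(t_{0})=0$, one recovers exactly \eqref{L2.6b}, hence \eqref{L1.3}. The momentum $p=mu$ then matches \eqref{L1.4}, and the conserved energy $\mathcal{E}=m=m_{0}(1-|u|^{2})^{-1/2}$ matches \eqref{L1.5}, establishing the claimed equivalence.

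The only subtle step is the third one: one must justify the identification of $-\bar W$ with an \emph{inertial} mass (requiring $\bar W\leq 0$ along the trajectory, which is the physically natural sign convention for an attractive vacuum potential) and justify the existence of some reference instant $t_{0}$ at which $u(t_{0})=0$ so that the integration constant $m_{0}$ acquires its rest-mass interpretation. Up to this sign/gauge convention, the remaining manipulations are routine, so I expect the proof to be short once these kinematic steps are laid out.
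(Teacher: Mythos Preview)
Your proposal is correct and follows essentially the same route as the paper: the derivation preceding the proposition (equations \eqref{L2.3}--\eqref{L2.6b}) is exactly your first claim, and the paper then simply asserts \eqref{L2.6b} with the words ``one now readily finds,'' whereas you supply the explicit separable-ODE argument for it. The only embellishment on your side is this explicit integration (equivalently, conservation of $m(1-|u|^{2})^{1/2}$, which is the Hamiltonian \eqref{L3.1} derived later in the paper), together with your remark on the sign of $\bar W$ and the choice of $t_{0}$---both of which the paper leaves implicit.
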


\subsection{\label{Subsec_2.2}A moving in vacuum interacting two charge
system - an alternative electrodynamic model}

We proceed now to the case when our charged point particle $\xi $ moves in
the space-time with velocity vector $u\in T(\mathbb{R}^{3})$ and interacts
with another external charged point particle $\xi _{f},$ moving with
velocity vector $u_{f}\in T(\mathbb{R}^{3})$ with respect to a common
reference frame $\mathcal{K}.$ As was shown in \cite{BoPrTa,BoPrTaPr-Lore},
the respectively modified dynamical equation for the vacuum potential field
function $\bar{W}^{\prime }:M^{4}\mathbb{\rightarrow R}$ subject to the
moving reference frame $\mathcal{K}^{\prime }$ is given by \ equality (\ref%
{L0.23}), or 
\begin{equation}
\frac{d}{dt^{\prime }}[-\bar{W}^{\prime }(u^{\prime }-u_{f}^{\prime
})]=-\nabla \bar{W}^{\prime },  \label{L2.7}
\end{equation}%
where, as before, the velocity vectors $u^{\prime }:=dr/dt^{\prime
},u_{f}^{\prime }:=dr_{f}/dt^{\prime }\in T(\mathbb{R}^{3}).$ \ Since the
external charged particle $\xi _{f}$ \ moves in the space-time $M^{4},$ it
generates the related magnetic field $B:=\nabla \times A,$ whose magnetic
vector potentials $A:M^{4}\mathbb{\rightarrow E}^{3}$ and $A^{\prime }:M^{4}%
\mathbb{\rightarrow E}^{3}$ \ are defined, owing to the results of \cite%
{BoPrTa,BoPrTaPr-Lore,Repc}, as 
\begin{equation}
\xi A:=\bar{W}u_{f},\text{ \ \ }\xi A^{\prime }:=\bar{W}^{\prime
}u_{f}^{\prime },  \label{L2.8}
\end{equation}%
Whence, taking into account that the field potential 
\begin{equation}
\bar{W}=\bar{W}^{\prime }(1-|u_{f}|^{2})^{-1/2}  \label{L2.8a}
\end{equation}%
and the particle momentum $p^{\prime }=-\bar{W}^{\prime }u^{\prime }=-\bar{W}%
u,$ equality (\ref{L2.7}) becomes equivalent to 
\begin{equation}
\frac{d}{dt^{\prime }}(p^{\prime }+\xi A^{\prime })=-\nabla \bar{W}^{\prime
},  \label{L2.9}
\end{equation}%
if considered with respect to the moving reference frame $\mathcal{K}%
^{\prime },$ or to the Lorentz type force equality 
\begin{equation}
\frac{d}{dt}(p+\xi A)=-\nabla \bar{W}(1-|u_{f}|^{2}),  \label{L2.9a}
\end{equation}%
if considered with respect to the laboratory reference frame $\mathcal{K},$
owing to the classical Lorentz invariance relationship (\ref{L2.8a}), as the
corresponding magnetic vector potential, generated by the external charged
point test particle $\xi _{f}$ with respect to the reference frame $\mathcal{%
K}^{\prime },$ is identically equal to zero. To imbed the dynamical equation
(\ref{L2.9a}) into the classical Lagrangian formalism, we start from the
following action functional, which naturally generalizes the functional (\ref%
{L2.2}):%
\begin{equation}
S:=-\underset{\tau _{1}}{\overset{\tau _{2}}{\int }}\bar{W}^{\prime }(1+|%
\dot{r}-\dot{r}_{f}|^{2})^{1/2}\text{ }d\tau .  \label{L2.10}
\end{equation}%
Here, as before, $\bar{W}^{\prime }$ is the respectively calculated vacuum
field potential $\bar{W}\ $ subject to the moving reference frame $\mathcal{K%
}^{\prime },$ $\dot{r}=u^{\prime }dt^{\prime }/d\tau ,\dot{r}%
_{f}=u_{f}^{\prime }dt^{\prime }/d\tau ,$ $d\tau =dt^{\prime }(1-|u^{\prime
}-u_{f}^{\prime }|^{2})^{1/2},$ which take into account the \ relative
velocity of the charged point particle $\xi $ subject to the reference frame 
$\mathcal{K}^{\prime },$ \ specified by the Euclidean coordinates $%
(t^{\prime },r-r_{f})\in \mathbb{R}^{4},$ and moving simultaneously with
velocity vector $u_{f}\in T(\mathbb{R}^{3})$ \ with respect to the
laboratory reference frame $\mathcal{K},$ specified by the Minkowski
coordinates $(t,r)\in M^{4}$ and related to those of the reference frame $%
\mathcal{K}^{\prime }$ and $\mathcal{K}_{\tau }$ by means of the following
infinitesimal relationships:%
\begin{equation}
dt^{2}=(dt^{\prime })^{2}+|dr_{f}|^{2},\text{ \ }(dt^{\prime })^{2}=d\tau
^{2}+\newline
|dr-dr_{f}|^{2}.  \label{L2.10a}
\end{equation}%
So, it is clear in this case that our charged point particle $\xi $ moves
with the velocity vector $u^{\prime }-u_{f}^{\prime }\in T(\mathbb{R}^{3})$
\ with respect to the reference frame $\mathcal{K}^{\prime }$ in which the
external charged particle $\xi _{f}$ is at rest. Thereby, we have reduced
the problem of deriving the charged point particle $\xi $ \ dynamical
equation to that before solved in Subsection \ \ref{Subsec_2.1}.

Now we can compute the least action variational condition $\delta S=0,$
taking into account that, owing to (\ref{L2.10}), the corresponding
Lagrangian function with respect to the rest reference frame $\mathcal{K}%
_{\tau }$ is given as 
\begin{equation}
\mathcal{L}:=-\bar{W}^{\prime }(1+|\dot{r}-\dot{r}_{f}|^{2})^{1/2}.
\label{L2.11}
\end{equation}%
As a result of simple calculations, the generalized momentum of the charged
particle $\xi $ equals 
\begin{align}
P& :=\partial \mathcal{L}/\partial \dot{r}=-\bar{W}^{\prime }(\dot{r}-\dot{r}%
_{f})(1+|\dot{r}-\dot{r}_{f}|^{2})^{-1/2}=  \label{L2.12} \\
& =-\bar{W}^{\prime }\dot{r}(1+|\dot{r}-\dot{r}_{f}|^{2})^{-1/2}+\bar{W}%
^{\prime }\dot{r}_{f}(1+|\dot{r}-\dot{r}_{f}|^{2})^{-1/2}=  \notag \\
& =m^{\prime }u^{\prime }+\xi A^{\prime }:=p^{\prime }+\xi A^{\prime }=p+\xi
A,  \notag
\end{align}%
where, owing to (\ref{L2.8a}) the vectors $p^{\prime }:=-\bar{W}^{\prime
}u^{\prime }=-\bar{W}u=p\in \mathbb{E}^{3},$ $\ A^{\prime }=\bar{W}^{\prime
}u_{f}^{\prime }=\bar{W}u_{f}=A\in \mathbb{E}^{3},\ $\ and giving rise to
the dynamical equality 
\begin{equation}
\frac{d}{d\tau }(p^{\prime }+\xi A^{\prime })=-\nabla \bar{W}^{\prime }(1+|%
\dot{r}-\dot{r}_{f}|^{2})^{1/2}\   \label{L2.13}
\end{equation}%
with respect to the\ rest reference frame $\mathcal{K}_{\tau }.$ As $%
dt^{\prime }=d\tau (1+|\dot{r}-\dot{r}_{f}|^{2})^{1/2}$ and $(1+|\dot{r}-%
\dot{r}_{f}|^{2})^{1/2}=(1-|u^{\prime }-u_{f}^{\prime }|^{2})^{-1/2},$ we
obtain from (\ref{L2.13}) the equality 
\begin{equation}
\frac{d}{dt^{\prime }}(p^{\prime }+\xi A^{\prime })=-\nabla \bar{W}^{\prime
},  \label{L2.13a}
\end{equation}%
exactly coinciding with \ equality (\ref{L2.9}) subject to the moving
reference frame $\mathcal{K}^{\prime }.$ Now, making use of \ expressions \ (%
\ref{L2.10a}) and \ (\ref{L2.8a}), one can rewrite \ (\ref{L2.13a}) as that
with respect to the laboratory reference frame $\mathcal{K}:$%
\begin{equation}
\begin{array}{c}
\frac{d}{dt^{\prime }}(p^{\prime }+\xi A^{\prime })=-\nabla \bar{W}^{\prime
}\Rightarrow \\ 
\\ 
\Rightarrow \frac{d}{dt^{\prime }}(\frac{-\bar{W}u^{\prime }}{%
(1+|u_{f}^{\prime }|^{2})^{1/2}}+\frac{\xi \bar{W}u_{f}^{\prime }}{%
(1+|u_{f}^{\prime }|^{2})^{1/2}})=-\frac{\nabla \bar{W}}{(1+|u_{f}^{\prime
}|^{2})^{1/2}}\Rightarrow \\ 
\\ 
\Rightarrow \frac{d}{dt^{\prime }}(\frac{-\bar{W}dr}{(1+|u_{f}^{\prime
}|^{2})^{1/2}dt^{\prime }}+\frac{\xi \bar{W}dr_{f}/}{(1+|u_{f}^{\prime
}|^{2})^{1/2}})=-\frac{\nabla \bar{W}}{(1+|u_{f}^{\prime }|^{2})^{1/2}}%
\Rightarrow \\ 
\\ 
\Rightarrow \frac{d}{dt}(-\bar{W}\frac{dr}{dt}+\xi \bar{W}\frac{dr_{f}}{dt}%
)=-\nabla \bar{W}(1-|u_{f}|^{2}),%
\end{array}
\label{L2.13b}
\end{equation}%
exactly coinciding with (\ref{L2.9a}):%
\begin{equation}
\frac{d}{dt}(p+\xi A)=-\nabla \bar{W}(1-|u_{f}|^{2}).  \label{L2.13c}
\end{equation}

\begin{remark}
The equation \ (\ref{L2.13c}) allows to infer the following \ important and
physically reasonable phenomenon: if the test charged point particle
velocity \ $u_{f}\in T(\mathbb{R}^{3})$ tends to the light velocity $c=1,$
the corresponding acceleration force $F_{ac}:=-\nabla \bar{W}(1-|u_{f}|^{2})$
is vanishing. Thereby, the electromagnetic fields, generated by such rapidly
moving charged point particles, have no influence on the dynamics of charged
objects if observed with respect to an arbitrarily chosen laboratory
reference frame $\mathcal{K}.$
\end{remark}

The latter equation \ (\ref{L2.13c}) can be easily rewritten as%
\begin{eqnarray}
dp/dt &=&-\nabla \bar{W}-\xi dA/dt+\nabla \bar{W}|u_{f}|^{2}=  \label{L2.13d}
\\
&=&\xi (-\xi ^{-1}\nabla \bar{W}\ -\ \partial A/\partial t)-\xi <u,\nabla
>A+\xi \nabla <A,u_{f}>,  \notag
\end{eqnarray}%
or, \ using the well-known \cite{LaLi} identity 
\begin{equation}
\nabla <a,b>=<a,\nabla >b+<b,\nabla >a+b\times (\nabla \times a)+a\times
(\nabla \times b),  \label{L2.13e}
\end{equation}%
where $a,b\in \mathbb{E}^{3}$ are arbitrary vector functions, in the
standard Lorentz type form 
\begin{equation}
dp/dt=\xi E+\xi u\times B-\ \nabla <\xi A,u-u_{f}>.  \label{L2.13f}
\end{equation}

The result \ (\ref{L2.13f}), being before found and written down with
respect to the moving reference frame $\mathcal{K}^{\prime }\mathcal{\ }$in 
\cite{BoPrTa,BoPrTaPr-Lore,Repc} and in \cite{MaPi} yet with some
inconsistency, makes it possible to formulate the next important proposition.

\begin{proposition}
The alternative classical relativistic electrodynamic model (\ref{L2.9})
allows the least action formulation based on the action functional (\ref%
{L2.10}) with respect to the rest \ reference frame $\mathcal{K}_{\tau },$
where the Lagrangian function is given by expression (\ref{L2.11}). The
resulting Lorentz type force expression equals \ (\ref{L2.13f}), being
modified by the additional force component $F_{c}:=-\nabla <\xi A,u-u_{f}>,$
\ important for explanation \cite{AhBo,Boye,Tram} of the well known
Aharonov-Bohm effect.
\end{proposition}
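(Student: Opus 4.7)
The plan is to apply the least action principle to the action functional (\ref{L2.10}) with Lagrangian (\ref{L2.11}) written with respect to the rest reference frame $\mathcal{K}_{\tau}$, and then to repeatedly translate between the three frames $\mathcal{K}_{\tau}$, $\mathcal{K}'$ and $\mathcal{K}$ using the infinitesimal relations (\ref{L2.10a}) and the scalar transformation law (\ref{L2.8a}). Since the heavy lifting has been done in Subsection~\ref{Subsec_1.2.1} for a single particle, the strategy is to reduce the two-particle problem to that earlier case by treating the relative coordinate $r-r_{f}$ as the kinematic variable for $\xi$ in the frame $\mathcal{K}'$ where $\xi _{f}$ is at rest.

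First I would compute the canonical momentum conjugate to $r$ from (\ref{L2.11}), obtaining $P=-\bar{W}^{\prime }(\dot{r}-\dot{r}_{f})(1+|\dot{r}-\dot{r}_{f}|^{2})^{-1/2}$, and split it into a kinetic and a field part using the magnetic potential definition (\ref{L2.8}); this yields the decomposition $P=p+\xi A$ displayed in (\ref{L2.12}), with $p^{\prime }=-\bar{W}^{\prime }u^{\prime }$ and $A^{\prime }=\bar{W}^{\prime }u_{f}^{\prime }/\xi$. Next I would write out the Euler--Lagrange equation $dP/d\tau =\partial \mathcal{L}/\partial r=-\nabla \bar{W}^{\prime }(1+|\dot{r}-\dot{r}_{f}|^{2})^{1/2}$ and convert the proper-time derivative into a $t^{\prime }$-derivative using $d\tau =dt^{\prime }(1+|\dot{r}-\dot{r}_{f}|^{2})^{-1/2}$; the two factors cancel and deliver exactly (\ref{L2.13a}), which is (\ref{L2.9}).

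Then I would push everything down to the laboratory frame $\mathcal{K}$. Using $\bar{W}=\bar{W}^{\prime }(1-|u_{f}|^{2})^{-1/2}$ from (\ref{L2.8a}) together with $dt=dt^{\prime }(1+|u_{f}^{\prime }|^{2})^{1/2}$ from (\ref{L2.10a}), each of the three quantities $p^{\prime }$, $\xi A^{\prime }$ and $\nabla \bar{W}^{\prime }$ picks up the same scalar Lorentz factor, which factors out cleanly and produces the chain of implications (\ref{L2.13b}); this gives (\ref{L2.13c}), i.e.\ $\frac{d}{dt}(p+\xi A)=-\nabla \bar{W}(1-|u_{f}|^{2})$. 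Finally, expanding $dA/dt=\partial A/\partial t+\langle u,\nabla \rangle A$, collecting the factor $|u_f|^2 \nabla \bar W = \xi \nabla \langle A, u_f\rangle$, and applying the vector identity (\ref{L2.13e}) to rewrite $\langle u,\nabla \rangle A$ in terms of $\nabla \langle A,u\rangle$ and $u\times (\nabla \times A)$, the right-hand side rearranges into $\xi E+\xi u\times B-\nabla \langle \xi A,u-u_{f}\rangle $, which is precisely (\ref{L2.13f}).

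The main technical obstacle I anticipate is bookkeeping rather than conceptual: one must be careful that the variational derivative $\partial \mathcal{L}/\partial r$ in (\ref{L2.11}) acts only on the explicit $r$-dependence of $\bar{W}^{\prime }$ (with $r_{f}$, $\dot{r}$, $\dot{r}_{f}$ held fixed), and that when passing from $\mathcal{K}_{\tau }\rightarrow \mathcal{K}^{\prime }\rightarrow \mathcal{K}$ the scalar factors coming from the three infinitesimal relations in (\ref{L2.10a}) cancel consistently so that the Lorentz-like factor $(1-|u_{f}|^{2})$ appears on the right-hand side but not on the left. Once this accounting is done correctly, the identification of $F_{c}=-\nabla \langle \xi A,u-u_{f}\rangle $ as the extra, Aharonov--Bohm-relevant force contribution is immediate from the identity (\ref{L2.13e}).
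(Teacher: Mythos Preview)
Your proposal is correct and follows essentially the same route as the paper: compute the canonical momentum (\ref{L2.12}) from the Lagrangian (\ref{L2.11}), obtain the Euler--Lagrange equation (\ref{L2.13}) and convert it to (\ref{L2.13a}) in $\mathcal{K}^{\prime }$, push down to $\mathcal{K}$ via the chain (\ref{L2.13b})--(\ref{L2.13c}), and finally expand $dA/dt$ and apply the identity (\ref{L2.13e}) to arrive at (\ref{L2.13f}). The bookkeeping caveats you flag are exactly the ones the paper handles implicitly in (\ref{L2.13b}).
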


\subsection{\label{Subsec_2.3}A moving charged point particle formulation
dual to the classical alternative electrodynamic model}

It is easy to see that the action functional (\ref{L2.10}) is written
utilizing the classical Galilean transformations of reference frames. If we
now consider the action functional (\ref{L2.2}) for a charged point particle
moving with respect the reference frame $\mathcal{K}_{\tau },$ and take into%
\textit{\ }account its interaction with an external magnetic field generated
by the vector potential $A:$ $M^{4}\rightarrow \mathbb{E}^{3},$ it can be
naturally generalized as 
\begin{equation}
S:=\underset{t_{1}}{\overset{t_{2}}{\int }}(-\bar{W}dt+\xi <A,dr>)=\underset{%
\tau _{1}}{\overset{\tau _{2}}{\int }}[-\bar{W}(1+|\dot{r}|^{2})^{1/2}\text{ 
}+\xi <A,\dot{r}>]d\tau ,  \label{L2.14}
\end{equation}%
where $d\tau =dt(1-|u|^{2})^{1/2}.$

Thus, the corresponding common particle-field momentum takes the form 
\begin{align}
P& :=\partial \mathcal{L}/\partial \dot{r}=-\bar{W}\dot{r}(1+|\dot{r}%
|^{2})^{-1/2}+\xi A=  \label{L2.15} \\
& =mu+\xi A:=p+\xi A,  \notag
\end{align}%
and satisfies%
\begin{align}
\dot{P}& :=dP/d\tau =\partial \mathcal{L}/\partial r=-\nabla \bar{W}(1+|\dot{%
r}|^{2})^{1/2}\text{ }+\xi \nabla <A,\dot{r}>=  \label{L2.16} \\
& =-\nabla \bar{W}(1-|u|^{2})^{-1/2}+\xi \nabla <A,u>(1-|u|^{2})^{-1/2}, 
\notag
\end{align}%
where 
\begin{equation}
\mathcal{L}:=-\bar{W}(1+|\dot{r}|^{2})^{1/2}\text{ }+\xi <A,\dot{r}>
\label{L2.16a}
\end{equation}%
is the corresponding Lagrangian function. Since $d\tau =dt(1-|u|^{2})^{1/2},$
one easily finds from (\ref{L2.16}) that 
\begin{equation}
dP/dt=-\nabla \bar{W}+\xi \nabla <A,u>.  \label{L2.17}
\end{equation}%
Upon substituting (\ref{L2.15}) into (\ref{L2.17}) and making use of the
well-known \cite{LaLi} identity 
\begin{equation}
\nabla <a,b>=<a,\nabla >b+<b,\nabla >a+b\times (\nabla \times a)+a\times
(\nabla \times b),  \label{L2.18}
\end{equation}%
where $a,b\in \mathbb{E}^{3}$ are arbitrary vector functions, we obtain the
classical expression for the Lorentz force $F,$ acting on the moving charged
point particle $\xi :$%
\begin{equation}
dp/dt:=F=\xi E+\xi u\times B,  \label{L2.19}
\end{equation}%
where, by definition, 
\begin{equation}
E:=-\xi ^{-1}\nabla \bar{W}-\partial A/\partial t  \label{L2.20}
\end{equation}%
is its associated electric field and 
\begin{equation}
B:=\nabla \times A  \label{L2.21}
\end{equation}%
is the corresponding magnetic field. This result can be summarized as
follows:

\begin{proposition}
The classical relativistic Lorentz force (\ref{L2.19}) allows the least
action formulation (\ref{L2.14}) with respect to the rest \ \ reference
frame variables, where the Lagrangian function is given by formula (\ref%
{L2.16a}). Yet its electrodynamics, described by the Lorentz force (\ref%
{L2.19}), is not equivalent to the classical relativistic moving point
particle electrodynamics, described by means of the Lorentz force (\ref%
{L1.11}), as the inertial mass expression $m=-\bar{W}$ does not coincide
with that of \ (\ref{L1.3}).
\end{proposition}

Expressions (\ref{L2.19}) and (\ref{L2.13f}) are equal up to the gradient
term $F_{c}:=-\xi \nabla <A,u-u_{f}>,$ which reconciles the Lorentz forces
acting on a charged moving particle $\xi $ with respect to different
reference frames. This fact is important for our vacuum field theory
approach since it uses no special geometry and makes it possible to analyze
both electromagnetic and gravitational fields simultaneously by employing
the new definition of the dynamical mass by means of expression (\ref{L2.5}).

\subsection{\label{Subsec_2.4}The vacuum field theory electrodynamics
equations: Hamiltonian analysis}

Any Lagrangian theory has an equivalent canonical Hamiltonian representation
via the classical Legendre transformation \ \cite{Arno,Thir,AbMa,HePrPr,PrMy}%
. As we have already formulated our vacuum field theory of a moving charged
particle $\xi $ in Lagrangian form, we proceed now to its Hamiltonian
analysis making use of the action functionals (\ref{L2.2}), (\ref{L2.11})
and (\ref{L2.14}).

Take, first, the Lagrangian function (\ref{L2.4}) and the momentum
expression (\ref{L2.3}) for defining the corresponding Hamiltonian function
with respect to the moving reference frame $\mathcal{K}_{\tau }:\mathcal{\ }$
\begin{align}
H& :=<p,\dot{r}>-\mathcal{L}=  \notag \\
& =-<p,p>\bar{W}^{-1}(1-|p|^{2}/\bar{W}^{2})^{-1/2}+\bar{W}(1-|p|^{2}/\bar{W}%
^{2})^{-1/2}=  \notag \\
& =-|p|^{2}\bar{W}^{-1}(1-|p|^{2}/\bar{W}^{2})^{-1/2}+\bar{W}^{2}\bar{W}%
^{-1}(1-|p|^{2}/\bar{W}^{2})^{-1/2}=  \label{L3.1} \\
& =-(\bar{W}^{2}-|p|^{2})(\bar{W}^{2}-|p|^{2})^{-1/2}=-(\bar{W}%
^{2}-|p|^{2})^{1/2}.  \notag
\end{align}%
Consequently, it is easy to show \cite{AbMa,Arno,BlPrSa,Thir,PrMy} \ that
the Hamiltonian function (\ref{L3.1}) is a conservation law of the dynamical
field equation (\ref{L2.1}), that is for all $\tau ,t\in \mathbb{R}$%
\begin{equation}
dH/d\tau =dH/dt=0,  \label{L3.2}
\end{equation}%
which naturally leads to an energy interpretation of $H$. Thus, we can
represent the particle energy as 
\begin{equation}
\mathcal{E}=(\bar{W}^{2}-|p|^{2})^{1/2}.  \label{L3.3}
\end{equation}%
Accordingly the Hamiltonian equivalent to the vacuum field equation (\ref%
{L2.1}) can be written as%
\begin{align}
\dot{r}& :=dr/d\tau =\partial H/\partial p=p(\bar{W}^{2}-|p|^{2})^{-1/2}
\label{L3.4} \\
\dot{p}& :=dp/d\tau =-\partial H/\partial r=\bar{W}\nabla \bar{W}(\bar{W}%
^{2}-|p|^{2})^{-1/2},  \notag
\end{align}%
and we have the following result.

\begin{proposition}
The alternative freely moving point particle electrodynamic model (\ref{L2.1}%
) allows the canonical Hamiltonian formulation (\ref{L3.4}) with respect to
the "rest\textquotedblright\ \ reference frame variables, where the
Hamiltonian function is given by expression (\ref{L3.1}). Its
electrodynamics is completely equivalent to the classical relativistic
freely moving point particle electrodynamics described in Subsection \ \ref%
{Subsec_2.1}.
\end{proposition}

In the analogous manner, one can now use the Lagrangian (\ref{L2.11}) to
construct the Hamiltonian function for the dynamical field equation (\ref%
{2.9}), describing the motion of charged particle $\xi $ in an external
electromagnetic field in the canonical Hamiltonian form:%
\begin{equation}
\dot{r}:=dr/d\tau =\partial H/\partial P,\text{ \ \ \ \ \ }\dot{P}:=dP/d\tau
=-\partial H/\partial r,  \label{L3.5a}
\end{equation}%
where%
\begin{align}
H& :=<P,\dot{r}>-\mathcal{L}=  \notag \\
& =<P,\dot{r}_{f}-P\bar{W}^{\prime ,-1}(1-|P|^{2}/\bar{W}^{\prime
,2})^{-1/2}>+\bar{W}^{\prime }[\bar{W}^{\prime ,2}(\bar{W}^{\prime
,2}-|P|^{2})^{-1}]^{1/2}=  \notag \\
& =<P,\dot{r}_{f}>+|P|^{2}(\bar{W}^{\prime ,2}-|P|^{2})^{-1/2}-\bar{W}%
^{\prime ,2}(\bar{W}^{\prime ,2}-|P|^{2})^{-1/2}=  \notag \\
& =-(\bar{W}^{\prime ,2}-|P|^{2})(\bar{W}^{\prime ,2}-|P|^{2})^{-1/2}+<P,%
\dot{r}_{f}>=  \label{L3.5} \\
& =-(\bar{W}^{\prime ,2}-|P|^{2})^{1/2}-\xi <A^{\prime },P>(\bar{W}^{\prime
,2}-|P|^{2})^{-1/2}=  \notag \\
& =-(\bar{W}^{2}-|\xi A|^{2}-|P|^{2})^{1/2}-\xi <A,P>(\bar{W}^{2}-|\xi
A|^{2}-|P|^{2})^{-1/2},  \notag
\end{align}%
being written with respect to the laboratory reference frame $\mathcal{K}.$
\ Here we took into account that, owing to definitions (\ref{L2.8}), \ (\ref%
{L2.8a}) and (\ref{2.12}),%
\begin{align}
\xi A^{\prime }& :=\bar{W}^{\prime }u_{f}^{\prime }=\bar{W}^{\prime
}dr_{f}/dt^{\prime }=\xi A=  \label{L3.6} \\
& =\bar{W}^{\prime }\frac{dr_{f}}{d\tau }\cdot \frac{d\tau }{dt^{\prime }}=%
\bar{W}^{\prime }\dot{r}_{f}(1-|u-u_{f}|)^{1/2}=  \notag \\
& =\bar{W}^{\prime }\dot{r}_{f}(1+|\dot{r}-\dot{r}_{f}|^{2})^{-1/2}=  \notag
\\
& =-\bar{W}^{\prime }\dot{r}_{f}(\bar{W}^{\prime ,2}-|P|^{2})^{1/2}\bar{W}%
^{\prime ,-1}=-\dot{r}_{f}(\bar{W}^{\prime ,2}-|P|^{2})^{1/2}\ ,  \notag
\end{align}%
and, in particular, 
\begin{equation}
\dot{r}_{f}=-\xi A(\bar{W}^{^{\prime },2}-|P|^{2})^{-1/2},\text{ \ }\bar{W}=%
\bar{W}^{\prime }(1-|u_{f}|^{2})^{-1/2},  \label{L3.7}
\end{equation}%
where $A:M^{4}\mathbb{\rightarrow R}^{3}$ is the related magnetic vector
potential generated by the moving external charged particle $\xi _{f}.$
Equations (\ref{L3.5a}) can be rewritten with respect to the laboratory
reference frame $\mathcal{K}_{t}$ \ in the form 
\begin{equation}
dr/dt=u,\text{ \ \ \ }dp/dt=\xi E+\xi u\times B-\xi \nabla <A,u-u_{f}>,
\label{L3.7a}
\end{equation}%
which coincides with the result (\ref{L2.13f}).

Whence, we see that the Hamiltonian function (\ref{L3.5}) satisfies the
energy conservation conditions%
\begin{equation}
dH/d\tau =dH/dt^{\prime }=dH/dt=0,  \label{L3.8}
\end{equation}%
for all $\tau ,t^{\prime }$ and $t\in \mathbb{R},$ and that the suitable
energy expression is%
\begin{equation}
\mathcal{E}=(\bar{W}^{2}\ -\xi ^{2}|A|^{2}-|P|^{2})^{1/2}+\xi <A,P>(\bar{W}%
^{2}\ -\xi ^{2}|A|^{2}-|P|^{2})^{-1/2},  \label{L3.9}
\end{equation}%
where the generalized momentum $P=p+\xi A.$ The result (\ref{L3.9}) differs
essentially from that obtained in \cite{LaLi}, which makes use of the
Einsteinian Lagrangian for a moving charged point particle $\xi $ in an
external electromagnetic field. Thus, we obtain the following proposition:

\begin{proposition}
The alternative classical relativistic electrodynamic model (\ref{L3.7a}),
which is intrinsically compatible with the classical Maxwell equations \ (%
\ref{L0.6}), allows the Hamiltonian formulation (\ref{L3.5a}) with respect
to the rest reference frame variables, where the Hamiltonian function is
given by expression (\ref{L3.5}).
\end{proposition}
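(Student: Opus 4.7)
The plan is to carry out the standard Legendre transformation on the rest-frame Lagrangian (\ref{L2.11}), then to verify that the canonical Hamilton equations generated by (\ref{L3.5}) reproduce the rest-frame dynamics (\ref{L2.13a}) and translate to the laboratory-frame equations (\ref{L3.7a}) already established in Proposition 1.4. Finally, I must verify that the Hamiltonian is conserved along trajectories and that the dynamics is consistent with Maxwell's equations via Proposition 1.1.

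First, starting from $\mathcal{L} = -\bar{W}^{\prime}(1+|\dot{r}-\dot{r}_{f}|^{2})^{1/2}$ and the generalized momentum (\ref{L2.12}), $P = -\bar{W}^{\prime}(\dot{r}-\dot{r}_{f})(1+|\dot{r}-\dot{r}_{f}|^{2})^{-1/2}$, I would invert for the velocity by squaring: the identity $|P|^{2}(1+|\dot{r}-\dot{r}_{f}|^{2}) = \bar{W}^{\prime 2}|\dot{r}-\dot{r}_{f}|^{2}$ gives $(1+|\dot{r}-\dot{r}_{f}|^{2})^{1/2} = \bar{W}^{\prime}(\bar{W}^{\prime 2}-|P|^{2})^{-1/2}$ and hence
\begin{equation*}
\dot{r} - \dot{r}_{f} = -P(\bar{W}^{\prime 2} - |P|^{2})^{-1/2}.
\end{equation*}
Then I would compute $H := <P,\dot{r}> - \mathcal{L}$ by splitting $\dot{r}=(\dot{r}-\dot{r}_{f})+\dot{r}_{f}$, using the external-particle kinematic identity (\ref{L3.7}), namely $\dot{r}_{f} = -\xi_{f}A(\bar{W}^{\prime 2}-|P|^{2})^{-1/2}$, which is derived from (\ref{L2.8}) and (\ref{L2.8a}). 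Combining the two contributions $-|P|^{2}(\bar{W}^{\prime 2}-|P|^{2})^{-1/2}$ from $<P,\dot{r}-\dot{r}_{f}>-\mathcal{L}$ and $<P,\dot{r}_{f}> = -<\xi A,P>(\bar{W}^{\prime 2}-|P|^{2})^{-1/2}$ should match expression (\ref{L3.5}) exactly.

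Next, I would verify the canonical equations (\ref{L3.5a}). Differentiating (\ref{L3.5}) with respect to $P$ recovers $\dot{r} = \dot{r}_{f} - P(\bar{W}^{\prime 2}-|P|^{2})^{-1/2}$, which is the momentum inversion above; differentiating with respect to $r$ and using the fact that $\bar{W}^{\prime}$ depends on $r$ through $\nabla\bar{W}^{\prime}$ reproduces $\dot{P} = -\nabla\bar{W}^{\prime}(1+|\dot{r}-\dot{r}_{f}|^{2})^{1/2}$, which is (\ref{L2.13a}). Rewriting with respect to $\mathcal{K}$ via $d\tau = dt^{\prime}(1-|u^{\prime}-u_{f}^{\prime}|^{2})^{1/2}$ and the Lorentz rescaling (\ref{L2.8a}) produces the chain (\ref{L2.13b}), leading to $dp/dt = \xi E + \xi u\times B - \nabla<\xi A,u-u_{f}>$, i.e.\ (\ref{L3.7a}). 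Energy conservation (\ref{L3.8}) then follows immediately because $\mathcal{L}$ in (\ref{L2.11}) has no explicit dependence on $\tau$, so by the standard Legendre argument $dH/d\tau = -\partial\mathcal{L}/\partial\tau = 0$; compatibility with (\ref{L0.7}) is inherited from the vacuum field equations (\ref{L0.21}), which, via Proposition 1.1 and Proposition 1.2, imply the full Maxwell system.

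The main obstacle will be the careful bookkeeping in the Legendre transformation, because $\dot{r}_{f}$ and $A$ are not independent canonical variables but prescribed kinematic data satisfying the constraint (\ref{L3.7}) that itself depends on $P$ through $\bar{W}^{\prime 2} - |P|^{2}$. Consequently, when forming $\partial H/\partial P$ one must respect that $\dot{r}_{f}$ is being regarded as an external quantity evaluated on the actual trajectory; otherwise spurious terms from $\partial\dot{r}_{f}/\partial P$ appear. The cleanest way to handle this is to interpret the test-particle variables as parameters in the action principle (\ref{L2.10}) — exactly as was done in Section 1.2.2 — and only afterwards to substitute the kinematic identity (\ref{L3.7}) into the final expression for $H$. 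A secondary technical point is the passage between the three frames $\mathcal{K}_{\tau}$, $\mathcal{K}^{\prime}$, and $\mathcal{K}$ via (\ref{L2.10a}), which must be done consistently to match (\ref{L3.7a}).
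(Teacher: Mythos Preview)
Your proposal is correct and follows essentially the same route as the paper: perform the Legendre transformation on the rest-frame Lagrangian (\ref{L2.11}), invert the momentum relation to express $\dot{r}-\dot{r}_{f}$ in terms of $P$ and $\bar{W}^{\prime}$, substitute into $H=<P,\dot{r}>-\mathcal{L}$ using the kinematic identity (\ref{L3.6})--(\ref{L3.7}) for $\dot{r}_{f}$, and then observe that the canonical equations (\ref{L3.5a}) reproduce (\ref{L2.13a}) and hence (\ref{L3.7a}). Your explicit acknowledgment that $\dot{r}_{f}$ must be treated as external kinematic data during the Legendre step, with (\ref{L3.7}) substituted only afterward, is exactly the implicit convention the paper adopts; the paper itself simply presents the chain of equalities (\ref{L3.5}) without separating these stages.
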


The inference above is a natural candidate for experimental validation of
our theory. It is strongly motivated by the following remark.

\begin{remark}
It is necessary to mention here that the Lorentz force expression (\ref%
{L3.7a}) uses the particle momentum $p=mu,$ where the dynamical
\textquotedblleft mass\textquotedblright $m:=-\bar{W}$ satisfies condition (%
\ref{L3.9}). This gives rise to the following crucial relationship between
the particle energy $\mathcal{E}_{0}$ and its rest mass $m_{0}=$ $-\bar{W}%
_{0}$ (for the velocity $u\ =0$ at the initial time moment $t=0):$%
\begin{equation}
\mathcal{E}_{0}=m_{0}\frac{(1-|\xi A_{0}/m_{0}|^{2})}{(1-2|\xi
A_{0}/m_{0}|^{2})^{1/2}}\ ,  \label{L3.9a}
\end{equation}%
or, equivalently, under the condition $\ |\xi A_{0}/m_{0}|^{2}<1/2$ 
\begin{equation}
m_{0}=\mathcal{E}_{0}\left( \frac{1}{2}+|\xi A_{0}/\mathcal{E}_{0}|^{2}\pm 
\frac{1}{2}\sqrt{1-4|\xi A_{0}/\mathcal{E}_{0}|^{2}}\right) ^{1/2},
\label{L3.9b}
\end{equation}%
where $A_{0}:=A|_{t=0}\in \mathbb{E}^{3},$ which differs markedly from the
classical expression $m_{0}=\mathcal{E}_{0}-\xi \varphi _{0},$ following
from (\ref{L1.10}) and is does not a priori on the external potential energy 
$\xi \varphi _{0}.$ As the quantity $|\xi A_{0}/\mathcal{E}_{0}|\rightarrow
0 $ if the energy modul $|\mathcal{E}_{0}|\rightarrow \infty ,$ the
following asymptotic mass values follow from (\ref{L3.9b}):%
\begin{equation}
\bar{m}_{0}\simeq \ \mathcal{E}_{0},\text{ \ \ \ \ \ \ }m_{0}^{(\pm )}\simeq
\ \pm \sqrt{2}|\xi A_{0}|.  \label{L3.9c}
\end{equation}%
The first mass value $\bar{m}_{0}\simeq \ \mathcal{E}_{0}$ is looking from
the relativistic physics standard, yet the second mass values $m_{0}^{(\pm
)}\simeq \ \pm \sqrt{2}|\xi A_{0}|$ \ give rise to existence at large enough
energies of charged particle excitations of the vacuo with both positive and
negative mass values.
\end{remark}

To make this difference more clear, we now analyze the Lorentz force (\ref%
{L2.19}) from the Hamiltonian point of view based on the Lagrangian function
(\ref{L2.16a}). Thus, we obtain that the corresponding Hamiltonian function%
\begin{align}
H& :=<P,\dot{r}>-\mathcal{L}=<P,\dot{r}>+\bar{W}(1+|\dot{r}|^{2})^{1/2}-\xi
<A,\dot{r}>=  \label{L3.10} \\
& =<P-\xi A,\dot{r}>+\bar{W}(1+|\dot{r}|^{2})^{1/2}=  \notag \\
& =-<p,p>\bar{W}^{-1}(1-|p|^{2}/\bar{W}^{2})^{-1/2}+\bar{W}(1-|p|^{2}/\bar{W}%
^{2})^{-1/2}=  \notag \\
& =-(\bar{W}^{2}-|p|^{2})(\bar{W}^{2}-|p|^{2})^{-1/2}=-(\bar{W}%
^{2}-|p|^{2})^{1/2}.  \notag
\end{align}%
Since $p=P-\xi A,$ expression (\ref{L3.10}) assumes the final "\textit{no
interaction}" \ \cite{LaLi,Paul,Kupe,PrSaPr} form 
\begin{equation}
H=-(\bar{W}^{2}-|P-\xi A|^{2})^{1/2},  \label{L3.11}
\end{equation}%
which is conserved with respect to the evolution equations (\ref{L2.15}) and
(\ref{L2.16}), that is 
\begin{equation}
\ dH/d\tau =dH/dt=0  \label{L3.11a}
\end{equation}%
for all $\tau ,t\in \mathbb{R}.$ These equations latter are equivalent to
the following Hamiltonian system%
\begin{align}
\dot{r}& =\partial H/\partial P=(P-\xi A)(\bar{W}^{2}-|P-\xi A|^{2})^{-1/2},
\label{L3.12} \\
\dot{P}& =-\partial H/\partial r=(\bar{W}\nabla \bar{W}-\nabla <\xi A,(P-\xi
A)>)(\bar{W}^{2}-|P-\xi A|^{2})^{-1/2},  \notag
\end{align}%
as one can readily check by direct calculations. Actually, the first equation%
\begin{align}
\dot{r}& =(P-\xi A)(\bar{W}^{2}-|P-\xi A|^{2})^{-1/2}=p(\bar{W}%
^{2}-|p|^{2})^{-1/2}=  \label{L3.13} \\
& =mu(\bar{W}^{2}-|p|^{2})^{-1/2}=-\bar{W}u(\bar{W}%
^{2}-|p|^{2})^{-1/2}=u(1-|u|^{2})^{-1/2},  \notag
\end{align}%
holds, owing to the condition $d\tau =dt(1-|u|^{2})^{1/2}$ and definitions $%
p:=mu,$ $m=-\bar{W},$ postulated from the very beginning. Similarly we
obtain that 
\begin{align}
\dot{P}& =-\nabla \bar{W}(1-|p|^{2}/\bar{W}^{2})^{-1/2}+\nabla <\xi
A,u>(1-|p|^{2}/\bar{W}^{2})^{-1/2}=  \label{L3.14} \\
& =-\nabla \bar{W}(1-|u|^{2})^{-1/2}+\nabla <\xi A,u>(1-|u|^{2})^{-1/2}, 
\notag
\end{align}%
coincides with equation (\ref{L2.17}) in the evolution parameter $t\in 
\mathbb{R}.$ This can be formulated as the next result.

\begin{proposition}
The dual to the classical relativistic electrodynamic \ model (\ref{L2.19})
allows the canonical Hamiltonian formulation (\ref{L3.12}) with respect to
the rest reference frame variables, where the Hamiltonian function is given
by expression (\ref{L3.11}). Moreover, this formulation circumvents the
"mass-potential energy" controversy attached to the classical
electrodynamical model (\ref{L1.8}).
\end{proposition}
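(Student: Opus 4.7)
The plan is to carry out a Legendre transform on the Lagrangian (\ref{L2.16a}) and then check that the resulting Hamilton equations reproduce the dual Lorentz force (\ref{L2.19}). Starting from $\mathcal{L}=-\bar{W}(1+|\dot r|^{2})^{1/2}+\xi \langle A,\dot r\rangle$, the canonical momentum is already computed in (\ref{L2.15}) to be $P=p+\xi A$ with $p=-\bar{W}\dot r(1+|\dot r|^{2})^{-1/2}$. Squaring this relation, I would invert it to express $\dot r$ in terms of $p=P-\xi A$, obtaining $|\dot r|^{2}=|p|^{2}/(\bar{W}^{2}-|p|^{2})$ and $(1+|\dot r|^{2})^{1/2}=\bar{W}(\bar{W}^{2}-|p|^{2})^{-1/2}$. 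This is the only algebraic content needed for the transform.

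Second, I would substitute into $H=\langle P,\dot r\rangle -\mathcal{L}$. The key simplification is that the $\xi \langle A,\dot r\rangle$ term inside $\mathcal{L}$ exactly cancels against the $\xi A$-piece of $P$, so the Hamiltonian collapses to the purely kinematical form $H=\langle p,\dot r\rangle +\bar{W}(1+|\dot r|^{2})^{1/2}$, which after substitution becomes $-(\bar{W}^{2}-|p|^{2})^{1/2}=-(\bar{W}^{2}-|P-\xi A|^{2})^{1/2}$, i.e.\ exactly expression (\ref{L3.11}). This is the precise ``no-interaction'' manifestation that absorbs the entire coupling to $A$ into the difference $P-\xi A$.

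Third, I would write out Hamilton's equations $\dot r=\partial H/\partial P$ and $\dot P=-\partial H/\partial r$. The $P$-derivative immediately gives $\dot r=(P-\xi A)(\bar{W}^{2}-|P-\xi A|^{2})^{-1/2}$, which under $d\tau =dt(1-|u|^{2})^{1/2}$ and $m:=-\bar{W}$ reduces to the tautology $\dot r=u(1-|u|^{2})^{-1/2}$, confirming consistency with the definition of $p=mu$. The $r$-derivative produces $\dot P=[-\bar{W}\nabla \bar{W}+\nabla \langle \xi A,P-\xi A\rangle ](\bar{W}^{2}-|p|^{2})^{-1/2}$; converting back to the laboratory parameter $t$ absorbs the prefactor, leaving $dP/dt=-\nabla \bar{W}+\nabla \langle \xi A,u\rangle $, and then writing $dP/dt=dp/dt+\xi dA/dt=dp/dt+\xi \partial A/\partial t+\xi \langle u,\nabla \rangle A$ and applying the identity (\ref{L2.13e}) to $\nabla \langle A,u\rangle $ (with $u$ treated as an independent phase-space variable, so $\nabla u=0$ and $\nabla \times u=0$) yields exactly $dp/dt=\xi E+\xi u\times B$ with $E,B$ as in (\ref{L2.20})--(\ref{L2.21}). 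This coincides with (\ref{L2.19}), proving the Hamiltonian formulation.

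Finally, the resolution of the ``mass-potential'' controversy is essentially a reading of (\ref{L3.11}): unlike the classical energy (\ref{L1.10}), where $\xi \varphi $ appears as a purely additive term that leaves the inertial mass untouched, here the potential enters only through $\bar{W}$, which by the postulate $m:=-\bar{W}$ simultaneously \emph{is} the dynamical mass. Thus the potential energy and the inertial mass are inseparable in this formulation, directly addressing the Brillouin-type objection recalled after (\ref{L1.10a}). I expect the main obstacle to be the bookkeeping in the third step: keeping track of the $(1-|u|^{2})^{1/2}$ factors under the change of time parameter from $\tau $ to $t$, and carefully distinguishing the role of $u$ as a phase-space coordinate (on which $\nabla $ acts trivially) when invoking the vector identity (\ref{L2.13e}), so that the $\xi \partial A/\partial t$ contribution originates unambiguously from $dA/dt$ and the $\xi u\times B$ term from the curl piece of the identity.
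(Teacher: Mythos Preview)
Your proposal is correct and follows essentially the same route as the paper: the text surrounding (\ref{L3.10})--(\ref{L3.14a}) performs exactly the Legendre transform you outline (with the same cancellation of the $\xi\langle A,\dot r\rangle$ term), arrives at $H=-(\bar W^{2}-|P-\xi A|^{2})^{1/2}$, writes the Hamilton equations (\ref{L3.12}), and verifies via (\ref{L3.13})--(\ref{L3.14a}) that they reproduce $dp/dt=\xi E+\xi u\times B$. Your discussion of the mass--potential point likewise matches the content of the remark immediately after the proposition.

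One small caution on the bookkeeping you flagged yourself: in your intermediate expression for $\dot P$ the sign in front of $\bar W\nabla\bar W$ should be $+$ rather than $-$ (direct differentiation of $-(\bar W^{2}-|P-\xi A|^{2})^{1/2}$ with $P$ held fixed gives $\dot P=(\bar W\nabla\bar W+\xi\sum_i(P_i-\xi A_i)\nabla A_i)(\bar W^{2}-|P-\xi A|^{2})^{-1/2}$); the subsequent conversion to $t$ using $\bar W<0$ then yields $dP/dt=-\nabla\bar W+\nabla\langle\xi A,u\rangle$, which is exactly (\ref{L2.17}) and what you state. Since your downstream computation is correct, this is only a typo to fix, not a gap in the argument.
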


The modified Lorentz force expression (\ref{L2.19}) and the related rest
energy relationship are characterized by the following remark.

\begin{remark}
If we make use of the modified relativistic Lorentz force expression (\ref%
{L2.19}) as an alternative to the classical one of (\ref{L1.11}), the
corresponding charged particle $\xi $ energy expression (\ref{L3.11}) also
gives rise to a true physically reasonable energy expression (at the
velocity $u:=0\in \mathbb{E}^{3}$ at the initial time moment $t=0$); namely, 
$\mathcal{E}_{0}=m_{0}$ instead of the physically controversial classical
expression $\mathcal{E}_{0}=m_{0}+\xi \varphi _{0},$ where $\varphi
_{0}:=\varphi |_{t=0},$ corresponding to the case (\ref{L1.10}).
\end{remark}

\subsection{\label{Subsec_2.5}The quantization of electrodynamics models
within the vacuum field theory approach}

\subsubsection{The problem setting}

Recently \cite{BoPrTa,BoPrTaPr-Lore} \ we devised a new regular no-geometry
approach to deriving the electrodynamics of a moving charged point particle $%
\xi $ in an external electromagnetic field from first principles. This
approach has, in part, reconciled the mass-energy controversy \cite{Bril} in
classical relativistic electrodynamics. Using the vacuum field theory
approach initially proposed in \cite{BoPrTa,BoPrTaPr-Lore,Repc}, we
reanalyzed this problem above both from the Lagrangian and Hamiltonian
perspective and derived key expressions for the corresponding energy
functions and Lorentz type forces acting on a moving charged point particle $%
\xi .$

Since all of our electrodynamics models were represented here in canonical
Hamiltonian form, they are well suited to the application of Dirac
quantization \cite{Dira,BoSh} and the corresponding derivation of related
Schr\"{o}dinger type evolution equations. We describe these procedures in
the section we proceed below.

\subsubsection{Free point particle electrodynamics model and its quantization%
}

The charged point particle electrodynamics models, discussed in detail in
Sections \ \ref{Subsec_2.2} and \ \ref{Subsec_2.3}, were also considered in 
\cite{BoPrTaPr-Lore} from the dynamical point of view, where a Dirac
quantization of the corresponding conserved energy expressions was
attempted. However, from the canonical point of view, \ the true
quantization procedure should be based on the relevant canonical Hamiltonian
formulation of the models given in (\ref{L3.4}), (\ref{L3.5a}) and (\ref%
{L3.12}).

In particular, consider a free charged \ point particle electrodynamics
model characterized by (\ref{L3.4}) and having the Hamiltonian equations%
\begin{align}
dr/d\tau & :=\partial H/\partial p=-p(\bar{W}^{2}-|p|^{2})^{-1/2},
\label{L5.1} \\
dp/d\tau & :=-\partial H/\partial r=-\bar{W}\nabla \bar{W}(\bar{W}%
^{2}-|p|^{2})^{-1/2},  \notag
\end{align}%
where $\bar{W}:M^{4}\rightarrow \mathbb{R}$ defined in the preceding
sections is the corresponding vacuum field potential characterizing medium
field structure, $(r,p)\in T^{\ast }(\mathbb{R}^{3})\simeq \mathbb{E}%
^{3}\times \mathbb{E}^{3}$ are the standard canonical coordinate-momentum
variables on the cotangent space $T^{\ast }(\mathbb{R}^{3}),$ $\tau \in 
\mathbb{R},$ \ is the proper rest reference frame $\mathcal{K}_{\tau }$ time
parameter of the moving particle, and $H:T^{\ast }(\mathbb{R}%
^{3})\rightarrow \mathbb{R}$ is the Hamiltonian function%
\begin{equation}
H:=-(\bar{W}^{2}-|p|^{2})^{1/2},  \label{L5.2}
\end{equation}%
expressed here and hereafter in light speed units. The rest reference frame $%
\mathcal{K}_{\tau },$ parameterized by variables $(\tau ,r)\in \mathbb{E}%
^{4},$ is related to any other reference frame $\mathcal{K}_{t}$ \ in which
our charged point particle $\xi $ \ moves with velocity vector $u\in \mathbb{%
E}^{3}.$ \ The frame $\mathcal{K}_{t}$ \ is parameterized by variables $%
(t,r)\in M^{4}$ via the Euclidean infinitesimal relationship%
\begin{equation}
dt^{2}=d\tau ^{2}+|dr|^{2},  \label{L5.3}
\end{equation}%
which is equivalent to the Minkowskian infinitesimal relationship%
\begin{equation}
d\tau ^{2}=dt^{2}-|dr|^{2}.  \label{L5.4}
\end{equation}%
The Hamiltonian function (\ref{L5.2}) clearly satisfies the energy
conservation conditions 
\begin{equation}
dH/d\tau =dH/dt=0\   \label{L5.5}
\end{equation}%
for all $t,\tau \in \mathbb{R}.$ This means that the suitable energy 
\begin{equation}
\mathcal{E}=(\bar{W}^{2}-|p|^{2})^{1/2}  \label{L5.6}
\end{equation}%
can be treated by means of the Dirac quantization scheme \cite{Dira,Dira-1}
to obtain, as $\hbar \rightarrow 0,$ (or the light speed $c\rightarrow
\infty $) the governing Schr\"{o}dinger type dynamical equation. To do this
following the approach in \cite{BoPrTa,BoPrTaPr-Lore}, we need to make
canonical operator replacements $\mathcal{E}\rightarrow \hat{\mathcal{E}}:=-%
\frac{\hbar }{i}\frac{\partial }{\partial \tau },$ \ $p\rightarrow \hat{p}:=%
\frac{\hbar }{i}\nabla ,$ as $\hbar \rightarrow 0,$ in the following energy
expression:%
\begin{equation}
\mathcal{E}^{2}:=(\hat{\mathcal{E}}\psi ,\hat{\mathcal{E}}\psi )=(\psi ,\hat{%
\mathcal{E}}^{2}\psi )=(\psi ,\hat{H}^{+}\hat{H}\psi ),  \label{L5.7}
\end{equation}%
where $(\cdot ,\cdot )$ is the standard $L_{2}$ - inner product. It follows
from (\ref{L5.6}) that 
\begin{equation}
\hat{\mathcal{E}}^{2}=\bar{W}^{2}-|p|^{2}=\hat{H}^{+}\hat{H}  \label{L5.8}
\end{equation}%
is a suitable operator factorization in the Hilbert space $\mathcal{H}%
:=L_{2}(\mathbb{R}^{3};\mathbb{C})$ and $\psi \in \mathcal{H}$ \ is the
corresponding normalized quantum vector state. Since the following
elementary identity 
\begin{equation}
\bar{W}^{2}-|p|^{2}=\bar{W}(1-\bar{W}^{-1}|p|^{2}\bar{W}^{-1})^{1/2}(1-\bar{W%
}^{-1}|p|^{2}\bar{W}^{-1})^{1/2}\bar{W}  \label{L5.9}
\end{equation}%
holds, we can use (\ref{L5.8}) and (\ref{L5.9}) to define the operator 
\begin{equation}
\hat{H}:=(1-\bar{W}^{-1}|p|^{2}\bar{W}^{-1})^{1/2}\bar{W}.  \label{L5.10}
\end{equation}%
Having calculated the operator expression (\ref{L5.10}) as $\hbar
\rightarrow 0$ up to operator accuracy $O($ $\hbar ^{4}),$ \ it is easy see
that 
\begin{equation}
\hat{H}=\frac{|p|^{2}}{2m(u)}+\bar{W}:=-\frac{\hbar ^{2}}{2m(u)}\nabla ^{2}+%
\bar{W},  \label{L5.11}
\end{equation}%
where we have taken into account the dynamical mass definition $m(u):=-\bar{W%
}$ (in the light speed units). Consequently, using (\ref{L5.7}) and (\ref%
{L5.11}), we obtain up to operator accuracy $O($ $\hbar ^{4})$ the following
Schr\"{o}dinger type evolution equation%
\begin{equation}
i\hbar \frac{\partial \psi }{\partial \tau }:=\hat{\mathcal{E}}\psi =\hat{H}%
\psi =-\frac{\hbar ^{2}}{2m(u)}\nabla ^{2}\psi +\bar{W}\psi  \label{L5.12}
\end{equation}%
with respect to the rest \ reference frame $\mathcal{K}_{\tau }$ evolution
parameter $\tau \in \mathbb{R}.$ For a related evolution parameter $t\in 
\mathbb{R}$ parameterizing a reference frame $\mathcal{K},$ the equation (%
\ref{L5.12}) takes the form%
\begin{equation}
i\hbar \frac{\partial \psi }{\partial t}=-\frac{\hbar ^{2}m_{0}}{2m(u)^{2}}%
\nabla ^{2}\psi -m_{0}\psi .  \label{L5.13}
\end{equation}%
Here we used the fact that it follows from (\ref{L5.6}) that the classical
mass relationship 
\begin{equation}
m(u)=m_{0}(1-|u|^{2})^{-1/2}  \label{L5.14}
\end{equation}%
holds, where $m_{0}\in \mathbb{R}_{+}$ is the corresponding rest mass of our
point particle $\xi .$

The linear Schr\"{o}dinger equation (\ref{L5.13}) for the case $\hbar
/c\rightarrow 0$ actually coincides with the well-known expression \cite%
{LaLi,Dira,Feyn-1} from classical quantum mechanics.

\subsubsection{Classical charged point particle electrodynamics model and
its quantization}

We start here from the first vacuum field theory reformulation of the
classical charged point particle electrodynamics (introduced in Subsection %
\ref{Subsec_2.1} ) and based on the conserved Hamiltonian function (\ref%
{L3.11}) 
\begin{equation}
H:=-(\bar{W}^{2}-|P-\xi A|^{2})^{1/2},  \label{L6.1}
\end{equation}%
where $\xi \in $ $\mathbb{R}$ is the particle charge, $(\bar{W},A)\in 
\mathbb{R\times E}^{3}$ is the corresponding representation of the
electromagnetic field potentials and $P\in \mathbb{E}^{3}$ is the common
generalized particle-field momentum 
\begin{equation}
P:=p+\xi A,\text{ \ \ \ \ }p:=mu,  \label{L6.2}
\end{equation}%
which satisfies the classical Lorentz force equation. Here $m:=-\bar{W}$ is
the observable dynamical mass of our charged particle, and $u\in \mathbb{E}%
^{3}$ is its velocity vector with respect to a chosen reference frame $%
\mathcal{K},$ all expressed in light speed units.

Our electrodynamics based on (\ref{L6.1}) is canonically Hamiltonian, so the
Dirac quantization scheme 
\begin{equation}
P\rightarrow \hat{P}:=\frac{\hbar }{i}\nabla ,\text{ \ \ \ \ \ \ }\mathcal{E}%
\rightarrow \hat{\mathcal{E}}:=-\frac{\hbar }{i}\frac{\partial }{\partial
\tau }  \label{L6.3}
\end{equation}%
should be applied to the energy expression 
\begin{equation}
\mathcal{E}:=(\bar{W}^{2}-|P-\xi A|^{2})^{1/2},  \label{L6.4}
\end{equation}%
following from the conservation conditions 
\begin{equation}
dH/dt=0=dH/d\tau ,  \label{L6.5}
\end{equation}%
satisfied for all $\tau ,t\in \mathbb{R}.$

Proceeding as above, we can factorize the operator $\hat{\mathcal{E}}^{2}$
as 
\begin{equation*}
\begin{array}{c}
\bar{W}^{2}-|\hat{P}-\xi A|^{2}=\bar{W}(1-\bar{W}^{-1}|\hat{P}-\xi A|^{2}%
\bar{W})^{1/2}\times \\ 
\times (1-\bar{W}^{-1}|\hat{P}-\xi A|^{2}\bar{W}^{-1})^{1/2}\bar{W}:=\hat{H}%
^{+}\hat{H},%
\end{array}%
\end{equation*}%
where (as $\hbar /c\rightarrow 0,$ $\hbar c=const$)%
\begin{equation}
\hat{H}:=\frac{1}{2m(u)}|\frac{\hbar }{i}\nabla -\xi A|^{2}+\bar{W}
\label{L6.7}
\end{equation}%
up to operator accuracy $O(\hbar ^{4}).$ Hence, the related Schr\"{o}dinger
type evolution equation in the Hilbert space $\mathcal{H}=L_{2}(\mathbb{R}%
^{3};\mathbb{C})$ is 
\begin{equation}
i\hbar \frac{\partial \psi }{\partial \tau }:=\hat{\mathcal{E}}\psi =\hat{H}%
\psi =\frac{1}{2m(u)}|\frac{\hbar }{i}\nabla -\xi A|^{2}\psi +\bar{W}\psi
\label{L6.8}
\end{equation}%
with respect to the rest reference frame $\mathcal{K}_{\tau }$ evolution
parameter $\tau \in \mathbb{R}$, and corresponding Schr\"{o}dinger type
evolution equation with respect to the evolution parameter $t\in \mathbb{R}$
takes the form 
\begin{equation}
i\hbar \frac{\partial \psi }{\partial t}=\frac{m_{0}}{2m(u)^{2}}|\frac{\hbar 
}{i}\nabla -\xi A|^{2}\psi -m_{0}\psi .  \label{L6.9}
\end{equation}%
The Schr\"{o}dinger equation (\ref{L6.8}) (as $\hbar /c\rightarrow 0$)
coincides \cite{LaLi-1,Dira} with the classical quantum mechanics version.

\subsubsection{Modified charged point particle electrodynamics model and its
quantization}

From the canonical viewpoint, \ we now turn to the true quantization
procedure for the electrodynamics model, characterized by (\ref{L2.13}) and
having the Hamiltonian function (\ref{L3.5}) 
\begin{equation}
H:=-(\bar{W}^{2}-\xi ^{2}|A|^{2}-|P|^{2})^{1/2}-\xi <A,P>(\bar{W}^{2}-\xi
^{2}|A|^{2}-|P|^{2})^{-1/2}.  \label{L7.1}
\end{equation}%
Accordingly the suitable energy function is 
\begin{equation}
\mathcal{E}:=(\bar{W}^{2}-\xi ^{2}|A|^{2}-|P|^{2})^{1/2}+\xi <A,P>(\bar{W}%
^{2}-\xi ^{2}|A|^{2}-|P|^{2})^{-1/2},  \label{L7.2}
\end{equation}%
where, as before, 
\begin{equation}
P:=p+\xi A,\text{ \ \ \ \ }p:=mu,\text{ \ }m:=-\bar{W},  \label{L7.3}
\end{equation}%
is a conserved quantity for (\ref{L2.13}), which we will canonically
quantize via the Dirac procedure (\ref{L6.3}). Toward this end, let us
consider the quantum condition%
\begin{equation}
\mathcal{E}^{2}:=(\hat{\mathcal{E}}\psi ,\hat{\mathcal{E}}\psi )=(\psi ,\hat{%
\mathcal{E}}^{2}\psi ),\text{ \ \ \ \ \ }(\psi ,\psi ):=1,  \label{L7.4}
\end{equation}%
where, by definition, $\hat{\mathcal{E}}:=-\frac{\hbar }{i}\frac{\partial }{%
\partial t}$ and $\psi \in \mathcal{H}=L_{2}(\mathbb{R}^{3};\mathbb{C})$ is
a respectively normalized quantum state vector. Making now use of the energy
function (\ref{L7.2}), one readily computes that 
\begin{equation}
\mathcal{E}^{2}=\bar{W}^{2}-|P-\xi A|^{2}+\xi ^{2}<A,P>(\bar{W}%
^{2}-|P|^{2})^{-1}<P,A>,  \notag
\end{equation}%
which transforms by the canonical Dirac type quantization $P\rightarrow \hat{%
P}:=\frac{\hbar }{i}\nabla $ into the symmetrized operator expression 
\begin{equation}
\hat{\mathcal{E}}^{2}=\bar{W}^{2}-|\hat{P}-\xi A|^{2}+\xi ^{2}<A,\hat{P}>(%
\bar{W}^{2}-|\hat{P}|^{2})^{-1}<\hat{P},A>.  \label{L7.6}
\end{equation}%
Factorizing the operator (\ref{L7.6})\ in the form $\hat{\mathcal{E}}^{2}=%
\hat{H}^{+}\hat{H},$ and retaining only terms up to $O(\hbar ^{4})$ (as $%
\hbar /c\rightarrow 0$), we compute that 
\begin{equation}
\hat{H}:=\frac{1}{2m(u)}|\frac{\hbar }{i}\nabla -\xi A|^{2}-\frac{\xi ^{2}}{%
2m^{3}(u)}<A,\frac{\hbar }{i}\nabla ><\frac{\hbar }{i}\nabla ,A>,
\label{L7.7}
\end{equation}%
where, as before, $m(u)=-\bar{W}$ in light speed units. Thus, owing to (\ref%
{L7.4}) and (\ref{L7.7}), the resulting Schr\"{o}dinger evolution equation
is 
\begin{equation}
i\hbar \frac{\partial \psi }{\partial \tau }:=\hat{H}\psi =\frac{1}{2m(u)}|%
\frac{\hbar }{i}\nabla -\xi A|^{2}\psi -\frac{\xi ^{2}}{2m^{3}(u)}<A,\frac{%
\hbar }{i}\nabla ><\frac{\hbar }{i}\nabla ,A>\psi  \label{L7.8}
\end{equation}%
with respect to the rest reference frame proper evolution parameter $\tau
\in \mathbb{R}.$ The latter can be rewritten in the equivalent form as

\begin{eqnarray}
i\hbar \frac{\partial \psi }{\partial \tau } &=&-\frac{\hbar ^{2}}{2m(u)}%
\Delta \psi -\frac{1}{2m(u)}<[\frac{\hbar }{i}\nabla ,\xi A]_{+}>\psi -
\label{L7.8a} \\
-\frac{\xi ^{2}}{2m^{3}(u)} &<&A,\frac{\hbar }{i}\nabla ><\frac{\hbar }{i}%
\nabla ,A>\psi ,  \notag
\end{eqnarray}%
where $[\cdot ,\cdot ]_{+}$ means the formal anti-commutator of operators.
Similarly one also obtains the related Schr\"{o}dinger equation with respect
to the time parameter $t\in \mathbb{R},$ which we shall not dwell upon here.
The result (\ref{L7.8}) only slightly differs from the classical Schr\"{o}%
dinger evolution equation (\ref{L6.8}). Simultaneously, its form (\ref{L7.8a}%
) almost completely coincides with the classical ones from \cite%
{LaLi-1,Paul,Dira} modulo the evolution considered with respect to the rest
reference time parameter $\tau \in \mathbb{R}.$ This suggests that we must
more thoroughly reexamine the physical motivation of the principles
underlying the classical electrodynamic models, described by the Hamiltonian
functions (\ref{L6.1}) and (\ref{L7.1}), giving rise to different Lorentz
type force expressions. A more deeply considered and extended analysis of
this matter is forthcoming in a paper now in preparation.

\begin{remark}
All of dynamical field equations discussed above are canonical Hamiltonian
systems with respect to the corresponding proper \ rest reference frames $%
\mathcal{K}_{\tau },$ parameterized by suitable time parameters $\tau \in 
\mathbb{R}.$ Upon passing to the basic laboratory reference frame $\mathcal{K%
}_{t}$ \ with the time parameter $t\in \mathbb{R},$naturally the related
Hamiltonian structure is lost, giving rise to a new interpretation of the
real particle motion. Namely, one that has an absolute sense only with
respect to the proper rest \ reference system, and otherwise being
completely relative with respect to all other reference frames. As for the
Hamiltonian expressions (\ref{L3.1}), (\ref{L3.5}) and (\ref{L3.11}), one
observes that they all depend strongly on the vacuum potential energy field
function $\bar{W}:M^{4}\mathbb{\rightarrow R},$ thereby avoiding the mass
problem of the classical energy expression pointed out by L. Brillouin \cite%
{Bril}. It should be noted that the canonical Dirac quantization procedure
can be applied only to the corresponding dynamical field systems considered
with respect to their proper rest reference frames. Some comments are in
order concerning the classical relativity principle. We have obtained our
results relying only on the natural notion of the rest reference frame and
its suitable Lorentzian parametrization with respect to any other moving
reference frames. It seems reasonable then that the true state changes of a
moving charged particle $\xi $\ are exactly realized only with respect to
its proper rest reference system. Then the only remaining question would be
about the physical justification of the corresponding relationship between
time parameters of moving and rest reference frames.
\end{remark}

The relationship between reference frames that we have used through is
expressed as 
\begin{equation}
d\tau =dt(1-|u|^{2})^{1/2},  \label{L4.1}
\end{equation}%
where $u:=dr/dt\in \mathbb{E}^{3}$ is the velocity vector with which the
rest reference frame $\mathcal{K}_{\tau }$ moves with respect to another
arbitrarily chosen reference frame $\mathcal{K}.$ Expression (\ref{L4.1})
implies, in particular, that 
\begin{equation}
dt^{2}-|dr|^{2}=d\tau ^{2},  \label{L4.2}
\end{equation}%
which is identical to the classical infinitesimal Lorentz invariant. This is
not a coincidence, since all our dynamical vacuum field equations were
derived in turn \cite{BoPrTa,BoPrTaPr-Lore} \ from the governing equations
of the vacuum potential field function $W:M^{4}\mathbb{\rightarrow R}$ in
the form 
\begin{equation}
\partial ^{2}W/\partial t^{2}-\nabla ^{2}W=\xi \rho ,\text{ }\partial
W/\partial t+\nabla (vW)=0,\text{ }\partial \rho /\partial t+\nabla (v\rho
)=0,  \label{L4.3}
\end{equation}%
which is \emph{a priori} \ Lorentz invariant. Here $\rho \in $ $\mathbb{R}$
is the charge density and $v:=dr/dt$ the associated local velocity of the
vacuum field potential evolution. Consequently, the dynamical infinitesimal
Lorentz invariant (\ref{L4.2}) reflects this intrinsic structure of
equations (\ref{L4.3}). If it is rewritten in the following nonstandard
Euclidean form:%
\begin{equation}
dt^{2}=d\tau ^{2}+|dr|^{2}  \label{L4.4}
\end{equation}%
it gives rise to a completely different relationship between the reference
frames $\mathcal{K}_{t}$ \ and $\mathcal{K}_{\tau }$, namely%
\begin{equation}
dt=d\tau (1+|\dot{r}|^{2})^{1/2},  \label{L4.5}
\end{equation}%
where $\dot{r}:=dr/d\tau $ is the related particle velocity with respect to
the rest reference system. Thus, we observe that all our Lagrangian analysis
in this Section is based on the corresponding functional expressions written
in these "Euclidean" space-time coordinates and with respect to which the
least action principle was applied. So we see that there are two
alternatives - the first is to apply the least action principle to the
corresponding Lagrangian functions expressed in the Minkowski space-time
variables with respect to an arbitrarily chosen reference frame $\mathcal{K}%
, $ and the second is to apply the least action principle to the
corresponding Lagrangian functions expressed in Euclidean space-time
variables with respect to the rest reference frame $\mathcal{K}_{\tau }.$

This leads us to a slightly amusing but thought-provoking observation: It
follows from our analysis that all of the results of classical special
relativity related with the electrodynamics of charged point particles can
be obtained (in a one-to-one correspondence) using of our new definitions of
the dynamical particle mass and the least action principle with respect to
the associated Euclidean space-time variables in the rest reference system.

An additional remark concerning the quantization procedure of the proposed
electrodynamics models is in order: If the dynamical vacuum field equations
are expressed in canonical Hamiltonian form, as we have done in this paper,
only straightforward technical details are required to quantize the
equations and obtain the corresponding Schr\"{o}dinger evolution equations
in suitable Hilbert spaces of quantum states. There is another striking
implication from our approach: the Einsteinian equivalence principle \cite%
{LaLi,Paul,Feyn-1,Feyn,Klym} is rendered superfluous for our vacuum field
theory of electromagnetism and gravity.

Using the canonical Hamiltonian formalism devised here for the alternative
charged point particle electrodynamics models, we found it rather easy to
treat the Dirac quantization. The results obtained compared favorably with
classical quantization, but it must be admitted that we still have not given
a compelling physical motivation for our new models. This is something that
we plan to revisit in future investigations. Another important aspect of our
vacuum field theory no-geometry (geometry-free) approach to combining the
electrodynamics with the gravity, is the manner in which it singles out the
decisive role of the rest \ reference frame $\mathcal{K}_{\tau }.$ More
precisely, all of our electrodynamics models allow both the Lagrangian and
Hamiltonian formulations with respect to the rest reference \ system
evolution parameter $\tau \in \mathbb{R}$, which are well suited the to
canonical quantization. The physical nature of this fact remains is as yet
not quite clear. In fact, as far as we know \cite%
{Paul,LaLi,Klym,Logu-2,Logu-3}, there is no physically reasonable
explanation of this decisive role of the rest \ reference system, except for
that given by R. Feynman who argued in \cite{Feyn-1} that the relativistic
expression for the classical Lorentz force (\ref{L1.11}) has physical sense
only with respect to \ the rest reference frame variables $(\tau ,r)\in 
\mathbb{R}\times \mathbb{E}^{3}.$ In future research we plan to analyze the
quantization scheme in more detail and begin work on formulating a vacuum
quantum field theory of infinitely many particle systems.

\section{\protect\bigskip \label{Sec_2a}The electromagnetic
Dirac-Fock-Podolsky problem and symplectic properties of the Maxwell and
Yang-Mills type dynamical systems}

\subsection{Introduction}

When investigating different dynamical systems on canonical symplectic
manifolds, invariant under action of certain symmetry groups, additional
mathematical structures often appear, the analysis of which shows their
importance for understanding many related problems under study. Amongst them
\ we here mention the Cartan type connection on an associated principal
fiber bundle, which enables one to study in more detail the properties of
the investigated dynamical system in the case of its reduction upon the
corresponding invariant submanifolds and quotient spaces, associated with
them.

Problems related to the investigation of properties of reduced dynamical
systems on symplectic manifolds were studied, e.g., in \cite%
{AbMa,PrMy,BlPrSa}, where the relationship between a symplectic structure on
the reduced space and the available connection on a principal fiber bundle
was formulated in explicit form. Other aspects of dynamical systems related
to properties of reduced symplectic structures were studied in \cite%
{Kumm,HoKu,HoKu-1}, where, in particular, the reduced symplectic structure
was explicitly described within the framework of the classical Dirac scheme,
and several applications to nonlinear (including celestial) dynamics were
given.

It is well-known \cite{BoSh,Thir, Dira,BoPrTa,BoPrTa-1,Paul} that the
Hamiltonian theory of electromagnetic Maxwell equations faces a very
important classical problem of introducing into the unique formalism the
well known Lorentz conditions, ensuring both the wave structure of
propagating quanta and the positivity of energy. Regretfully, in spite of
classical studies on this problem given by Dirac, Fock and Podolsky \cite%
{DiFoPo}, the problem remains open, and the Lorentz condition is imposed
within the modern electrodynamics as the external constraint not entering a
priori the initial Hamiltonian (or Lagrangian) theory. Moreover, when trying
to quantize the electromagnetic theory, as it was shown by Pauli, Dirac,
Bogolubov and Shirkov and others \cite{BoSh,Paul,Dira}, within the existing
approaches the quantum Lorentz condition could not be satisfied, except in
the average sense, since it becomes not compatible with the related quantum
dynamics. This problem stimulated us to study this problem from the so
called symplectic reduction theory \cite{MaWe,BlPrSa}, which allows the
systematic introduction into the Hamiltonian formalism the external charge
and current conditions, giving rise to a partial solution to the Lorentz
condition problem mentioned above. Some applications of the method to
Yang-Mills type equations interacting with a point charged particle, are
presented in detail. In particular, based on analysis of reduced geometric
structures on fibered manifolds, invariant under the action of a symmetry
group, we construct the symplectic structures associated with connection
forms on suitable principal fiber bundles. We present suitable mathematical
preliminaries of the related Poissonian structures on the corresponding
reduced symplectic manifolds, which are often used \cite{AbMa,MaWe,Kupe} in
various problems of dynamics in modern mathematical physics, and apply them
to study the non-standard Hamiltonian properties of the Maxwell and
Yang-Mills type dynamical systems. We formulate a symplectic analysis of the
important Lorentz type constraints, which describe the electrodynamic vacuum
properties.

We formulate a symplectic reduction theory of the classical Maxwell
electromagnetic field equations and prove \cite{BoPrTa} that the important
Lorentz condition, ensuring the existence of electromagnetic waves \cite%
{BoSh,Feyn-1,LaLi}, can be naturally included into the Hamiltonian picture,
thereby solving the Dirac, Fock and Podolsky problem \cite{DiFoPo} mentioned
above. We also study from the symplectic reduction theory the Poissonian
structures and the classical minimal interaction principle related with
Yang-Mills type equations.

\subsection{\label{Subsec_2a.1}The symplectic reduction on cotangent fiber
bundles with symmetry}

Consider an $m$-dimensional smooth manifold $M$ and the cotangent vector
fiber bundle $T^{\ast }(M).$ We equip (see \cite{Godb}, Chapter VII; \cite%
{DuNoFo}) the cotangent space $T^{\ast }(M)$ with the canonical Liouville
1-form $\lambda (\alpha ^{(1)}):=$ $pr_{M}^{\ast }\alpha ^{(1)}\in \Lambda
^{1}(T^{\ast }(M)),$ where $pr_{M}:T^{\ast }(M)\rightarrow M$ is the
canonical projection, $\ pr_{M}^{\ast }:=(d\circ pr_{M})^{\ast }:T^{\ast
}(M)\rightarrow T^{\ast }(T^{\ast }(M))\ \ $is the adjoint to the standard
tangent mapping $d\circ pr_{M}:=pr_{M,\ast }:T(T^{\ast }(M))\rightarrow T(M)$
\ with respect to the natural convolution on the product $\ T^{\ast
}(M))\otimes T(M).$ Then for a general one-form 
\begin{equation}
\alpha ^{(1)}(u)=\sum_{j=1}^{m}v_{j}du^{j},  \label{1}
\end{equation}%
where $(u,v)\in T^{\ast }(M)$ are the corresponding canonical local
coordinates on $T^{\ast }(M),$ the canonical symplectic structure on $%
T^{\ast }(M)$ will be equal to $\omega ^{(2)}(\alpha ^{(1)}):=d$ $\lambda
(\alpha ^{(1)})=\sum_{j=1}^{m}dv_{j}\wedge du^{j}\in \Lambda ^{2}(T^{\ast
}(M)).$ \ The any group of diffeomorphisms of the manifold $M,$ naturally
lifted to the fiber bundle $T^{\ast }(M),$ preserves the invariance of the
canonical 1-form $\lambda (\alpha ^{(1)})\in \Lambda ^{1}(T^{\ast }(M)).$ In
particular, if a smooth action of a Lie group $G$ is given on the manifold $%
M,$ then every element $a\in \mathcal{G},$ where $\mathcal{G}$ is the Lie
algebra of the Lie group $G,$ generates the vector field $k_{a}:M\rightarrow
T(M)$ in a natural manner. Furthermore, since the group action on $M,$ i.e., 
\begin{equation}
\varphi :G\times M\rightarrow M,  \label{2}
\end{equation}%
generates a diffeomorphism $\ \varphi _{g}\in Diff$ $M$ for every element $%
g\in G,$ this diffeomorphism is naturally lifted to the corresponding
diffeomorphism $\varphi _{g}^{\ast }$ $\in Diff$ $T^{\ast }(M)$ of the
cotangent fiber bundle $T^{\ast }(M),$ which also leaves \ the canonical
1-form $pr_{M}^{\ast }\alpha ^{(1)}\in \Lambda ^{1}(T^{\ast }(M))$
invariant. Namely, the equality

\begin{equation}
\varphi _{g}^{\ast }\lambda (\alpha ^{(1)})=\lambda (\alpha ^{(1)})
\label{3}
\end{equation}%
holds \cite{AbMa,Godb,PrMy} for every 1-form $\alpha ^{(1)}\in \Lambda
^{1}(M).$ Thus, we can define on $T^{\ast }(M)$ the corresponding vector
field $K_{a}:T^{\ast }(M)\rightarrow T(T^{\ast }(M))$ for every element $%
a\in \mathcal{G}.$ Then condition (\ref{3}) can be rewritten in the
following form for all $a\in \mathcal{G}:$

\begin{equation*}
L_{K_{a}}\cdot pr_{M}^{\ast }\alpha ^{(1)}=pr_{M}^{\ast }\cdot
L_{k_{a}}\alpha ^{(1)}=0,
\end{equation*}%
where $L_{K_{a}}$ and $L_{k_{a}}$ are the usual Lie derivatives on $\Lambda
^{1}(T^{\ast }(M))$ and $\Lambda ^{1}(M),$ respectively.

The canonical symplectic structure on $T^{\ast }(M)$ defined above as

\begin{equation}
\omega ^{(2)}(\alpha ^{(1)}):=d\lambda (\alpha ^{(1)})  \label{4}
\end{equation}
is also invariant, i.e., $L_{K_{a}}\omega ^{(2)}=0$ for all $a\in \mathcal{G}%
.$

For any smooth function $H\in D(T^{\ast }(M)),$ \ a Hamiltonian vector field 
$K_{H}:T^{\ast }(M)\rightarrow T(T^{\ast }(M))$ \ such that

\begin{equation}
i_{K_{H}}\omega ^{(2)}=-dH  \label{5}
\end{equation}%
is defined, and vice versa, because the symplectic 2-form (\ref{4}) is
non-degenerate. Using (\ref{5}) and (\ref{4}), we easily establish that the
Hamiltonian function $H:=H_{K}\in D(T^{\ast }(M))$ is given by the
expression $H_{K}=pr_{M}^{\ast }\alpha ^{(1)}(K_{H})=\alpha
^{(1)}(pr_{M}^{\ast }K_{H})=\alpha ^{(1)}(k_{H}),$ where $k_{H}\in T(M)$ is
the corresponding vector field on the manifold $M,$ whose lifting to the
fiber bundle $T^{\ast }(M)$ coincides with the vector field $K_{H}:T^{\ast
}(M)\rightarrow T(T^{\ast }(M)).$ For $K_{a}:T^{\ast }(M)\rightarrow
T(T^{\ast }(M)),$ where $a\in \mathcal{G},$ \ it is easy to establish that
the corresponding Hamiltonian function $H_{a}=\alpha
^{(1)}(k_{a})=pr_{M}^{\ast }$ $\alpha ^{(1)}(K_{a})$ for $a\in \mathcal{G}$
defines \cite{AbMa,PrMy,HePrPr} a linear momentum \ mapping $l:T^{\ast
}(M)\rightarrow \mathcal{G}^{\ast }$ according to the rule

\begin{equation}
H_{a}:=<l,a>,  \label{6}
\end{equation}%
where $<$\textperiodcentered $,$\textperiodcentered $\ >$ \ is the
corresponding convolution on $\mathcal{G}^{\mathcal{\ast }}$ $\times 
\mathcal{G}.$ By virtue of definition (\ref{6}), the momentum mapping $%
l:T^{\ast }(M)\rightarrow \mathcal{G}^{\ast }$ is invariant under the action
of any invariant Hamiltonian vector field $K_{b}:T^{\ast }(M)\rightarrow
T(T^{\ast }(M))$ for any $b\in \mathcal{G}.$ Indeed, $%
L_{K_{b}}<l,a>=L_{K_{b}}H_{a}=-L_{K_{a}}H_{b}=0,$ because, by definition,
the Hamiltonian function $H_{b}\in D(T^{\ast }(M))$ is invariant under the
action of any vector field $K_{a}:T^{\ast }(M)\rightarrow T(T^{\ast }(M)),$ $%
a\in \mathcal{G}.$

We now fix a regular value of the momentum mapping $l(u,v)=\xi \in \mathcal{G%
}^{\ast }$ and consider the corresponding submanifold $\mathcal{M}_{\xi
}:=\{(u,v)\in T^{\ast }(M):l(u,v)=\xi \in \mathcal{G}^{\ast }\}.$ On the
basis of definition (\ref{1}) and the invariance of the 1-form $pr_{M}^{\ast
}$ $\alpha ^{(1)}\in \Lambda ^{1}(T^{\ast }(M))$ under the action of the Lie
group $G$ on $T^{\ast }(M),$ we can write the equalities

\begin{eqnarray}
&<&l(g\circ (u,v)),a>=pr_{M}^{\ast }\alpha ^{(1)}(K_{a})(g\circ (u,v))= 
\notag \\
&=&pr_{M}^{\ast }\alpha ^{(1)}(K_{Ad_{g-1}a})(u,v):=  \label{7} \\
&=&<l(u,v),Ad_{g-1}a>=<Ad_{g-1}^{\ast }l(u,v),a>  \notag
\end{eqnarray}%
for any $g\in G$ and all $a\in \mathcal{G}$ and $(u,v)\in T^{\ast }(M).$
Using (\ref{7}) we establish that, for every $g\in G$ and all $(u,v)\in
T^{\ast }(M),$ the following relation is true: $l(g\circ
(u,v))=Ad_{g^{-1}}^{\ast }l(u,v).$ This means that the diagram%
\begin{equation*}
\begin{array}{ccc}
T^{\ast }(M) & \overset{l}{\rightarrow } & \mathcal{G}^{\mathcal{\ast }} \\ 
g\downarrow &  & \downarrow Ad_{g-1}^{\ast } \\ 
T^{\ast }(M) & \overset{l}{\rightarrow } & \mathcal{G}^{\mathcal{\ast }}%
\end{array}%
\end{equation*}%
is commutative for all elements $g\in G;$ the corresponding action $%
g:T^{\ast }(M)\rightarrow T^{\ast }(M)$ is called equivariant \cite%
{AbMa,PrMy}.

Let $\ G_{\xi }\subset G$ \ denote the stabilizer of a regular element $\xi
\in \mathcal{G}^{\ast }$ with respect to the related co-adjoint action. \ It
is obvious that in this case the action of the Lie subgroup $G_{\xi }$ on
the submanifold $\mathcal{M}_{\xi }\subset \mathcal{M}:=T^{\ast }(M)$ is
naturally defined; we assume that it is free and proper. According to this
action on $\mathcal{M}_{\xi },$ we can define \cite{AbMa,HoKu,Kupe} a
so-called reduced space $\mathcal{\bar{M}}_{\xi }$ by taking the factor with
respect to the action of the subgroup $G_{\xi }$ on $\mathcal{M}_{\xi },$
i.e.,

\begin{equation}
\mathcal{\bar{M}}_{\xi }:=\mathcal{M}_{\xi }/G_{\xi }.  \label{8}
\end{equation}%
The quotient space (\ref{8}) induces a symplectic structure $\bar{\omega}%
_{\xi }^{(2)}\in \Lambda ^{2}(\mathcal{\bar{M}}_{\xi })$ on itself, which is
defined as follows:%
\begin{equation}
\bar{\omega}_{\xi }^{(2)}(\bar{\eta}_{1},\bar{\eta}_{2})=\omega _{\xi
}^{(2)}(\eta _{1},\eta _{2}),  \label{9}
\end{equation}%
where $\bar{\eta}_{1},\bar{\eta}_{2}\in T(\mathcal{\bar{M}}_{\xi })$ are
arbitrary vectors onto which vectors $\eta _{1},\eta _{2}\in T(\mathcal{M}%
_{\xi })$ are projected, taken at any point $(u_{\xi },v_{\xi })\in \mathcal{%
M}_{\xi },$ being uniquely projected onto the point $\bar{\mu}_{\xi }\in 
\mathcal{\bar{M}}_{\xi },$ according to (\ref{8}).

Let $\pi _{\xi }:\mathcal{M}_{\xi }\rightarrow \mathcal{M}\ $ denote the
corresponding imbedding mapping into $\mathcal{M}$ and let $r_{\xi }:%
\mathcal{M}_{\xi }\mathcal{\rightarrow \bar{M}}_{\xi }$ denote the
corresponding reduction to the space $\mathcal{\bar{M}}_{\xi }.$ Then
relation (\ref{9}) can be rewritten equivalently in the form of the equality 
\begin{equation}
r_{\xi }^{\ast }\bar{\omega}_{\xi }^{(2)}=\pi _{\xi }^{\ast }\omega ^{(2)},
\label{10}
\end{equation}%
defined on vectors on the cotangent space $T^{\ast }(\mathcal{M}_{\xi }).$
To establish the symplecticity of the 2-form $\omega _{\xi }^{(2)}\in
\Lambda ^{2}(\mathcal{\bar{M}}_{\xi }),$ we use the corresponding
non-degeneracy of the Poisson bracket $\{$\textperiodcentered $,$%
\textperiodcentered $\}_{\xi }^{r}$ on $\mathcal{\bar{M}}_{\xi }.$ To
calculate it, we use a Dirac type construction, defining functions on $\ 
\mathcal{\bar{M}}_{\xi }$ as certain $G_{\xi }$-invariant functions on the
submanifold $\mathcal{M}_{\xi }.$ Then one can calculate the Poisson bracket 
$\{$\textperiodcentered $,$\textperiodcentered $\}_{\xi }$ of such functions
that corresponds to symplectic structure (\ref{4}) as an ordinary Poisson
bracket on $\mathcal{M},$ arbitrarily extending these functions from the
submanifold $\mathcal{M}_{\xi }$ $\subset \mathcal{M}$ to a certain
neighborhood $U(\mathcal{M}_{\xi })\subset \mathcal{M}.$ It is obvious that
two extensions of a given function to the neighborhood $U(\mathcal{M}_{\xi
}) $ of this type differ by a function that vanishes on the submanifold $%
\mathcal{M}_{\xi }\subset \mathcal{M}.$ The difference between the
corresponding Hamiltonian fields of these two different extensions to $U(%
\mathcal{M}_{\xi })$ is completely controlled by the conditions of the
following lemma (see also \cite{AbMa,PrMy,HoKu,Pryk-Ampe}).

\begin{lemma}
\label{Lm_2.1}Suppose that a function $f:U(\mathcal{M}_{\xi })\rightarrow 
\mathbb{R}$ is smooth and vanishes on $\mathcal{M}_{\xi }$ $\subset T^{\ast
}(M),$ i.e., $f|_{\mathcal{M}_{\xi }}=0.$ Then, at every point $(u_{\xi
},v_{\xi })\in \mathcal{M}_{\xi }$ the corresponding Hamiltonian vector
field $K_{f}\in T(U(\mathcal{M}_{\xi }))$ is tangent to the orbit $%
Or(G;(u_{\xi },v_{\xi })).$
\end{lemma}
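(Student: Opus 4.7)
The plan is to exploit the standard duality between the tangent space to the level set $\mathcal{M}_\xi$ of the momentum map and the tangent space to the group orbit, both viewed through the non-degenerate symplectic form $\omega^{(2)}$. The hypothesis $f|_{\mathcal{M}_\xi}=0$ says precisely that $df$ annihilates every vector tangent to $\mathcal{M}_\xi$ at any point $(u_\xi,v_\xi)\in\mathcal{M}_\xi$, and the conclusion we want is that the symplectic gradient $K_f$, defined by $i_{K_f}\omega^{(2)}=-df$, lies in the tangent space of the orbit $\mathrm{Or}(G;(u_\xi,v_\xi))$. So everything reduces to identifying $T_{(u_\xi,v_\xi)}\mathcal{M}_\xi$ as the $\omega^{(2)}$-orthogonal complement of the orbit tangent space.

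First I would compute $T_{(u_\xi,v_\xi)}\mathcal{M}_\xi$. Since $\mathcal{M}_\xi=l^{-1}(\xi)$ and $\xi$ is a regular value, a vector $\eta\in T_{(u_\xi,v_\xi)}T^{\ast}(M)$ is tangent to $\mathcal{M}_\xi$ iff $\langle dl(\eta),a\rangle=0$ for all $a\in\mathcal{G}$. Using the defining relation $H_a=\langle l,a\rangle$ from \eqref{6} together with the Hamiltonian equation $i_{K_a}\omega^{(2)}=-dH_a$ from \eqref{5}, this condition becomes
\begin{equation*}
\omega^{(2)}(K_a,\eta)=0\qquad\text{for every }a\in\mathcal{G}.
\end{equation*}
On the other hand, the tangent space $T_{(u_\xi,v_\xi)}\mathrm{Or}(G;(u_\xi,v_\xi))$ to the $G$-orbit is exactly the linear span $\{K_a(u_\xi,v_\xi):a\in\mathcal{G}\}$, because the infinitesimal generators of the lifted $G$-action on $T^{\ast}(M)$ are the Hamiltonian vector fields $K_a$. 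Consequently,
\begin{equation*}
T_{(u_\xi,v_\xi)}\mathcal{M}_\xi=\bigl(T_{(u_\xi,v_\xi)}\mathrm{Or}(G;(u_\xi,v_\xi))\bigr)^{\omega^{(2)}}.
\end{equation*}

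Next I would translate the vanishing condition $f|_{\mathcal{M}_\xi}=0$ into a statement about $K_f$. Since $f$ is smooth on $U(\mathcal{M}_\xi)$ and vanishes on the submanifold, $df(\eta)=0$ for every $\eta\in T_{(u_\xi,v_\xi)}\mathcal{M}_\xi$. The Hamiltonian equation $i_{K_f}\omega^{(2)}=-df$ then yields $\omega^{(2)}(K_f,\eta)=0$ for every such $\eta$, i.e.\ $K_f\in (T_{(u_\xi,v_\xi)}\mathcal{M}_\xi)^{\omega^{(2)}}$. Applying the identification established in the previous step and using non-degeneracy of $\omega^{(2)}$ (which gives the double-annihilator identity $(V^{\omega})^{\omega}=V$ in finite dimensions), we obtain
\begin{equation*}
K_f(u_\xi,v_\xi)\in T_{(u_\xi,v_\xi)}\mathrm{Or}(G;(u_\xi,v_\xi)),
\end{equation*}
which is the desired tangency.

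The only delicate point is the step invoking the double-annihilator equality, which requires that $\xi$ be a regular value of $l$ so that $\mathcal{M}_\xi$ is a smooth submanifold of the correct codimension and dimensions on both sides match; this is already included in the standing hypotheses preceding the lemma. Apart from this, the argument is a direct unfolding of the definitions of the momentum map, the lifted $G$-action, and the Hamiltonian vector field assignment.
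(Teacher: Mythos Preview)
Your proof is correct and uses the standard symplectic-orthogonal argument: identify $T_{(u_\xi,v_\xi)}\mathcal{M}_\xi=(T_{(u_\xi,v_\xi)}\mathrm{Or}(G))^{\omega^{(2)}}$, observe that $K_f$ lies in the double orthogonal, and conclude via $(V^{\omega})^{\omega}=V$. The paper takes a different, more explicit route: it writes $f=\sum_s (H_{a_s}-\xi_s)f_s$ using the defining relations \eqref{11} of $\mathcal{M}_\xi$, differentiates at a point of $\mathcal{M}_\xi$ (where the factors $H_{a_s}-\xi_s$ vanish), and reads off directly that $K_f=\sum_s f_s\,K_{a_s}$, which is manifestly tangent to the orbit. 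Your argument is cleaner and more conceptual; the paper's argument produces an explicit decomposition of $K_f$ in terms of the infinitesimal generators and, notably, never appeals to the double-annihilator identity, so it does not rely on any finite-dimensionality assumption at that step.
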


\begin{proof}
It is obvious that the submanifold $M_{\xi }\subset T^{\ast }(M)$ is defined
by a certain collection of relations of the type%
\begin{equation}
H_{a_{s}}=\xi _{s},\text{ \ \ \ \ \ \ \ \ }\xi _{s}:=<\xi ,a_{s}>,
\label{11}
\end{equation}%
where $a_{s}\in \mathcal{G},s=\overline{1,dimG},$ is a certain basis of the
Lie algebra $\mathcal{G},$ which follows from definition (\ref{6}). Since a
function $f:U(\mathcal{M}_{\xi })\rightarrow \mathbb{R}$ vanishes on $%
\mathcal{M}_{\xi },$ we can write the following equality:%
\begin{equation*}
f=\sum_{s=1}^{\dim \mathcal{G}}(H_{a_{s}}-\xi _{s})f_{s},
\end{equation*}%
where $\ f_{s}:U(\mathcal{M}_{\xi })\rightarrow \mathbb{R},$ $\ \ s=%
\overline{1,dimG},$ is a certain collection of functions in the neighborhood 
$U(\mathcal{M}_{\xi })$. We take an arbitrary tangent vector $\eta \in T(U(%
\mathcal{M}_{\xi }))$ at the point $(u_{\xi },v_{\xi })\in \mathcal{M}_{\xi
} $ \ and calculate the expression%
\begin{eqnarray}
&<&df(u_{\xi },v_{\xi }),\eta (u_{\xi },v_{\xi })>=\sum_{s=1}^{\dim \mathcal{%
G}}<dH_{a_{s}}(u_{\xi },v_{\xi }),\eta (u_{\xi },v_{\xi })>f_{s}(u_{\xi
},v_{\xi })=  \notag \\
&=&-\sum_{s=1}^{\dim \mathcal{G}}\omega ^{(2)}(K_{a_{i}}(u_{\xi },v_{\xi
}),\eta (u_{\xi },v_{\xi }))f_{s}(u_{\xi },v_{\xi })=  \notag \\
&=&-\omega ^{(2)}(\sum_{s=1}^{\dim \mathcal{G}}K_{a_{s}}(u_{\xi },v_{\xi
})f_{s}(u_{\xi },v_{\xi }),\eta (u_{\xi },v_{\xi }))=  \notag \\
&=&-<i_{\left( \sum_{s=1}^{\dim \mathcal{G}}K_{a_{s}}(u_{\xi },v_{\xi
})f_{s}(u_{\xi },v_{\xi })\right) }\omega ^{(2)},\eta (u_{\xi },v_{\xi })>.
\label{12}
\end{eqnarray}%
It follows from the arbitrariness of the vector $\eta \in T(\mathcal{M}_{\xi
})$ at the point $(u_{\xi },v_{\xi })\in \mathcal{M}_{\xi }$ and relation (%
\ref{12}) that%
\begin{equation*}
K_{f}=\sum_{s=1}^{\dim \mathcal{G}}K_{a_{s}}f_{s},
\end{equation*}%
i.e., $K_{f}:\mathcal{M}_{\xi }\rightarrow T(Or(G)),$ which was to be proved.
\end{proof}

As a corollary of Lemma \ref{Lm_2.1}, we obtain an algorithm for the
determination of the reduced Poisson bracket $\{$\textperiodcentered $,$%
\textperiodcentered $\}_{\xi }^{r}$ on the space $\mathcal{\bar{M}}_{%
\mathcal{\xi }}$ according to definition (\ref{10}). Namely, we choose two
functions defined on $\mathcal{M}_{\xi }$ and invariant under the action of
the subgroup $G_{\xi }$ and arbitrarily smoothly extend them to a certain
open domain $U(\mathcal{M}_{\xi })\subset \mathcal{M}.$ Then we determine
the corresponding Hamiltonian vector fields on $\mathcal{M}$ and project
them onto the space tangent to $\mathcal{M}_{\xi },$ adding, if necessary,
the corresponding vectors tangent to the orbit $Or(G).$ It is obvious that
the projections obtained depend on the chosen extensions to the domain $U(%
\mathcal{M}_{\xi })\subset \mathcal{M}.$ As a result, we establish that the
reduced Poisson bracket $\{$\textperiodcentered $,$\textperiodcentered $%
\}_{\xi }^{r}$ is uniquely defined via the restriction of the initial
Poisson bracket upon $\mathcal{M}_{\xi }$ $\subset \mathcal{M}.$ By virtue
of the non-degeneracy of the latter and the functional independence of the
basis functions (\ref{11}) on the submanifold $U(\mathcal{M}_{\xi })\subset 
\mathcal{M},$ the reduced Poisson bracket $\{$\textperiodcentered $,$%
\textperiodcentered $\}_{\xi }^{r}$ appears to be \cite{AbMa,PrMy}
non-degenerate on $\mathcal{\bar{M}}_{\xi }.$ \ As a consequence of the
non-degeneracy, we establish that the dimension of the reduced space $%
\mathcal{\bar{M}}_{\xi }$ is even. Taking into account that the element $\xi
\in \mathcal{G}^{\mathcal{\ast }}$ is regular and the dimension of the Lie
algebra of the stabilizer $\mathcal{G}_{\xi }$ is equal to $dim$ $G_{\xi },$
we easily establish that $dim$ $\mathcal{\bar{M}}_{\xi }=$ $dim$ $\mathcal{M}%
-2dim$ $\mathcal{G}_{\xi }.$ Since, by construction, $dim$ $\mathcal{M}=2m,$
we conclude that the dimension of the reduced space $\mathcal{\bar{M}}_{\xi
} $ is necessarily even.

For the correctness of the algorithm, it is necessary to establish the
existence of the corresponding projections of Hamiltonian vector fields onto
the tangent space $T(\mathcal{M}_{\xi }).$ The following statement is true.

\begin{theorem}
\label{Tm_2.1}At every point $(u_{\xi },v_{\xi })\in \mathcal{M}_{\xi
}\subset \mathcal{M},$ one can choose a vector $V_{f}\in T(Or(G))$ such that 
$K_{f}(u_{\xi },v_{\xi })$ $+V_{f}(u_{\xi },v_{\xi })\in T_{(u_{\xi },v_{\xi
})}(\mathcal{M}_{\xi }).$ Furthermore, the vector $V_{f}\in T(Or(G))$ is
determined uniquely up to a vector tangent to the orbit $Or(G_{\xi }).$
\end{theorem}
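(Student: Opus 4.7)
The plan is to parameterize the tangent space to the orbit $Or(G)$ at $(u_\xi,v_\xi)$ via the basis $\{a_s\}$ of $\mathcal{G}$ already used in Lemma \ref{Lm_2.1}, and then reduce the required tangency $K_f+V_f\in T_{(u_\xi,v_\xi)}(\mathcal{M}_\xi)$ to a linear algebra problem on $\mathcal{G}$, whose solvability and uniqueness modulo $\mathcal{G}_\xi$ will follow from equivariance of the momentum map and regularity of $\xi\in\mathcal{G}^{\ast}$.

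First I would use the defining equations $H_{a_s}=\xi_s$ of $\mathcal{M}_\xi$ to identify the tangent space as $T_{(u_\xi,v_\xi)}(\mathcal{M}_\xi)=\bigcap_s\ker dH_{a_s}(u_\xi,v_\xi)$, while the orbit tangent space at the same point is spanned by the fundamental vectors $K_{a_t}(u_\xi,v_\xi)$. Writing a candidate $V_f=\sum_t c_t K_{a_t}(u_\xi,v_\xi)$ with $c=(c_t)\in\mathcal{G}$, the tangency requirement becomes the linear system
\begin{equation*}
\sum_t M_{st}\,c_t=-\langle dH_{a_s},K_f\rangle(u_\xi,v_\xi),\qquad s=\overline{1,\dim G},
\end{equation*}
where $M_{st}:=\langle dH_{a_s},K_{a_t}\rangle$. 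Invoking $i_{K_{a_s}}\omega^{(2)}=-dH_{a_s}$ together with the standard equivariant Hamiltonian identity $\omega^{(2)}(K_{a_s},K_{a_t})=H_{[a_s,a_t]}$, one computes $M_{st}=-\omega^{(2)}(K_{a_s},K_{a_t})=-H_{[a_s,a_t]}=-\langle\xi,[a_s,a_t]\rangle$ on $\mathcal{M}_\xi$, so $M$ is antisymmetric and equals, up to sign, the Kirillov cocycle $(a,b)\mapsto\langle\xi,[a,b]\rangle$.

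Next I would determine the kernel of $M$. The condition $\sum_t M_{st}c_t=0$ for all $s$ translates into $\langle\xi,[a,c]\rangle=0$ for every $a\in\mathcal{G}$, equivalently $ad_c^{\ast}\xi=0$, which is precisely the defining condition for $c$ to belong to the Lie algebra $\mathcal{G}_\xi$ of the stabilizer $G_\xi$ of $\xi$ under the coadjoint action. Antisymmetry of $M$ then gives $\mathrm{Image}\,M=(\ker M)^{\perp}=\mathcal{G}_\xi^{\perp}\subset\mathcal{G}^{\ast}$, hence the linear system admits a solution if and only if $\langle dH_b,K_f\rangle(u_\xi,v_\xi)=0$ for every $b\in\mathcal{G}_\xi$; whenever it does, the solution is unique modulo $\ker M=\mathcal{G}_\xi$, and since vectors tangent to $Or(G_\xi)$ at $(u_\xi,v_\xi)$ are exactly those of the form $\sum_t b^t K_{a_t}(u_\xi,v_\xi)$ with $b\in\mathcal{G}_\xi$, the asserted uniqueness up to a vector tangent to $Or(G_\xi)$ follows immediately.

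The only genuine obstacle is verifying the solvability condition $\langle dH_b,K_f\rangle=0$ for $b\in\mathcal{G}_\xi$, which is not automatic for an arbitrary smooth $f:U(\mathcal{M}_\xi)\to\mathbb{R}$ but is built into the Dirac-type construction outlined just before the theorem, where $f$ is taken to be a $G_\xi$-invariant extension. Indeed, from $\omega^{(2)}(K_f,K_b)=L_{K_f}H_b=-L_{K_b}f$, the infinitesimal $G_\xi$-invariance $L_{K_b}f=0$ for $b\in\mathcal{G}_\xi$ forces $\langle dH_b,K_f\rangle=0$ at every point, and hence the required solvability. This interplay between equivariance of the momentum mapping $l:T^{\ast}(M)\to\mathcal{G}^{\ast}$ and $G_\xi$-invariance of the extended functions is exactly what will guarantee that the reduced Poisson bracket $\{\cdot,\cdot\}_\xi^{r}$ on $\bar{\mathcal{M}}_\xi$ is well-defined.
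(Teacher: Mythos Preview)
Your proof is correct and follows essentially the same route as the paper's: both reduce the tangency condition $K_f+V_f\in T(\mathcal{M}_\xi)$ to the equation $\omega^{(2)}(K_f+V_f,K_a)=0$ for all $a\in\mathcal{G}$, compute $\omega^{(2)}(K_{a_f},K_a)=\langle\xi,[a_f,a]\rangle$ on $\mathcal{M}_\xi$, identify the kernel of this Kirillov form with $\mathcal{G}_\xi$, and then invoke the $G_\xi$-invariance of $f$ to guarantee that the right-hand side descends to $(\mathcal{G}/\mathcal{G}_\xi)^{\ast}$ so that the resulting equation on $\mathcal{G}/\mathcal{G}_\xi$ has a unique solution. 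The only cosmetic difference is that the paper packages the nondegeneracy as an explicit isomorphism $\hat{\xi}:\mathcal{G}/\mathcal{G}_\xi\to(\mathcal{G}/\mathcal{G}_\xi)^{\ast}$ and writes $\tilde{a}_f=\hat{\xi}^{-1}\mu_f$, whereas you phrase everything in terms of the matrix $M_{st}=-\langle\xi,[a_s,a_t]\rangle$ and its kernel/image; these are the same argument in coordinate-free versus coordinate language.
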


\begin{proof}
Note that the orbit $Or(G;(u_{\xi },v_{\xi }))$ passing through the point $%
(u_{\xi },v_{\xi })\in \mathcal{M}_{\xi }$ is always symplectically
orthogonal to the tangent space $T_{(u_{\xi },v_{\xi })}(\mathcal{M}_{\xi
}). $ Indeed, for any vector $\eta \in T(\mathcal{M}_{\xi })$ and $a\in 
\mathcal{G},$ we have $\omega ^{(2)}(\eta ,K_{a})=-i_{K_{a}}\omega
^{(2)}(\eta )=dH_{a}(\eta )=0,$ because the submanifold $\mathcal{M}_{\xi }$ 
$\subset \mathcal{M}$ is defined by the equality $<\xi ,a>=H_{a}$ for all $%
a\in \mathcal{G},$ i.e., $dH_{a}=0$ on $\mathcal{M}_{\xi }.$ Thus, $T(%
\mathcal{M}_{\xi })\cap T(Or(G))=T(Or(G))$ because $H_{a}\circ g_{\xi
}=H_{a} $ for all $g_{\xi }\in G_{\xi },$ which follows from the invariance
of the element $\xi \in \mathcal{G}^{\ast }$ under the action of the Lie
group $G_{\xi }.$\bigskip\ We now solve the imbedding condition $\
K_{f}+V_{f}\in T(\mathcal{M}_{\xi }),$ or the equation 
\begin{equation}
\omega ^{(2)}(K_{f}+V_{f},K_{a})=0  \label{13}
\end{equation}%
on the manifold $\mathcal{M}_{\xi }$ $\subset T^{\ast }(M)$ for all $a\in 
\mathcal{G}.$ We rewrite equality (\ref{13}) in the form 
\begin{equation}
K_{a}f=\omega ^{(2)}(V_{f},K_{a})  \label{14}
\end{equation}%
on $\mathcal{M}_{\xi }$ for all $a\in \mathcal{G};$ it is obvious that the
2-form on the right-hand side of (\ref{14}) depends only on the element $\xi
\in G^{\ast }.$ Taking into account the equivariance of the group action on $%
\mathcal{M}$ and the obvious equality%
\begin{equation*}
\omega ^{(2)}(K_{a},K_{b})=pr_{M}^{\ast }\alpha
^{(1)}([K_{a},K_{b}])=-pr_{M}^{\ast }\alpha ^{(1)}(K_{[a,b]})
\end{equation*}%
for all $a,b\in \mathcal{G},$ we establish that there exists an element $%
a_{f}\in \mathcal{G}$ such that $V_{f}=K_{a_{f}}\in T(Or(G))$ and%
\begin{eqnarray}
\omega ^{(2)}(V_{f},K_{a}) &=&\omega ^{(2)}(K_{a_{f}},K_{a})=pr_{M}^{\ast
}\alpha ^{(1)}([K_{a},K_{a_{f}}])=  \notag \\
&=&pr_{M}^{\ast }\alpha ^{(1)}(K_{[a_{f},a]})=H_{[a_{f},a]}=<l,[a_{f},a]>= 
\notag \\
&=&<\xi ,[a_{f},a]>=<ad_{a_{f}}^{\ast }\xi ,a>  \label{15}
\end{eqnarray}%
on $\mathcal{M}_{\xi }$ for all $a\in \mathcal{G}.$ Since $ad_{a_{f}}^{\ast
}\xi =0$ for any $a_{f}\in \mathcal{G}_{\xi },$ we conclude that, on the
quotient space $\mathcal{G}/\mathcal{G}_{\xi }$ the right-hand side of (\ref%
{15}) defines a non-degenerate skew-symmetric form associated with the
canonical isomorphism $\hat{\xi}:\mathcal{G}/\mathcal{G}_{\xi }\rightarrow (%
\mathcal{G}/\mathcal{G}_{\xi })^{\ast },$ where, by definition,%
\begin{equation}
<\hat{\xi}(\tilde{a}),\tilde{b}>:=<\xi ,[a,b]>  \label{16}
\end{equation}%
for any $\tilde{a}$ and $\tilde{b}$ $\in $ $\mathcal{G}/\mathcal{G}_{\xi }$
with the corresponding representatives $a$ and $b{\in }\mathcal{G}.$
Further, since the function $f:\mathcal{M}_{\xi }\rightarrow \mathbb{R}$ is $%
G_{\xi }$-invariant on $\mathcal{M}_{\xi }$ $\subset \mathcal{M},$ the
right-hand side of (\ref{14}) defines an element $\mu _{f}\in (\mathcal{G}/%
\mathcal{G}_{\xi })^{\ast }$ by the equality 
\begin{equation*}
\mu _{f}:\tilde{a}:=-K_{a}f
\end{equation*}%
for all $a\in G.$ Using relations (\ref{15}) ) and (\ref{16}), we establish
that there exists the element 
\begin{equation*}
\tilde{a}_{f}=\hat{\xi}^{-1}\circ \mu _{f}\in \mathcal{G}/\mathcal{G}_{\xi }.
\end{equation*}%
Since the element $\tilde{a}_{f}\in \mathcal{G}/\mathcal{G}_{\xi }$ is
associated with the element $a_{f}$ $(mod$ $\mathcal{G}_{\xi })$ $\in 
\mathcal{G},$ which uniquely generates a locally defined vector field $%
K_{a_{f}}:Or(G)\rightarrow T(Or(G)),$ using the fact that $V_{f}=K_{a_{f}}$
on $\mathcal{M}_{\xi }\subset \mathcal{M},$ we complete the proof of the
theorem.
\end{proof}

Now assume that two functions $f_{1},f_{2}\in D(\mathcal{M}_{\xi })$ are $%
G_{\xi }$-invariant. Then their reduced Poisson bracket $\{f_{1},f_{2}\}_{%
\xi }^{r}$ on $\mathcal{\bar{M}}_{\xi }$ is defined according to the rule: 
\begin{equation}
\{f_{1},f_{2}\}_{\xi }^{r}:=-\omega
^{(2)}(K_{f_{1}}+V_{f_{1}},K_{f_{2}}+V_{f_{2}})=\{f_{1},f_{2}\}+\omega
^{(2)}(V_{f_{1}},V_{f_{2}}),  \label{17}
\end{equation}%
where we have used the following identities on $\mathcal{M}_{\xi }$ $\subset
T^{\ast }(M)$: 
\begin{equation*}
\omega ^{(2)}(K_{f_{1}}+V_{f_{1}},V_{f_{2}})=0=\omega
^{(2)}(K_{f_{2}}+V_{f_{2}},V_{f_{1}}),
\end{equation*}%
being simple consequences of equality (\ref{13}) on $\mathcal{M}_{\xi }.$
Regarding (\ref{15}), relation (\ref{17}) takes the form 
\begin{equation}
\{f_{1},f_{2}\}_{\xi }^{r}=\{f_{1},f_{2}\}+\frac{1}{2}%
(V_{f_{1}}f_{2}-V_{f_{2}}f_{1}),  \label{18}
\end{equation}%
where $f_{1},f_{2}\in D(\mathcal{M}_{\xi })$ are arbitrary smooth extensions
of the $G_{\xi }$-invariant functions defined earlier on the domain $U(%
\mathcal{M}_{\xi }).$ Thus, the following theorem holds.

\begin{theorem}
\label{Tm_2.2}The reduced Poisson bracket of two functions on the quotient
space $\mathcal{\bar{M}}_{\xi }$ $=$ $\mathcal{M}_{\xi }/G_{\xi }$ is
determined with the use of their arbitrary smooth extensions to functions on
an open neighborhood $U(\mathcal{M}_{\xi })$ according to the Dirac-type
formula (\ref{18}).
\end{theorem}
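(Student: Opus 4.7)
My plan is to assemble Theorem \ref{Tm_2.2} directly from Lemma \ref{Lm_2.1} and Theorem \ref{Tm_2.1}, treating formula (\ref{18}) as the natural Dirac-type correction of the ambient Poisson bracket. First I would take two functions $f_{1},f_{2}\in D(\mathcal{\bar{M}}_{\xi })$, lift them to $G_{\xi }$-invariant functions on $\mathcal{M}_{\xi }$, and then choose arbitrary smooth extensions $\tilde{f}_{1},\tilde{f}_{2}\in D(U(\mathcal{M}_{\xi }))$ to an open neighborhood of $\mathcal{M}_{\xi }\subset T^{\ast }(M)$. By Theorem \ref{Tm_2.1}, for each $\tilde{f}_{i}$ there exists a vector field $V_{\tilde{f}_{i}}=K_{a_{\tilde{f}_{i}}}\in T(Or(G))$, determined uniquely modulo $T(Or(G_{\xi }))$, so that the modified field $K_{\tilde{f}_{i}}+V_{\tilde{f}_{i}}$ is tangent to $\mathcal{M}_{\xi }$ and therefore projects under $r_{\xi }:\mathcal{M}_{\xi }\rightarrow \mathcal{\bar{M}}_{\xi }$ to the genuine Hamiltonian vector field of $f_{i}$ on the reduced space.

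Next, I would invoke the defining relation (\ref{10}), $r_{\xi }^{\ast }\bar{\omega}_{\xi }^{(2)}=\pi _{\xi }^{\ast }\omega ^{(2)}$, to evaluate the reduced Poisson bracket as
\begin{equation*}
\{f_{1},f_{2}\}_{\xi }^{r}=-\bar{\omega}_{\xi }^{(2)}(\bar{K}_{f_{1}},\bar{K}
_{f_{2}})=-\omega ^{(2)}(K_{\tilde{f}_{1}}+V_{\tilde{f}_{1}},K_{\tilde{f}
_{2}}+V_{\tilde{f}_{2}})
\end{equation*}
evaluated at any representative point of $\mathcal{M}_{\xi }$. Expanding, and using the tangency identity (\ref{13}) that yielded $\omega ^{(2)}(K_{\tilde{f}_{j}}+V_{\tilde{f}_{j}},V_{\tilde{f}_{i}})=0$ on $\mathcal{M}_{\xi }$, the cross terms collapse and one obtains (\ref{17}) directly; a subsequent symmetrization using the characterization (\ref{14}) of $V_{\tilde{f}_{i}}$ through $V_{\tilde{f}_{i}}f_{j}=\omega^{(2)}(V_{\tilde{f}_{i}},V_{\tilde{f}_{j}})$ on $\mathcal{M}_{\xi }$ converts it to the symmetric Dirac expression (\ref{18}).

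The principal obstacle, and the step requiring most care, is establishing that the formula is genuinely well defined independently of the two arbitrary choices involved: the smooth extension of each $f_{i}$ off $\mathcal{M}_{\xi }$ and the representative $a_{\tilde{f}_{i}}\pmod{\mathcal{G}_{\xi }}$ fixing $V_{\tilde{f}_{i}}$. For the extension ambiguity I would appeal to Lemma \ref{Lm_2.1}: any two extensions differ by a function vanishing on $\mathcal{M}_{\xi }$, whose Hamiltonian field lies in $T(Or(G))$ at points of $\mathcal{M}_{\xi }$; pairing such a field symplectically with any $K_{\tilde{f}_{j}}+V_{\tilde{f}_{j}}\in T(\mathcal{M}_{\xi })$ gives zero by the orbit-tangent space orthogonality established inside the proof of Theorem \ref{Tm_2.1}. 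For the $\mathcal{G}_{\xi }$-ambiguity in $V_{\tilde{f}_{i}}$, I would use $G_{\xi }$-invariance of the $f_{j}$ (so that $K_{a}f_{j}=0$ on $\mathcal{M}_{\xi }$ for $a\in \mathcal{G}_{\xi }$) together with the isomorphism $\hat{\xi}:\mathcal{G}/\mathcal{G}_{\xi }\rightarrow (\mathcal{G}/\mathcal{G}_{\xi })^{\ast }$ from (\ref{16}) to conclude that the correction term in (\ref{18}) depends only on the classes $\tilde{a}_{\tilde{f}_{i}}\in \mathcal{G}/\mathcal{G}_{\xi }$.

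Finally, I would verify closure, antisymmetry, and the Jacobi identity for $\{\cdot ,\cdot \}_{\xi }^{r}$: antisymmetry is manifest from (\ref{18}), closure on the algebra of $G_{\xi }$-invariant functions follows from the equivariance diagram preceding (\ref{8}) which ensures that the projected Hamiltonian vector fields commute with the $G_{\xi }$-action, and Jacobi descends from the Jacobi identity of the ambient canonical bracket $\{\cdot ,\cdot \}$ on $T^{\ast }(M)$ after using the already established equivariance and tangency properties. This then completes the Dirac-type description of the reduced symplectic structure $\bar{\omega}_{\xi }^{(2)}$ and provides the algorithmic formula needed for the subsequent Maxwell and Yang-Mills applications.
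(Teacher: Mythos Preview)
Your proposal is correct and follows essentially the same route as the paper: the derivation of (\ref{17}) from the tangency condition (\ref{13}) and the passage to (\ref{18}) via (\ref{14})--(\ref{15}) are exactly what the paper does in the paragraph immediately preceding the theorem statement. You are more explicit than the paper about the well-definedness checks (extension ambiguity via Lemma~\ref{Lm_2.1}, $\mathcal{G}_{\xi}$-ambiguity via (\ref{16})) and about verifying the bracket axioms, which the paper treats informally or leaves implicit; apart from a harmless sign slip in your identity $V_{\tilde f_i}f_j=\omega^{(2)}(V_{\tilde f_i},V_{\tilde f_j})$ (equation (\ref{14}) actually gives $V_{\tilde f_i}f_j=\omega^{(2)}(V_{\tilde f_j},V_{\tilde f_i})$), nothing substantive differs.
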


\subsection{The symplectic reduction on principal fiber bundles with
connection}

We begin by reviewing the backgrounds of the reduction theory subject to
Hamiltonian systems \ with symmetry on principle fiber bundles. The material
is partly available in \cite{GiSt,Kupe}, so here it will be only sketched in
notations suitable for us.

Let $G$ denote a given Lie group with the unity element $e\in G$ and the
corresponding Lie algebra $\mathcal{G}$ $\simeq T_{e}(G).$ Consider a
principal fiber bundle $p:(M,\varphi )\rightarrow N$ \ with the structure
group $G$ and base manifold $N,$ on which the Lie group $G$ acts by means of
a mapping $\ \varphi :M\times G\rightarrow M.$ Namely, for each $g\in G$
there is a group diffeomorphism $\varphi _{g}:M\rightarrow M,$ generating
for any fixed $u\in M$ the following induced mapping: $\hat{u}:G\rightarrow
M,$ where 
\begin{equation}
\hat{u}(g)=\varphi _{g}(u).  \label{0.1}
\end{equation}

On the principal fiber bundle $p:(M,\varphi )\rightarrow N$ \ a connection $%
\Gamma (${$\mathcal{A}$}$)$ is assigned by means of such a morphism {$%
\mathcal{A}$}$:(T(M),\varphi _{g\ast })\rightarrow (\mathcal{G},Ad_{g^{-1}})$
that for each $u\in M$ a mapping $\mathcal{A}(u):T_{u}(M)\rightarrow 
\mathcal{G}$ is a left inverse one to the \ tangent mapping $d\hat{u}(e):=%
\hat{u}_{\ast }(e):\mathcal{G}\rightarrow T_{u}(M)$ at unity element $e\in
G, $ that is

\begin{equation}
\mathit{\mathcal{A}}(u)\hat{u}_{\ast }(\xi )=1.  \label{0.2}
\end{equation}

As usual, denote by $\varphi _{g}^{\ast }:T^{\ast }(M)\rightarrow T^{\ast
}(M)$ the corresponding cotangent lift of the mapping \ $\varphi
_{g}:M\rightarrow M$ \ at any $g\in G.$ If $\alpha ^{(1)}\in \Lambda ^{1}(M)$
is the canonical $G$ - invariant 1-form on $\ M,$ the canonical symplectic
structure $\omega ^{(2)}\in \Lambda ^{2}(T^{\ast }(M))$ given by 
\begin{equation}
\omega ^{(2)}:=d\text{ }pr^{\ast }\alpha ^{(1)}  \label{0.3}
\end{equation}%
generates the corresponding momentum mapping $l:T^{\ast }(M)\rightarrow 
\mathcal{G}^{\ast },$ where 
\begin{equation}
l(\alpha ^{(1)})(u)=\hat{u}^{\ast }(e)\alpha ^{(1)}(u)  \label{0.4}
\end{equation}%
for all $u\in M.$ Remark here that the principal fiber \ bundle structure $%
p:(M,\varphi )\rightarrow N$ \ means in part the exactness of the following
sequences of mappings: 
\begin{equation}
0\rightarrow \mathcal{G}\overset{\hat{u}_{\ast }(e)}{\rightarrow }T_{u}(M)%
\overset{p_{\ast }(u)}{\rightarrow }T_{p(u)}(N)\rightarrow 0,  \label{0.5}
\end{equation}%
that is 
\begin{equation}
p_{\ast }(u)\hat{u}_{\ast }(e)=0=\hat{u}^{\ast }(e)p^{\ast }(u)  \label{0.6}
\end{equation}%
for all $u\in M.$ Combining (\ref{0.6}) with (\ref{0.2}) and (\ref{0.4}),
one obtains such an embedding: 
\begin{equation}
\lbrack 1-\mathcal{A}^{\ast }(u)\hat{u}^{\ast }(e)]\alpha ^{(1)}(u)\in \text{%
range }p^{\ast }(u)  \label{0.7}
\end{equation}%
for the canonical 1-form $\alpha ^{(1)}\in \Lambda ^{1}(M)$ at $u\in M.$ The
expression (\ref{0.7}) means of course, that 
\begin{equation}
\hat{u}^{\ast }(e)[1-\mathcal{A}^{\ast }(u)\hat{u}^{\ast }(e)]\alpha
^{(1)}(u)=0  \label{0.8}
\end{equation}%
for all $u\in M.$ Now taking into account that the mapping \ $p^{\ast
}(u):T^{\ast }(N)\rightarrow T^{\ast }(M)$ \ is for each $u\in M$ injective,
it has the unique inverse mapping \ $\ (p^{\ast }(u))^{-1}$ upon its image \ 
$p^{\ast }(u)T_{p(u)}^{\ast }(N)\subset T_{u}^{\ast }(M).$ Thereby \ for
each $u\in M$ one can define a morphism $p_{\mathcal{A}}:(T^{\ast
}(M),\varphi _{g}^{\ast })\rightarrow T^{\ast }(N)$ as 
\begin{equation}
p_{\mathcal{A}}(u):\alpha ^{(1)}(u)\rightarrow (p^{\ast }(u))^{-1}[1-%
\mathcal{A}^{\ast }(u)\hat{u}^{\ast }(e)]\alpha ^{(1)}(u).  \label{0.9}
\end{equation}%
Based on the definition (\ref{0.9}) one can easily check that the diagram 
\begin{equation}
\begin{array}{ccc}
T^{\ast }(M) & \overset{p_{\mathcal{A}}}{\rightarrow } & T^{\ast }(N) \\ 
\left. pr_{M}\right\downarrow &  & \left\downarrow pr_{N}\right. \\ 
M & \overset{p}{\rightarrow } & N%
\end{array}
\label{0.10}
\end{equation}%
is commutative.

Let an element $\xi \in \mathcal{G}^{\ast }$ be $G$-invariant, that is $\ \
Ad_{g^{-1}}^{\ast }\xi =\xi $ for all $\ g\in G.$ Denote also by \ $p_{%
\mathcal{A}}^{\xi }$ \ the restriction of the mapping (\ref{0.9}) upon the
subset $\mathcal{M}_{\xi }:=l^{-1}(\xi )\in T^{\ast }(M),$ that is \ \ $p_{%
\mathcal{A}}^{\xi }:\mathcal{M}_{\xi }\rightarrow T^{\ast }(N),$ where for
all $u\in M$%
\begin{equation}
p_{\mathcal{A}}^{\xi }(u):l^{-1}(\xi )\rightarrow (p^{\ast }(u))^{-1}[1-%
\mathcal{A}^{\ast }(u)\hat{u}^{\ast }(e)]l^{-1}(\xi ).  \label{0.11}
\end{equation}%
Now one can characterize the structure of the reduced phase space {$\mathcal{%
\bar{M}}_{\xi }:=$}$l^{-1}(\xi )/G$ \ by means of the following lemma.

\begin{lemma}
{\label{lem_01} {The mapping }$p_{\mathcal{A}}^{\xi }(u):\mathcal{M}_{\xi
}\rightarrow T^{\ast }(N),$ where \ }$\mathcal{M}_{\xi }:=${\ }$l^{-1}(\xi
), ${\ {\ is a principal fiber }$G$ {-bundle with the reduced space \ }$%
\mathcal{\bar{M}}_{\xi },$ {\ being diffeomorphic to }$T^{\ast }(N).$}
\end{lemma}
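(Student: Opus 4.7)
The plan is to verify the three defining properties of a principal $G$-bundle in sequence: (i) $p_{\mathcal{A}}^{\xi}$ is well-defined and $G$-invariant on fibres of the lifted action $\varphi_{g}^{\ast}$, (ii) it surjects onto $T^{\ast}(N)$ with fibres exactly the $G$-orbits, and (iii) it is locally trivial. The diffeomorphism $\bar{\mathcal{M}}_{\xi}\simeq T^{\ast}(N)$ then follows at once from the quotient description $\bar{\mathcal{M}}_{\xi}=\mathcal{M}_{\xi}/G$.

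First I would verify well-definedness: the identity $\hat{u}^{\ast}(e)[1-\mathcal{A}^{\ast}(u)\hat{u}^{\ast}(e)]\alpha^{(1)}(u)=0$ from (\ref{0.8}) guarantees that $[1-\mathcal{A}^{\ast}(u)\hat{u}^{\ast}(e)]\alpha^{(1)}(u)$ lies in the range of $p^{\ast}(u)$, on which $(p^{\ast}(u))^{-1}$ is meaningfully defined. For the $G$-invariance I would combine three ingredients already in the paper: the equivariance of the connection, $\mathcal{A}(\varphi_{g}(u))\circ\varphi_{g\ast}(u)=Ad_{g^{-1}}\mathcal{A}(u)$, the relation $p\circ\varphi_{g}=p$ which forces $p^{\ast}(\varphi_{g}(u))=\varphi_{g}^{\ast}\circ p^{\ast}(u)$ under the natural identifications, and the coadjoint invariance $Ad_{g^{-1}}^{\ast}\xi=\xi$. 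Substituting these into the formula (\ref{0.11}) for $p_{\mathcal{A}}^{\xi}(\varphi_{g}^{\ast}\alpha^{(1)})$ and using $\hat{u}^{\ast}(e)\alpha^{(1)}=\xi$ will collapse the expression back to $p_{\mathcal{A}}^{\xi}(\alpha^{(1)})$.

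Second, I would establish surjectivity by constructing an explicit right inverse: given $\beta\in T_{n}^{\ast}(N)$ and any $u\in p^{-1}(n)$, set
\begin{equation*}
\alpha^{(1)}(u):=p^{\ast}(u)\beta+\mathcal{A}^{\ast}(u)\xi.
\end{equation*}
Then (\ref{0.6}) gives $\hat{u}^{\ast}(e)p^{\ast}(u)\beta=0$, while (\ref{0.2}) dualizes to $\hat{u}^{\ast}(e)\mathcal{A}^{\ast}(u)=1_{\mathcal{G}^{\ast}}$, so $l(\alpha^{(1)})=\xi$ and $\alpha^{(1)}\in\mathcal{M}_{\xi}$; a direct computation then yields $p_{\mathcal{A}}^{\xi}(\alpha^{(1)})=\beta$. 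Varying $u$ across $p^{-1}(n)$ via the free action of $G$ exhausts all pre-images, and a matching argument (if $p_{\mathcal{A}}^{\xi}(\alpha_{1})=p_{\mathcal{A}}^{\xi}(\alpha_{2})$, then their footpoints project to the same $n\in N$, hence $u_{2}=\varphi_{g}(u_{1})$ for a unique $g\in G$, and the previous equivariance forces $\alpha_{2}=\varphi_{g}^{\ast}\alpha_{1}$) shows the fibres are precisely the $G$-orbits.

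Finally, for local triviality, I would use the principal bundle structure of $p:(M,\varphi)\to N$: over any trivializing open set $U\subset N$ pick a smooth section $s:U\to M$, and transport it to $T^{\ast}(N)|_{U}$ via $\beta\mapsto p^{\ast}(s(n))\beta+\mathcal{A}^{\ast}(s(n))\xi$, which is a smooth section of $p_{\mathcal{A}}^{\xi}$ by the formula above. Combined with the free $G$-action on $\mathcal{M}_{\xi}$ (inherited from freeness on $M$), this gives the principal $G$-bundle trivialization $\mathcal{M}_{\xi}|_{U}\simeq T^{\ast}(N)|_{U}\times G$, closing diagram (\ref{0.10}) and implying the desired diffeomorphism $\bar{\mathcal{M}}_{\xi}\simeq T^{\ast}(N)$. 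The main obstacle I anticipate is the bookkeeping for the $G$-invariance step: one must track carefully how $\hat{u}$, $\mathcal{A}^{\ast}(u)$, and $p^{\ast}(u)$ each transform under $\varphi_{g}$ (with the associated $Ad_{g^{-1}}$ and $Ad_{g^{-1}}^{\ast}$ twists), and the $Ad^{\ast}$-invariance of $\xi$ is exactly what causes the twists to cancel — so this hypothesis on $\xi$ cannot be dropped.
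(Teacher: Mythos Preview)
The paper does not actually prove this lemma: it is stated without proof as part of the background material on symplectic reduction on principal bundles, with the authors remarking that ``the material is partly available in \cite{GS,Ku}, so here it will be only sketched in notations suitable for us.'' There is therefore no paper proof to compare your proposal against.

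That said, your outline is the standard and correct argument. The explicit right inverse $\alpha^{(1)}(u)=p^{\ast}(u)\beta+\mathcal{A}^{\ast}(u)\xi$ is exactly the section one writes down in this setting, and your verification that it lands in $\mathcal{M}_{\xi}$ and maps back to $\beta$ is clean. The $G$-invariance step is indeed the one place requiring care; your identification of the three transformation rules (connection equivariance, $p\circ\varphi_{g}=p$, and $Ad_{g^{-1}}^{\ast}\xi=\xi$) is right, and the cancellation you describe does occur. One minor point: in the fibre argument you should also check that if two covectors in $\mathcal{M}_{\xi}$ over the \emph{same} $u\in M$ have the same image under $p_{\mathcal{A}}^{\xi}$, then they coincide --- this follows from injectivity of $p^{\ast}(u)$ together with the fact that both satisfy $\hat{u}^{\ast}(e)\alpha^{(1)}=\xi$, so their difference lies in $\ker\hat{u}^{\ast}(e)\cap\ker(1-\mathcal{A}^{\ast}(u)\hat{u}^{\ast}(e))^{\perp}$ appropriately interpreted; but with your explicit inverse formula this is essentially automatic.
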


Denote by $<.,.>_{\mathcal{G}}$ the standard $Ad$-invariant non-degenerate
scalar product on $\mathcal{G}\times \mathcal{G}.$\ \ Based on Lemma \ref%
{lem_01} one derives the following \ characteristic theorem.

\begin{theorem}
{\ \label{th_02}\ {Given a principal fiber \ }$G${-bundle with a connection }%
$\Gamma (\mathcal{A})$ {\ and a }$G${-invariant element }$\ \xi \in \mathcal{%
G}^{\ast },$ {\ then every such connection }$\Gamma (\mathcal{A})$ {\
defines a symplectomorphism }$\nu _{\xi }:{\mathcal{\bar{M}}_{\xi }}%
\rightarrow T^{\ast }(N)$ {\ between the reduced phase space }$\mathcal{\bar{%
M}}_{\xi }$ {\ and \ cotangent bundle \ }$T^{\ast }(N),$ {\ where }$%
l:T^{\ast }(M)\rightarrow \mathcal{G}^{\ast }$ {\ is the naturally
associated momentum mapping for the group }$G${-action on }$M.$ {\ Moreover,
the following equality } 
\begin{equation}
(p_{\mathcal{A}}^{\xi })(d\text{ }pr_{N}^{\ast }\beta ^{(1)}+pr_{N}^{\ast }%
\text{ }\Omega _{\xi }^{(2)})=\left. d\text{ }pr_{M}^{\ast }\alpha
^{(1)}\right\vert _{l^{-1}(\xi )}  \label{0.12}
\end{equation}%
{holds for the canonical 1-forms \ }$\beta ^{(1)}\in \Lambda ^{1}(N)$ {\ and
\ }$\alpha ^{(1)}\in \Lambda ^{1}(M),$ {\ where \ }$\Omega _{\xi
}^{(2)}:=<\xi ,\Omega ^{(2)}>_{\mathcal{G}}$ {\ is the }$\xi ${-component of
the corresponding curvature form }$\Omega ^{(2)}\in \Lambda ^{(2)}(N)\otimes 
\mathcal{G}.$}
\end{theorem}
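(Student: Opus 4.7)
My plan is to use Lemma \ref{lem_01} to obtain the diffeomorphism $\nu_{\xi}:\mathcal{\bar M}_{\xi}\to T^{\ast}(N)$ for free, and then verify the symplectic identity (\ref{0.12}) by a direct computation that splits the canonical 1-form on $M$ into a horizontal piece (descending to $N$) and a vertical piece (absorbed by the connection). Concretely, for each $u\in M$ the connection $\mathcal{A}$ yields the pointwise decomposition
\begin{equation*}
\alpha^{(1)}(u)=\bigl[1-\mathcal{A}^{\ast}(u)\hat{u}^{\ast}(e)\bigr]\alpha^{(1)}(u)+\mathcal{A}^{\ast}(u)\hat{u}^{\ast}(e)\alpha^{(1)}(u),
\end{equation*}
and on the level set $l^{-1}(\xi)$ the second summand equals $\mathcal{A}^{\ast}(u)\xi$, while the first lies in $\mathrm{range}\,p^{\ast}(u)$ by (\ref{0.7}) and coincides, by the very definition of $p_{\mathcal{A}}^{\xi}$ in (\ref{0.11}), with $p^{\ast}(u)\bigl(p_{\mathcal{A}}^{\xi}(u)\alpha^{(1)}(u)\bigr)$. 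Hence on $l^{-1}(\xi)$ we have the clean identity $\alpha^{(1)}=p^{\ast}p_{\mathcal{A}}^{\xi}\alpha^{(1)}+\mathcal{A}^{\ast}\xi$.

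Next, I would apply $d\circ pr^{\ast}$ to this decomposition. Using the commutative diagram (\ref{0.10}), the first summand pulls back to the $(p_{\mathcal{A}}^{\xi})^{\ast}$-image of the canonical Liouville form on $T^{\ast}(N)$; after taking $d$ this produces the term $(p_{\mathcal{A}}^{\xi})^{\ast}(d\,pr^{\ast}\beta^{(1)})$. For the second summand, I would compute $d\langle\xi,\mathcal{A}\rangle_{\mathcal{G}}$ via the Cartan structure equation
\begin{equation*}
d\mathcal{A}=-\tfrac{1}{2}[\mathcal{A},\mathcal{A}]+\Omega^{(2)},
\end{equation*}
invoking the hypothesis that $\xi\in\mathcal{G}^{\ast}$ is $G$-invariant, so that $ad_{a}^{\ast}\xi=0$ for every $a\in\mathcal{G}$ and therefore $\langle\xi,[\mathcal{A},\mathcal{A}]\rangle_{\mathcal{G}}=0$. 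This leaves $d\langle\xi,\mathcal{A}\rangle_{\mathcal{G}}=\langle\xi,\Omega^{(2)}\rangle_{\mathcal{G}}=\Omega_{\xi}^{(2)}$. Because the curvature is horizontal and $Ad$-equivariant, the $G$-invariance of $\xi$ guarantees that $\Omega_{\xi}^{(2)}$ on $M$ is basic and descends uniquely to the 2-form of the same name on $N$, and its $pr^{\ast}$-pullback from $T^{\ast}(N)$ realizes the second term of (\ref{0.12}). Assembling these two pieces reproduces the claimed equality on the constraint surface $l^{-1}(\xi)$.

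Finally, I would deduce the symplectomorphism assertion. The restriction $\omega^{(2)}|_{l^{-1}(\xi)}$ has null directions precisely along the $G_{\xi}=G$-orbits (by Theorem \ref{Tm_2.1} and the $G$-invariance of $\xi$), so it drops to a symplectic form on $\mathcal{\bar M}_{\xi}$. The right-hand side of (\ref{0.12}) is the standard magnetic symplectic form on $T^{\ast}(N)$ obtained by adding the closed 2-form $pr^{\ast}\Omega_{\xi}^{(2)}$ to the canonical $d\,pr^{\ast}\beta^{(1)}$; it is non-degenerate because the curvature correction lives in the base directions only, and fibrewise the canonical form is unchanged. Identity (\ref{0.12}) then forces the reduced symplectic form and this magnetic form to correspond under $\nu_{\xi}$, yielding the symplectomorphism.

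The main obstacle I anticipate is bookkeeping rather than any single deep step: one must carefully verify that $p_{\mathcal{A}}^{\xi}$ indeed intertwines the relevant Liouville forms (via (\ref{0.10})) and that $\Omega_{\xi}^{(2)}$, although defined on $M$, is basic for the projection $p$ — both points depend crucially on the $G$-invariance of $\xi$ and the $Ad$-equivariance of $\mathcal{A}$ and $\Omega^{(2)}$, and a clean handling of this equivariance at the cotangent level is where the computation is most easily derailed.
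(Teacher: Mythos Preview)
Your proposal is correct and follows essentially the same route as the paper's own proof: the paper also decomposes $\alpha^{(1)}(u)$ on $l^{-1}(\xi)$ as $p^{\ast}(u)\beta^{(1)}(pr_N(u))+\langle\xi,\mathcal{A}(u)\rangle_{\mathcal{G}}$, invokes the commutative diagram (\ref{0.10}) to handle the horizontal piece, and uses the $Ad^{\ast}$-invariance of $\xi$ to eliminate the $\langle\xi,\mathcal{A}\wedge\mathcal{A}\rangle_{\mathcal{G}}$ term when computing $d\langle\xi,\mathcal{A}\rangle_{\mathcal{G}}$, arriving at the curvature contribution. Your treatment is in fact slightly more explicit than the paper's about why $\Omega_{\xi}^{(2)}$ is basic and why the resulting map is a genuine symplectomorphism (the paper stops at the identity (\ref{0.12}) without spelling out the non-degeneracy argument), but the underlying computation is the same.
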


\begin{proof}
One has that on $l^{-1}(\xi )\subset M$ the following expression, due to (%
\ref{0.9}), holds: 
\begin{equation*}
p^{\ast }(u)p_{\mathcal{A}}^{\xi }(\alpha ^{(1)}(u))=p^{\ast }(u)\beta
^{(1)}(pr_{N}(u))=\alpha ^{(1)}(u)-\mathcal{A}^{\ast }(u)\hat{u}^{\ast
}(e)\alpha ^{(1)}(u)
\end{equation*}%
for \ any $\beta ^{(1)}\in T^{\ast }(N)$ and all $u\in M_{\xi
}:=p_{M}l^{-1}(\xi )\subset M.$ Thus we easily get that 
\begin{equation*}
\alpha ^{(1)}(u)=(p_{\mathcal{A}}^{\xi })^{-1}\beta ^{(1)}(p_{N}(u))=p^{\ast
}(u)\beta ^{(1)})(pr_{N}(u))+<\xi ,\mathcal{A}(u)>_{\mathcal{G}}
\end{equation*}%
for all $u\in M_{\xi }.$ Recall now that in virtute of (\ref{0.10}) on the
manifold $M_{\xi }$ there hold relationships:%
\begin{equation*}
p\text{ }\circ pr_{M_{\xi }}=pr_{N}\text{ }\circ p_{\mathcal{A}}^{\xi },%
\text{ \ }pr_{M_{\xi }}^{\ast }\text{\ }\circ p^{\ast }\text{ }=\text{ }(p_{%
\mathcal{A}}^{\xi })^{\ast }\text{\ }\circ pr_{N}^{\ast }\text{ .\ }
\end{equation*}%
Therefore we can now write down that 
\begin{eqnarray*}
pr_{M_{\xi }}^{\ast }\alpha ^{(1)}(u) &=&pr_{M_{\xi }}^{\ast }\beta
^{(1)}(p_{N}(u))+pr_{M_{\xi }}^{\ast }<\xi ,\mathcal{A}(u)>_{\mathcal{G}} \\
&=&(p_{\mathcal{A}}^{\xi })^{\ast }(pr_{N}^{\ast }\beta
^{(1)})(u)+pr_{M_{\xi }}^{\ast }<\xi ,\mathcal{A}(u)>_{\mathcal{G}},
\end{eqnarray*}%
whence taking the external \ differential, one arrives at the following
equalities: 
\begin{eqnarray*}
d\text{ }pr_{M_{\xi }}^{\ast }\alpha ^{(1)}(u) &=&(p_{\mathcal{A}}^{\xi
})^{\ast }d(pr_{N}^{\ast }\beta ^{(1)})(u)+pr_{M_{\xi }}^{\ast }<\xi ,d\text{
}\mathcal{A}(u)>_{\mathcal{G}}= \\
&=&(p_{\mathcal{A}}^{\xi })^{\ast }d(pr_{N}^{\ast }\beta
^{(1)})(u)+pr_{M_{\xi }}^{\ast }<\xi ,\Omega (p(u))>_{\mathcal{G}}= \\
&=&(p_{\mathcal{A}}^{\xi })^{\ast }d(pr_{N}^{\ast }\beta
^{(1)})(u)+pr_{M_{\xi }}^{\ast }p^{\ast }<\xi ,\Omega >_{\mathcal{G}}(u)= \\
&=&(p_{\mathcal{A}}^{\xi })^{\ast }d(pr_{N}^{\ast }\beta ^{(1)})(u)+(p_{%
\mathcal{A}}^{\xi })^{\ast }pr_{N}^{\ast }<\xi ,\Omega >_{\mathcal{G}}(u)= \\
&=&(p_{\mathcal{A}}^{\xi })^{\ast }[d(pr_{N}^{\ast }\beta
^{(1)})(u)+pr_{N}^{\ast }<\xi ,\Omega >_{\mathcal{G}}(u)].
\end{eqnarray*}%
When deriving the above expression we made use of the following property
satisfied by the curvature 2-form $\Omega \in \Lambda ^{2}(M)\otimes 
\mathcal{G}:$%
\begin{eqnarray*}
&<&\mathcal{\xi },dA(u)\mathcal{>_{\mathcal{G}}=<}\xi ,d\mathcal{A(}u)+%
\mathcal{A(}u)\wedge \mathcal{A(}u)>_{\mathcal{G}}-<\xi ,\mathcal{A(}%
u)\wedge \mathcal{A(}u)>_{\mathcal{G}} \\
&=&<\xi ,\Omega (p_{N}(u))>_{\mathcal{G}}=<\xi ,p_{N}^{\ast }\Omega >_{%
\mathcal{G}}(u)
\end{eqnarray*}%
at any $u\in M_{\xi },$ since for any $A,B\in \mathcal{G}$ there holds $<\xi
,[A,B]>_{\mathcal{G}}=<Ad^{\ast }A\xi ,B>_{\mathcal{G}}=0$ in virtue of the
invariance condition $Ad_{G}^{\ast }\xi =\xi .$ Thereby the proof is
finished.
\end{proof}

\begin{remark}
\label{rem_03} {As the canonical 2-form \ \ \ }$d$ {\ }$pr_{M}^{\ast }\alpha
^{(1)}\in $ {\ }$\Lambda ^{(2)}(T^{\ast }(M))$ {\ \ \ is }$G$ {-invariant on
\ \ \ }$T^{\ast }(M)$ {\ due to construction, it is evident that its
restriction upon the \ }$G$ {-invariant submanifold \ \ \ $\mathcal{M}_{\xi
} $}$\subset T^{\ast }(M)$ {\ \ will be effectively defined only on the
reduced space \ \ $\mathcal{\bar{M}}_{\xi }$}$,$ {\ that ensures the
validity of the equality sign in (\ref{0.12}). \ }
\end{remark}

As a consequence of Theorem \ref{th_02} one can formulate the following
useful for applications theorems.

\begin{theorem}
\label{th_04} {Let an element }$\xi \in \mathcal{G}^{\ast }$ {\ have the
isotropy group }$G_{\xi }$ {\ \ \ acting on the subset \ \ $\mathcal{M}_{\xi
}$}$\subset T^{\ast }(M)$ {\ freely and properly, so that the reduced phase
space \ }$({\mathcal{\bar{M}}_{\xi }}\ ,\bar{\sigma}_{\xi }^{(2)})$ where,
by definition, ${\mathcal{\bar{M}}_{\xi }:=}l^{-1}(\xi )/G_{\xi },$ {\ is
symplectic whose symplectic structure is defined as } 
\begin{equation}
\bar{\sigma}_{\xi }^{(2)}:=\left. d\text{ }pr_{M}^{\ast }\alpha
^{(1)}\right\vert _{{\mathcal{\bar{M}}_{\xi }})}.  \label{0.13}
\end{equation}%
{If a principal fiber bundle \ \ }$p:(M,\varphi )\rightarrow N$ {\ has a
structure group coinciding with }$G_{\xi },${\ then\ \ the reduced
symplectic space \ \ \ \ }$({\mathcal{\bar{M}}_{\xi }},\bar{\sigma}_{\xi
}^{(2)})$ {\ is symplectomorphic to the cotangent symplectic space }$%
(T^{\ast }(N),\bar{\omega}_{\xi }^{(2)}),${\ where } 
\begin{equation}
\bar{\omega}_{\xi }^{(2)}=d\text{ }pr_{N}^{\ast }\beta ^{(1)}+pr_{N}^{\ast
}\Omega _{\xi }^{(2)},  \label{0.14}
\end{equation}%
{and the corresponding symplectomorphism \ is given by a relation like (\ref%
{0.12}).}
\end{theorem}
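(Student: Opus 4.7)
The plan is to obtain Theorem~\ref{th_04} as essentially a direct corollary of Theorem~\ref{th_02}, applied not to the original group $G$ but to its subgroup $G_{\xi}$. The key observation is that, by the very definition of the isotropy group, $\mathrm{Ad}^{\ast}_{g^{-1}}\xi=\xi$ for every $g\in G_{\xi}$, so $\xi$ is automatically a $G_{\xi}$-invariant element of $\mathcal{G}^{\ast}$. Once the principal bundle $p:(M,\varphi)\to N$ is assumed to carry $G_{\xi}$ as its structure group, every hypothesis of Theorem~\ref{th_02} is met with the substitution $G\leadsto G_{\xi}$, and Lemma~\ref{lem_01} applies verbatim.

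First I would verify that $(\bar{\mathcal{M}}_{\xi},\sigma_{\xi}^{(2)})$ is a genuine symplectic manifold. Freeness and properness of the $G_{\xi}$-action on $\mathcal{M}_{\xi}=l^{-1}(\xi)$ guarantee that the quotient $\bar{\mathcal{M}}_{\xi}=\mathcal{M}_{\xi}/G_{\xi}$ is a smooth manifold; by the Dirac-type construction summarized in Lemma~\ref{Lm_2.1} and formulas~\eqref{17}--\eqref{18}, the restriction $d\,pr^{\ast}\alpha^{(1)}|_{l^{-1}(\xi)}$ is $G_{\xi}$-basic and descends to a well-defined 2-form $\sigma_{\xi}^{(2)}$ on $\bar{\mathcal{M}}_{\xi}$; non-degeneracy follows from the orbit/isotropy dimension count already used in Section~\ref{Subsec_1.1}, noting that $\dim\bar{\mathcal{M}}_{\xi}=\dim T^{\ast}(M)-2\dim\mathcal{G}_{\xi}$ is even.

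Next I would invoke Lemma~\ref{lem_01} (with $G$ replaced by $G_{\xi}$) to conclude that the map $p_{\mathcal{A}}^{\xi}$ defined by~\eqref{0.11} exhibits $\mathcal{M}_{\xi}$ as a principal $G_{\xi}$-bundle over $T^{\ast}(N)$; quotienting by $G_{\xi}$ therefore yields a diffeomorphism $\nu_{\xi}:\bar{\mathcal{M}}_{\xi}\to T^{\ast}(N)$. To upgrade this to a symplectomorphism one then repeats the computation from the proof of Theorem~\ref{th_02}: starting from the identity $pr^{\ast}\alpha^{(1)}=(p_{\mathcal{A}}^{\xi})^{\ast}pr_{N}^{\ast}\beta^{(1)}+\langle\xi,\mathcal{A}\rangle_{\mathcal{G}}$ on $\mathcal{M}_{\xi}$ and taking the exterior derivative produces the curvature correction $pr^{\ast}\Omega_{\xi}^{(2)}$, yielding precisely the relation~\eqref{0.12} with $\omega_{\xi}^{(2)}=d\,pr^{\ast}\beta^{(1)}+pr^{\ast}\Omega_{\xi}^{(2)}$.

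The delicate step, which I expect to be the main obstacle, is ensuring that the term $\langle\xi,\mathcal{A}\wedge\mathcal{A}\rangle_{\mathcal{G}}$ still drops out in the exterior differentiation, so that $d\langle\xi,\mathcal{A}\rangle_{\mathcal{G}}$ can be replaced by $\langle\xi,\Omega\rangle_{\mathcal{G}}$. In Theorem~\ref{th_02} this vanishing rested on full $\mathrm{Ad}^{\ast}_{G}$-invariance of $\xi$; here one has only $\mathrm{Ad}^{\ast}_{G_{\xi}}$-invariance, but this is exactly enough because the connection form $\mathcal{A}$ takes values in $\mathcal{G}_{\xi}$ when the structure group is $G_{\xi}$. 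Hence for any $A,B\in\mathcal{G}_{\xi}$ one still has $\langle\xi,[A,B]\rangle_{\mathcal{G}}=\langle\mathrm{ad}^{\ast}_{A}\xi,B\rangle_{\mathcal{G}}=0$, which closes the argument. It is precisely here that the hypothesis ``structure group coinciding with $G_{\xi}$'' is indispensable, and the proof would emphasize that without it the naive replacement of $G$ by $G_{\xi}$ in the curvature calculation fails.
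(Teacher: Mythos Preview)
Your approach is correct and is precisely the one the paper takes: Theorem~\ref{th_04} is stated there as a direct consequence of Theorem~\ref{th_02}, with no separate proof given. Your proposal in fact supplies more detail than the paper does, correctly isolating the point that the hypothesis ``structure group equals $G_{\xi}$'' is what allows the $\langle\xi,\mathcal{A}\wedge\mathcal{A}\rangle_{\mathcal{G}}$ term to vanish via $\mathrm{ad}^{\ast}_{A}\xi=0$ for $A\in\mathcal{G}_{\xi}$.
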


\begin{theorem}
{\label{th_05} {In order that two symplectic spaces \ \ \ }$({\mathcal{\bar{M%
}}_{\xi }},\bar{\sigma}_{\xi }^{(2)})$ {\ \ \ \ and \ \ \ }$(T^{\ast
}(N),dpr_{N}^{\ast }\beta ^{(1)})$ were {symplectomorphic, it is necessary
and sufficient that the element \ }$\xi \in \ker ${\ }}${h,}${\ where for }${%
G}${{-invariant element \ \ }$\xi \in \mathcal{G}^{\ast }$ {\ the mapping }$%
h:\xi \rightarrow \lbrack \Omega _{\xi }^{(2)}]\in H^{2}(N;\mathbb{Z}),$ {\
with }$H^{2}(N;\mathbb{Z})$\ {\ being the cohomology class of 2-forms \ on
the manifold }$N.$}
\end{theorem}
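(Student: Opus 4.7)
The plan is to use Theorem \ref{th_04} to reduce the problem: since that theorem identifies $(\mathcal{\bar{M}}_{\xi},\sigma_{\xi}^{(2)})$ symplectomorphically with $(T^{\ast}(N),\omega_{\xi}^{(2)})$, where $\omega_{\xi}^{(2)}=d\,pr^{\ast}\beta^{(1)}+pr^{\ast}\Omega_{\xi}^{(2)}$, the original question becomes whether the two symplectic structures $\omega_{\xi}^{(2)}$ and $d\,pr^{\ast}\beta^{(1)}$ on the single manifold $T^{\ast}(N)$ are equivalent under some symplectomorphism. Thus the entire proof lives on $T^{\ast}(N)$ and hinges on the de Rham class $[pr^{\ast}\Omega_{\xi}^{(2)}]\in H^{2}(T^{\ast}(N);\mathbb{R})$, which, via the fact that $pr:T^{\ast}(N)\to N$ is a deformation retract (so $pr^{\ast}$ induces an isomorphism on cohomology), corresponds precisely to $[\Omega_{\xi}^{(2)}]\in H^{2}(N;\mathbb{R})$.

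For sufficiency, I would assume $\xi\in\ker h$, so $[\Omega_{\xi}^{(2)}]=0$ and there exists $\gamma\in\Lambda^{1}(N)$ with $\Omega_{\xi}^{(2)}=d\gamma$. I would then introduce the fiberwise translation map $\Phi_{\gamma}:T^{\ast}(N)\to T^{\ast}(N)$ given in local Darboux coordinates $(q,p)$ by $\Phi_{\gamma}(q,p)=(q,p+\gamma(q))$; this is a diffeomorphism whose pullback acts on the canonical Liouville 1-form as $\Phi_{\gamma}^{\ast}(pr^{\ast}\beta^{(1)})=pr^{\ast}\beta^{(1)}+pr^{\ast}\gamma$. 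Taking $d$ gives $\Phi_{\gamma}^{\ast}(d\,pr^{\ast}\beta^{(1)})=d\,pr^{\ast}\beta^{(1)}+pr^{\ast}d\gamma=d\,pr^{\ast}\beta^{(1)}+pr^{\ast}\Omega_{\xi}^{(2)}=\omega_{\xi}^{(2)}$. Composed with the symplectomorphism of Theorem \ref{th_04}, this yields the desired symplectomorphism $\mathcal{\bar{M}}_{\xi}\cong (T^{\ast}(N),d\,pr^{\ast}\beta^{(1)})$.

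For necessity, I would assume a symplectomorphism $\psi:(T^{\ast}(N),\omega_{\xi}^{(2)})\to(T^{\ast}(N),d\,pr^{\ast}\beta^{(1)})$ exists. Since $d\,pr^{\ast}\beta^{(1)}$ is exact, the pullback $\psi^{\ast}(d\,pr^{\ast}\beta^{(1)})=\omega_{\xi}^{(2)}$ must also be exact, hence $[\omega_{\xi}^{(2)}]=0$ in $H^{2}(T^{\ast}(N);\mathbb{R})$. Subtracting the exact form $d\,pr^{\ast}\beta^{(1)}$ we get $[pr^{\ast}\Omega_{\xi}^{(2)}]=0$, and invoking the isomorphism $pr^{\ast}:H^{2}(N;\mathbb{R})\xrightarrow{\sim}H^{2}(T^{\ast}(N);\mathbb{R})$ forces $[\Omega_{\xi}^{(2)}]=0$, i.e.\ $\xi\in\ker h$.

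The main obstacle I anticipate is twofold: first, reconciling the notation $H^{2}(N;\mathbb{Z})$ in the statement with the de Rham class that actually appears (which lives naturally in $H^{2}(N;\mathbb{R})$), and verifying that $\Omega_{\xi}^{(2)}$ is indeed a closed 2-form on the base (this follows from the Bianchi identity $d\Omega+[\mathcal{A},\Omega]=0$ together with the $Ad$-invariance of $\xi$, which kills the bracket term after pairing with $\xi$). Second, one must justify carefully that $pr^{\ast}$ is a cohomology isomorphism; this uses only that $T^{\ast}(N)\to N$ is a vector bundle and hence homotopy equivalent to $N$, which is standard but deserves explicit citation.
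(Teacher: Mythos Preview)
Your proposal is correct and is precisely the natural elaboration of what the paper leaves implicit: the paper states Theorem~\ref{th_05} as a direct consequence of Theorems~\ref{th_02} and~\ref{th_04} without a separate proof, and your argument via the fiber-translation map $\Phi_{\gamma}(q,p)=(q,p+\gamma(q))$ together with the homotopy equivalence $T^{\ast}(N)\simeq N$ is exactly how one unpacks that claim. Your observations about the $H^{2}(N;\mathbb{Z})$ versus $H^{2}(N;\mathbb{R})$ discrepancy and about closedness of $\Omega_{\xi}^{(2)}$ via Bianchi plus $Ad$-invariance are well taken and fill genuine gaps in the paper's presentation.
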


\subsection{The Hamiltonian analysis of the Maxwell electromagnetic
dynamical systems}

We take the Maxwell electromagnetic equations to be%
\begin{eqnarray}
\partial E/\partial t &=&\nabla \times B-J,\text{ \ \ \ }\partial B/\partial
t=-\nabla \times E,  \label{1.1} \\
&<&\nabla ,E>=\rho ,\text{ \ \ \ \ \ \ \ }<\nabla ,B>=0,\text{\ }  \notag
\end{eqnarray}%
on the cotangent phase space $T^{\ast }(N)$ \ to $N\subset T(D;\mathbb{E}%
^{3}),$ \ being the smooth manifold of smooth vector fields on an open
domain $D\subset \mathbb{R}^{3},$ all expressed in the light speed units.
Here $(E,B)\in T^{\ast }(N)$ \ \ is a vector of electric and magnetic
fields, $\rho :D\rightarrow \mathbb{R}$ and $J:D\rightarrow \mathbb{E}^{3}$\
are, simultaneously, fixed charge and current densities in the domain $D,$
satisfying the equation of continuity 
\begin{equation}
\partial \rho /\partial t+<\nabla ,J>=0,  \label{1.1a}
\end{equation}%
holding for all $t\in \mathbb{R},$ where we denoted by the sign $"\nabla "$
the gradient operation with respect to a variable $x\in $ $D,$ by the sign $%
"\times "$ the usual vector product in $\mathbb{E}^{3}:=(\mathbb{R}%
^{3},<\cdot ,\cdot >),$ \ being the standard three-dimensional Euclidean
vector space $\ \mathbb{R}^{3}$ endowed with the usual scalar product $%
<\cdot ,\cdot >.$

Aiming to represent equations (\ref{1.1}) as those on reduced symplectic
space, we define an appropriate configuration space $M$ $\subset \mathcal{T}%
(D;\mathbb{E}^{3})$ with a vector potential field coordinate \ $A\in M.$ The
cotangent space $T^{\ast }(M)$\ \ may be identified with pairs $(A;Y)\in
T^{\ast }(M),$ where $Y\in \mathcal{T}^{\ast }(D;\mathbb{E}^{3})$ is a
suitable vector field density in $D.$ On the space $T^{\ast }(M)$ there
exists the weak canonical symplectic form $\omega ^{(2)}\in \Lambda
^{2}(T^{\ast }(M)),$ allowing, $\ $owing to the definition$\ $of\ the\
Liouville\ from$\ \ \ \ \ \ \ \ \ \ \ \ \ $%
\begin{equation}
\lambda (\alpha ^{(1)})(A;Y)=\int_{D}d^{3}x(<Y,dA>:=(Y,dA),  \label{1.2}
\end{equation}%
the canonical expression$\ \ \ $%
\begin{equation}
\omega ^{(2)}(\alpha ^{(1)}):=d\lambda (\alpha ^{(1)})=(dY,\wedge dA).
\label{1.2b}
\end{equation}%
Here we denoted by $"\wedge "$ the usual external differentiation, by $%
d^{3}x,$ $x\in D,$ \ the Lebesgue measure in the domain $D$ and by $%
pr:T^{\ast }(M)\rightarrow M$ \ the standard projection upon the base space $%
M.$ Define now a Hamiltonian function $\tilde{H}\in \mathcal{D}(T^{\ast
}(M)) $ as%
\begin{equation}
H(A,Y)=1/2[(Y,Y)+(\nabla \times A,\nabla \times A)+(<\nabla ,A>,<\nabla
,A>)],  \label{1.2c}
\end{equation}%
describing the well-known Maxwell equations in vacuum, if the densities $%
\rho =0$ and $J=0.$ Really, owing to (\ref{1.2b}) one easily obtains from (%
\ref{1.2c}) that 
\begin{eqnarray}
\partial A/\partial t &:&=\delta H/\delta Y=Y,  \label{1.2d} \\
\partial Y/\partial t &:&=-\delta H/\delta A=-\nabla \times B+\nabla <\nabla
,A>,  \notag
\end{eqnarray}%
being true wave equations in vacuum, where we put, by definition, 
\begin{equation}
B:=\nabla \times A,  \label{1.2e}
\end{equation}%
being the corresponding magnetic field. \ Now defining 
\begin{equation}
E:=-Y-\nabla W  \label{1.1f}
\end{equation}%
for some function $W:M\rightarrow \mathbb{R}$ as the corresponding electric
field, the system of equations (\ref{1.2d}) will become, owing to definition
(\ref{1.2e}), 
\begin{equation}
\partial B/\partial t=-\nabla \times E,\text{ \ }\partial E/\partial
t=\nabla \times B,  \label{1.3}
\end{equation}%
exactly coinciding with the Maxwell equations in vacuum, if the Lorentz
condition 
\begin{equation}
\partial W/\partial t+<\nabla ,A>=0  \label{1.3a}
\end{equation}%
is involved.

Since definition (\ref{1.1f}) was essentially imposed rather than arising
naturally from the Hamiltonian approach and our equations are valid only for
a vacuum, we shall try to improve upon these matters by employing the
reduction approach devised in Section 2. Namely, we start with the
Hamiltonian (\ref{1.2c}) and observe that it is invariant with respect to
the following abelian symmetry group $G:=\exp \mathcal{G},$ where $\mathcal{G%
}\simeq C^{(1)}(D;\mathbb{R}),$ acting on the base manifold $M$ naturally
lifted to $T^{\ast }(M):$ for any $\psi \in \mathcal{G}$ and $(A,Y)\in
T^{\ast }(M)$%
\begin{equation}
\varphi _{\psi }(A):=A+\nabla \psi ,\ \ \ \ \varphi _{\psi }(Y)=Y.
\label{1.4}
\end{equation}%
The 1-form (\ref{1.2}) under transformation (\ref{1.4}) also is invariant
since 
\begin{equation}
\begin{array}{c}
\varphi _{\psi }^{\ast }\lambda (\alpha ^{(1)})(A,Y)=(Y,dA+\nabla d\psi )=
\\ 
=(Y,dA)-(<\nabla ,Y>,d\psi )=\lambda (\alpha ^{(1)})(A,Y),%
\end{array}
\label{1.5}
\end{equation}%
where we made use of the condition $d\psi \simeq 0$ in $\Lambda ^{1}(T^{\ast
}(M))$ for any $\psi \in \mathcal{G}.$ Thus, the corresponding momentum
mapping (\ref{0.4}) is given as 
\begin{equation}
l(A,Y)=-<\nabla ,Y>  \label{1.6}
\end{equation}%
for all $(A,Y)\in T^{\ast }(M).$ If $\rho \in \mathcal{G}^{\ast }$is fixed,
one can define the reduced phase space $\mathcal{\bar{M}}_{\rho
}:=l^{-1}(\rho )/G,$ since evidently, the isotropy group $G_{\rho }=G,$
owing to its commutativity and the condition (\ref{1.4}). Consider \ \ now a
principal fiber bundle $p:M\rightarrow N$ with the abelian structure group $%
G $ and a base manifold $N$ taken as 
\begin{equation}
N:=\{B\in \mathcal{T}(D;\mathbb{E}^{3}):\text{ \ }<\nabla ,\text{ }B>=0,%
\text{ \ }<\nabla ,E(S)>=\rho \},  \label{1.7}
\end{equation}%
where, by definition, 
\begin{equation}
p(A)=B=\nabla \times A.  \label{1.8}
\end{equation}%
We can construct a connection 1-form $\ \ \mathcal{A}\in \Lambda ^{1}(M)%
\mathbb{\otimes }\mathcal{G}$ on this bundle, where for all $A\in M$ 
\begin{equation}
\mathcal{A}(A)\cdot \hat{A}_{\ast }(l)=1,\text{ \ \ }d<\rho ,\mathcal{A}%
(A)>_{\mathcal{G}}=\Omega _{\rho }^{(2)}(A)\in H^{2}(M;\mathbb{Z}),
\label{1.9}
\end{equation}%
where $\mathcal{A}(A)\in \Lambda ^{1}(M)$ is some differential 1-form, which
we choose in the following form:%
\begin{equation}
\mathcal{A}(A):=-(W,d<\nabla ,A>),  \label{1.9a}
\end{equation}%
where $W\in C^{(1)}(D;\mathbb{R})$ is some scalar function, still not
defined. As a result, the Liouville form (\ref{1.2}) transforms into 
\begin{equation}
\lambda (\tilde{\alpha}_{\rho }^{(1)}):=(Y,dA)-(W,d<\nabla ,A>)=(Y+\nabla
W,dA):=(\tilde{Y},\text{ }dA),\ \tilde{Y}:=Y+\nabla W,  \label{1.9aa}
\end{equation}%
giving rise to the corresponding canonical symplectic structure on $T^{\ast
}(M)$ as 
\begin{equation}
\tilde{\omega}_{\rho }^{(2)}:=d\lambda (\tilde{\alpha}_{\rho }^{(1)})=(d%
\tilde{Y},\wedge dA).  \label{1.9aaa}
\end{equation}%
Respectively, the Hamiltonian function (\ref{1.2c}), as a function\ on $\
T^{\ast }(M),$ transforms into 
\begin{equation}
\tilde{H}_{\rho }(A,\tilde{Y})=1/2[(\tilde{Y},\tilde{Y})+(\nabla \times
A,\nabla \times A)+(<\nabla ,A>,<\nabla ,A>)],  \label{1.9ab}
\end{equation}%
coinciding with the well-known Dirac-Fock-Podolsky \cite{BoSh,BoSh-1,DiFoPo}
Hamiltonian expression. The corresponding Hamiltonian equations on the
cotangent space $\ \ T^{\ast }(M)$ 
\begin{eqnarray*}
\partial A/\partial t &:&=\delta \tilde{H}/\delta \tilde{Y}=\tilde{Y},\text{
\ \ }\tilde{Y}:=-E-\nabla W, \\
\partial \tilde{Y}/\partial t &:&=-\delta \tilde{H}/\delta A=-\nabla \times
(\nabla \times A)+\nabla <\nabla ,A>,
\end{eqnarray*}%
describe true wave processes related to the Maxwell equations in vacuum,
which do not take into account boundary charge and current densities
conditions. Really, from (\ref{1.9ab}) we obtain that 
\begin{equation}
\partial ^{2}A/\partial t^{2}-\nabla ^{2}A=0\Longrightarrow \partial
E/\partial t+\nabla (\partial W/\partial t\text{\ }+\text{ }<\nabla
,A>)=-\nabla \times B,\text{\ }  \label{1.9abb}
\end{equation}%
giving rise to the true vector potential wave equation, but the
electromagnetic Faraday induction law is satisfied if one to impose
additionally the Lorentz condition (\ref{1.3a}).

To remedy this situation, we will apply to this symplectic space \ the
reduction technique devised in Subsection \ (\ref{Subsec_2a.1}). Namely,
owing to Theorem \ref{th_04}, the constructed above cotangent manifold $%
T^{\ast }(N)\ $ is symplectomorphic to the corresponding reduced phase space 
$\mathcal{\bar{M}}_{\rho },$ that is 
\begin{equation}
\mathcal{\bar{M}}_{\rho }\simeq \{(B;S)\in T^{\ast }(N):\ <\nabla
,E(S)>=\rho ,\text{ \ \ }<\nabla ,B>=0\}  \label{1.9b}
\end{equation}%
with the reduced canonical symplectic 2-form 
\begin{equation}
\omega _{\rho }^{(2)}(B,S)=(dB,\wedge dS=d\lambda (\alpha _{\rho
}^{(1)})(B,S),\text{ \ \ \ }\lambda (\alpha _{\rho }^{(1)})(B,S):=-(S,dB),
\label{1.10}
\end{equation}%
\ where we put, by definition,

\begin{equation}
\nabla \times S+F+\nabla W=-\tilde{Y}:=E+\nabla W,\text{ \ \ }<\nabla
,F>:=\rho ,  \label{1.10a}
\end{equation}%
for some fixed vector mapping $F\in C^{(1)}(D;\mathbb{E}^{3}),$ depending on
the imposed boundary conditions. The result (\ref{1.10}) follows right away
upon substituting the expression for the electric field \ $E=\nabla \times
S+F$ into the symplectic structure (\ref{1.9aaa}), and taking into account
that $dF=0$ in $\Lambda ^{1}(M).$ \ The Hamiltonian function (\ref{1.9ab})
reduces, respectively, to the following symbolic form:%
\begin{eqnarray}
H_{\rho }(B,S) &=&1/2[(B,B)+(\nabla \times S+F+\nabla W,\nabla \times
S+F+\nabla W)+  \notag \\
+( &<&\nabla ,(\nabla \times )^{-1}B>,<\nabla ,(\nabla \times )^{-1}B>)],
\label{1.11}
\end{eqnarray}%
where $"(\nabla \times )^{-1}"$ means, by definition, the corresponding
inverse curl-operation, mapping \cite{MaWe} the divergence-free subspace $C_{%
{div}}^{(1)}(D;\mathbb{E}^{3})\subset C^{(1)}(D;\mathbb{E}^{3})$ into
itself. As a result from (\ref{1.11}), \ the Maxwell equations (\ref{1.1})
become a canonical Hamiltonian system upon the reduced phase space $T^{\ast
}(N),$ endowed with the canonical symplectic structure (\ref{1.10}) and \
the modified Hamiltonian function (\ref{1.11}). Really, one easily obtains
that%
\begin{eqnarray}
\partial S/\partial t &:&=\delta H/\delta B=B-(\nabla \times )^{-1}\nabla
<\nabla ,(\nabla \times )^{-1}B>,  \label{1.11a} \\
\text{\ \ }\partial B/\partial t &:&=-\delta H/\delta S=-\nabla \times
(\nabla \times S+F+\nabla W):=-\nabla \times E,  \notag
\end{eqnarray}%
where we make use of the definition $E=\nabla \times S+F$ and the elementary
identity $\nabla \times \nabla =0.$ Thus, the second equation of (\ref{1.11a}%
) coincides with \ the second Maxwell equation of (\ref{1.1}) in the
classical form 
\begin{equation*}
\partial B/\partial t=-\nabla \times E.
\end{equation*}%
Moreover, from (\ref{1.10a}), owing to (\ref{1.11a}), one obtains via the
differentiation with respect to $t\in \mathbb{R}$ that 
\begin{eqnarray}
\partial E/\partial t &=&\partial F/\partial t+\nabla \times \partial
S/\partial t=  \label{1.11b} \\
&=&\partial F/\partial t+\nabla \times B,  \notag
\end{eqnarray}%
as well as, owing to (\ref{1.1a}), 
\begin{equation}
<\nabla ,\partial F/\partial t>=\partial \rho /\partial t=-<\nabla ,J>.
\label{1.11c}
\end{equation}%
So, we can find from (\ref{1.11c}) that, up to non-essential curl-terms $%
\nabla \times (\cdot ),$ the following relationship%
\begin{equation}
\partial F/\partial t=-J  \label{1.11d}
\end{equation}%
holds. Really, the current density vector $J\in C^{(1)}(D;\mathbb{E}^{3}),$
owing to the equation of continuity (\ref{1.1a}), is defined up to
curl-terms $\nabla \times (\cdot )$ which can be included into the
right-hand side of (\ref{1.11d}). Having now substituted (\ref{1.11d}) into (%
\ref{1.11b}), we obtain exactly the first Maxwell equation of (\ref{1.1}): 
\begin{equation}
\partial E/\partial t=\nabla \times B-J,  \label{1.11cd}
\end{equation}%
being supplemented, naturally, with the external boundary constraint
conditions 
\begin{equation}
\begin{array}{c}
<\nabla ,B>=0,\text{ \ \ }<\nabla ,E>=\rho , \\ 
\partial \rho /\partial t+<\nabla ,J>=0,%
\end{array}
\label{1.11cdd}
\end{equation}%
owing to the continuity relationship (\ref{1.1a}) and definition (\ref{1.9b}%
).

Concerning the wave equations, related to the Hamiltonian system (\ref{1.11a}%
), \ we obtain the following: the electric field $E$ is recovered from the
second equation as 
\begin{equation}
E:=-\partial A/\partial t-\nabla W,  \label{1.11dd}
\end{equation}%
where $W\in C^{(1)}(D;\mathbb{R})$ is some smooth function, depending on the
vector field $A\in M.$ To retrieve this dependence, we substitute (\ref%
{1.11d}) \ into equation (\ref{1.11cd}), having taken into account that $%
B=\nabla \times A:$%
\begin{equation}
\partial ^{2}A/\partial t^{2}-\nabla (\partial W/\partial t+<\nabla
,A>)=\nabla ^{2}A+J.  \label{1.11de}
\end{equation}%
\ With the above, if we now impose the Lorentz condition (\ref{1.3a}), we
obtain from (\ref{1.11de}) \ the corresponding true wave equations in the
space-time, taking into account the external charge and current density
conditions (\ref{1.11cdd}).

Notwithstanding our progress so far, the problem of fulfilling the Lorentz
constraint (\ref{1.3a}) naturally within the canonical Hamiltonian formalism
still remains to be completely solved. To this end, we are compelled to
analyze the structure of the Liouville 1-form\ (\ref{1.9aa}) for Maxwell
equations in vacuum on a slightly extended functional manifold $M\times L.$
As a first step, we rewrite 1-form (\ref{1.9aa}) as 
\begin{eqnarray}
\lambda (\tilde{\alpha}_{\rho }^{(1)}) &:&=(\tilde{Y},dA)=(Y+\nabla
W,dA)=(Y,dA)+  \notag \\
+(W,-d &<&\nabla ,A>):=(Y,dA)+(W,d\chi ),  \label{1.11e}
\end{eqnarray}%
where we put, by definition, 
\begin{equation}
\chi :=-<\nabla ,A>.  \label{1.11f}
\end{equation}%
Considering now the elements $(Y,A;\chi ,W)$ $\in T^{\ast }(M\times L)$ as
new canonical variables on the extended cotangent phase space $T^{\ast
}(M\times L),$ where $L:=C^{(1)}(D;\mathbb{R}),$ we can rewrite the
symplectic structure (\ref{1.9aaa}) in the following canonical form%
\begin{equation}
\tilde{\omega}_{\rho }^{(2)}:=d\lambda (\tilde{\alpha}_{\rho
}^{(1)})=(dY,\wedge dA)+(dW,\wedge d\chi ).  \label{1.11g}
\end{equation}%
Subject to the Hamiltonian function (\ref{1.9ab}) we obtain the expression 
\begin{equation}
H(A,Y;\chi ,W)=1/2[(Y-\nabla W,Y-\nabla W)+(\nabla \times A,\nabla \times
A)+(\chi ,\chi )],  \label{1.11h}
\end{equation}%
with respect to which the corresponding Hamiltonian equations take the form:%
\begin{eqnarray}
\partial A/\partial t &:&=\delta H/\delta Y=Y-\nabla W,\text{ \ \ }Y:=-E, 
\notag \\
\partial Y/\partial t &:&=-\delta H/\delta A=-\nabla \times (\nabla \times
A),  \notag \\
\partial \chi /\partial t &:&=\delta H/\delta W=<\nabla ,Y-\nabla W>,  \notag
\\
\partial W/\partial t &:&=-\delta H/\delta \chi =-\chi .  \label{1.11hh}
\end{eqnarray}%
From (\ref{1.11hh}) \ we obtain, owing to external boundary conditions (\ref%
{1.11cdd}), \ successively that%
\begin{eqnarray}
\partial B/\partial t+\nabla \times E &=&0,\text{ \ }\partial ^{2}W/\partial
t^{2}-\nabla ^{2}W=\rho ,  \label{1.11hi} \\
\partial E/\partial t-\nabla \times B &=&0,\text{ \ \ }\partial
^{2}A/\partial t^{2}-\nabla ^{2}A=-\nabla (\partial W/\partial t+<\nabla
,A>).  \notag
\end{eqnarray}%
As is seen, these equations describe electromagnetic Maxwell equations in
vacuum, but without the Lorentz condition (\ref{1.3a}). Thereby, as above,
we will apply to the symplectic structure (\ref{1.11g}) the reduction
technique devised in Section 2. We obtain that under transformations (\ref%
{1.10a}) the corresponding reduced manifold $\mathcal{\bar{M}}_{\rho }$
becomes endowed with the symplectic structure 
\begin{equation}
\bar{\omega}_{\rho }^{(2)}:=(dB,\wedge dS)+(dW,\wedge d\chi ),  \label{1.11i}
\end{equation}%
and the Hamiltonian (\ref{1.11h}) assumes the form 
\begin{equation}
H(S,B;\chi ,W)=1/2[(\nabla \times S+F+\nabla W,\nabla \times S+F+\nabla
W)+(B,B)+(\chi ,\chi )],  \label{1.11j}
\end{equation}%
whose Hamiltonian equations 
\begin{eqnarray}
\partial S/\partial t &:&=\delta H/\delta B=B,\text{ \ \ \ \ \ \ }\partial
W/\partial t:=-\delta H/\delta \chi =-\chi ,  \label{1.11k} \\
\partial B/\partial t &:&=-\delta H/\delta S=-\nabla \times (\nabla \times
S+F+\nabla W)=-\nabla \times E,  \notag \\
\partial \chi /\partial t &:&=\delta H/\delta W=-<\nabla ,\nabla \times
S+F+\nabla W>=-<\nabla ,E>-\Delta W,  \notag
\end{eqnarray}%
coincide completely with Maxwell equations (\ref{1.1}) under conditions (\ref%
{1.10a}), describing true wave processes in vacuum, as well as the
electromagnetic Maxwell equations, taking into account \textit{a priori}
both the imposed external boundary conditions (\ref{1.11cdd}) and the
Lorentz condition (\ref{1.3a}), solving the problem mentioned in \cite%
{BoSh,DiFoPo}. Really, it is easy to obtain from (\ref{1.11k}) that 
\begin{eqnarray}
\partial ^{2}W/\partial t^{2}-\Delta W &=&\rho ,\text{ \ \ \ \ \ \ \ \ \ }%
\partial W/\partial t+<\nabla ,A>=0,  \label{1.11l} \\
\nabla \times B &=&J+\partial E/\partial t,\text{ \ \ \ \ \ \ \ }\partial
B/\partial t=-\nabla \times E,  \notag
\end{eqnarray}%
Based now on (\ref{1.11l}) and (\ref{1.11cdd}) one can easily calculate \cite%
{BoPrTa-1,BoPrTa} the magnetic wave equation%
\begin{equation}
\partial ^{2}A/\partial t^{2}-\Delta A=J,  \label{1.11m}
\end{equation}%
supplementing the suitable wave equation on the scalar potential $W\in L,$
finishing the calculations. Thus, we can formulate the following proposition.

\begin{proposition}
The electromagnetic Maxwell equations (\ref{1.1}) jointly with Lorentz
condition (\ref{1.3a}) are equivalent to the Hamiltonian system (\ref{1.11k}%
) \ with respect to the canonical symplectic structure \ (\ref{1.11i}) \ and
Hamiltonian function (\ref{1.11j}), which correspondingly reduce to
electromagnetic equations (\ref{1.11l}) and (\ref{1.11m}) under external
boundary conditions (\ref{1.11cdd}).
\end{proposition}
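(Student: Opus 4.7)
The plan is to establish the proposition by direct verification in two stages: first, compute the canonical Hamiltonian equations associated with $(\bar{\omega}_{\rho}^{(2)}, H(S,B;\chi,W))$ and show they reproduce the system \eqref{1.11k}; second, translate these equations into the standard electromagnetic variables $(E,B,A,W)$ and recover Maxwell's equations \eqref{1.1} together with the Lorenz condition \eqref{1.3a}, ultimately collapsing to the wave equations \eqref{1.11l} and \eqref{1.11m}.

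For the first stage, I would evaluate $\delta H/\delta B$, $\delta H/\delta S$, $\delta H/\delta \chi$, $\delta H/\delta W$ on \eqref{1.11j} by routine variational calculus. The $L^{2}$-pairing structure of $(\bar{\omega}_{\rho}^{(2)}$ forces canonical equations of the stated form, where the cross term $(\nabla\times S+F+\nabla W, \cdot)$ is expanded using integration by parts (and vanishing of boundary contributions on $D$) to produce the divergence and curl operators appearing on the right-hand sides of \eqref{1.11k}. Since $(B,S)$ and $(W,\chi)$ constitute two decoupled canonically conjugate pairs in $\bar{\omega}_{\rho}^{(2)}$, this step is essentially bookkeeping.

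For the second stage, I would introduce the substitution $E:=\nabla\times S+F+\nabla W$ dictated by \eqref{1.10a} and read off the physical content of \eqref{1.11k}. The equation $\partial B/\partial t=-\nabla\times E$ is immediate. To obtain Ampère's law $\nabla\times B=j+\partial E/\partial t$, I would differentiate the definition of $E$ in $t$, substitute $\partial S/\partial t=B$ and $\partial W/\partial t=-\chi$, and use $\partial F/\partial t=-j$ (justified modulo a curl term by the continuity equation \eqref{1.1a} together with $\langle\nabla,F\rangle=\rho$). The equations $\langle\nabla,B\rangle=0$ and $\langle\nabla,E\rangle=\rho$ hold as reduction constraints defining the phase space $\bar{\mathcal{M}}_{\rho}$, recovering \eqref{1.11cdd}. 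The scalar wave equation for $W$ in \eqref{1.11l} follows by differentiating $\partial W/\partial t=-\chi$ once more and substituting $\partial\chi/\partial t=-\langle\nabla,E\rangle-\Delta W=-\rho-\Delta W$. The Lorenz condition in \eqref{1.11l} is then read off by recalling the very definition \eqref{1.11f} of $\chi=-\langle\nabla,A\rangle$, so that $\partial W/\partial t=-\chi=\langle\nabla,A\rangle$; combined with the sign convention inherited from \eqref{1.9aa}–\eqref{1.11f} this yields $\partial W/\partial t+\langle\nabla,A\rangle=0$. Finally, substituting $E=-\partial A/\partial t-\nabla W$ into Ampère's law and using the Lorenz condition eliminates the gradient term and produces \eqref{1.11m}.

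The main obstacle I foresee is not in any single calculation but in keeping the identifications $E=\nabla\times S+F+\nabla W$, $B=\nabla\times A$, and $\chi=-\langle\nabla,A\rangle$ consistent with the reduction procedure: one must verify that the lift from the reduced variables $(B,S;W,\chi)\in T^{\ast}(N\times L)$ to the potentials $(A,W)$ is well-defined modulo the $G$-action, and that the ``up to curl'' indeterminacy in $j$ arising from the continuity equation \eqref{1.1a} is absorbed consistently into the choice of $F$. Once this bookkeeping is done carefully, the remaining steps are straightforward substitutions and routine vector-calculus identities.
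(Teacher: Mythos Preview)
Your proposal is correct and follows essentially the same route as the paper: the paper's argument (presented in the text leading up to the proposition) also consists of reading off the Hamiltonian equations from the variational derivatives of \eqref{1.11j}, then substituting $E=\nabla\times S+F$, $B=\nabla\times A$, and $\chi=-\langle\nabla,A\rangle$ to recover Faraday's law, Amp\`ere's law via $\partial F/\partial t=-j$ (justified modulo curl by continuity), the scalar wave equation by differentiating $\partial W/\partial t=-\chi$, and finally \eqref{1.11m} by inserting $E=-\partial A/\partial t-\nabla W$ into Amp\`ere's law under the Lorenz gauge. One small slip: from \eqref{1.10a} the correct identification is $E=\nabla\times S+F$, not $E=\nabla\times S+F+\nabla W$ (the $\nabla W$ cancels on both sides of \eqref{1.10a}); this matters when you check $\langle\nabla,E\rangle=\rho$, though it is harmless under the curl.
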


The obtained above result can be, eventually, used\ for developing an \
alternative quantization procedure\ of Maxwell \ electromagnetic equations,
being free of some quantum operator problems, discussed in detail in \cite%
{BoSh}. We hope to consider this aspect of quantization problem in a
specially devoted study.

\begin{remark}
{\ If one considers a motion of a charged point particle under a Maxwell
field, it is convenient to introduce a trivial fiber bundle structure $\pi
:M\rightarrow N,$ \ such that $M=N\times G$}${,}${\ $\ N:=D\subset \mathbb{R}%
^{3},$ with $G:=\mathbb{R}\backslash \{0${$\}$}\ being the corresponding
(abelian) structure Lie group. An analysis similar to the above gives rise
to the reduced (on the space }$\mathcal{\bar{M}}_{\xi }{:=l}${$^{-1}(\xi
)/G\simeq T^{\ast }(N),$ $\xi \in \mathcal{G}^{\ast }\mathcal{)}$\
symplectic structure }%
\begin{equation*}
{\bar{\omega}_{\xi }^{(2)}(q,p)=<dp,\wedge dq>+d<\xi ,\mathcal{A}(q,g)>_{%
\mathcal{G}},}
\end{equation*}%
{\ where $\mathcal{A}(q,g):=g^{-1}(d+{\xi }<A(q),dq>)g\in \mathcal{G}$ \ \ \
\ is a suitable connection 1-form on phase\ space $\ M,$ with $(q,p)\in
T^{\ast }(N)$ and $g\in G.$ The corresponding canonical Poisson brackets on $%
T^{\ast }(N)$ are easily found to be 
\begin{equation}
\{q^{i},q^{j}\}=0,\text{ \ \ }\{p_{j},q^{i}\}=\delta _{j}^{i},\text{ \ \ \ \
\ \ }\{p_{i},p_{j}\}=F_{ji}(q)  \label{1.21}
\end{equation}%
for all $(q,p)\in T^{\ast }(N),i,j=\overline{1,3}.$ If to introduce a new
momentum variable $\tilde{p}:=p+\xi A(q)$ on $T^{\ast }(N)\ni (q,p),$ it is
easy to verify that \ $\omega _{\xi }^{(2)}\rightarrow \tilde{\omega}_{\xi
}^{(2)}:=<d\tilde{p},\wedge dq>$}${,}${\ giving rise to the following
Poisson brackets \cite{Kupe,Pryk-Ampe,BlPrSa}: 
\begin{equation}
\{q^{i},q^{j}\}=0,\text{ \ \ \ \ }\{\tilde{p}_{j},q^{i}\}=\delta _{j}^{i},%
\text{ \ \ \ \ \ \ }\{\tilde{p}_{i},\tilde{p}_{j}\}=0,  \label{1.22}
\end{equation}%
where $i,j=\overline{1,3},$ iff \ \ the standard Maxwell field equations 
\begin{equation}
\partial F_{ij}/\partial q_{k}+\partial F_{jk}/\partial q_{i}+\partial
F_{ki}/\partial q_{j}=0  \label{1.23}
\end{equation}%
are satisfied on $N$ for all $i,j,k=\overline{1,3}$ with the curvature
tensor $F_{ij}(q):=\partial A_{j}/\partial q^{i}-\partial A_{i}/\partial
q^{j},$ \ $i,j=\overline{1,3},$ $q\in N.$}
\end{remark}

Such a construction permits a natural generalization to the case of
non-abelian structure Lie group yielding a description of Yang-Mills field
equations within the reduction approach, to which we proceed below.

\subsection{The Hamiltonian analysis of the Yang-Mills type dynamical systems%
}

As above, we start with defining a phase space $M$ of a particle \ \ \ under
a Yang-Mills field in a region $D\subset \mathbb{R}^{3}$ as $M:=D\mathbb{%
\times }G,$ where $G$ is a (not in general semi-simple) Lie group, acting on 
$M$ \ from the right. Over the space $M$ one can define quite naturally a
connection $\Gamma (\mathcal{A})$ \ if we consider the following trivial
principal fiber bundle $p:M\rightarrow N,$ where $N:=D,$ with the structure
group $G.$ Namely, if $g\in G,$ $\ q\in N,$ then a connection 1-form on $%
M\ni (q,g)$ can be expressed \cite{GiSt,PrMy,HePrPr} as 
\begin{equation}
\mathcal{A}(q;g):=g^{-1}(d+\sum_{i=1}^{n}a_{i}A^{(i)}(q))g,  \label{2.1Y}
\end{equation}%
where $\{a_{i}\in \mathcal{G}:i=\overline{1,n}\}$ is a basis of the Lie
algebra $\mathcal{G}$ \ of the Lie group $G$, and $A_{i}:D\rightarrow
\Lambda ^{1}(D),$ $i=\overline{1,n},$ are the Yang-Mills fields on the
physical space \ \ $D\subset \mathbb{R}^{3}.$

Now one defines the natural left invariant Liouville form on $M$ \ as 
\begin{equation}
\alpha ^{(1)}(q;g):=<p,dq>+<y,g^{-1}dg>_{\mathcal{G}},  \label{2.2Y}
\end{equation}%
where $y\in T^{\ast }(G)$ and $\ <\cdot ,\cdot >_{\mathcal{G}}$ denotes, \
as before, the usual Ad-invariant non-degenerate bilinear form on $\mathcal{G%
}^{\ast }\times \mathcal{G},$ as, evidently, $g^{-1}dg\in \Lambda
^{1}(G)\otimes \mathcal{G}\mathbf{.}$ The main assumption we need to proceed
is that the connection 1-form is compatible with the Lie group $G$ \ action
on $M.$ The latter means that the condition 
\begin{equation}
R_{h}^{\ast }\mathcal{A}(q;g)=Ad_{h^{-1}}\mathcal{A}(q;g)  \label{2.3Y}
\end{equation}%
is satisfied for all \ \ $(q,g)\in M$ and $h\in G,$ where $%
R_{h}:G\rightarrow G$ means the right translation by an element $h\in G$ on
the Lie group $\ G.$

Having stated all preliminary conditions needed for the reduction Theorem %
\ref{th_04} to be applied to our model, suppose that the Lie group $G$
canonical action on $M$ is naturally lifted to that on the cotangent space $%
T^{\ast }(M)$ endowed due to (endowed owing to ({\ref{1.2}) with the
following $G$-invariant canonical symplectic structure: \ \ \ \ 
\begin{eqnarray}
\omega ^{(2)}(q,p;g,y) &:&=d\text{ }pr^{\ast }\alpha
^{(1)}(q,p;g,y)=<dp,\wedge dq>+  \label{2.4Y} \\
+ &<&dy,\wedge g^{-1}dg>_{\mathcal{G}}+<ydg^{-1},\wedge dg>_{\mathcal{G}} 
\notag
\end{eqnarray}%
for all $(q,p;g,y)\in T^{\ast }(M).$ Take now an element $\xi \in \mathcal{G}%
^{\ast }$ and assume that its isotropy subgroup $G_{\xi }=G,$ that is $%
Ad_{h}^{\ast }\xi =\xi $ for all $h\in G.$ In the general case such an
element $\xi \in \mathcal{G}^{\ast }$ cannot exist but trivial $\xi =0,$ as
it happens, for instance, in the case of the Lie group $G=SL_{2}(\mathbb{R}%
). $ Then one can construct the reduced phase space $l^{-1}(\xi )/G$
symplectomorphic to $(T^{\ast }(N),\omega _{\xi }^{(2)}),$ where owing to (%
\ref{0.12}) for any $(q,p)\in T^{\ast }(N)$%
\begin{eqnarray}
\omega _{\xi }^{(2)}(q,p) &=&<dp,\wedge dq>+<\Omega ^{(2)}(q),\xi >_{%
\mathcal{G}}=  \label{2.5Y} \\
&=&<dp,\wedge
dq>+\sum_{s=1}^{n}\sum_{i,j=1}^{3}e_{s}F_{ij}^{(s)}(q)dq^{i}\wedge dq^{j}. 
\notag
\end{eqnarray}%
In the above we have expanded the element $\xi =\sum_{i=1}^{n}e_{i}a^{i}$ }$%
\in \mathcal{G}^{\ast }${{\ }with respect to the bi-orthogonal basis $%
\{a^{i}\in \mathcal{G}^{\ast },a_{j}\in \mathcal{G}:$ $\ <a^{i},a_{j}>_{%
\mathcal{G}}=\delta _{j}^{i},$ $i,j=\overline{1,n}\},$ \ with $e_{i}\in 
\mathbb{R},$ $i=\overline{1,3},$ being some constants, and we, as well,
denoted by $F_{ij}^{(s)}(q),$ $i,j=\overline{1,3},$ $s=\overline{1,n},$ the
corresponding curvature 2-form $\Omega ^{(2)}\in \Lambda ^{2}(N)\otimes 
\mathcal{G}$ components, that is 
\begin{equation}
\Omega ^{(2)}(q):=\sum_{s=1}^{n}\sum_{i,j=1}^{3}a_{s\text{ }%
}F_{ij}^{(s)}(q)dq^{i}\wedge dq^{j}  \label{2.6Y}
\end{equation}%
for any point $q\in N.$ Summarizing the calculations accomplished above, we
can formulate the following result. }

\begin{theorem}
\label{th_2.1} {\ Suppose the Yang-Mills field (\ref{2.1Y}) on the fiber
bundle }$p:M\rightarrow N$ {\ with }$M=D\times G$ {\ is invariant with
respect to the Lie group }$G$ {\ action }$G\times M\rightarrow M.$ {\
Suppose also that an element }$\xi \in \mathcal{G}^{\ast }$ {\ is chosen so
that }$Ad_{G}^{\ast }\xi =\xi .$ {\ Then for the naturally constructed
momentum mapping }$l:T^{\ast }(M)\rightarrow G^{\ast }$ {\ (being
equivariant) the reduced phase space }$l^{-1}(\xi )/G\simeq T^{\ast }(N)$ {\
is endowed with the symplectic structure (\ref{2.5}), having the following
component-wise Poisson brackets form:} 
\begin{equation}
\{p_{i},q^{j}\}_{\xi }=\delta _{i}^{j},\text{ \ \ }\{q^{i},q^{j}\}_{\xi
}=0,\ \ \{p_{i},p_{j}\}_{\xi }=\sum_{s=1}^{n}e_{s}F_{ji}^{(s)}(q)
\label{2.7Y}
\end{equation}%
{for all }$i,j=\overline{1,3}$ {\ and }$(q,p)\in T^{\ast }(N).$
\end{theorem}

The respectively extended Poisson bracket on the whole cotangent space $%
T^{\ast }(M)$ amounts owing to (\ref{1.4}) into the following set of Poisson
relationships: 
\begin{eqnarray}
\{y_{s},y_{k}\}_{\xi } &=&\sum_{r=1}^{n}c_{sk\text{ }}^{r}y_{r},\text{ \ \ \
\ \ \ \ \ \ \ \ \ }\ \ \{p_{i},q^{j}\}_{\xi }=\ \delta _{i}^{j}\ \ ,\text{ }
\label{2.8Y} \\
\text{\ }\{y_{s},p_{j}\}_{\xi } &=&0=\{q^{i},q^{j}\},\text{\ \ }%
\{p_{i},p_{j}\}_{\xi }=\sum_{s=1}^{n}y_{s\text{ }}F_{ji}^{(s)}(q),  \notag
\end{eqnarray}%
where $i,j=\overline{1,n},$ $\ c_{sk}^{r}\in \mathbb{R},$ \ $s,k,r=\overline{%
1,m},$ are the structure constants of the Lie algebra $\mathcal{G},$ and we
made use of the expansion \ $A^{(s)}(q)=\sum_{j=1}^{n}A_{j}^{(s)}(q)$ $%
dq^{j} $ as well we introduced alternative fixed values $e_{i}:=y_{i},$ $i=%
\overline{1,n}.$ The result (\ref{2.8Y}) can be easily seen if one one makes
a shift within the expression (\ref{2.4Y}) as $\sigma ^{(2)}\rightarrow
\sigma _{ext}^{(2)},$ where $\sigma _{ext}^{(2)}:=\left. \sigma
^{(2)}\right\vert _{\mathcal{A}_{0}\rightarrow \mathcal{A}}$ $,$ $\mathcal{A}%
_{0}(g):=g^{-1}dg,$ $g\in G.$ Thereby one can obtain in virtue of the
invariance properties of the connection $\Gamma (\mathcal{A})$ that 
\begin{equation*}
\sigma _{ext}^{(2)}(q,p;u,y)=<dp,\wedge dq>+d<y(g),Ad_{g^{-1}}\mathcal{A}%
(q;e)>_{\mathcal{G}}=
\end{equation*}%
\begin{equation*}
=<dp,\wedge dq>+<d\text{ }Ad_{g^{-1}}^{\ast }y(g),\wedge \mathcal{A}(q;e)>_{%
\mathcal{G}}=<dp,\wedge dq>+\sum_{s=1}^{m}dy_{s}\wedge du^{s}+
\end{equation*}

\begin{equation*}
+\sum_{j=1}^{n}\sum_{s=1}^{m}A_{j}^{(s)}(q)dy_{s}\wedge
dq-<Ad_{g^{-1}}^{\ast }y(g),\mathcal{A}(q,e)\wedge \mathcal{A}(q,e)>_{%
\mathcal{G}}+
\end{equation*}%
\begin{equation}
+\sum_{k\geq s=1}^{m}\sum_{l=1}^{m}y_{l}\text{ }c_{sk}^{l}\text{ }%
du^{k}\wedge du^{s}+\sum_{s=1}^{n}\sum_{i\geq
j=1}^{3}y_{s}F_{ij}^{(s)}(q)dq^{i}\wedge dq^{j},  \label{2.9Y}
\end{equation}%
where coordinate points $(q,p;u,y)\in T^{\ast }(M)$ \ are defined as
follows: $\mathcal{A}_{0}(e):=\sum_{s=1}^{m}du^{i}$ $a_{i},$ $%
Ad_{g^{-1}}^{\ast }y(g)=y(e):=\sum_{s=1}^{m}y_{s}$ $a^{s}$ for any element $%
g\in G.$ Hence one gets straightaway the Poisson brackets (2.8) plus
additional brackets connected with conjugated sets of variables $\{u^{s}\in 
\mathbb{R}:$\texttt{\ }$s=\overline{1,m}\}$ $\in \mathcal{G}^{\ast }$\ and $%
\{y_{s}\in \mathbb{R}:$\texttt{\ }$s=\overline{1,m}\}\in \mathcal{G}:$

\begin{equation}
\{y_{s},u^{k}\}_{\xi }=\delta _{s}^{k},\text{ \ }\{u^{k},q^{j}\}_{\xi }=0,%
\text{ \ }\{p_{j},u^{s}\}_{\xi }=A_{j}^{(s)}(q),\text{ \ }%
\{u^{s},u^{k}\}_{\xi }=0,  \label{2.10Y}
\end{equation}%
where $j=\overline{1,n},$ \ $k,s=\overline{1,m},\ $and $\ \ q\in N.$

Note here that the transition suggested above from the symplectic structure $%
\sigma ^{(2)}$ \ on $T^{\ast }(N)$ to its extension $\sigma _{ext}^{(2)}$ on
\ $T^{\ast }(M)$ just consists formally in adding to the symplectic
structure $\sigma ^{(2)}$ \ an exact part, which transforms it into an
equivalent one. Looking now at the expressions (\ref{2.9Y}), one can infer
immediately that an element \ $\xi :=\sum_{s=1}^{m}e_{s}a^{s}\in \mathcal{G}%
^{\ast }$ \ will be invariant with respect to the $Ad^{\ast }$-action of the
Lie group $\ G$ \ iff 
\begin{equation}
\left. \{y_{s},y_{k}\}_{\xi }\right\vert
_{y_{s}=e_{s}}=\sum_{r=1}^{m}c_{sk}^{r}\text{ }e_{r}\text{ }=0  \label{2.11Y}
\end{equation}%
identically for all $s,k=\overline{1,m},$ \ $j=\overline{1,n}$ \ and $\ q\in
N.$ In this, and only this case, the reduction scheme elaborated above will
go through.

Returning our attention to expression (\ref{2.10Y}), one can easily write
the following exact expression: 
\begin{equation}
\omega _{ext}^{(2)}(q,p;u,y)=\omega ^{(2)}(q,p+\sum_{s=1}^{n}y_{s}\text{ }%
A^{(s)}(q)\text{ };u,y),  \label{2.12Y}
\end{equation}%
on the phase space $T^{\ast }(M)\ni (q,p;u,y),$ where we abbreviated $\
<A^{(s)}(q),dq>$ as $\sum_{j=1}^{n}A_{j}^{(s)}(q)$ $dq^{j}.$ The
transformation like (\ref{2.12Y}) was discussed within somewhat different
contexts in articles \cite{Kupe,Pryk-Ampe} containing also a good background
for the infinite dimensional generalization of symplectic structure
techniques. Having observed from (\ref{2.12Y}) that the simple change of
variable 
\begin{equation}
\tilde{p}:=p+\sum_{s=1}^{m}y_{s}\text{ }A^{(s)}(q)  \label{2.13Y}
\end{equation}%
of the cotangent space $T^{\ast }(N)$ recasts our symplectic structure (\ref%
{2.9Y}) into the old canonical form (\ref{2.4Y}), one obtains that the
following new set of canonical Poisson brackets on $T^{\ast }(M)$ $\ni (q,%
\tilde{p};u,y):$%
\begin{eqnarray}
\{y_{s},y_{k}\}_{\xi } &=&\sum_{r=1}^{n}c_{sk}^{r}\text{ }y_{r},\text{ \ \ \ 
}\{\tilde{p}_{i},\tilde{p}_{j}\}_{\xi }=0,\text{\ \ \ \ \ }\{\tilde{p}%
_{i},q^{j}\}=\delta _{i}^{j},\text{ }  \label{2.14Y} \\
\{y_{s},q^{j}\}_{\xi } &=&0\text{ }=\{q^{i},q^{j}\}_{\xi },\text{\ }%
\{u^{s},u^{k}\}_{\xi }=0,\text{ \ \ }\{y_{s},\tilde{p}_{j}\}_{\xi }=0,\text{
\ }  \notag \\
\{u^{s},q^{i}\}_{\xi } &=&0,\text{ \ \ \ \ \ \ \ \ \ }\{y_{s},u^{k}\}_{\xi
}=\delta _{s}^{k},\text{ \ \ \ \ \ \ \ \ \ \ }\{u^{s},\tilde{p}_{j}\}_{\xi
}=0,  \notag
\end{eqnarray}%
where $\ k,s=\overline{1,m}$ \ and $i,j=\overline{1,n},$ holds iff the
non-abelian Yang-Mills type field equations 
\begin{equation}
\partial F_{ij}^{(s)}/\partial q^{l}+\partial F_{jl}^{(s)}/\partial
q^{i}+\partial F_{li}^{(s)}/\partial q^{j}+  \label{2.15Y}
\end{equation}%
\begin{equation*}
+\sum_{k,r=1}^{m}c_{kr}^{s}(F_{ij}^{(k)}A_{l}^{(r)}+F_{jl}^{(k)}A_{i}^{(r)}+F_{li}^{(k)}A_{j}^{(r)})=0
\end{equation*}%
are fulfilled for all $\ s=\overline{1,m}$ \ and $i,j,l=\overline{1,n}$ on
the base manifold $\ N.$ This effect of complete reduction of gauge
Yang-Mills type variables from the symplectic structure (\ref{2.9Y}) is
known in literature \cite{Kupe} as the principle of minimal interaction and
appeared to be useful enough for studying different interacting systems as
in \cite{MaWe,PrZa}. We plan to continue further the study of the geometric
properties of reduced symplectic structures connected with such interesting
infinite-dimensional coupled dynamical systems of Yang-Mills-Vlasov,
Yang-Mills-Bogolubov and Yang-Mills-Josephson types \cite{MaWe,PrZa} as well
as their relationships with associated principal fiber bundles endowed with
canonical connection structures.

\bigskip

\section{\label{Sec_3}The modified Lorentz force, radiation theory and the
Abraham--Lorentz electron inertia problem}

\subsection{Introductory setting}

The elementary point charged particle, like electron, mass problem was
inspiring many physicists \cite{Jamm} \ from the past as J. J. Thompson,
G.G. Stokes, H.A. Lorentz, E. Mach, M. Abraham, P.A. M. Dirac, G.A. Schott
and others. Nonetheless, their studies have not given rise to a clear
explanation of this phenomenon that stimulated new researchers to tackle it
from different approaches based on new ideas stemming both from the
classical Maxwell-Lorentz electromagnetic theory, as in \cite%
{Bril,GiZa-1,GiZaLi,Pryk-Ampe,Feyn-1,Feyn-2,Hamm-1,Hamm-2,Kosy-1,Kosy-2,MaPi,Medi,Moro,Page,Papp,Pegg,Puth,Simu,Teit,WhFe,YaTr}%
, and modern quantum field theories of Yang-Mills and Higgs type, as in \cite%
{Anni,Higg,Hoof,Wilc-2} and others, whose recent and extensive review is
done in \cite{Wilc-1}.

In the present work I will mostly concentrate on detail analysis and
consequences of the Feynman proper time paradigm \cite%
{Feyn-1,Feyn-2,Dyso-1,Dyso-2} subject to deriving the electromagnetic
Maxwell equations and the related Lorentz like force expression considered
from the vacuum field theory approach, developed in works \cite%
{BoPr-foun,BoPr-Feyn,BoPrTaPr-Lore}, and further, on its applications to the
electromagnetic mass origin problem. Our treatment of this and related
problems, based on the least action principle within the Feynman proper time
paradigm \cite{Feyn-1}, has allowed to construct the respectively modified
Lorentz type equation for a moving in space and radiating energy charged
point particle. Our analysis also elucidates, in particular, the
computations of the self-interacting electron mass term in \cite{MaPi},
where there was proposed a not proper solution to the well known classical
Abraham-Lorentz \cite{Abra,Lore-1,Lore-2,Lore-3} and Dirac \cite{Dira}
electron electromagnetic "4/3-electron mass" problem. As a result of our
scrutinized studying the classical electromagnetic mass problem \ we have
stated that it can be satisfactory solved within the classical H. Lorentz
and M. Abraham reasonings augmented with the additional electron stability
condition, which was not taken before into account yet appeared to be very
important for balancing the related electromagnetic field and mechanical
electron momenta. The latter, following recent enough works \cite{Puth,Moro}%
, devoted to analyzing the electron charged shell model, can be realized
within there suggested \textit{pressure-energy compensation principle, }%
suitably\textit{\ } \ applied to the \ ambient electromagnetic energy
fluctuations and the own electrostatic Coulomb electron energy.

\subsection{Feynman proper time paradigm geometric analysis}

In this section, we will develop further the vacuum field theory approach
within the Feynman proper time paradigm, devised before in \cite%
{BoPrTaPr-Lore,BoPr-Feyn}, to the electromagnetic J.C. Maxwell and H.
Lorentz electron theories and show that they should be suitably modified:
namely, the basic Lorentz force equations should be generalized following
the Landau-Lifschitz least action recipe \cite{LaLi}, taking also into
account the pure electromagnetic field impact. When applied the devised
vacuum field theory approach to the classical electron shell model, the
resulting Lorentz force expression appears to \ \ satisfactorily explaine
the electron inertial mass term exactly coinciding with the electron
relativistic mass, thus confirming the well known assumption \cite{Jack,Rohr}
by M. Abraham and H. Lorentz.

As was reported by F. Dyson \cite{Dyso-1,Dyso-2}, the original Feynman
approach derivation of the electromagnetic Maxwell equations was based on an 
\textit{a priori} general form of the classical Newton type force, acting on
a charged point particle moving in three-dimensional \ space $\mathbb{R}^{3}$
endowed with the canonical Poisson brackets on the phase variables, \
defined on the associated tangent space $T\mathbb{(R}^{3}).$ As a result of
this approach there only the first part of the Maxwell equations were
derived, as the second part, owing to F. Dyson \cite{Dyso-1}, is related
with the charged matter nature, which appeared to be hidden. Trying to
complete this Feynman approach to the derivation of Maxwell's equations more
systematically we have observed \cite{BoPr-Feyn} that the original Feynman's
calculations, based on Poisson brackets analysis, were performed on the 
\textit{tangent space} $T\mathbb{(R}^{3})$ which is, subject to the problem
posed, not physically proper. The true Poisson brackets can be correctly
defined only on the \textit{coadjoint phase space} $T^{\ast }\mathbb{(R}%
^{3}),$ as seen from the classical Lagrangian equations and the related
Legendre transformation \cite{AbMa,Arno,Godb,BlPrSa} from $T\mathbb{(R}^{3})$
to $T^{\ast }\mathbb{(R}^{3}).$ Moreover, within this observation, the
corresponding dynamical Lorentz type equation for a charged point particle
should be written for the particle momentum, not for the particle velocity,
whose value is well defined only with respect to the proper relativistic
reference frame, associated with the charged point particle owing to the
fact that the Maxwell equations are Lorentz invariant.

Thus, from the very beginning, we shall reanalyze the structure of the
Lorentz force exerted on a moving charged point particle with a charge $\xi
\in\mathbb{R}$ by another point charged particle with a charge $\xi_{f}\in%
\mathbb{R}$, making use of the classical Lagrangian approach, and rederive
the corresponding electromagnetic Maxwell equations. The latter appears to
be strongly related to the charged point mass structure of the
electromagnetic origin as was suggested by R. Feynman and F. Dyson.

Consider a charged point particle moving in an electromagnetic field. For
its description, it is convenient to introduce a trivial fiber bundle
structure ${\pi }${$:\mathcal{M}\rightarrow \mathbb{R}^{3},\mathcal{M}=%
\mathbb{R}^{3}\times G$}$,$ {with the abelian structure group $G:=\mathbb{R}%
\backslash \{0${$\},$} equivariantly acting on the canonically symplectic
coadjoint space }$T^{\ast }(\mathcal{M})$ {\ endowed both with the canonical
symplectic structure }%
\begin{align}
\omega ^{(2)}(p,y;r,g)& :=d\text{ }pr^{\ast }\alpha ^{(1)}(r,g)=<dp,\wedge
dr>+  \label{2.0} \\
+& <dy,\wedge g^{-1}dg>_{\mathcal{G}}+<ydg^{-1},\wedge dg>_{\mathcal{G}} 
\notag
\end{align}%
{for all $(p,y;r,g)\in T^{\ast }(\mathcal{M}),$ \ where }$\alpha
^{(1)}(r,g):=<p,dr>+<y,g^{-1}dg>_{\mathcal{G}}\in T^{\ast }(\mathcal{M})$ is
the corresponding Liouville form on $\mathcal{M},${\ and with a connection
one-form }$\mathcal{A}:{M}\rightarrow {T^{\ast }(M)\times \mathcal{G}}\ \ $%
as 
\begin{equation}
{\mathcal{A}}(r,g):=g^{-1}<\xi A(r),dr>g+g^{-1}dg,  \label{2.1}
\end{equation}%
with $\xi \in \mathcal{G}^{\ast },(r,g)\in \mathbb{R}^{3}\times G,$ and $\ \
<\cdot ,\cdot >$ $\ $being the scalar product in $\mathbb{E}^{3}.$ The
corresponding curvature 2-form $\Sigma ^{(2)}\in \Lambda ^{2}(\mathbb{R}%
^{3})\otimes $ $\mathcal{G}$ \ is 
\begin{equation}
\Sigma ^{(2)}(r):=d{\mathcal{A}}(r,g)+{\mathcal{A}}(r,g)\wedge {\mathcal{A}}%
(r,g)=\xi \sum_{i,j=1}^{3}F_{ij}(r)dr^{i}\wedge dr^{j},  \label{2.2}
\end{equation}%
where%
\begin{equation}
F_{ij}(r):=\frac{\partial A_{j}}{\partial r^{i}}-\frac{\partial A_{i}}{%
\partial r^{j}}  \label{2.3}
\end{equation}%
{for $i,j=\overline{1,3}$ \ is the electromagnetic tensor with respect to
the reference frame }$\mathcal{K}_{t},$ characterized by the phase space {%
coordinates $(r,p)\in T^{\ast }(\mathbb{R}^{3})$. }As an element $\xi \in 
\mathcal{G}^{\ast }$ is still not fixed, it is natural to apply the standard 
\cite{AbMa,Arno,BlPrSa} invariant Marsden--Weinstein--Meyer reduction to the
orbit factor space $\ \ \tilde{P}_{\xi }:=P_{\xi }/G_{\xi }\ $ subject to
the related momentum mapping $l:T^{\ast }\mathbb{(}\mathcal{M})\rightarrow 
\mathcal{G}^{\ast },$ constructed with respect to the canonical symplectic
structure \ (\ref{2.0}) \ on $T^{\ast }\mathbb{(}\mathcal{M}),$ where, by
definition, $\xi \in \mathcal{G}^{\ast }$ is constant, $P_{\xi }:=l^{-1}(\xi
)\ \subset $ $T^{\ast }\mathbb{(}\mathcal{M})$ and $G_{\xi }=\{g\in
G:Ad_{G}^{\ast }\xi \}$ is the isotropy group of the element $\xi \in 
\mathcal{G}^{\ast }.$

As a result of the Marsden--Weinstein--Meyer reduction, one finds that $%
G_{\xi }\simeq G,$ the factor-space $\tilde{P}_{\xi }\simeq T^{\ast }\mathbb{%
(R}^{3})$ is endowed with a suitably reduced symplectic structure {$\bar{%
\omega}_{\xi }^{(2)}\in T^{\ast }($}$\tilde{P}_{\xi })$ and the
corresponding Poisson brackets on the reduced manifold $\tilde{P}_{\xi }\ $%
are 
\begin{align}
\{r^{i},r^{j}\}_{\xi }& =0,\text{ }\{p_{j},r^{i}\}_{\xi }=\delta _{j}^{i},
\label{2.4} \\
\{p_{i},p_{j}\}_{\xi }& =\xi F_{ij}(r)  \notag
\end{align}%
for $i,j=\overline{1,3},$ considered with respect to the reference frame $%
\mathcal{K}_{t}.$ {Introducing a new momentum variable }%
\begin{equation}
{\tilde{\pi}:=p+\xi A(r)}  \label{2.4a}
\end{equation}%
{\ on }$\tilde{P}_{\xi }${$,$} {it is easy to verify that \ $\bar{\omega}%
_{\xi }^{(2)}\rightarrow \tilde{\omega}_{\xi }^{(2)}:=<d{\tilde{\pi}},\wedge
dr>$}${,}${\ giving rise to the following \textit{\textquotedblleft minimal
interaction\textquotedblright } \ \ canonical Poisson brackets: \ 
\begin{equation}
\{r^{i},r^{j}\}_{{\tilde{\omega}_{\xi }^{(2)}}}=0,\;\{{\tilde{\pi}}%
_{j},r^{i}\}_{{\tilde{\omega}_{\xi }^{(2)}}}=\delta _{j}^{i},\;\{{\tilde{\pi}%
}_{i},{\tilde{\pi}}_{j}\}_{{\tilde{\omega}_{\xi }^{(2)}}}=0  \label{2.4b}
\end{equation}%
for $i,j=\overline{1,3}$ \ with respect to some new reference frame $%
\mathcal{\tilde{K}}_{t^{\prime }},$ characterized by the phase space
coordinates }${(r,\tilde{\pi})}\in \tilde{P}_{\xi }$ and a new evolution
parameter $t^{\prime }\in \mathbb{R}${\ \ if and only if the Maxwell field
compatibility equations 
\begin{equation}
\partial F_{ij}/\partial r_{k}+\partial F_{jk}/\partial r_{i}+\partial
F_{ki}/\partial r_{j}=0  \label{2.5}
\end{equation}%
are satisfied on }$\mathbb{R}^{3}${\ for all $i,j,k=\overline{1,3}$ with the
curvature tensor \ (\ref{2.3}).}

Now we proceed to a dynamic description of the interaction between two
moving charged point particles $\xi $ and $\xi _{f},$ moving respectively,
with the velocities $u:=dr/dt$ and $u_{f}:=dr_{f}/dt$ subject to the
reference frame $\mathcal{K}_{t}.$ Unfortunately, there is a fundamental
problem in correctly formulating a physically suitable action functional and
the related least \ action condition. There are clearly possibilities such as%
\begin{equation}
S_{p}^{(t)}:=\int_{t_{1}}^{t_{2}}dt\mathcal{L}_{p}^{(t)}[r;dr/dt]
\label{2.6a}
\end{equation}%
on a temporal interval $[t_{1},t_{2}]\subset \mathbb{R}$ with respect to the
laboratory reference frame $\mathcal{K}_{t},$%
\begin{equation}
S_{p}^{(t^{\prime })}:=\int_{t_{1}^{\prime }}^{t_{2}^{\prime }}dt^{\prime }%
\mathcal{L}_{p}^{(t^{\prime })}[r;dr/dt^{\prime }]  \label{2.6b}
\end{equation}%
on a temporal interval $[t_{1}^{\prime },t_{2}^{\prime }]\subset \mathbb{R}$
with respect to the moving reference frame $\mathcal{K}_{t^{\prime }}$ and 
\begin{equation}
S_{p}^{(\tau )}:=\int_{\tau _{1}}^{\tau _{2}}d\tau \mathcal{L}_{p}^{(\tau
)}[r;dr/d\tau ]  \label{2.6c}
\end{equation}%
on a temporal interval $[\tau _{1},\tau _{2}]\subset \mathbb{R}$ with
respect to the proper time reference frame $\mathcal{K}_{\tau },$ naturally
related to the moving charged point particle $\xi .$

It was first observed by Poincar\'{e} and Minkowski \- \cite{Paul} that the
temporal differential $\ d\tau $ is not a closed differential one-form,
which physically means that a particle can traverse many different paths in
space $\mathbb{R}^{3}$ with respect to the reference frame $\mathcal{K}_{t}$
during any given proper time interval $d\tau ,$ \textit{naturally} related
to its motion. This fact was stressed \cite{Eins-1,Eins-2,Mink,Paul,Poin} by
Einstein, Minkowski and Poincar\'{e}, and later exhaustively analyzed by R.
Feynman, who argued \cite{Feyn-1} that the dynamical equation of a moving
point charged particle is physically sensible only with respect to its
proper time reference frame. This is Feynman's proper time reference frame
paradigm, which was recently further elaborated and applied both to the
electromagnetic Maxwell equations in \cite{GiZa,GiZa-1,GiZaLi} and to the
Lorentz type equation for a moving charged point particle under external
electromagnetic field in \cite{BoPr-Feyn,BoPrTaPr-Lore,BoPr-foun,BlPrSa}. As
it was there argued from a physical point of view, the least action
principle should be applied only to the expression \ (\ref{2.6c}) written
with respect to the proper time reference frame $\mathcal{K}_{\tau },$ whose
temporal parameter $\tau \in \mathbb{R}$ is independent of an observer and
is a closed differential one-form. Consequently, this action functional is
also mathematically sensible, which in part reflects the Poincar\'{e}'s and
Minkowski's observation that the infinitesimal quadratic interval%
\begin{equation}
d\tau ^{2}=(dt^{\prime })^{2}-|dr-dr_{f}|^{2},  \label{2.6d}
\end{equation}%
relating the reference frames $\mathcal{K}_{t^{\prime }}$ and $\mathcal{K}%
_{\tau },$ can be invariantly used for the four-dimensional relativistic
geometry. The most natural way to contend with this problem is to first
consider the quasi-relativistic dynamics of the charged point particle $\xi $
\ with respect to the moving reference frame $\mathcal{K}_{t^{\prime }}$
subject to which the charged point particle $\xi _{f}$ \ is at rest.
Therefore, it possible to write down a suitable action functional \ (\ref%
{2.6b}), up to $O(1/c^{4}),$ as the light velocity $c\rightarrow \infty $,
where the quasi-classical Lagrangian function $\mathcal{L}_{p}^{(t^{\prime
})}[r;dr/dt^{\prime }]$ can be naturally chosen as 
\begin{equation}
\mathcal{L}_{p}^{(t^{\prime })}[r;dr/dt^{\prime }]:=m^{\prime }(r)\left\vert
dr/dt^{\prime }-dr_{f}/dt^{\prime }\right\vert ^{2}/2-\xi \varphi ^{\prime
}(r).  \label{2.8}
\end{equation}%
where $m^{\prime }(r)\in \mathbb{R}_{+}$ is the charged particle $\xi $
ineryial mass parameter and $\varphi ^{\prime }(r)$ is the potential
function generated by the charged particle $\xi _{f}$ \ at a point $r\in $ $%
\mathbb{R}^{3}$ with respect to the reference frame $\mathcal{K}_{t^{\prime
}}.$ Since the standard temporal relationships between reference frames $%
\mathcal{K}_{t}$ and $\mathcal{K}_{t^{\prime }}:$%
\begin{equation}
dt^{\prime }=dt(1-\left\vert dr_{f}/dt^{\prime }\right\vert ^{2})^{1/2},
\label{2.9}
\end{equation}%
as well as between the reference frames $\mathcal{K}_{t^{\prime }}$ and $%
\mathcal{K}_{\tau }:$ 
\begin{equation}
d\tau =dt^{\prime }(1-\left\vert dr/dt^{\prime }-dr_{f}/dt^{\prime
}\right\vert ^{2})^{1/2},  \label{2.10}
\end{equation}%
give rise, up to $O(1/c^{2}),$ as $c\rightarrow \infty ,$ to $dt^{\prime
}\simeq dt$ and $d\tau \simeq dt^{\prime },$ respectively, it is easy to
verify that the least action condition $\delta S_{p}^{(t^{\prime })}=0$ is
equivalent to the dynamical equation%
\begin{equation}
d\pi /dt=\nabla \mathcal{L}_{p}^{(t^{\prime })}[r;dr/dt]=(\frac{1}{2}%
\left\vert dr/dt-dr_{f}/dt\right\vert ^{2})\nabla m-\xi \nabla \varphi (r),
\label{2.11}
\end{equation}%
where we have defined the generalized canonical momentum as 
\begin{equation}
\pi :=\partial \mathcal{L}_{p}^{(t^{\prime })}[r;dr/dt]/\partial
(dr/dt)=m(dr/dt-dr_{f}/dt),  \label{2.12}
\end{equation}%
with the dash signs dropped and denoted by \textquotedblleft $\nabla $%
\textquotedblright\ the usual gradient operator in $\mathbb{E}^{3}.$
Equating the canonical momentum expression \ (\ref{2.12}) with respect to
the reference frame $\mathcal{K}_{t^{\prime }}$ to that of (\ref{2.4a}) \ {%
with respect to the canonical reference frame $\mathcal{\tilde{K}}%
_{t^{\prime }},$ \ and identifying the reference frame $\mathcal{\tilde{K}}$}%
$_{{t}^{\prime }}${\ with }$\mathcal{K}_{t^{\prime }},$ one obtains that 
\begin{equation}
m(dr/dt-dr_{f}/dt)=mdr/dt-\xi A(r),  \label{2.12a}
\end{equation}%
giving rise to the important inertial particle mass determining expression%
\begin{equation}
m\ =-\xi \varphi (r),\   \label{2.13}
\end{equation}%
which right away follows from the relationship%
\begin{equation}
\varphi (r)dr_{f}/dt=A(r).\   \label{2.14}
\end{equation}%
The latter is well known in the classical electromagnetic theory \cite%
{Jack,LaLi} for potentials $(\varphi ,A)\in T^{\ast }(M^{4})$ satisfying the
Lorentz condition%
\begin{equation}
\partial \varphi (r)/\partial t+<\nabla ,A(r)>=0,  \label{2.15}
\end{equation}%
yet the expression (\ref{2.13}) looks very nontrivial in relating the 
\textit{\textquotedblleft inertial\textquotedblright } \ mass of the charged
point particle $\xi $ \ to the electric potential, being both generated by
the ambient charged point particles $\xi _{f}.$ \ As was argued in articles 
\cite{BoPr-foun,BoPr-Feyn, BoPrTa-acti}, the above mass phenomenon is
closely related and from a physical perspective shows its deep relationship
to the classical electromagnetic mass problem.

Before further analysis of the completely relativistic the charge $\xi $
motion under consideration, we substitute the mass expression (\ref{2.13})
into the quasi-relativistic action functional (\ref{2.6b}) with the
Lagrangian (\ref{2.8}). As a result, we obtain two possible action
functional expressions, taking into account two main temporal parameters
choices:%
\begin{equation}
S_{p}^{(t^{\prime })}=-\int_{t_{1}^{\prime }}^{t_{2}^{\prime }}\xi \varphi
^{\prime }(r)(1+\frac{1}{2}\left\vert dr/dt^{\prime }-dr_{f}/dt^{\prime
}\right\vert ^{2})dt^{\prime }  \label{2.15a}
\end{equation}%
on an interval $[t_{1}^{\prime },t_{2}^{\prime }]\subset \mathbb{R},$ or 
\begin{equation}
S_{p}^{(\tau )}=-\int_{\tau _{1}}^{\tau _{2}}\xi \varphi ^{\prime }(r)(1+%
\frac{1}{2}\left\vert dr/d\tau -dr_{f}/d\tau \right\vert ^{2})d\tau
\label{2.15b}
\end{equation}%
on an $[\tau _{1},\tau _{2}]\subset \mathbb{R}$. \ The direct relativistic
transformations of (\ref{2.15b}) entail that 
\begin{align}
S_{p}^{(\tau )}& =-\int_{\tau _{1}}^{\tau _{2}}\xi \varphi ^{\prime }(r)(1+%
\frac{1}{2}\left\vert dr/d\tau -dr_{f}/d\tau \right\vert ^{2})d\tau \simeq
\label{2.15c} \\
& \simeq -\int_{\tau _{1}}^{\tau _{2}}\xi \varphi ^{\prime }(r)(1+\left\vert
dr/d\tau -dr_{f}/d\tau \right\vert ^{2})^{1/2}d\tau =  \notag \\
& =-\int_{\tau _{1}}^{\tau _{2}}\xi \varphi ^{\prime }(r)(1-|dr/dt^{\prime
}-dr_{f}/dt^{\prime }|)^{-1/2}d\tau =-\int_{t_{1}^{\prime }}^{t_{2}^{\prime
}}\xi \varphi ^{\prime }(r)dt^{\prime },  \notag
\end{align}%
giving rise to the correct, from the physical point of view, relativistic
action functional form (\ref{2.6b}), suitably transformed to the proper time
reference frame representation (\ref{2.6c}) via the Feynman proper time
paradigm. Thus, we have shown that the true action functional procedure
consists in a physically motivated choice of either the action functional
expression form (\ref{2.6a}) or (\ref{2.6b}). Then, it is transformed to the
proper time action functional representation form (\ref{2.6c}) within the
Feynman paradigm, and the least action principle is applied.

Concerning the above discussed problem of describing the motion of a charged
point particle $\xi $ \ in the electromagnetic field generated by another
moving charged point particle $\xi _{f},$ it must be mentioned that we have
chosen the quasi-relativistic functional expression (\ref{2.8}) in the form (%
\ref{2.6b}) with respect to the moving reference frame $\mathcal{K}%
_{t^{\prime }},$ because its form is physically reasonable and acceptable,
since the charged point particle $\xi _{f}$ \ is then at rest, generating no
magnetic field.

Based on the above relativistic action functional expression 
\begin{equation}
S_{p}^{(\tau )}:=-\int_{\tau _{1}}^{\tau _{2}}\xi \varphi ^{\prime
}(r)(1+\left\vert dr/d\tau -dr_{f}/d\tau \right\vert ^{2})^{1/2}d\tau
\label{2.15d}
\end{equation}%
written with respect to the proper reference from $\mathcal{K}_{\tau },$ one
finds the following evolution equation:%
\begin{equation}
d\pi _{p}/d\tau =-\xi \nabla \varphi ^{\prime }(r)(1+\left\vert dr/d\tau
-dr_{f}/d\tau \right\vert ^{2})^{1/2},  \label{2.15e}
\end{equation}%
where the generalized momentum is given exactly by the relationship \ (\ref%
{2.12}): 
\begin{equation}
\pi _{p}=m(dr/dt-dr_{f}/dt).  \label{2.16}
\end{equation}%
Making use of the relativistic transformation (\ref{2.9}) and the next one \
(\ref{2.10}), the equation (\ref{2.15e}) is easily transformed to 
\begin{equation}
\frac{d}{dt}(p+\xi A)=-\nabla \varphi (r)(1-\left\vert u_{f}\right\vert
^{2}),  \label{2.17}
\end{equation}%
where we took into account the related definitions: (\ref{2.13}) for the
charged particle $\xi $ mass, (\ref{2.14}) for the magnetic vector potential
and $\varphi (r)=$ $\varphi ^{\prime }(r)/(1-\left\vert u_{f}\right\vert
^{2})^{1/2}$ for the scalar electric potential with respect to the
laboratory reference frame $\mathcal{K}_{t}.$ Equation (\ref{2.17}) can be
further transformed, using elementary vector algebra, to the classical
Lorentz type form:%
\begin{equation}
dp/dt=\xi E+\xi u\times B-\xi \nabla <u-u_{f},A>,  \label{2.18}
\end{equation}%
where 
\begin{equation}
E:=-\partial A/\partial t-\nabla \varphi  \label{2.19}
\end{equation}%
is the related electric field and 
\begin{equation}
B:=\nabla \times A  \label{2.20}
\end{equation}%
is the related magnetic field, exerted by the moving charged point particle $%
\xi _{f}$ \ on the charged point particle $\xi $ \ with respect to the
laboratory reference frame $\mathcal{K}_{t}.$ The Lorentz type force
equation \ (\ref{2.18}) was obtained in \cite{BoPr-Feyn,BoPr-foun} in terms
of the moving reference frame $\mathcal{K}_{t^{\prime }},$ and recently
reanalyzed in \cite{BoPr-foun,Pryk-Ampe}. The obtained results follow in
part \cite{Rous-1,Rous-2} from Amp\`{e}re's classical works on constructing
the magnetic force between two neutral conductors with stationary currents.

\subsection{Analysis of the Maxwell and Lorentz force equations}

As a moving charged particle $\xi _{f}$ generates the suitable electric
field (\ref{2.19}) and magnetic field (\ref{2.20}) via their electromagnetic
potential $(\varphi ,A)\in T^{\ast }(M^{4})$ with respect to a laboratory
reference frame $\mathcal{K}_{t},$ we will supplement them naturally by
means of the external material equations describing the relativistic charge
conservation law:%
\begin{equation}
\partial \rho /\partial t+<\nabla ,J>=0,  \label{3.1}
\end{equation}%
where $(\rho ,J)\in T^{\ast }(M^{4})$ is a related four-vector for the
charge and current distribution in the space $\mathbb{R}^{3}.$ Moreover, one
can augment the equation (\ref{3.1}) with the experimentally well
established Gauss law%
\begin{equation}
<\nabla ,E>=\rho  \label{3.2}
\end{equation}%
to calculate the quantity $\Delta \varphi :=<\nabla ,\nabla \varphi >$ from
the expression (\ref{2.19}): 
\begin{equation}
\Delta \varphi =-\frac{\partial }{\partial t}<\nabla ,A>-<\nabla ,E>.
\label{3.3}
\end{equation}%
Having taken into account the relativistic Lorentz condition (\ref{2.15})
and the expression (\ref{3.2}) one easily finds that the wave equation 
\begin{equation}
\partial ^{2}\varphi /\partial t^{2}-\Delta \varphi =\rho  \label{3.4}
\end{equation}%
holds with respect to the laboratory reference frame $\mathcal{K}_{t}.$
Applying the rot-operation \textquotedblleft $\nabla \times $%
\textquotedblright\ to the expression (\ref{2.19}) we obtain, owing to the
expression (\ref{2.20}), the equation 
\begin{equation}
\nabla \times E+\partial B/\partial t=0,  \label{3.5}
\end{equation}%
giving rise, together with (\ref{3.2}), to the first pair of the classical
Maxwell equations. To obtain the second pair of the Maxwell equations, it is
first necessary to apply the rot-operation \textquotedblleft $\nabla \times $%
\textquotedblright to the expression (\ref{2.20}): 
\begin{equation}
\nabla \times B=\partial E/\partial t+(\partial ^{2}A/\partial t^{2}-\Delta
A)  \label{3.6}
\end{equation}%
and then apply $-\partial /\partial t$ to the wave equation (\ref{3.4}) to
obtain%
\begin{equation}
\begin{array}{c}
-\frac{\partial ^{2}}{\partial t^{2}}(\frac{\partial \varphi }{\partial t}%
)+<\nabla ,\nabla \frac{\partial \varphi }{\partial t}>=\frac{\partial ^{2}}{%
\partial t^{2}}<\nabla ,A>- \\ 
\\ 
-<\nabla ,\nabla <\nabla ,A>>=<\nabla ,\frac{\partial ^{2}A}{\partial t^{2}}%
-\nabla \times (\nabla \times A)-\Delta A>= \\ 
\\ 
=<\nabla ,\frac{\partial ^{2}A}{\partial t^{2}}-\Delta A>=<\nabla ,J>.%
\end{array}%
\   \label{3.7}
\end{equation}%
The result (\ref{3.7}) leads to the relationship 
\begin{equation}
\partial ^{2}A/\partial t^{2}-\Delta A=J,  \label{3.8}
\end{equation}%
if we take into account that both the vector potential $A\in \mathbb{E}^{3}$
and the vector of current $J\in \mathbb{E}^{3}$ are determined up to a
rot-vector expression $\nabla \times S$ for some smooth vector-function $%
S:M^{4}\rightarrow \mathbb{E}^{3}.$ Inserting the relationship \ (\ref{3.8})
into \ (\ref{3.6}), we obtain \ (\ref{3.5}) and the second pair of the
Maxwell equations:%
\begin{equation}
\nabla \times B=\partial E/\partial t+J,\text{ \ }\nabla \times E=\partial
B/\partial t.  \label{3.9}
\end{equation}%
It is important that the system of equations \ (\ref{3.9}) can be
represented by means of the least action principle $\delta S_{f-p}^{(t)}=0,$
where the action functional 
\begin{equation}
S_{f-p}^{(t)}:=\int_{t_{1}}^{t_{2}}dt\mathcal{L}_{f-p}^{(t)}  \label{3.10}
\end{equation}%
is defined on an interval $[t_{1},t_{2}]\subset \mathbb{R}$ \ by the
Landau-Lifschitz type \cite{LaLi} Lagrangian function 
\begin{equation}
\mathcal{L}_{f-p}^{(t)}=\int_{\mathbb{R}%
^{3}}d^{3}r((|E|^{2}-|B|^{2})/2+<J,A>-\rho \varphi )  \label{3.11}
\end{equation}%
with respect to the laboratory reference frame $\mathcal{K}_{t},$ which is
subject to the electromagnetic field a unique and physically reasonable.
From \ (\ref{3.11}) we deduce that the generalized field momentum 
\begin{equation}
\pi _{f}:=\partial \mathcal{L}_{f-p}^{(t)}/\partial (\partial A/\partial
t)=-E  \label{3.12}
\end{equation}%
and its evolution is given as 
\begin{equation}
\partial \pi _{f}/\partial t:=\delta \mathcal{L}_{f-p}^{(t)}/\delta A\
=J-\nabla \times B,  \label{3.13}
\end{equation}%
which is equivalent to the first Maxwell equation of \ (\ref{3.9}). As the
Maxwell equations allow the least action representation, it is easy to
derive\ \cite{AbMa,Arno,BlPrSa,BoPr-foun,BoPrTa-acti} their dual Hamiltonian
formulation with the Hamiltonian function 
\begin{equation}
H_{f-p}:=\int_{\mathbb{R}^{3}}d^{3}r<\pi _{f},\partial A/\partial t>-%
\mathcal{L}_{f-p}^{(t)}=\int_{\mathbb{R}%
^{3}}d^{3}r((|E|^{2}-|B|^{2})/2-<J,A>),  \label{3.14}
\end{equation}%
satisfying the invariant condition 
\begin{equation}
dH_{f-p}/dt=0  \label{3.15}
\end{equation}%
for all $t\in \mathbb{R}.$

It is worth noting here that the Maxwell equations were derived under the
important condition\ that the charged system $(\rho ,J)\in T(M^{4})$ exerts
no influence on the ambient electromagnetic field potentials $(\varphi
,A)\in T^{\ast }(M^{4}).$ As this is not actually the case owing to the
damping radiation reaction on accelerated charged particles, one can try to
describe this self-interacting influence by means of the modified least
action principle, making use of the Lagrangian expression \ (\ref{3.11})
recalculated with respect to the separately chosen charged particle $\xi $
endowed with the uniform shell model$\ $geometric structure and generating
this electromagnetic field.

Following the slightly modified well-known approach from \cite{LaLi} and
reasonings from \cite{Beck,Moro} this Landau-Lifschitz type Lagrangian \ (%
\ref{3.11}) can be recast (further in the Gauss units) as 
\begin{equation}
\begin{array}{c}
\mathcal{L}_{f-p}^{(t)}=\int_{\mathbb{R}^{3}}d^{3}r((|E|^{2}-|B|^{2})/2+%
\int_{\mathbb{R}^{3}}d^{3}r(\frac{1}{c}<J,A>-\rho \varphi )-<k(t),dr/dt>= \\ 
\\ 
=\int_{\mathbb{R}^{3}}d^{3}r(\frac{1}{2}<-\nabla \varphi -\frac{1}{c}%
\partial A/\partial t,-\nabla \varphi -\frac{1}{c}\partial A/\partial t>- \\ 
\\ 
-\frac{1}{2}<\nabla \times (\nabla \times A),A>)+\int_{\mathbb{R}^{3}}d^{3}r(%
\frac{1}{c}<J,A>-\rho \varphi )-<k(t),dr/dt>= \\ 
\\ 
=\int_{\mathbb{R}^{3}}d^{3}r(\frac{1}{2}<-\nabla \varphi ,E>-\frac{1}{2c}%
<\partial A/\partial t,E>-\frac{1}{2}<A,\nabla \times B>)+ \\ 
\\ 
+\int_{\mathbb{R}^{3}}(\frac{1}{c}<J,A>-\rho \varphi )-<k(t),dr/dt>=%
\end{array}
\label{3.16}
\end{equation}%
\begin{equation*}
\begin{array}{c}
=\int_{\mathbb{R}^{3}}d^{3}r(\frac{1}{2}\varphi <\nabla ,E>+\frac{1}{2c}%
<A,\partial E/\partial t>-\frac{1}{2c}<A,\ J+\partial E/\partial t>)+ \\ 
\\ 
+\int_{\mathbb{R}^{3}}(\frac{1}{c}<J,A>-\rho \varphi )-\frac{1}{2c}\frac{d}{%
dt}\int_{\mathbb{R}^{3}}d^{3}r<A,E>- \\ 
\\ 
-\frac{1}{2}\lim_{r\rightarrow \infty }\int_{\mathbb{S}_{r}^{2}}<\varphi
E+A\times B,dS_{r}^{2}>-<k(t),dr/dt>\ = \\ 
\\ 
=-\frac{1}{2}\int_{\mathbb{\Omega }_{+}(\xi )}d^{3}r(\frac{1}{c}<J,A>-\rho
\varphi )+\int_{\mathbb{\Omega }_{+}(\xi )\cup \Omega _{-}(\xi )}(\frac{1}{c}%
<J,A>-\rho \varphi )-<k(t),dr/dt>- \\ 
\\ 
-\frac{1}{2c}\frac{d}{dt}\int_{\mathbb{R}^{3}}d^{3}r<A,E>-\frac{1}{2}%
\lim_{r\rightarrow \infty }\int_{\mathbb{S}_{r}^{2}}<\varphi E+A\times
B,dS_{r}^{2}>=%
\end{array}%
\end{equation*}%
\begin{equation*}
\begin{array}{c}
=-\frac{1}{2}\int_{\mathbb{\Omega }_{+}(\xi )}d^{3}r(\frac{1}{c}<J,A>-\rho
\varphi )-\frac{1}{2}\int_{\mathbb{\Omega }_{-}(\xi )}d^{3}r(\frac{1}{c}%
<J,A>-\rho \varphi )+ \\ 
\\ 
+\frac{1}{2}\int_{\mathbb{\Omega }_{-}(\xi )}d^{3}r(\frac{1}{c}<J,A>-\rho
\varphi )+\int_{\mathbb{\Omega }_{+}(\xi )\cup \Omega _{-}(\xi )}(\frac{1}{c}%
<J,A>-\rho \varphi )-<k(t),dr/dt>- \\ 
\\ 
-\frac{1}{2c}\frac{d}{dt}\int_{\mathbb{R}^{3}}d^{3}r<A,E>-\frac{1}{2}%
\lim_{r\rightarrow \infty }\int_{\mathbb{S}_{r}^{2}}<\varphi E+A\times
B,dS_{r}^{2}>= \\ 
\\ 
=\frac{1}{2}\int_{\mathbb{\Omega }_{-}(\xi )}d^{3}r(\frac{1}{c}<J,A>-\rho
\varphi )-\frac{1}{2}\int_{\mathbb{\Omega }_{+}(\xi )\cup \Omega _{-}(\xi
)}d^{3}r(\frac{1}{c}<J,A>-\rho \varphi )+ \\ 
\\ 
+\int_{\mathbb{\Omega }_{+}(\xi )\cup \Omega _{-}(\xi )}(\frac{1}{c}%
<J,A>-\rho \varphi )-<k(t),dr/dt>- \\ 
\\ 
-\frac{1}{2c}\frac{d}{dt}\int_{\mathbb{R}^{3}}d^{3}r<A,E>-\frac{1}{2}%
\lim_{r\rightarrow \infty }\int_{\mathbb{S}_{r}^{2}}<\varphi E+A\times
B,dS_{r}^{2}>= \\ 
\\ 
=\frac{1}{2}\int_{\mathbb{\Omega }_{-}(\xi )}d^{3}r(\frac{1}{c}<J,A>-\rho
\varphi )+\frac{1}{2}\int_{\mathbb{\Omega }_{+}(\xi )\cup \Omega _{-}(\xi
)}d^{3}r(\frac{1}{c}<J,A>-\rho \varphi )- \\ 
\\ 
-\frac{1}{2c}\frac{d}{dt}\int_{\mathbb{R}^{3}}d^{3}r<A,E>-\frac{1}{2}%
\lim_{r\rightarrow \infty }\int_{\mathbb{S}_{r}^{2}}<\varphi E+A\times
B,dS_{r}^{2}>,%
\end{array}%
\end{equation*}%
where we have introduced still not determined a radiation damping force $%
k(t)\in \mathbb{E}^{3},$ have denoted by $\mathbb{\Omega }_{+}\mathbb{(\xi )}%
:=supp$ $\xi _{+}\subset \mathbb{R}^{3}\ $and $\ \mathbb{\Omega }_{-}\mathbb{%
(\xi )}:=supp$ $\xi _{-}$ $\subset \mathbb{R}^{3}$ the corresponding charge $%
\xi $ supports, located\ on the electromagnetic field shadowed rear and
electromagnetic field exerted front semispheres (see Fig.1) of the electron
shell, respectively to its motion with the fixed velocity $u(t)\in \mathbb{E}%
^{3},$ as well as we denoted by $\mathbb{S}_{r}^{2}$ \ a two-dimensional
sphere of radius $r\rightarrow \infty .$ 
\begin{eqnarray*}
&&...\text{ } \\
&&Fig.1
\end{eqnarray*}%
Having naturally assumed that the radiated charged particle energy at
infinity is negligible, the Lagrangian function \ (\ref{3.16}) becomes
equivalent to 
\begin{equation}
\begin{array}{c}
\mathcal{L}_{f-p}^{(t)}=\frac{1}{2}\int_{\mathbb{\Omega }_{-}(\xi )}d^{3}r(%
\frac{1}{c}<J,A>-\rho \varphi )+\frac{1}{2c}\int_{\mathbb{\Omega }_{+}(\xi
)\cup \Omega _{-}(\xi )}(<J,A>-\rho \varphi )-<k(t),dr/dt>,%
\end{array}
\label{3.17}
\end{equation}%
which we now need to additionally recalculate taking into account that the
electromagnetic potentials $(\varphi ,A)\in T^{\ast }(M^{4})$ are retarded,
generated by only the front part of the electron shell and given as $%
1/c^{2}\rightarrow 0$ in the following expanded into Lienard-Wiechert series
form:

\begin{equation}
\begin{array}{c}
\varphi =\left. \int_{\mathbb{R}^{3}}d^{3}r^{\prime }\frac{\rho (t^{\prime
},r^{\prime })}{|r-r^{\prime }|}\right\vert _{t^{\prime }=t-|r-r^{\prime
}|/c}=\lim_{\varepsilon \downarrow 0}\int_{\mathbb{R}^{3}}d^{3}r^{\prime }%
\frac{\rho (t-\varepsilon ,r^{\prime })}{|r-r^{\prime }|}+ \\ 
\\ 
+\lim_{\varepsilon \downarrow 0}\frac{1}{2c^{2}}\int_{\mathbb{R}%
^{3}}d^{3}r^{\prime }|r-r^{\prime }|\partial ^{2}\rho (t-\varepsilon
,r^{\prime })/\partial t^{2}+ \\ 
\\ 
+\lim_{\varepsilon \downarrow 0}\frac{1}{6c^{3}}\int_{\mathbb{R}%
^{3}}d^{3}r^{\prime }|r-r^{\prime }|^{2}\partial \rho (t-\varepsilon
,r^{\prime })/\partial t+O(1/c^{4})= \\ 
\\ 
=\int_{\mathbb{\Omega }_{+}(\xi )}d^{3}r^{\prime }\frac{\rho (t,r^{\prime })%
}{|r-r^{\prime }|}+\ \frac{1}{2c^{2}}\int_{\mathbb{\Omega }_{+}(\xi
)}d^{3}r^{\prime }|r-r^{\prime }|\partial ^{2}\rho (t,r^{\prime })/\partial
t^{2}+ \\ 
\\ 
+\frac{1}{6c^{3}}\int_{\mathbb{\Omega }_{+}(\xi )}d^{3}r^{\prime
}|r-r^{\prime }|^{2}\partial \rho (t,r^{\prime })/\partial t+O(1/c^{4}), \\ 
\end{array}
\label{3.18}
\end{equation}%
\begin{equation*}
\begin{array}{c}
A=\left. \frac{1}{c}\int_{\mathbb{R}^{3}}d^{3}r^{\prime }\frac{J(t^{\prime
},r^{\prime })}{|r-r^{\prime }|}\right\vert _{t^{\prime }=t-|r-r^{\prime
}|/c}=\lim_{\varepsilon \downarrow 0}\frac{1}{c}\int_{\mathbb{R}%
^{3}}d^{3}r^{\prime }\frac{J(t-\varepsilon ,r^{\prime })}{|r-r^{\prime }|}-
\\ 
\\ 
-\lim_{\varepsilon \downarrow 0}\frac{1}{c^{2}}\int_{\mathbb{R}%
^{3}}d^{3}r^{\prime }\partial J(t-\varepsilon ,r^{\prime })/\partial t+ \\ 
\\ 
+\lim_{\varepsilon \downarrow 0}\frac{1}{2c^{3}}\int_{\mathbb{R}%
^{3}}d^{3}r^{\prime }|r-r^{\prime }|\partial ^{2}J(t-\varepsilon ,r^{\prime
})/\partial t^{2}+O(1/c^{4})= \\ 
\\ 
=\frac{1}{c}\int_{\mathbb{\Omega }_{+}(\xi )}d^{3}r^{\prime }\frac{%
J(t,r^{\prime })}{|r-r^{\prime }|}-\frac{1}{c^{2}}\int_{\mathbb{\Omega }%
_{+}(\xi )}d^{3}r^{\prime }\partial J(t,r^{\prime })/\partial t+ \\ 
\\ 
+\frac{1}{2c^{3}}\int_{\mathbb{\Omega }_{+}(\xi )}d^{3}r^{\prime
}|r-r^{\prime }|\partial ^{2}J(t,r^{\prime })/\partial t^{2}+O(1/c^{4}),%
\end{array}%
\end{equation*}%
where \textit{\ }the current density $J(t,r)=\rho (t,r)dr/dt$ for all $t\in 
\mathbb{R}$ and $r\in \Omega (\xi ):=\mathbb{\Omega }_{+}(\xi )\cup \mathbb{%
\Omega }_{+}(\xi )\simeq \mathbb{S}^{2}:=supp$ $\rho (t;r)\subset \mathbb{R}%
^{3},$ being the spherical compact support of the charged particle density
distribution, and the limit $\lim_{\varepsilon \downarrow 0}$ \textit{was \
treated physically, that is taking into account the assumed shell modell of
the charged particle }$\xi $\textit{\ and its corresponding charge density
self interaction}. \ Moreover, the potentials \ (\ref{3.18}) are both
considered to be retarded and non singular, moving in space with the
velocity $u\in T(\mathbb{R}^{3})$ subject to the laboratory reference frame $%
\mathcal{K}_{t}.$ As a result of simple enough calculations like in \cite%
{Jack}, making use of \ the expressions (\ref{3.18}) one obtains that the
Lagranfian function (\ref{3.17}) brings about 
\begin{equation}
\mathcal{L}_{f-p}^{(t)}=\frac{\mathcal{E}_{es}}{2c^{2}}|u|^{2}-<k(t),dr/dt>,
\label{3.18b}
\end{equation}%
where we took into account that owing to the reasonings from \cite{Beck,Moro}
the only front half the electric charge interacts with the whole virtually
identical charge charge $\xi ,$ \ as well as made use of the following up to 
$O(1/c^{4})\ $limiting integral expressions: 
\begin{equation}
\ 
\begin{array}{c}
\ \int_{\mathbb{\Omega }_{+}(\xi )\cup \mathbb{\Omega }_{-}(\xi
)}d^{3}r\int_{\mathbb{\Omega }_{+}(\xi )\cup \mathbb{\Omega }_{-}(\xi )}\
d^{3}r^{\prime }\rho (t,r^{\prime })\rho (t,r^{\prime }):=\xi ^{2}, \\ 
\\ 
\mathcal{\ }\frac{1}{2}\int_{\mathbb{\Omega }_{+}(\xi )\cup \mathbb{\Omega }%
_{-}(\xi )}d^{3}r\int_{\mathbb{\Omega }_{+}(\xi )\cup \mathbb{\Omega }%
_{-}(\xi )}\ d^{3}r^{\prime }\frac{\rho (t,r^{\prime })\rho (t,r^{\prime })}{%
|r-r^{\prime }|}:=\mathcal{E}_{es}, \\ 
\\ 
\ \ \int_{\mathbb{\Omega }_{+}(\xi )}d^{3}r\rho (t,r)\int_{\mathbb{\Omega }%
_{+}(\xi )}d^{3}r^{\prime }\frac{\rho (t;r^{\prime })}{|r^{\prime }-r|}\ =%
\frac{1}{2}\mathcal{E}_{es}, \\ 
\\ 
\ \int_{\mathbb{\Omega }_{-}(\xi )}d^{3}r\rho (t,r)\int_{\mathbb{\Omega }%
_{-}(\xi )}d^{3}r^{\prime }\frac{\rho (t;r^{\prime })}{|r^{\prime }-r|}\ =%
\frac{1}{2}\mathcal{E}_{es}, \\ 
\\ 
\ \int_{\mathbb{\Omega }_{-}(\xi )}d^{3}r\rho (t,r)\ \int_{\mathbb{\Omega }%
_{+}(\xi )}d^{3}r^{\prime }\frac{\rho (t;r^{\prime })}{|r-r^{\prime }|}|%
\frac{<r^{\prime }-r,u>}{|r^{\prime }-r|}|^{2}>:=\frac{\mathcal{E}_{es}}{6}%
|u|^{2}, \\ 
\\ 
\ \int_{\mathbb{\Omega }_{+}(\xi )}d^{3}r\rho (t,r)\ \int_{\mathbb{\Omega }%
_{+}(\xi )}d^{3}r^{\prime }\frac{\rho (t;r^{\prime })}{|r-r^{\prime }|}|%
\frac{<r^{\prime }-r,u>}{|r^{\prime }-r|}|^{2}>:=\frac{\mathcal{E}_{es}}{6}%
|u|^{2}.%
\end{array}
\label{3.19}
\end{equation}

To obtain the corresponding evolution equation for our charged particle $\xi 
$ \ we need, within the Feynman proper time paradigm, to transform the
Lagrangian function \ (\ref{3.18b}) to the one with respect to the proper
time reference frame $\mathcal{K}_{\tau }:$%
\begin{equation}
\mathcal{L}_{f-p}^{(\tau )}=\ (m_{es}/2)|\dot{r}|^{2}(1+|\dot{r}%
|^{2}/c^{2})^{-1/2}-<k(t),\dot{r}>,  \label{3.20}
\end{equation}%
where, for brevity, we have denoted by $\dot{r}$ $:=dr/d\tau $ the charged
particle velocity with respect to the proper reference frame $\mathcal{K}%
_{\tau }$ and by, definition, $m_{es}:=\mathcal{E}_{es}/c^{2}$ its so called
electrostatic mass with respect to the laboratory refrence frame $\mathcal{K}%
_{t}.$

Thus, the generalized charged particle $\xi $ momentum \ (up to $O(1/c^{4}))$
equals 
\begin{equation}
\begin{array}{c}
\pi _{p}:=\partial \mathcal{L}_{f-p}^{(\tau )}/\partial \dot{r}=\ \frac{%
m_{es}\dot{r}}{(1+|\dot{r}|^{2}/c^{2})^{1/2}}-\frac{m_{es}|\dot{r}|^{2}\dot{r%
}}{2c^{2}(1+|\dot{r}|^{2}/c^{2})^{3/2}}-k(t)= \\ 
\\ 
=\ m_{es}u(1-\frac{|u|^{2}}{2c^{2}})-k(t)\simeq
m_{es}u(1-|u|^{2}/c^{2})^{1/2}-k(t)=\bar{m}_{es}u-k(t),%
\end{array}
\label{3.21}
\end{equation}%
where we denoted, as before, by $u:=dr/dt$ the charged particle $\xi $
velocity with respect to the laboratory reference frame $\mathcal{K}_{t}\ $%
and put, by definition, $\ $%
\begin{equation}
\bar{m}_{es}:=m_{es}(1-|u|^{2})^{1/2}\ \   \label{3.22}
\end{equation}%
its mass parameter $\bar{m}_{es}\in \mathbb{R}_{+}$ with respect to the
proper reference frame $\mathcal{K}_{\tau }.$

The generalized momentum \ (\ref{3.22}) satisfies with respect to the proper
reference frame $\mathcal{K}_{\tau }$ the evolution equation 
\begin{equation}
d\pi _{p}/d\tau :=\partial \mathcal{L}_{f-p}^{(\tau )}/\partial r=0,
\label{3.23}
\end{equation}%
$\ $being \ equivalent, with respect to the laboratory reference frame $%
\mathcal{K}_{t},\ $ \ to the Lorentz type equation 
\begin{equation}
\frac{d}{dt}(\bar{m}_{es}u)=\ dk(t)/dt.  \label{3.23a}
\end{equation}%
The evolution equation \ (\ref{3.23a}) allows the corresponding canonical
Hamiltonian formulation on the phase space $T^{\ast }(\mathbb{R}^{3})$ with
the Hamiltonian function 
\begin{equation}
\begin{array}{c}
H_{f-p}^{\ }:=<\pi _{p},r>-\mathcal{L}_{f-p}^{(\tau )}\simeq <\frac{m_{es}%
\dot{r}}{(1+|\dot{r}|^{2}/c^{2})^{1/2}}-\frac{m_{es}|\dot{r}|^{2}\dot{r}}{%
2c^{2}(1+|\dot{r}|^{2}/c^{2})^{3/2}}-k(t),\dot{r}>- \\ 
\\ 
-(m_{es}/2)|\dot{r}|^{2}(1+|\dot{r}|^{2}/c^{2})^{-1/2}+<k(t),\dot{r}>=\bar{m}%
_{es}|u|^{2}/2,%
\end{array}
\label{3.23b}
\end{equation}%
naturally looking and satisfying up to $O(1/c^{4})$ for all $\tau $ and $t$ $%
\in $ $\mathbb{R}$ the conservation conditions 
\begin{equation}
\frac{d}{d\tau }H_{f-p}=0=\frac{d}{dt}H_{f-p}.\ \   \label{3.23c}
\end{equation}%
Looking at the equation \ (\ref{3.23a}) and \ (\ref{3.23b}), one can state
that the physically observable inertial charged particle $\xi $ mass
parameter 
\begin{equation}
m_{phys}:=\ \bar{m}_{es},  \label{3.23d}
\end{equation}%
being exactly equal to the relativistic charged particle $\xi $
electromagnetic mass, as it was assumed by H. Lorentz and Abraham.

To determine the damping radiation force $k(t)\in \mathbb{E}^{3},$ we can
make use of the Lorentz type force expression \ (\ref{3.17}) and obtain,
similarly to \cite{Jack}, up to $O(1/c^{4})$ accuracy, the resulting
self-interecting Abraham-Lorentz type force expression. Thus, owing to the
zero net foirce condition, we have that 
\begin{equation}
d\tilde{\pi}_{p}/dt+F_{s}=0,  \label{3.23e}
\end{equation}%
where the Lorentz force $\ $%
\begin{eqnarray}
\ F_{s} &=&-\ \frac{1}{2c}\int_{\mathbb{\Omega }_{-}(\xi )}d^{3}r\rho (t,r)%
\frac{d}{dt}A(t,r)-\frac{1}{2c}\int_{\mathbb{\Omega }_{+}(\xi )\cup \mathbb{%
\Omega }_{-}(\xi )}d^{3}r\rho (t,r)\frac{d}{dt}A(t,r)-  \label{3.24} \\
&&  \notag \\
&&-\frac{1}{2}\int_{\mathbb{\Omega }_{-}(\xi )}d^{3}r\rho (t,r)\nabla
\varphi (t,r)\ (1-|u/c|^{2})-\frac{1}{2}\int_{\mathbb{\Omega }_{+}(\xi )\cup 
\mathbb{\Omega }_{-}(\xi )}d^{3}r\rho (t,r)\nabla \varphi (t,r)\
(1-|u/c|^{2}).  \notag
\end{eqnarray}%
This expression easily follows from the least action condition $\delta
S^{(t)}=0,\ $\ where $\ S^{(t)}:=\int_{t_{1}}^{t_{2}}\mathcal{L}%
_{f-p}^{(t)}dt$ \ \ with the Lagrangian function given by the derived above
Landau-Lifschitz type expression \ (\ref{3.19}), and the potentials $%
(\varphi ,A)\in T^{\ast }(M^{4})$ given by the Lienard-Wiechert expressions
\ (\ref{3.18}). Followed by calculations similar to those of \cite{Jack,Bart}%
, \ from (\ref{3.24}) and (\ref{3.18}) one can obtain, within the assumed
before uniform shell electron model, for small $|u/c|\ll 1\ $ and slow
enough acceleration \ \ that%
\begin{equation}
\begin{array}{c}
F_{s}=\ \sum_{n\in \mathbb{Z}_{+}}\frac{(-1)^{n+1}}{2n!c^{n}}%
(1-|u/c|^{2})[\int_{\mathbb{\Omega }_{-}(\xi )}\rho (t,r)d^{3}r(\cdot )+ \\ 
\\ 
+\int_{\mathbb{\Omega }_{+}(\xi )\cup \mathbb{\Omega }_{-}(\xi )}\rho
(t,r)d^{3}r(\cdot )]\ \int_{\mathbb{\Omega }_{+}(\xi )}d^{3}r^{\prime }\frac{%
\partial ^{n}}{\partial t^{n}}\rho (t,r^{\prime })\nabla |r-r^{\prime
}|^{n-1\ }+ \\ 
\\ 
+\ \sum_{n\in \mathbb{Z}_{+}}\frac{(-1)^{n+1}}{2n!c^{n+2}}[\int_{\mathbb{%
\Omega }_{-}(\xi )}\rho (t,r)d^{3}r(\cdot )+ \\ 
\\ 
+\int_{\mathbb{\Omega }_{+}(\xi )\cup \mathbb{\Omega }_{-}(\xi )}\rho
(t,r)d^{3}r(\cdot )]\ \int_{\mathbb{\Omega }_{+}(\xi )}d^{3}r^{\prime
}|r-r^{\prime }|^{n-1}\frac{\partial ^{n+1}}{\partial t^{n+1}}J(t,r^{\prime
}]= \\ 
\\ 
=\ \sum_{n\in \mathbb{Z}_{+}}\frac{(-1)^{n+1}}{2n!c^{n+2}}%
(1-|u/c|^{2})[\int_{\mathbb{\Omega }_{-}(\xi )}\rho (t,r)d^{3}r(\cdot )+ \\ 
\\ 
+\int_{\mathbb{\Omega }_{+}(\xi )\cup \mathbb{\Omega }_{-}(\xi )}\rho
(t,r)d^{3}r(\cdot )]\int_{\mathbb{\Omega }_{+}(\xi )}d^{3}r^{\prime }\frac{%
\partial ^{n=2}}{\partial t^{n+2}}\rho (t,r^{\prime })\nabla |r-r^{\prime
}|^{n+1}+ \\ 
\\ 
+\sum_{n\in \mathbb{Z}_{+}}\frac{(-1)^{n+1}}{2n!c^{n+2}}[\int_{\mathbb{%
\Omega }_{-}(\xi )}\rho (t,r)d^{3}r(\cdot )+ \\ 
\\ 
+\int_{\mathbb{\Omega }_{+}(\xi )\cup \mathbb{\Omega }_{-}(\xi )}\rho
(t,r)d^{3}r(\cdot )]\int_{\mathbb{\Omega }_{+}(\xi )}d^{3}r^{\prime
}|r-r^{\prime }|^{n-1}\frac{\partial ^{n+1}}{\partial t^{n+1}}J(t,r^{\prime
}).%
\end{array}
\label{3.24a}
\end{equation}%
The relationship above can be rewritten, owing to the charge continuity
equation \ (\ref{3.1}), giving rise to the radiation force expression 
\begin{equation}
\begin{array}{c}
\\ 
F_{s}=\ \sum_{n\in \mathbb{Z}_{+}}\frac{(-1)^{n\ }}{2n!c^{n+2}}%
(1-|u/c|^{2})[\int_{\mathbb{\Omega }_{\mathbb{-}}}\rho (t,r)d^{3}r(\cdot
)+\int_{\Omega _{+}\cup \mathbb{\Omega }_{\mathbb{-}}}\rho (t,r)d^{3}r(\cdot
)]\times \\ 
\\ 
\times \int_{\mathbb{\Omega }_{\mathbb{+}}}d^{3}r^{\prime }|r-r^{\prime
}|^{n-1}\frac{\partial ^{n+1}}{\partial t^{n+1}}\left( \frac{\ J(t,r^{\prime
})}{n+2}+\frac{n-1}{n+2}\frac{<r-r^{\prime },J(t,r^{\prime })>(r-r^{\prime })%
}{|r-r^{\prime }|^{2}}\right) + \\ 
\\ 
+\ \sum_{n\in \mathbb{Z}_{+}}\frac{(-1)^{n+1}}{2n!c^{n+2}}[\int_{\mathbb{%
\Omega }_{\mathbb{-}}}\rho (t,r)d^{3}r(\cdot )+\int_{\Omega _{+}\cup \mathbb{%
\Omega }_{\mathbb{-}}}\rho (t,r)d^{3}r(\cdot )]\ \int_{\mathbb{\Omega }_{%
\mathbb{+}}}d^{3}r^{\prime }|r-r^{\prime }|^{n-1}\frac{\partial ^{n+1}}{%
\partial t^{n+1}}J(t,r^{\prime })= \\ 
\\ 
=\ \sum_{n\in \mathbb{Z}_{+}}\frac{(-1)^{n+1}}{2n!c^{n+2}}%
(1-|u/c|^{2})[\int_{\mathbb{\Omega }_{\mathbb{-}}}\rho (t,r)d^{3}r(\cdot
)+\int_{\Omega _{+}\cup \mathbb{\Omega }_{\mathbb{-}}}\rho (t,r)d^{3}r(\cdot
)]\ \times \\ 
\\ 
\times \int_{\mathbb{\Omega }_{\mathbb{+}}}d^{3}r^{\prime }|r-r^{\prime
}|^{n-1}\frac{\partial ^{n+1}}{\partial t^{n+1}}\left( \frac{\ J(t,r^{\prime
})}{n+2}+\frac{n-1}{n+2}\frac{|r-r^{\prime },u|^{2}J(t,r^{\prime })}{%
|r-r^{\prime }|^{2}|u|^{2}}\right) + \\ 
\\ 
+\ \sum_{n\in \mathbb{Z}_{+}}\frac{(-1)^{n+1}}{2n!c^{n+2}}[\int_{\mathbb{%
\Omega }_{\mathbb{-}}}\rho (t,r)d^{3}r(\cdot )+\int_{\Omega _{+}\cup \mathbb{%
\Omega }_{\mathbb{-}}}\rho (t,r)d^{3}r(\cdot )]\ \int_{\mathbb{\Omega }_{%
\mathbb{+}}}d^{3}r^{\prime }|r-r^{\prime }|^{n-1}\frac{\partial ^{n+1}}{%
\partial t^{n+1}}J(t,r^{\prime }). \\ 
\end{array}
\label{3.24b}
\end{equation}%
Now, having applied to \ (\ref{3.24b}) the rotational symmetry property for
calculation of the internal integrals, one easily obtains in the case of a
charged particle $\xi $ uniform shell model that 
\begin{equation*}
\begin{array}{c}
F_{s}=\ \sum_{n\in \mathbb{Z}_{+}}\frac{(-1)^{n}}{2n!c^{n+2}}%
(1-|u/c|^{2})[\int_{\mathbb{\Omega }_{\mathbb{-}}}\rho (t,r)d^{3}r(\cdot
)+\int_{\Omega _{+}\cup \mathbb{\Omega }_{\mathbb{-}}}\rho (t,r)d^{3}r(\cdot
)]\ \times \\ 
\\ 
\times \int_{\mathbb{\Omega }_{\mathbb{+}}}d^{3}r^{\prime }|r-r^{\prime
}|^{n-1}\frac{\partial ^{n+1}}{\partial t^{n+1}}\left( \frac{\ J(t,r^{\prime
})}{n+2}+\frac{(n-1)J(t,r^{\prime })}{3(n+2)}\ \right) + \\ 
\\ 
+\ \ \sum_{n\in \mathbb{Z}_{+}}\frac{(-1)^{n+1}}{2n!c^{n}}[\int_{\mathbb{%
\Omega }_{\mathbb{-}}}\rho (t,r)d^{3}r(\cdot )+\int_{\Omega _{+}\cup \mathbb{%
\Omega }_{\mathbb{-}}}\rho (t,r)d^{3}r(\cdot )]\ \int_{\mathbb{\Omega }_{%
\mathbb{+}}}d^{3}r^{\prime }\frac{|r-r^{\prime }|^{n+1}}{c^{2}}\frac{%
\partial ^{n+1}}{\partial t^{n+1}}J(t,r^{\prime })\ = \\ 
\end{array}%
\end{equation*}%
\begin{equation}
\begin{array}{c}
=\frac{d}{dt}[\sum_{n\in \mathbb{Z}_{+}}\frac{\ (-1)^{n+1}}{6n!c^{n+2}}%
[\int_{\mathbb{\Omega }_{\mathbb{-}}}\rho (t,r)d^{3}r(\cdot )+\int_{\Omega
_{+}\cup \mathbb{\Omega }_{\mathbb{-}}}\rho (t,r)d^{3}r(\cdot )]\ \times \\ 
\\ 
\times \int_{\mathbb{\Omega }_{\mathbb{+}}}d^{3}r^{\prime }|r-r^{\prime
}|^{n-1}\frac{\partial ^{n}}{\partial t^{n}}J(t,r^{\prime })-\sum_{n\in 
\mathbb{Z}_{+}}\frac{(-1)^{n\ }|u|^{2}}{6n!c^{n+4}})[\int_{\mathbb{\Omega }_{%
\mathbb{-}}}\rho (t,r)d^{3}r(\cdot )+ \\ 
\\ 
+\int_{\Omega _{+}\cup \mathbb{\Omega }_{\mathbb{-}}}\rho (t,r)d^{3}r(\cdot
)]\ \int_{\mathbb{\Omega }_{\mathbb{+}}}d^{3}r^{\prime }|r-r^{\prime }|^{n-1}%
\frac{\partial ^{n}}{\partial t^{n}}J(t,r^{\prime })].%
\end{array}
\label{3.24c}
\end{equation}%
Now, having took into account the integral expressions \ (\ref{3.19}), one
finds from \ (\ref{3.24c}) that up to the $O(1/c^{4})$ accuracy the
following radiation reaction force expression 
\begin{align}
\ F_{s}& =-\frac{d}{dt}\left( \frac{\mathcal{E}_{es}}{c^{2}}u\right) +\frac{d%
}{dt}\left( \frac{\mathcal{E}_{es}}{2c^{2}}\ |u/c|^{2}\ u(t)\right) +\frac{%
2\xi ^{2}}{3c^{3}}\frac{d^{2}u}{dt^{2}}+O(1/c^{4})=  \label{3.24d} \\
&  \notag \\
& =-\frac{d}{dt}\left( m_{es}(1-\frac{|u/c|^{2}}{2})u\right) +\frac{2\xi ^{2}%
}{3c^{3}}\frac{d^{2}u}{dt^{2}}+O(1/c^{4})=  \notag \\
&  \notag \\
& =-\frac{d}{dt}\left( m_{es}(1-|u/c|^{2})^{1/2}u\right) \ \ +\frac{2\xi ^{2}%
}{3c^{3}}\frac{d^{2}u}{dt^{2}}+O(1/c^{4})=  \notag \\
&  \notag \\
& =-\frac{d}{dt}(\bar{m}_{es}u-\frac{2\xi ^{2}}{3c^{3}}\frac{du}{dt}%
)+O(1/c^{4})\   \notag
\end{align}%
holds. \ We mention here that following the reasonings from \cite%
{Beck,Moro,Puth}, in the expressions above there is taken into account an
additional hidden and the velocity $u\in T(\mathbb{R}^{3})$ directed
electrostatic Coulomb surface self-force, acting only on the \textit{front
half part} of the spherical electron shell. As a result, from (\ref{3.23e}),
\ (\ref{3.24}) and the relationship \ (\ref{3.21}) one obtains that the
electron momentum 
\begin{equation}
\pi _{p}:=\bar{m}_{es}u-\frac{2\xi ^{2}}{3c^{3}}\frac{du}{dt}=\bar{m}%
_{es}u-k(t),  \label{3.24e}
\end{equation}%
thereby defyning both the radiation reaction momentum $k(t)=\frac{2\xi ^{2}}{%
3c^{3}}\frac{du}{dt}$ and the corresponding radiation reaction force 
\begin{equation}
F_{r}=\frac{2\xi ^{2}}{3c^{3}}\frac{d^{2}u}{dt^{2}}+O(1/c^{4}),  \label{3.25}
\end{equation}%
coincides exactly with the classical Abraham--Lorentz--Dirac expression.
Moreover, it also follows that the observable physical shell model electron
inertial mass 
\begin{equation}
m_{ph}=\ m_{es}:=\mathcal{E}_{es}/c^{2},  \label{3.25a}
\end{equation}%
$\ $\ being completely of the electromagnetic origin, giving rise to the
final force expression 
\begin{equation}
\frac{d}{dt}(m_{ph}u)=\frac{2\xi ^{2}}{3c^{3}}\frac{d^{2}u}{dt^{2}}%
+O(1/c^{4}).  \label{3.26}
\end{equation}%
This means, in particular, that the real physically observed
\textquotedblleft inertial\textquotedblright\ mass $m_{ph}$\ of an electron
within the uniform shell model is strongly determined by its electromagnetic
self-interaction energy $\mathcal{E}_{es}$. A similar statement there was
recently demonstrated using completely different approaches in \cite%
{Puth,Moro}, based on the vacuum Casimir effect considerations. Moreover,
the assumed above boundedness of the electrostatic self-energy $\mathcal{E}%
_{es}$ appears to be completely equivalent to the existence of so-called
intrinsic Poincar\'{e} type \textquotedblleft \textit{tensions}%
\textquotedblright , analyzed in \cite{Beck,Moro}, and to the existence of a
special compensating Coulomb \textquotedblleft \textit{pressure}%
\textquotedblright , suggested in \cite{Puth}, guaranteeing the observable
electron stability.

\begin{remark}
Some years ago there was suggested in the work \cite{MaPi} a "solution" to
the mentioned before $"4/3$-electron mass" problem, expressed by the
physical mass mass relationship \ (\ref{3.25a}) and formulated more than one
hundred years ago by H. Lorentz and M. Abraham. To the regret, the above
mentioned "solution" appeared to be fake that one can easily observe from
the main not correct assumptions \ on which the work \cite{MaPi} \ has been
based: the first one is about the particle-field momentum conservation,
taken in the form 
\begin{equation}
\frac{d}{dt}(p+\xi A)=0,  \label{3.27}
\end{equation}%
and the second one is a speculation about the $1/2$-coefficient imbedded
into the calculation of the Lorentz type self-interaction force 
\begin{equation}
F:=-\frac{1}{2c}\int_{\mathbb{R}^{3}}d^{3}r\rho (t;r)\partial
A(t;r)/\partial t,  \label{3.28}
\end{equation}%
being not correctly argued by the reasoning that the expression \ (\ref{3.28}%
) represents "... the interaction of a given element of charge with all
other parts, otherwise we count twice that reciprocal action" (cited from \ 
\cite{MaPi}, page 2710). This claim is fake as there was not taken into
account the important fact that the interaction in the integral \ (\ref{3.28}%
) is, in reality, retarded and its impact into it should be considered as
that calculated for two virtually different charged particles, as it has
been done in the classical works of H. Lorentz and M. Abraham. Subject to
the first assumption \ (\ref{3.27}) it is enough to recall that a vector of
the field momentum $\xi A\in \mathbb{E}^{3}$ is not independent and is,
within the charged particle model considered, strongly related with the
local flow of the electromagnetic potential energy in the Lorentz constraint
form:%
\begin{equation}
\partial \varphi /\partial t+<\nabla ,A\ >=0,  \label{3.29}
\end{equation}%
under which there hold \ the exploited in the work \cite{MaPi} the
Lienard-Wiechert expressions \ (\ref{3.17}) potentials\ for calculation of
the integral \ (\ref{3.28}). Thus, the equation \ (\ref{3.27}), following
the classical Newton second law, should be replaced by%
\begin{equation}
\frac{d}{dt^{\prime }}(p^{\prime }+\xi A^{\prime })=-\nabla (\xi \varphi
^{\prime }),  \label{3.30}
\end{equation}%
written with respect to the reference frame $\mathcal{K}(t^{\prime };r%
\mathcal{)}$ subject to which the charged particle $\xi \ $ is at rest.
Taking into account that with respect to the laboratory reference frame $%
\mathcal{K}_{t}$ there hold the relativistic relationships $dt=dt^{\prime
}(1-|u|^{2}/c^{2})^{1/2}\ $\ and $\varphi ^{\prime }=\varphi
(1-|u|^{2}/c^{2})^{1/2},$ \ \ from (\ref{3.30}) one easily obtains that 
\begin{equation}
\begin{array}{c}
\frac{d}{dt}(p+\xi A)=-\xi \nabla \varphi (1-|u|^{2}/c^{2})= \\ 
\\ 
=-\xi \nabla \varphi +\frac{\xi }{c}\nabla <u,u\varphi /c>=-\xi \nabla
\varphi +\frac{\xi }{c}\nabla <u,A>.%
\end{array}
\label{3.31}
\end{equation}%
Here we made use of the well-known relationship $A=u\varphi /c$ for the
vector potential generated by this charged particle $\xi $ moving in space
with the velocity $u\in T(\mathbb{R}^{3})$ with respect to the laboratory
reference frame $\mathcal{K}_{t}.$ Based now on the equation \ (\ref{3.31})
one can derive the final expression for the evolution of the charged
particle $\xi $ momentum:%
\begin{eqnarray}
dp/dt &=&-\xi \nabla \varphi -\frac{\xi }{c}dA/dt+\frac{\xi }{c}\nabla <u,A>=
\label{3.32} \\
&&  \notag \\
&=&-\xi \nabla \varphi -\frac{\xi }{c}\partial A/\partial t-\frac{\xi }{c}%
<u,\nabla >A+\frac{\xi }{c}\nabla <u,A>=  \notag \\
&&  \notag \\
&=&\xi E+\frac{\xi }{c}u\times (\nabla \times A)=\xi E+\frac{\xi }{c}u\times
B,  \notag
\end{eqnarray}%
that is exactly the well known Lorentz force expression, used in the works
of H. Lorentz and M. Abraham.
\end{remark}

Recently enough there appeared other interesting works devoted to this "$4/3$%
-electron mass" problem, amongst which we would like to mention \cite%
{Moro,Puth}, whose argumentations are close to each other and based on the
charged shell electron model, within which there is assumed a virtual
interaction of the electron with the ambient "dark" radiation energy. The
latter was first clearly demonstrated in \cite{Puth}, \ \ where a suitable
compensation mechanism of the related singular electrostatic Coulomb
electron energy and the wide band vacuum electromagnetic radiation energy
fluctuations deficit inside the electron shell was shown to be harmonically
realized as the electron shell radius $a\rightarrow 0.$ Moreover, this
compensation happens exactly when the induced outward directed electrostatic
Coulomb pressure on the whole electron coincides, up to the sign, with that
induced by the mentioned above vacuum electromagnetic energy fluctuations
outside the electron shell, since there was manifested their absence inside
the electron shell.

Really, the outward directed electrostatic spatial Coulomb pressure on the
electron equals 
\begin{equation}
\eta _{coul}:=\left. \lim_{a\rightarrow 0}\frac{\varepsilon _{0}|E|^{2}}{2}%
\right\vert _{r=a}=\lim_{a\rightarrow 0}\frac{\xi ^{2}}{32\varepsilon
_{0}\pi ^{2}a^{4}},  \label{3.33}
\end{equation}%
where $E=\frac{\xi r}{4\pi \varepsilon _{0}|r|^{3}}\in \mathbb{E}^{3}$ is
the electrostatic field at point $r\in \mathbb{R}\ $\ subject to the
electron center at the point $r=0\in \mathbb{R}.$ The related inward
directed vacuum electromagnetic fluctuations spatial pressure equals 
\begin{equation}
\eta _{vac}:=\lim_{\Omega \rightarrow \infty }\frac{1}{3}\int_{0}^{\Omega }d%
\mathcal{E}(\omega ),  \label{3.34}
\end{equation}%
where $d\mathcal{E}(\omega )$ is the electromagnetic energy fluctuations
density for a frequency $\omega \in \mathbb{R},$ and $\Omega \in \mathbb{R}$
is the corresponding electromagnetic frequency cutoff. The integral \ (\ref%
{3.34}) can be calculated if to take into account the quantum statistical
recipe \cite{Feyn,Huan,BoBo} that 
\begin{equation}
d\mathcal{E}(\omega ):=\hbar \omega \frac{d^{3}p(\omega )}{h^{3}},
\label{3.35}
\end{equation}%
where the Plank constant $h:=2\pi \hbar $ and the electromagnetic wave
momentum $p(\omega )\in \mathbb{E}^{3}$ satisfies the relativistic
relationship 
\begin{equation}
|p(\omega )|=\hbar \omega /c.  \label{3.36}
\end{equation}%
Whence by substituting \ (\ref{3.36}) into \ (\ref{3.35}) one obtains 
\begin{equation}
d\mathcal{E}(\omega )=\frac{\hbar \omega ^{3}}{2\pi ^{2}c^{3}}d\omega ,
\label{3.37}
\end{equation}%
which entails, owing to \ (\ref{3.34}), the following vacuum electromagnetic
energy fluctuations spatial pressure%
\begin{equation}
\eta _{vac}=\lim_{\Omega \rightarrow \infty }\frac{\hbar \Omega ^{4}}{24\pi
^{2}c^{3}}.  \label{3.38}
\end{equation}

For the charged electron shell model to be stable at rest it is necessary to
equate the inward \ (\ref{3.38}) and outward \ (\ref{3.33}) spatial
pressures:%
\begin{equation}
\lim_{\Omega \rightarrow \infty }\frac{\hbar \Omega ^{4}}{24\pi ^{2}c^{3}}%
=\lim_{a\rightarrow 0}\frac{\xi ^{2}}{32\varepsilon _{0}\pi ^{2}a^{4}},
\label{3.39}
\end{equation}%
giving rise to the balance electron shell radius $a_{b}\rightarrow 0$
limiting condition:%
\begin{equation}
a_{b}=\lim_{\Omega \rightarrow \infty }\ \left[ \Omega ^{-1}\left( \frac{%
3\xi ^{2}c^{2}}{2\hbar }\right) ^{1/4}\right] .  \label{3.40}
\end{equation}

Simultaneously we can calculate the corresponding Coulomb and
electromagnetic fluctuations energy deficit inside the electron shell:%
\begin{equation}
\Delta W_{b}:=\frac{1}{2}\int_{a_{b}}^{\infty }\varepsilon
_{0}|E|^{2}d^{3}r-\int_{0}^{a_{b}}d^{3}r\int_{0}^{\Omega }d\mathcal{E}%
(\omega )=\frac{\xi ^{2}}{8\pi \varepsilon _{0}a_{b}}-\frac{\hbar \Omega
^{4}a_{b}^{3}}{6\pi c^{3}}=0,\   \label{3.41}
\end{equation}%
additionally ensuring the electron shell model stability.

Another important consequence from this pressure-energy compensation
mechanism can be derived concerning the electron ienrtial mass $m_{ph}\in 
\mathbb{R}_{+},$ entering the momentum expression \ (\ref{3.24e}) in the
case of the electron slow enough movement. Namely, following the reasonings
from \cite{Moro}, one can observe that during the electton movement there
arises an additional hidden not compensated and velocity $u\in T(\mathbb{R}%
^{3})$ directed electrostatic Coulomb surface self-pressure acting only on
the \textit{front half part} of the electron shell and equal to 
\begin{equation}
\eta _{surf}:=\frac{|E\xi |}{4\pi a_{b}^{2}}\frac{1}{2}=\frac{\xi ^{2}}{%
32\pi \varepsilon _{0}a_{b}^{4}},  \label{3.42}
\end{equation}%
coinciding, evidently, with the already compensated outward directed
electrostatic Coulomb spatial pressure \ (\ref{3.33}). As, evidently, during
the electron motion in space its surface electric current energy flow is not
vanishing \cite{Moro}, one ensues that the electron momentum gains an
additional mechanical impact, which can be expressed as 
\begin{equation}
\pi _{\xi }:=-\eta _{surf}\frac{4\pi a_{b}^{3}}{3c^{2}}u=-\frac{1}{3}\frac{%
\xi ^{2}}{8\pi \varepsilon _{0}a_{b}c^{2}}u=-\frac{1}{3}\bar{m}_{es}u,
\label{3.43}
\end{equation}%
where we took into account that within this electron shell model the
corresponding electrostatic electron mass equals its electrostatic energy 
\begin{equation}
\bar{m}_{es}=\frac{\xi ^{2}}{8\pi \varepsilon _{0}a_{b}c^{2}}.  \label{3.44}
\end{equation}

Thus, one can claim that, owing to the structural stability of the electron
shell model, its generalized self-interaction momentum $\pi _{p}\in T^{\ast
}(\mathbb{R}^{3})$ \ gains during the movement with velocity $u=dr/dt\in T(%
\mathbb{R}^{3})$ the additional backward directed hidden impact \ (\ref{3.43}%
), which can be identified with the back-directed momentum component 
\begin{equation}
\pi _{\xi }=-\frac{1}{3}\bar{m}_{es}u,  \label{3.45}
\end{equation}%
complementing the classical \cite{Jack,Bart} momentum expression $\ $%
\begin{equation}
\pi _{p}=\frac{4}{3}\bar{m}_{es}u,  \label{3.46}
\end{equation}
which can be easily obtained from the Lagrangian expression expression, if
one not to take into account the shading property of the moving uniform
shell electron model. \ Then, owing to \ the additional momentum (\ref{3.45}%
), the full momentum becomes as 
\begin{equation}
\pi _{p}=\pi _{\xi }+\frac{4}{3}\bar{m}_{es}u\ =(-\frac{1}{3}\bar{m}_{es}+%
\frac{4}{3}\bar{m}_{es})u=\bar{m}_{es}u,  \label{3.47}
\end{equation}%
coinciding with that of \ (\ref{3.21}) modulo the radiation reaction\
momentum $k(t)=\frac{2\xi ^{2}}{3c^{3}}\frac{du}{dt},$ \ strongly supporting
the electromagnetic energy origin of the electron inertion mass for the
first time conceived by H. Lorentz and M. Abraham.

\subsection{Comments}

The electromagnetic mass origin problem was reanalyzed in details within the
Feynman proper time paradigm and related vacuum field theory approach by
means of the fundamental least action principle and the Lagrangian and
Hamiltonian formalisms. The resulting electron inertia appeared to coincide
in part, in the quasi-relativistic limit, with the momentum expression
obtained more than one hundred years ago by M. Abraham and H. Lorentz \cite%
{Abra,Lore-1,Lore-2,Lore-3}, yet it proved to contain an additional hidden
impact owing to the imposed electron stability constraint, which was taken
into account in the original action functional as some preliminarily
undetermined constant component. As it was demonstrated in \cite{Puth,Moro},
this stability constraint can be successfully realized within the charged
shell model of electron at rest, if to take into account the existing
ambient electromagnetic \textquotedblleft dark\textquotedblright\ energy
fluctuations, whose inward directed spatial pressure on the electron shell
is compensated by the related outward directed electrostatic Coulomb spatial
pressure as the electron shell radius satisfies some limiting compatibility
condition. The latter also allows to compensate simultaneously the
corresponding electromagnetic energy fluctuations deficit \ inside the
electron shell, thereby forbidding the external energy to flow into the
electron. In contrary to the lack of energy flow inside the electron shell,
during the electron movement the corresponding internal momentum flow is not
vanishing owing to the nonvanishing hidden electron momentum flow caused by
the surface pressure flow and compensated by the suitably generated surface
electric current flow. As it was shown, \ this backward directed hidden
momentum flow makes it possible to justify the corresponding
self-interaction electron mass expression and to state, within the electron
shell model, the fully electromagnetic electron mass origin, as it has been
conceived by H. Lorentz and M. Abraham and strongly supported by R. Feynman
in his Lectures \cite{Feyn-1}. This consequence is also independently
supported by means of the least action approach, based on the Feynman proper
time paradigm and the suitably calculated regularized retarded \ electric
potential impact into the charged particle Lagrangian function.

The charged particle radiation problem, revisited in this Section, allowed
to conceive the explanation of the charged particle mass as that of a
compact and stable object which should be exerted by a vacuum field
self-interaction energy. The latter can be satisfied iff the expressions (%
\ref{3.19}) hold, thereby imposing on the intrinsic charged particle
structure \cite{Medi} some nontrivial geometrical constraints. Moreover, as
follows from the physically observed particle mass expressions (\ref{3.25a}%
), the electrostatic potential energy being of the \ \ \ self-interaction
origin, contributes into the inertial mass as its main relativistic mass
component.

There exist different relativistic generalizations of the force expression \
(\ref{3.26}), which \ suffer the common physical inconsistency related to
the no radiation effect of \ a charged particle in uniform motion.

Another deeply related \ problem to the radiation reaction force analyzed
above is the search for an explanation to the Wheeler and Feynman reaction
radiation mechanism, called the absorption radiation theory, strongly based
on the Mach type interaction of a charged particle with the ambient vacuum
electromagnetic medium. Concerning this problem, one can also observe some
of its relationships with the one devised here within the vacuum field
theory approach, but this question needs a more detailed and extended
analysis.

\section{\protect\bigskip}

\section{\label{Sec_5}A charged point particle dynamics and a hadronic
string model analysis}

\subsection{The classical relativistic electrodynamics backgrounds: a
charged point particle analysis}

It is well known \cite{LaLi,Feyn-1,Paul,BaNe} that the relativistic least
action principle for a point charged particle $\xi $ in the Minkowski
space-time $M^{4}\simeq \mathbb{E}^{3}\times \mathbb{R}$ can be formulated
on a time interval $[t_{1},t_{2}]\subset \mathbb{R}$ (in the light speed
units) as 
\begin{eqnarray}
\delta S^{(t)} &=&0,\text{ \ \ }S^{(t)}:=\int_{\tau (t_{1})}^{\tau
(t_{2})}(-m_{0}d\tau -\xi <\mathcal{A},dx>_{M^{4}})=  \notag \\
&=&\int_{s(t_{1})}^{s(t_{2})}(-m_{0}<\dot{x},\dot{x}>_{M^{4}}^{1/2}-\xi <%
\mathcal{A},\dot{x}>_{M^{4}})ds.  \label{A1.1}
\end{eqnarray}%
Here $\delta x(s(t_{1}))=0=\delta x(s(t_{2}))$ are the boundary constraints, 
$m_{0}\in \mathbb{R}_{+}$ is the so called particle rest mass, the 4-vector $%
x:=(r,t)\in M^{4}$ is the particle location in $M^{4},$ $\dot{x}:=dx/ds\in
T(M^{4})$ is the particle Euclidean \textquotedblright
four-vector\textquotedblright\ velocity with respect to a laboratory
reference frame $\mathcal{K},$ parameterized by means of the Minkowski
space-time parameters $(r,s(t))\in M^{4}$ and related to each other by means
of the infinitesimal Lorentz interval relationship 
\begin{equation}
d\tau :=<dx,dx>_{M^{4}}^{1/2}:=ds<\dot{x},\dot{x}>_{M^{4}}^{1/2},
\label{A1.1a}
\end{equation}%
$\mathcal{A}\in $ $T^{\ast }(M^{4})$ is an external electromagnetic 4-vector
potential, satisfying the classical Maxwell equations \cite{Paul,LaLi,Feyn-1}%
, the sign $<\cdot ,\cdot >_{\mathcal{H}}$ \ means, in general, the
corresponding scalar product in a finite-dimensional vector space $\mathcal{H%
}$ and $T(M^{4}),T^{\ast }(M^{4})$ \ are, respectively, the tangent and
cotangent spaces \cite{AbMa,Arno,Thir,DuNoFo,HePrPr} to the Minkowski space $%
M^{4}.$ In particular, $<x,x>_{M^{4}}:=t^{2}-<r,r>_{\mathbb{E}^{3}}$ for any 
$x:=(r,t)\in M^{4}.$

The subintegral expression in (\ref{A1.1}) 
\begin{equation}
\mathcal{L}^{(t)}:=-m_{0}<\dot{x},\dot{x}>_{M^{4}}^{1/2}-\xi <\mathcal{A},%
\dot{x}>_{M^{4}}  \label{A1.2}
\end{equation}%
is the related Lagrangian function, whose first part is proportional to the
particle world line length with respect to the proper rest reference frame $%
\mathcal{K}_{\tau }$ and the second part is proportional to the pure
electromagnetic particle-field interaction with respect to the Minkowski
laboratory reference frame $\mathcal{K}.$ Moreover, the positive rest mass
parameter \ $m_{0}\in \mathbb{R}_{+}$ is introduced into (\ref{A1.2}) as an
external physical ingredient, also describing the point particle\ \ with
respect to the proper rest reference frame $\mathcal{K}_{\tau }.$ \ The
electromagnetic 4-vector potential $\mathcal{A}\in T^{\ast }(M^{4})$ $\ $is
at the same time expressed as a 4-vector, constructed and measured with
respect to the Minkowski laboratory reference frame $\mathcal{K}_{t}$ \ that
looks from physical point of view enough controversial, since the action
functional (\ref{A1.1}) is forced to be extremal with respect to the
laboratory reference frame $\mathcal{K}.$ This, in particular, means that
the real physical motion of our charged point particle, being realized with
respect to the proper rest reference frame $\mathcal{K}_{\tau },$ somehow
depends on an external observation data \cite%
{Feyn-1,Fadd-4,Logu-2,Logu-3,Bril} with respect to the occasionally chosen
laboratory reference frame $\mathcal{K}.$ \ This aspect was never discussed
in the physical literature except of very interesting reasonings by R.
Feynman in \cite{Feyn-1}, who argued that the relativistic expression for
the classical Lorentz force has a physical sense only with respect to \ the
Euclidean rest \ reference frame $\mathcal{K}_{\tau }$ variables $(r,\tau
)\in \mathbb{E}^{4}$ related with the Minkowski laboratory reference frame $%
\mathcal{K}_{t}$ \ parameters $(r,t)\in M^{4}$ by means of the infinitesimal
relationship 
\begin{equation}
d\tau :=<dx,dx>_{M^{4}}^{1/2}=dt(1-|u|^{2})^{1/2},  \label{A1.2a}
\end{equation}%
where $u:=dr/dt\in T(\mathbb{E}^{3})$ is the point particle velocity with
respect to the reference frame $\mathcal{K}.$

It is worth to point out here that to be correct, it would be necessary to
include still into the action functional the additional part describing the
electromagnetic field itself. But this part is not taken into account, since
there is generally assumed \cite{BoSt,Klym,Klei,BoSh,Weyl,Merm,Merm-1,Neum}
that the charged particle influence on the electromagnetic field is
negligible. This is true, if the particle charge value $\ \xi $ is very
small but the support $supp\mathcal{A\subset }$ $M^{4}$ of the
electromagnetic 4-vector potential is compact. It is clear that in the case
of two interacting to each other charged particles the above assumption can
not be applied, as it is necessary to take into account the relative motion
of two particles and the respectively changing interaction energy. This
aspect of \ the action functional choice problem appears to be very
important when one tries to analyze the related Lorentz type \ forces
exerted by charged particles on each other. We will return to this problem
in a separate section below.

Having calculated the least action condition (\ref{A1.1}), we easily obtain
from (\ref{A1.2}) the classical relativistic dynamical equations 
\begin{eqnarray}
dP/ds &:&=-\partial \mathcal{L}^{(t)}/\partial x=-\xi \nabla _{x}<\mathcal{A}%
,\dot{x}>_{M^{4}},  \label{A1.3} \\
P &:&=-\partial \mathcal{L}^{(t)}/\partial \dot{x}=m_{0}\dot{x}<\dot{x},\dot{%
x}>_{M^{4}}^{-1/2}+\xi \mathcal{A},  \notag
\end{eqnarray}%
where by $P\in T^{\ast }(M^{4})$ we denoted the common particle-field
momentum of the interacting system.

Now at $s=t\in \mathbb{R}$ by means of the standard infinitesimal change of
variables (\ref{A1.2a}) \ we can easily obtain from (\ref{A1.3}) the
classical Lorentz force expression%
\begin{equation}
dp/dt=\xi E+\xi u\times B  \label{A1.6}
\end{equation}%
with the relativistic particle momentum and mass 
\begin{equation}
p:=mu,\text{ \ }\ m:=m_{0}(1-|u|^{2})^{-1/2},  \label{A1.7}
\end{equation}%
respectively, the electric field 
\begin{equation}
E:=-\partial A/\partial t-\nabla \varphi  \label{A1.8}
\end{equation}%
and the magnetic field 
\begin{equation}
B:=\nabla \times A,  \label{A1.9}
\end{equation}%
where we have expressed the electromagnetic 4-vector potential as $\mathcal{A%
}:=(A,\varphi )\in T^{\ast }(M^{4}).$

The Lorentz force (\ref{A1.6}), owing to our preceding assumption, means the
force exerted by the external electromagnetic field on our charged point
particle, whose charge $\xi $ is so negligible that it does not exert the
influence on the field. This fact becomes very important if we try to make
use of the Lorentz force expression (\ref{A1.6}) for the case of two
interacting to each other charged particles, since then one can not assume
that our charge $\xi $ exerts negligible influence on other charged
particle. Thus, the corresponding Lorentz force between two charged
particles should be suitably altered. Nonetheless, the modern physics \cite%
{BoSh,Dira,LaLi,DuJa,JaPo,Barr,Jack} did not make this naturally needed \
Lorentz force modification and there is everywhere used the classical
expression (\ref{A1.6}). This situation was observed and analyzed concerning
the related physical aspects in \cite{Repc}, having shown that the
electromagnetic Lorentz force between two moving charged particles can be
modified in such a way that it ceases to be dependent on their relative
motion contrary to the classical relativistic case.

To the regret, the least action principle approach to analyzing the Lorentz
force structure was in \cite{Repc} completely ignored that gave rise to some
incorrect and physically not motivated statements concerning mathematical
physics backgrounds of the modern electrodynamics. To make the problem more
transparent we will analyze it in the section below from the vacuum field
theory approach recently devised in \cite{BoPrTa,BoPrTaPr-Lore,BoPr-1}.

\subsection{The least action principle analysis}

Consider the least action principle (\ref{A1.1}) and observe that the
extremality condition 
\begin{equation}
\delta S^{(t)}=0,\text{ \ \ \ \ }\delta x(s(t_{1}))=0=\delta x(s(t_{2})),
\label{A2.1}
\end{equation}%
is calculated with respect to the laboratory reference frame $\mathcal{K},$
whose point particle coordinates $(r,t)\in M^{4}$ are parameterized by means
\ of an arbitrary parameter $s\in \mathbb{R}$ owing to expression (\ref%
{A1.1a}). Recalling now the definition of the invariant proper rest
reference frame $\mathcal{K}_{\tau }$ time parameter (\ref{A1.2a}), we
obtain that at the critical parameter value $s=\tau \in \mathbb{R}$ the
action functional (\ref{A1.1}) on the fixed interval $[\tau _{1},\tau
_{2}]\subset \mathbb{R}$ turns into 
\begin{equation}
S^{(t)}=\int\limits_{\tau _{1}}^{\tau _{2}}(-m_{0}-\xi <\mathcal{A},\dot{x}%
>_{M^{4}})d\tau  \label{A2.2}
\end{equation}%
under the additional constraint 
\begin{equation}
<\dot{x},\dot{x}>_{M^{4}}^{1/2}=1,  \label{A2.2a}
\end{equation}%
where, by definition, $\ \dot{x}:=dx/d\tau ,$ $\tau \in \mathbb{R}.$

The expressions (\ref{A2.2}) and (\ref{A2.2a}) need some comments since the
corresponding to (\ref{A2.2}) Lagrangian function 
\begin{equation}
\mathcal{L}^{(t)}:=-m_{0}-\xi <\mathcal{A},\dot{x}>_{M^{4}}  \label{A2.3}
\end{equation}%
depends only virtually on the unobservable rest mass parameter $m_{0}\in 
\mathbb{R}$ and, evidently, it has no \ direct impact into the resulting
particle dynamical equations following from the condition $\delta S^{(t)}=0.$
\ Nonetheless, the rest mass springs up as a suitable Lagrangian multiplier
owing to the imposed constraint (\ref{A2.2a}). To demonstrate this consider
the extended Lagrangian function (\ref{A2.3}) in the form 
\begin{equation}
\mathcal{L}_{\lambda }^{(t)}:=-m_{0}-\xi <\mathcal{A},\dot{x}%
>_{M^{4}}-\lambda (<\dot{x},\dot{x}>_{M^{4}}^{1/2}-1),  \label{A2.3.1}
\end{equation}%
where $\lambda \in \mathbb{R}$ is a suitable Lagrangian multiplier. The
resulting Euler equations look as%
\begin{eqnarray}
P_{r} &:&=\partial \mathcal{L}_{\lambda }^{(t)}/\partial \dot{r}=\xi
A+\lambda \dot{r},\text{ \ }P_{t}:=\partial \mathcal{L}_{\lambda
}^{(t)}/\partial \dot{t}=-\xi \varphi -\lambda \dot{t},\text{ }  \notag \\
\text{\ }\partial \mathcal{L}_{\lambda }^{(t)}/\partial \lambda &=&<\dot{x},%
\dot{x}>_{M^{4}}^{1/2}-1=0,\text{ \ }dP_{r}/d\tau =\xi \nabla _{r}<A,\dot{r}%
>_{\mathbb{E}^{3}}-\xi \dot{t}\nabla _{r}\varphi ,  \notag \\
\text{ \ }dP_{t}/d\tau &=&\xi <\partial A/\partial t,\dot{r}>_{\mathbb{E}%
^{3}}-\xi \dot{t}\partial \varphi /\partial t,  \label{A2.3.2}
\end{eqnarray}%
giving rise, owing to relationship (\ref{A1.2a}), to the following dynamical
equations:

\begin{equation}
\frac{d}{dt}(\lambda u\dot{t})=\xi E+\xi u\times B,\text{ \ }\frac{d}{dt}%
(\lambda \dot{t})=\xi <E,u>_{\mathbb{E}^{3}},  \label{A2.3.3}
\end{equation}%
where we denoted by 
\begin{equation}
E:=-\partial A/\partial t-\nabla \varphi ,\text{ \ }B=\nabla \times A
\label{A2.3.4}
\end{equation}%
the corresponding electric and magnetic fields. As a simple consequence of (%
\ref{A2.3.3}) one obtains 
\begin{equation}
\frac{d}{dt}\ln (\lambda \dot{t})+\frac{d}{dt}\ln (1-|u|^{2})^{1/2}=0,
\label{A2.3.5}
\end{equation}%
being equivalent for all $t\in \mathbb{R},$ owing to relationship (\ref%
{A1.2a}), \bigskip to the relationship 
\begin{equation}
\lambda \dot{t}(1-|u|^{2})^{1/2}=\lambda :=m_{0,}  \label{A2.3.6}
\end{equation}%
where $m_{0}\in \mathbb{R}_{+}$ is a constant, which could be interpreted as
the rest mass of our charged point particle $\xi .$ Really, the first
equation of (\ref{A2.3.3}) can be rewritten as 
\begin{equation}
dp/dt=\xi E+\xi u\times B,  \label{A2.3.7}
\end{equation}%
where we denoted 
\begin{equation}
p:=mu,\text{ }m:=\lambda \dot{t}=m_{0}(1-|u|^{2})^{-1/2},  \label{A2.3.8}
\end{equation}%
coinciding exactly with that of (\ref{A1.2a}).

Thereby, we retrieved here all of the results obtained in section above,
making use of the action functional (\ref{A2.2}), represented with respect
to the rest reference frame $\mathcal{K}_{\tau }$ under constraint \ (\ref%
{A2.2a}). \ During these derivations, we faced with a very delicate
inconsistency property of definition \ of the action functional $S^{(t)},$
defined with respect to the rest reference frame $\mathcal{K}_{\tau },$ but
depending on the external electromagnetic potential function $\mathcal{A}%
:M^{4}\rightarrow T^{\ast }(M^{4}),$ constructed exceptionally with respect
to the laboratory reference frame $\mathcal{K}.$ Namely, this potential
function, as a physical observable quantity, is defined and, \ \
respectively, measurable only with respect to the fixed laboratory reference
frame $\mathcal{K}.$ This, in particular, means that a physically reasonable
action functional should be constructed by means of an expression strongly
calculated within the laboratory reference frame $\mathcal{K}_{t}$ \ by
means of coordinates $(r,t)\in M^{4}$ and later suitably transformed subject
to the rest reference frame $\mathcal{K}_{\tau }$ coordinates $(r,\tau )\in 
\mathbb{E}^{4},$ respective for the real charged point particle $\xi $
motion. Thus, the corresponding action functional, in reality, should be
from the very beginning written as 
\begin{equation}
S^{(\tau )}=\int\limits_{t(\tau _{1})}^{t(\tau _{2})}(-\xi <\mathcal{A},\dot{%
x}>_{\mathbb{E}^{3}})dt,  \label{A2.4}
\end{equation}%
where $\dot{x}:=dx/dt,$ $t\in \mathbb{R},$ being calculated on some time
interval $[t(\tau _{1}),t(\tau _{2})]\subset \mathbb{R},$ suitably related
with the proper motion of the charged point particle $\xi $ on the true time
interval $[\tau _{1},\tau _{2}]\subset \mathbb{R}$ with respect to the rest
reference frame $\mathcal{K}_{\tau }$ and whose charge value is assumed so
negligible that it exerts no influence on the external electromagnetic
field. The problem now arises: how to compute correctly the variation $%
\delta S^{(\tau )}=0$ of the action functional (\ref{A2.4})?

To reply to this question we will turn to the Feynman reasonings from \cite%
{Feyn-1}, where he argued, when deriving the relativistic Lorentz force
expression, that the real charged particle dynamics can be physically not
ambiguously determined only with respect to the rest reference frame time
parameter. Namely, \ Feynman wrote: "...we calculate a growth $\Delta x$ for
a small time interval $\Delta t.$ But in the other reference frame the
interval $\Delta t$ may correspond to changing both $t^{\prime }$ and $%
x^{\prime },$ thereby at the change of the only $t^{\prime }$ the suitable
change of $x$ will be\ other... Making use of the quantity $d\tau $ one can
determine a good differential operator \ $d/d\tau ,$ as it is invariant with
respect to the Lorentz reference frames transformations". This means that if
our charged particle $\xi $ moves in the Minkowski space $M^{4}$ during the
time interval $[t_{1},t_{2}]\subset \mathbb{R}$ with respect to the
laboratory reference frame $\mathcal{K},$ its proper real and invariant time
of motion with respect to the rest reference frame $\mathcal{K}_{\tau }$
will be respectively $[\tau _{1},\tau _{2}]\subset \mathbb{R}.$

As a corollary of the Feynman reasonings, we arrive at the necessity to
rewrite the action functional (\ref{A2.4}) as%
\begin{equation}
S^{(\tau )}=\int\limits_{\tau _{1}}^{\tau _{2}}(-\xi <\mathcal{A},\dot{x}%
>_{M^{4}})d\tau ,\text{ \ }\delta x(\tau _{1})=0=\delta x(\tau _{2}),
\label{A2.5}
\end{equation}%
where $\dot{x}:=dx/d\tau ,$ $\tau \in \mathbb{R},$ under the additional
constraint 
\begin{equation}
<\dot{x},\dot{x}>_{M^{4}}^{1/2}=1,  \label{A2.5a}
\end{equation}%
being equivalent to the infinitesimal transformation (\ref{A1.2a}).
Simultaneously the proper time interval $[\tau _{1},\tau _{2}]\subset 
\mathbb{R}$ \ is mapped on the time interval $[t_{1},t_{2}]\subset \mathbb{R}
$ by means of the infinitesimal transformation 
\begin{equation}
dt=d\tau (1+|\dot{r}|^{2})^{1/2},  \label{A2.6}
\end{equation}%
where $\dot{r}:=dr/d\tau ,$ $\tau \in $\ $\mathbb{R}.$ Thus, we can now pose
the true least action problem equivalent to (\ref{A2.5}) as 
\begin{equation}
\delta S^{(\tau )}=0,\text{ \ \ \ \ \ }\delta r(\tau _{1})=0=\delta r(\tau
_{2}),  \label{A2.6a}
\end{equation}%
where the functional 
\begin{equation}
S^{(\tau )}=\int\limits_{\tau _{1}}^{\tau _{2}}[-\bar{W}(1+|\dot{r}%
|^{2})^{1/2}+\xi <A,\dot{r}>_{\mathbb{E}^{3}}]d\tau  \label{A2.7}
\end{equation}%
is characterized by the Lagrangian function 
\begin{equation}
\mathcal{L}^{(\tau )}:=-\bar{W}(1+|\dot{r}|^{2})^{1/2}+\xi <A,\dot{r}>_{%
\mathbb{E}^{3}}.  \label{A2.8}
\end{equation}%
Here we denoted, for further convenience, $\bar{W}:=\xi \varphi ,$ \ being a
suitable vacuum field \cite{BoPrTa,BoPrTaPr-Lore,BoPr-1,Repc} potential
function. The resulting Euler equation gives rise to the following
relationships%
\begin{eqnarray}
P &:&=\partial \mathcal{L}^{(\tau )}/\partial \dot{r}=-\bar{W}\dot{r}(1+|%
\dot{r}|^{2})^{-1/2}+\xi A,  \label{A2.9} \\
dP/d\tau &:&=\partial \mathcal{L}^{(\tau )}/\partial r=-\nabla \bar{W}(1+|%
\dot{r}|^{2})^{1/2}+\xi \nabla <A,\dot{r}>_{\mathbb{E}^{3}}.  \notag
\end{eqnarray}%
Making now use once more of the infinitesimal transformation (\ref{A2.6})
and the crucial dynamical particle mass definition \cite%
{BoPrTa,BoPrTaPr-Lore,Repc} (in the light speed units) 
\begin{equation}
m:=-\bar{W},  \label{A2.9a}
\end{equation}%
we can easily rewrite equations (\ref{A2.9}) with respect to the parameter $%
t\in \mathbb{R}$ as the \ classical relativistic Lorentz force:%
\begin{equation}
dp/dt=\xi E+\xi u\times B,  \label{A2.10}
\end{equation}%
where we denoted 
\begin{eqnarray}
p &:&=mu,\text{ \ \ \ \ \ \ }u:=dr/dt,\text{ }  \label{A2.11} \\
B &:&=\nabla \times A,\text{ \ \ }E:=-\xi ^{-1}\nabla \bar{W}-\partial
A/\partial t.  \notag
\end{eqnarray}%
Thus, we obtained once more the relativistic Lorentz force expression (\ref%
{A2.10}), but slightly different from (\ref{A1.6}), since the classical
relativistic momentum $\ $expression of (\ref{A1.7}) \ does not completely
coincide with our modified relativistic momentum expression 
\begin{equation}
p=-\bar{W}u,  \label{A2.12}
\end{equation}%
depending strongly on the scalar vacuum field potential function $\bar{W}%
:M^{4}\rightarrow \mathbb{R}.$ But if to recall here that our action
functional (\ref{A2.5}) was written under the assumption that \ the particle
charge value $\xi $ is negligible and \ not exerting the essential influence
on the electromagnetic field source, we can make use of the before obtained
in \cite{BoPr-1,BoPrTa,Repc} result, that the vacuum field potential
function $\bar{W}:M^{4}\rightarrow \mathbb{R},$ owing to (\ref{A2.10})-(\ref%
{A2.12}), satisfies as $\xi \rightarrow 0$ the dynamical equation 
\begin{equation}
d(-\bar{W}u)/dt=-\nabla \bar{W},  \label{A2.13}
\end{equation}%
whose solution will be exactly the expression%
\begin{equation}
-\bar{W}=m_{0}(1-|u|^{2})^{-1/2},\text{ }m_{0}=-\left. \bar{W}\right\vert
_{u=0}.  \label{A2.14}
\end{equation}%
Thereby, we have arrived, owing to (\ref{A2.14}) and (\ref{A2.12}), at the
almost full coincidence of our result (\ref{A2.10}) for the relativistic \
Lorentz force with that of (\ref{A1.6}) under the condition $\xi \rightarrow
0.$

The obtained above results and inferences we will formulate as the following
proposition.

\begin{proposition}
\label{Pr_A1.2} Under the assumption of the negligible influence of a
charged point particle $\xi $ on an external electromagnetic field source a
true physically reasonable action functional can be given by expression (\ref%
{A2.4}), being equivalently defined with respect to the rest reference frame 
$\mathcal{K}_{\tau }$ in form (\ref{A2.5}),(\ref{A2.5a}). The resulting
relativistic Lorentz force (\ref{A2.10}) coincides almost exactly with that
of (\ref{A1.6}), obtained from the classical Einstein type action functional
(\ref{A1.1}), but the momentum expression (\ref{A2.12}) differs from the
classical expression (\ref{A1.7}), taking into account the related vacuum
field potential interaction energy impact.
\end{proposition}

\bigskip As an important corollary we make the following.

\begin{corollary}
\label{Cor_A2.2}The Lorentz force expression (\ref{A2.10}) \ should be in due course corrected in the case when the weak charge $\xi$ influence
assumption made above does not hold.
\end{corollary}

\begin{remark}
\label{Rem_A2.3}Concerning the infinitesimal relationship (\ref{A2.6}) one
can observe that it reflects the Euclidean nature of transformations $%
\mathbb{R\ni }$ $t\rightleftharpoons \tau \in $ \bigskip $\mathbb{R}.$
\end{remark}

In spite of the results obtained above by means of two different least
action principles (\ref{A1.1}) and (\ref{A2.5}), we must claim here that the
first one possesses some logical controversies, which may give rise to
unpredictable, unexplainable and even nonphysical effects. Amongst these
controversies we mention: $i)$ the definition of Lagrangian function (\ref%
{A1.2}) as an expression, depending on the external and undefined rest mass
parameter with respect to the rest \ reference frame $\mathcal{K}_{\tau }$
time $\tau \in \mathbb{R},$ but serving as an variational integrand with
respect to the laboratory \ reference frame $\mathcal{K}_{t}$ \ time $t\in 
\mathbb{R};$ $ii)$ the least action condition (\ref{A1.1}) is calculated
with respect to the fixed boundary conditions at the ends of a time interval 
$[t_{1},t_{2}]\subset \mathbb{R},$ thereby the resulting dynamics becomes
strongly \ dependent on the chosen laboratory reference frame $\mathcal{K},$
what is, following the Feynman arguments \cite{Feyn,Feyn-1}, physically
unreasonable; $iii)$ the resulting relativistic particle mass and its energy
depend only on the particle velocity in the laboratory reference frame $%
\mathcal{K},$ not taking into account the present vacuum field potential
energy, exerting not trivial action on the particle motion; $iv)$ the
assumption concerning the negligible influence of a charged point particle
on the external electromagnetic field source is also physically inconsistent.

\section{An alternative hadronic string model least action formulation}

A classical relativistic hadronic string model was first proposed in \cite%
{BaCh,Namb,Godb} and deeply studied in \cite{BaNe}, making use of the least
action principle and related Lagrangian and Hamiltonian formalisms. We will
not discuss here this classical string model and will not comment the
physical problems accompanying it, especially those related to its diverse
quantization versions, but proceed to formulating a new relativistic
hadronic string model, constructed by means of the vacuum field theory
approach, devised in \cite{BoPrTa-acti,BoPrTaPr-Lore,BoPr-1}. The
corresponding least action principle is, following \cite{BaNe}, formulated
as 
\begin{equation}
\delta S^{(\tau )}=0,\text{ \ \ }S^{(\tau )}:=\int_{s(\tau _{1})}^{s(\tau
_{2})}\int_{\sigma _{1}(s)}^{\sigma _{2}(s)}\bar{W}(x(\xi ))(|\dot{\xi}%
|^{2}|\xi ^{\prime }|^{2}-<\dot{\xi},\xi ^{\prime }>_{\mathbb{E}%
^{4}}^{2})^{1/2}d\sigma \wedge ds,  \label{A5.1}
\end{equation}%
where $\bar{W}:M^{4}\rightarrow \mathbb{R}$ is a vacuum field potential
function, characterizing the interaction of the vacuum medium with our
charged string object, the differential 2-form $d\Sigma ^{(2)}:=(|\dot{\xi}%
|^{2}|\xi ^{\prime }|^{2}-<\dot{\xi},\xi ^{\prime }>_{\mathbb{E}%
^{4}}^{2})^{1/2}d\sigma \wedge ds$ $=\sqrt[2]{g(\xi )}d\sigma \wedge ds,$ $%
g(\xi ):=$ $\det (\left. g_{ij}(\xi )\right\vert _{i,j=\overline{1,2}}),$ $|%
\dot{\xi}|^{2}:=$ $<\dot{\xi},\dot{\xi}>_{\mathbb{E}^{4}},$ \ $|\xi ^{\prime
}|^{2}:=$ $\ <\xi ^{\prime },\xi ^{\prime }>_{\mathbb{E}^{4}},$ being
related with the induced positive define Riemannian infinitesimal metrics $%
dz^{2}:=<d\xi ,d\xi >_{\mathbb{E}^{4}}=g_{11}(\xi )d\sigma ^{2}+g_{12}(\xi
)d\sigma ds+g_{21}(\xi )dsd\sigma $ $+g_{22}(\xi )ds^{2}$ on the string,
means \cite{AbMa,BaNe,DuNoFo,Thir} the infinitesimal two-dimensional world
surface element, parameterized by variables $(s,\sigma )\in \mathbb{R}^{2}$
and embedded into the 4-dimensional Euclidean space-time $\mathbb{E}^{4}$
with coordinates $\xi :=(\tau (s,\sigma ),r(s,\sigma ))\in \mathbb{E}^{4}$
subject to the proper rest reference frame $\mathcal{K}_{\tau },$ $\dot{\xi}%
:=\partial \xi /\partial s,$ $\xi ^{\prime }:=\partial \xi /\partial \sigma $
are the corresponding partial derivatives. The related boundary conditions
are chosen as 
\begin{equation}
\delta \xi (s,\sigma (s))=0  \label{A5.2}
\end{equation}%
at string parameter $\sigma (s)\in \mathbb{R}$ for all $s\in \mathbb{R}.$
The action functional expression is strongly motivated by that constructed
for the point particle action functional (\ref{A5.1}): 
\begin{equation}
S^{(\tau )}:=-\int_{\sigma _{1}}^{\sigma _{2}}dl(\sigma )\int_{t(\sigma
,\tau _{1})}^{t(\sigma ,\tau _{2})}\bar{W}dt(\tau ,\sigma ),  \label{A5.3}
\end{equation}%
where the laboratory reference time parameter $t(\tau ,\sigma )\in \mathbb{R}
$ is related to the proper rest string reference frame time parameter $\tau
\in \mathbb{R}$ by means of the standard Euclidean infinitesimal
relationship 
\begin{equation}
dt(\tau ,\sigma ):=(1+|\dot{r}_{\perp }|^{2}(\tau ,\sigma ))^{1/2}d\tau ,%
\text{ \ \ \ }|\dot{r}_{\perp }|^{2}:=<\dot{r}_{\bot },\dot{r}_{\bot }>_{%
\mathbb{E}^{3}},  \label{A5.4}
\end{equation}%
with $\sigma \in \lbrack \sigma _{1},\sigma _{2}]\subset \mathbb{R},$ being
a spatial variable \ parameterizing the string length measure $dl(\sigma )$
on the real axis $\mathbb{R},$ $\dot{r}_{\perp }(\tau ,\sigma ):=\hat{N}$ $%
\dot{r}(\tau ,\sigma )\in \mathbb{E}^{3}$ being the orthogonal to the string
velocity component, and 
\begin{equation}
\text{\ }\ \hat{N}:=(1-|r^{\prime }|^{-2}r^{\prime }\otimes r^{\prime }),%
\text{ \ \ }|r^{\prime }|^{-2}:=<r^{\prime },r^{\prime }>_{\mathbb{E}%
^{3}}^{-1},  \label{A5.5}
\end{equation}%
being the corresponding projector operator in $\mathbb{E}^{3}$ on the
orthogonal to the string direction, expressed for brevity by means of the
standard tensor product $"\otimes "$ in the Euclidean space $\mathbb{E}^{3}.$
The result of calculation of (\ref{A5.3}) gives rise to the following
expression 
\begin{equation}
S^{(\tau )}=-\int_{\tau _{1}}^{\tau _{2}}d\tau \int_{\sigma _{1}(\tau
)}^{\sigma _{2}(\tau )}\bar{W}(|r^{\prime }|^{2}(1+|\dot{r}|^{2})-<\dot{r}%
,r^{\prime }>_{\mathbb{E}^{3}}^{2})^{1/2}d\ \sigma ,  \label{A5.6}
\end{equation}%
where we made use of the infinitesimal measure representation $dl(\sigma
)=<r^{\prime },r^{\prime }>_{\mathbb{E}^{3}}^{1/2}d\sigma ,$ $\sigma \in
\lbrack \sigma _{1},\sigma _{2}].$ If now to introduce on the string world
surface local coordinates $(s(\tau ,\sigma ),\sigma )\in \mathbb{E}^{2}$ and
the related Euclidean string position vector $\xi :=(\tau ,r(s,\sigma ))\in 
\mathbb{E}^{4},$ the string action functional reduces equivalently to that
of (\ref{A5.1}).

Below we will proceed to Lagrangian and Hamiltonian analyzing the least
action conditions for expressions (\ref{A5.1}) and (\ref{A5.6}).

\subsection{ Lagrangian and Hamiltonian analysis}

First we will obtain the corresponding to (\ref{A5.1}) \ \ Euler equations
with respect to the special \cite{BaNe,DuNoFo,BoPr-foun} internal conformal
variables $(s,\sigma )\in \mathbb{E}^{2}$ on the world string surface, with
respect to which the metrics on it becomes equal to $dz^{2}=|\xi ^{\prime
}|^{2}d\sigma ^{2}+|\dot{\xi}|^{2}ds^{2},$ where $<$ $\xi ^{\prime },\dot{\xi%
}$ $>_{\mathbb{E}^{4}}=0=|\xi ^{\prime }|^{2}-|\dot{\xi}|^{2}$ are the
imposed constraints, and the corresponding infinitesimal world surface
measure $d\Sigma ^{(2)}$ becomes $d\Sigma ^{(2)}=|\xi ^{\prime }|\dot{\xi}%
|d\sigma \wedge ds.$ As a result of simple calculations one finds the linear
second order partial differential equation 
\begin{equation}
\partial (\bar{W}\dot{\xi})/\partial s+\partial (\bar{W}\xi ^{\prime
})/\partial \sigma =\partial (|\xi ^{\prime }|\text{ }|\dot{\xi}|\bar{W}%
)/\partial \sigma  \label{A6.1}
\end{equation}%
under the suitably chosen boundary conditions%
\begin{equation}
\xi ^{\prime }-\dot{\xi}\text{ }\dot{\sigma}=0  \label{A6.2}
\end{equation}%
for all $s\in \mathbb{R}.$ It is interesting to mention that equation \ (\ref%
{A6.1}) \ is of \textit{elliptic type}, contrary to the case considered
before in \cite{BaNe}. This is, evidently, owing to the fact that the
resulting metrics on the string world surface is Euclidean, as we took into
account that the real string motion is, in reality, realized with respect to
its proper rest reference frame $\mathcal{K}_{\tau },$ being not dependent
on the string motion observation data, measured with respect to any external
laboratory reference frame $\mathcal{K}.$ The latter can be used for
physically motivated evidence of the dynamical stability of the relativistic
charged string object, modeling a charged hadronic particle \cite%
{Barb,Grov,Namb,Wilc-1} with non-trivial internal structure.

The differential equation (\ref{A6.1}) strongly depends on the vacuum field
potential function $\bar{W}:M^{4}\rightarrow \mathbb{R},$ which, as a
function of the Minkowski 4-vector variable $x:=(t(s,\sigma ),r)\in M^{4}$
of the laboratory reference frame $\mathcal{K},$ should be expressed as a
function of the variables $(s,\sigma )\in \mathbb{E}^{2},$ making use of the
infinitesimal relationship (\ref{A5.4}) in the following form:%
\begin{equation}
dt=<\hat{N}\partial \xi /\partial \tau ,\hat{N}\partial \xi /\partial \tau
>_{\mathbb{E}^{3}}^{1/2}(\frac{\partial \tau }{\partial s}ds+\frac{\partial
\tau }{\partial \sigma }d\sigma ),  \label{A6.3}
\end{equation}%
defined on the string world surface. The function $\bar{W}:M^{4}\rightarrow 
\mathbb{R}$ itself should be simultaneously found, following ideas of \cite%
{Bril,Tram} and recent results of \cite{BoPr-1,BoPr-foun}, by means of a
suitable solution to the Maxwell equation $\partial ^{2}W/\partial
t^{2}-\Delta W=\rho ,$ where $\rho \in \mathbb{R}$ is an ambient charge
density and, by definition, $\ \bar{W}(r(t))$ $:=$ $\lim_{r\rightarrow
r(t)}\left. W(r,t)\right\vert ,$ with $r(t)\in \mathbb{E}^{3}$ being the
position of the string element with a proper rest reference coordinates $%
(\tau ,\sigma )\in \mathbb{E}^{2}$ at the time moment $t=t(\tau ,\sigma )\in 
\mathbb{R}.$

We proceed now to constructing the dynamical Euler equations for our string
model, making use of the general action functional (\ref{A5.6}) in the
following form:%
\begin{equation}
S^{(\tau )}=-\int_{\tau _{1}}^{\tau _{2}}d\tau \int_{\sigma _{1}(\tau
)}^{\sigma _{2}(\tau )}\bar{W}|r^{\prime }|(1+|\dot{r}|^{2}-<r^{\prime
}|r^{\prime }|^{-1},\dot{r}>_{\mathbb{E}^{3}}^{2})^{1/2}d\ \sigma ,
\label{A6.3a}
\end{equation}%
It is easy to calculate that the generalized momentum density 
\begin{eqnarray}
p &:&=\partial \mathcal{L}^{(\tau )}/\partial \dot{r}=\frac{-\bar{W}%
|r^{\prime }|(\dot{r}-r^{\prime }|r^{\prime }|^{-2}<r^{\prime },\dot{r}>_{%
\mathbb{E}^{3}})}{(|\dot{r}|^{2}+1-<r^{\prime }|r^{\prime }|^{-1},\dot{r}>_{%
\mathbb{E}^{3}}^{2})^{1/2}}=  \notag \\
&=&\frac{-\bar{W}|r^{\prime }|\hat{N}dr\ }{d\tau (|\dot{r}|^{2}+1-<r^{\prime
}|r^{\prime }|^{-1},\dot{r}>_{\mathbb{E}^{3}}^{2})^{1/2}}=-|r^{\prime }|\bar{%
W}\hat{N}dr/dt\ =-|r^{\prime }|\hat{N}(\bar{W}u)  \label{A6.4}
\end{eqnarray}%
satisfies the dynamical equation 
\begin{eqnarray}
dp/d\tau &:&=\delta \mathcal{L}^{(\tau )}/\delta r=-(|r^{\prime }|^{2}(|\dot{%
r}|^{2}+1)-<r^{\prime },\dot{r}>_{\mathbb{E}^{3}}^{2})^{1/2}\ \nabla \bar{W}+
\label{A6.5} \\
&&+\frac{\partial }{\partial \sigma }\left\{ \frac{\bar{W}(1+|\dot{r}|^{2}%
\hat{T})r^{\prime }/r^{\prime }}{(1+<|\dot{r}|^{2}\hat{T}r^{\prime
}|r^{\prime }|^{-1},r^{\prime }|r^{\prime }|^{-1}>_{\mathbb{E}^{3}})^{1/2}}%
\right\} ,  \notag
\end{eqnarray}%
where we denoted by 
\begin{equation}
\mathcal{L}^{(\tau )}:=-\bar{W}(|r^{\prime }|^{2}(1+|\dot{r}|^{2})-<\dot{r}%
,r^{\prime }>_{\mathbb{E}^{3}}^{2})^{1/2}=-\bar{W}(|r^{\prime }|^{2}+|\dot{r}%
|^{2}<r^{\prime },\hat{T}_{\dot{r}}r^{\prime }>_{\mathbb{E}^{3}})^{1/2}
\label{A6.6}
\end{equation}%
the corresponding Lagrangian function, and for any vector $w\in \mathbb{E}%
^{3}$ 
\begin{equation}
\hat{T}_{w}:=1-|w|^{-2}\text{ }w\otimes w,\text{ \ \ \ }|w|^{2}:=<w,w>_{%
\mathbb{E}^{3}}^{2}\   \label{A6.7}
\end{equation}%
the usual projector operator in $\mathbb{E}^{3}.$ As a result of (\ref{A6.5}%
) one finds that%
\begin{eqnarray}
dp/dt &=&-|r^{\prime }|\ \nabla \bar{W}+  \label{A6.7a} \\
+(1-|u|^{2}) &<&u,r^{\prime }>_{\mathbb{E}^{3}}^{2})^{-1/2}\frac{\partial }{%
\partial \sigma }\left\{ \frac{\bar{W}(1-|u|^{2}+<u,r^{\prime }>_{\mathbb{E}%
^{3}}^{2}+|u|^{2}\hat{T}_{u})r^{\prime }/r^{\prime }}{(1-|u|^{2}+<u,r^{%
\prime }>_{\mathbb{E}^{3}}^{2})^{1/2}}\right\} ,  \notag
\end{eqnarray}%
where we took into account that owing to (\ref{A5.4}) 
\begin{equation}
\dot{r}=dr/d\tau =dr/dt\cdot (dt/d\tau )=u(1-|u|^{2}+<u,r^{\prime }>_{%
\mathbb{E}^{3}}^{2})^{-1/2}.  \label{A6.7b}
\end{equation}

The Lagrangian function is degenerate \cite{BaNe,DuNoFo}, satisfying the
obvious identity 
\begin{equation}
<p,r^{\prime }>_{\mathbb{E}^{3}}=0  \label{A6.8}
\end{equation}%
for all $\tau \in \mathbb{R}.$ Concerning the Hamiltonian formulation of the
dynamics (\ref{A6.5}) we construct the corresponding Hamiltonian functional
as%
\begin{equation}
\begin{array}{c}
\mathcal{H}:=\int_{\sigma _{1}}^{\sigma _{2}}(<p,\dot{r}>_{\mathbb{E}^{3}}-%
\mathcal{L}^{(\tau )})d\sigma = \\ 
\\ 
=\int_{\sigma _{1}}^{\sigma _{2}}\left( \frac{-\bar{W}|r^{\prime }|(|\dot{r}%
|^{2}-r^{\prime }|r^{\prime }|^{-2}<r^{\prime },\dot{r}>_{\mathbb{E}%
^{3}}^{2})}{(|\dot{r}|^{2}+1-<r^{\prime }|r^{\prime }|^{-1},\dot{r}>_{%
\mathbb{E}^{3}}^{2})^{1/2}}+\frac{\bar{W}|r^{\prime }|(|\dot{r}%
|^{2}+1-<r^{\prime }|r^{\prime }|^{-1},\dot{r}>_{\mathbb{E}^{3}}^{2})\ }{(|%
\dot{r}|^{2}+1-<r^{\prime }|r^{\prime }|^{-1},\dot{r}>_{\mathbb{E}%
^{3}}^{2})^{1/2}}\ \right) d\sigma = \\ 
\\ 
=\int_{\sigma _{1}}^{\sigma _{2}}\left( \frac{\ \bar{W}|r^{\prime }|}{(|\dot{%
r}|^{2}+1-<r^{\prime }|r^{\prime }|^{-1},\dot{r}>_{\mathbb{E}^{3}}^{2})^{1/2}%
}\right) d\sigma =-\int_{\sigma _{1}}^{\sigma _{2}}(\bar{W}^{2}|r^{\prime
}|^{2}-|p|^{2})^{1/2}d\sigma ,%
\end{array}
\label{A6.9}
\end{equation}%
satisfying the canonical Hamiltonian equations 
\begin{equation}
dr/d\tau :=\delta \mathcal{H}/\delta p,\text{ \ }dp/d\tau :=-\delta \mathcal{%
H}/\delta r,  \label{A6.10}
\end{equation}%
where 
\begin{equation}
d\mathcal{H}/d\tau =0,  \label{A6.10a}
\end{equation}%
holding only with respect to the proper rest reference frame \ $\mathcal{K}%
_{\tau }$ time parameter $\tau \in \mathbb{R}.$ Now making use of identity (%
\ref{A6.8}) the Hamiltonian functional (\ref{A6.9}) can be equivalently
represented \cite{BaNe} in the symbolic form as 
\begin{equation}
\mathcal{H}=\int_{\sigma _{1}}^{\sigma _{2}}|\bar{W}r^{\prime }\pm ip|_{%
\mathbb{C}}d\sigma ,\text{ }  \label{A6.11}
\end{equation}%
where \ $i:=\sqrt{-1}.$ Moreover, concerning the result obtained above we
need to mention here that one can not construct the suitable Hamiltonian
function expression and relationship of type (\ref{A6.10a}) with respect to
the laboratory reference frame $\mathcal{K},$ since expression (\ref{A6.11})
is not defined on the whole for a separate laboratory time parameter $t\in 
\mathbb{R}$ \ locally dependent both on the spatial parameter $\sigma \in 
\mathbb{R}$ and the proper rest reference frame time parameter $\tau \in 
\mathbb{R}.$

Thereby, one can formulate the following proposition.

\begin{proposition}
The hadronic string model (\ref{A5.1}) allows, on the related world surface,
the conformal local coordinates, with respect to which the resulting
dynamics is described by means of the linear second order elliptic equation (%
\ref{A6.1}). Subject to the proper Euclidean rest reference frame $\mathcal{K%
}_{\tau }$ coordinates the corresponding dynamics is equivalent to the
canonical Hamiltonian equations (\ref{A6.10}) with Hamiltonian functional (%
\ref{A6.9}).
\end{proposition}

We proceed now to construct the action functional expression for a charged
string under an external electromagnetic field, generated by a point
velocity charged particle $\mathbb{\xi }_{f},$ moving with some velocity $%
u_{f}:=dr_{f}/dt\in $ $\mathbb{E}^{3}$\ subject to a laboratory reference
frame $\mathcal{K}.$ To solve this problem we make use of the trick of
Section \ref{Sec_1}, passing to the string, \ considered with respect to the
proper rest reference frame $\mathcal{K}_{\tau }$ moving under the external
vacuum field potential $\bar{W}^{\prime }$ with respect to the relative
reference frame $\mathcal{K}_{f}^{\prime },$ specified by its own Euclidean
coordinates $(t^{\prime },r_{f})\in \mathbb{E}^{4},$ which simultaneously
moves with velocity $u_{f}\in $ $\mathbb{E}^{3}$ with respect to the
laboratory reference frame $\mathcal{K}.$ As a result of this reasoning we
can write down the action functional: 
\begin{equation}
S^{(\tau )}=-\int_{\tau _{1}}^{\tau _{2}}d\tau \int_{\sigma _{1}(\tau
)}^{\sigma _{2}(\tau )}\bar{W}^{\prime }(|r^{\prime }|^{2}(1+|\dot{r}-\dot{r}%
_{f}|^{2})-<\dot{r}-\dot{r}_{f},r^{\prime }>_{\mathbb{E}^{3}}^{2})^{1/2}d\
\sigma ,  \label{A6.12}
\end{equation}%
giving rise to the following dynamical equation 
\begin{eqnarray}
dP/d\tau &:&=\delta \mathcal{L}^{(\tau )}/\delta r=-(|r^{\prime }|^{2}(1+|%
\dot{r}-\dot{r}_{f}|^{2})-<\dot{r}-\dot{r}_{f},r^{\prime }>_{\mathbb{E}%
^{3}}^{2})^{1/2}\ \nabla \bar{W}^{\prime }+  \label{A6.13} \\
&&+\frac{\partial }{\partial \sigma }\left\{ \frac{\bar{W}^{\prime }(1+|\dot{%
r}-\dot{r}_{f}|^{2}\hat{T}_{\dot{r}-\dot{r}_{f}})r^{\prime }}{(|r^{\prime
}|^{2}(1+|\dot{r}-\dot{r}_{f}|^{2})-<\dot{r}-\dot{r}_{f},r^{\prime }>_{%
\mathbb{E}^{3}}^{2})^{1/2}}\right\} ,  \notag
\end{eqnarray}%
where the generalized momentum density 
\begin{equation}
P:=\frac{-\bar{W}^{\prime }|r^{\prime }|^{2}\hat{N}(\dot{r}-\dot{r}_{f})}{%
(|r^{\prime }|^{2}(1+|\dot{r}-\dot{r}_{f}|^{2})-<\dot{r}-\dot{r}%
_{f},r^{\prime }>_{\mathbb{E}^{3}\mathbb{E}^{3}}^{2})^{1/2}}.  \label{A6.14}
\end{equation}%
Owing to (\ref{A6.14}), one can define 
\begin{eqnarray}
p &:&=\frac{-\bar{W}^{\prime }|r^{\prime }|^{2}\hat{N}\dot{r}}{(|r^{\prime
}|^{2}(1+|\dot{r}-\dot{r}_{f}|^{2})-<\dot{r}-\dot{r}_{f},r^{\prime }>_{%
\mathbb{E}^{3}}^{2})^{1/2}}=  \label{A6.16} \\
&=&-\frac{-\bar{W}^{\prime }|r^{\prime }|\hat{N}dr}{d\tau (1+|\dot{r}-\dot{r}%
_{f}|^{2}-<\dot{r}-\dot{r}_{f},r^{\prime }/|r^{\prime }|>_{\mathbb{E}%
^{3}}^{2})^{1/2}}=-\bar{W}^{\prime }|r^{\prime }|\hat{N}u^{\prime }=-\bar{W}%
|r^{\prime }|\hat{N}u  \notag
\end{eqnarray}%
as the local string momentum density and 
\begin{eqnarray}
\mathbb{\xi }|r^{\prime }|A &:&=\frac{\bar{W}^{\prime }|r^{\prime }|^{2}\hat{%
N}\dot{r}_{f}}{(|r^{\prime }|^{2}(1+|\dot{r}-\dot{r}_{f}|^{2})-<\dot{r}-\dot{%
r}_{f},r^{\prime }>_{\mathbb{E}^{3}}^{2})^{1/2}}=  \label{A6.17} \\
&=&\frac{\bar{W}^{\prime }|r^{\prime }|\hat{N}dr_{f}}{d\tau (1+|\dot{r}-\dot{%
r}_{f}|^{2}-<\dot{r}-\dot{r}_{f},r^{\prime }/|r^{\prime }|>_{\mathbb{E}%
^{3}}^{2})^{1/2}}=\bar{W}^{\prime }|r^{\prime }|\hat{N}u_{f}^{\prime }=\bar{W%
}|r^{\prime }|\hat{N}u_{f}  \notag
\end{eqnarray}%
as the external vector magnetic potential density, where $\mathbb{\xi }\in 
\mathbb{R}$ is a uniform charge density, distributed along the string
length. Thus, equation (\ref{A6.13}) reduces to%
\begin{equation}
\begin{array}{c}
d(p+\mathbb{\xi }|r^{\prime }|A)/dt^{\prime }=-|r^{\prime }|\ \nabla \bar{W}%
^{\prime }+(1-|u^{\prime }-u_{f}^{\prime }|^{2}+<u^{\prime }-u_{f}^{\prime
},r^{\prime }>_{\mathbb{E}^{3}}^{2})^{-1/2}\times \\ 
\end{array}
\label{A6.17a}
\end{equation}%
\begin{equation*}
\times \frac{\partial }{\partial \sigma }\left\{ \frac{\bar{W}^{\prime
}(1-|u^{\prime }-u_{f}^{\prime }|^{2}+<u^{\prime }-u_{f}^{\prime },r^{\prime
}>_{\mathbb{E}^{3}}^{2}+|u^{\prime }-u_{f}^{\prime }|^{2}\hat{T}_{u^{\prime
}-u_{f}^{\prime }}r^{\prime }/r^{\prime })}{(1-|u^{\prime }-u_{f}^{\prime
}|^{2}+<u^{\prime }-u_{f}^{\prime },r^{\prime }>_{\mathbb{E}^{3}}^{2})^{1/2}}%
\right\}
\end{equation*}%
with respect to the moving reference frame $\mathcal{K}_{f}^{\prime },$ or
equivalently, to%
\begin{equation}
\begin{tabular}{l}
$d(p+\mathbb{\xi }|r^{\prime }|A)/dt=-|r^{\prime }|$ $\nabla \bar{W}%
(1-|u_{f}|^{2})+\ \ $ \\ 
\\ 
$+(1-|u_{f}|^{2})(1-|u_{f}|^{2}-|u\ -u_{f}|^{2}+<u-u_{f},r^{\prime }>_{%
\mathbb{E}^{3}}^{2})^{-1/2}\times $ \\ 
\\ 
$\times \frac{\partial }{\partial \sigma }\left\{ \frac{\bar{W}%
(1-|u_{f}|^{2}-|u-u_{f}|^{2}+<u-u_{f},r^{\prime }>_{\mathbb{E}%
^{3}}^{2}+|u-u_{f}|^{2}\hat{T}_{u-u_{f}}r^{\prime }/r^{\prime })}{%
(1-|u_{f}|^{2}-|u-u_{f}|^{2}+<u-u_{f},r^{\prime }>_{\mathbb{E}%
^{3}}^{2})^{1/2}}\right\} $%
\end{tabular}
\label{A6.17b}
\end{equation}%
with respect to the moving laboratory frame $\mathcal{K}.$ \ The latter can
be easily rewritten also as the Lorentz type force expression

\begin{equation}
\begin{array}{c}
dp/dt=\mathbb{\xi }|r^{\prime }|E+\mathbb{\xi }|r^{\prime }|u\times B-%
\mathbb{\xi }|r^{\prime }|\ \nabla <u-u_{f},A>_{\mathbb{E}^{3}}+ \\ 
\\ 
+(1-|u_{f}|^{2})(1-|u_{f}|^{2}-|u\ -u_{f}|^{2}+<u-u_{f},r^{\prime }>_{%
\mathbb{E}^{3}}^{2})^{-1/2}\times \\ 
\\ 
\times \frac{\partial }{\partial \sigma }\left\{ \frac{\bar{W}%
(1-|u_{f}|^{2}-|u-u_{f}|^{2}+<u-u_{f},r^{\prime }>_{\mathbb{E}%
^{3}}^{2}+|u-u_{f}|^{2}\hat{T}_{u-u_{f}}r^{\prime }/r^{\prime })}{%
(1-|u_{f}|^{2}-|u-u_{f}|^{2}+<u-u_{f},r^{\prime }>_{\mathbb{E}%
^{3}}^{2})^{1/2}}\right\} ,%
\end{array}
\label{A6.18}
\end{equation}%
where $B=\ \nabla \times A$ means, as usual, the external magnetic field and 
\begin{equation}
E\ =\ \partial A\ /\partial t-\ \nabla \bar{W}\ \   \label{A6.19a}
\end{equation}%
means the corresponding electric field, acting on the string. Making use of
the standard scheme, one can derive, as above, the Hamiltonian
interpretation of dynamical equations \ (\ref{A6.13}), but which will not be
here discussed.

\section{\label{Sec_7}The generalized Fock spaces, quantum currents algebra
representations and electrodynamics}

\subsection{Preliminaries: Fock space and its realizations}

Let $\Phi $ be a separable Hilbert space, $F$ be a topological real linear
space and $\mathcal{A}:=\left\{ A(f):f\in F\right\} $ a family of commuting
self-adjoint operators in $\Phi $ (i.e. these operators commute in the sense
of their resolutions of the identity). Consider the Gelfand rigging \cite%
{Bere-1,GeVi} of the Hilbert space ${\Phi ,}$ i.e., a chain 
\begin{equation}
\mathcal{D}\subset {\Phi }_{+}\subset {\Phi }\subset {\Phi }_{-}\subset 
\mathcal{D^{^{\prime }}}  \label{FQeq2.1}
\end{equation}%
in which ${\Phi }_{+}$ and ${\Phi }_{-}$ are further Hilbert spaces, and the
inclusions are dense and continuous, i.e. ${\Phi }_{+}$ is topologically
(densely and continuously) and quasi-nucleus (the inclusion operator $i:{%
\Phi }_{+}\longrightarrow {\Phi }$ is of the Hilbert - Schmidt type)
embedded into ${\Phi },$ the space $\ {\Phi }_{-}$ is the dual of ${\Phi }%
_{+}$ with respect to the scalar product $<.,.>_{{\Phi }}$ in ${\Phi },$ and 
$\mathcal{D}$ is a separable projective limit of Hilbert spaces,
topologically embedded into ${\Phi }_{+}.$ Then, the following structural
theorem \cite{Gold,GoGrPoSh,GoMeSh,GoMeSh-1} holds:

\begin{theorem}
\label{FQth2.1} Assume that the family of operators $\mathcal{A}$ \
satisfies the following conditions:

a)$\quad \mathcal{D}\subset DomA(f),\;f\in F,$ and the closure of the
operator $A(f)\uparrow \mathcal{D}$ coincides with $A(f)$ for any $f\in F,$
that is $A(f)\uparrow \mathcal{D}=A(f)$ in ${\Phi };$

b) the Range $A(f)\uparrow \mathcal{D}\subset {\Phi }_{+}$ \ for any $f\in F$%
;

c) for every $\psi \in \mathcal{D}$ the mapping $F\ni f\longrightarrow
A(f)\psi \in {\Phi }_{+}$ is linear and continuous;

d) there exists a strong cyclic (vacuum) vector $|\Omega \rangle \in
\bigcap_{f\in F}DomA(f),$ such that the set of all vectors $|\Omega \rangle
, $ $\prod_{j=1}^{n}A(f_{j})|\Omega \rangle ,$ $n\in \mathbb{Z}_{+},$ is
total in ${\Phi }_{+}$ (i.e. their linear hull is dense in ${\Phi }_{+}$).

Then there exists a probability measure $\mu $ on $(F^{\prime },C_{\sigma
}(F^{\prime }))$, where $F^{\prime }$ is the dual of $F$ and $C_{\sigma
}(F^{\prime })$ is the $\sigma -$algebra generated by cylinder sets in $%
F^{\prime }$ such that, for $\mu -$almost every $\eta \in F^{\prime }$ there
is a generalized joint eigenvector $\omega (\eta )\in {\Phi }_{-}$ of the
family $\mathcal{A},$ corresponding to the joint eigenvalue $\eta \in
F^{\prime },$ that is 
\begin{equation}
<\omega (\eta ),A(f)\psi >_{{\Phi }}=\eta (f)<\omega (\eta ),\psi >_{{\Phi }}
\label{FQeq2.1a}
\end{equation}%
with $\eta (f)\in \mathbb{R}$ denoting the pairing between $F$ and $%
F^{\prime }$.

The mapping 
\begin{equation}
{\Phi }_{+}\ni \psi \longrightarrow <\omega (\eta ),\psi >_{{\Phi }}:=\psi
(\eta )\in \mathbb{C}  \label{FQeq2.1b}
\end{equation}%
for any $\eta \in F^{\prime }$ can be continuously extended to a unitary
surjective operator $\mathcal{F}:{\Phi }\longrightarrow L_{2}^{(\mu
)}(F^{\prime };\mathbb{C}),$ where 
\begin{equation}
\mathcal{F}\text{ }\psi (\eta ):=\psi (\eta )  \label{FQeq2.1c}
\end{equation}%
for any $\eta \in F^{\prime }$ is a generalized Fourier transform,
corresponding to the family $\mathcal{A}$. Moreover, the image of the
operator $A(f)$, $f\in F^{\prime }$, under the $\mathcal{F}-$mapping is the
operator of multiplication by the function $F^{\prime }\ni \eta \rightarrow
\eta (f)\in \mathbb{C}.$
\end{theorem}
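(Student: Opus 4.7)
The plan is to deduce the statement from Berezansky's nuclear (projection) spectral theorem applied to the commuting self-adjoint family $\mathcal{A}$ in the Gelfand rigging \eqref{FQeq2.1}, using the cyclic vector $|\Omega\rangle$ to pull back a concrete probability measure onto $F'$. First I would invoke the joint spectral theorem for a commuting family of self-adjoint operators, whose commutativity is understood in the strong (resolution-of-identity) sense postulated in the hypothesis. This yields a projection-valued measure $E$ on the cylinder $\sigma$-algebra $C_\sigma(F')$ such that
\begin{equation*}
A(f)=\int_{F'}\eta(f)\,dE(\eta),\qquad f\in F,
\end{equation*}
where the linear continuity in $f$ from condition (c) ensures that the evaluation map $\eta\mapsto\eta(f)$ is measurable and makes $F'$ (with $C_\sigma(F')$) an appropriate joint spectrum. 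The probability measure is then defined by $\mu(\Delta):=\langle\Omega,E(\Delta)\Omega\rangle_{\Phi}$, after normalizing $\|\Omega\|_{\Phi}=1$; cyclicity in (d) guarantees that $\mu$ carries all the spectral information of $\mathcal{A}$.

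Next I would construct the generalized eigenvectors $\omega(\eta)\in\Phi_-$ by invoking the Berezansky nuclear spectral theorem, for which the Hilbert--Schmidt (quasi-nuclear) inclusion $\Phi_+\hookrightarrow\Phi$ built into the rigging is exactly the hypothesis needed. Condition (b), which says that the range of each $A(f)\!\upharpoonright\!\mathcal{D}$ lies in $\Phi_+$, together with (c), provides the smoothness on the dense domain $\mathcal{D}$ that lets one diagonalize $E$ against the rigging: for $\mu$-a.e.\ $\eta\in F'$ one obtains $\omega(\eta)\in\Phi_-$ satisfying the eigenvalue identity \eqref{FQeq2.1a}, measurable as a function of $\eta$ in the weak sense $\eta\mapsto\langle\omega(\eta),\psi\rangle_\Phi$ for each fixed $\psi\in\Phi_+$.

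Having produced $\omega(\eta)$, I would then define $\mathcal{F}$ by \eqref{FQeq2.1b} on $\Phi_+$ and verify it is isometric on the algebraic cyclic subspace
\begin{equation*}
\mathcal{P}:=\operatorname{span}\bigl\{|\Omega\rangle,\ \textstyle\prod_{j=1}^{n}A(f_j)|\Omega\rangle\,:\,n\in\mathbb{Z}_+,\ f_j\in F\bigr\}.
\end{equation*}
On this subspace the computation reduces, via the spectral theorem, to
\begin{equation*}
\bigl\|\textstyle\prod_j A(f_j)\Omega\bigr\|_{\Phi}^{2}=\int_{F'}\bigl|\textstyle\prod_j \eta(f_j)\bigr|^{2}\,d\mu(\eta),
\end{equation*}
identifying the image with the polynomial-type functions in $\eta$, which are total in $L_2^{(\mu)}(F';\mathbb{C})$ by construction of $\mu$ as the spectral measure at the cyclic vector. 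Density of $\mathcal{P}$ in $\Phi_+$ (hypothesis (d)) and continuity of $\omega(\eta)$ as an element of $\Phi_-$ let me extend $\mathcal{F}$ by continuity to a unitary $\mathcal{F}:\Phi\to L_2^{(\mu)}(F';\mathbb{C})$, while the eigenvalue relation \eqref{FQeq2.1a} directly yields $(\mathcal{F}A(f)\mathcal{F}^{-1}\psi)(\eta)=\eta(f)\psi(\eta)$, proving the final multiplication-operator assertion.

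The main obstacle will be the construction of the generalized eigenvector field $\eta\mapsto\omega(\eta)\in\Phi_-$ with the correct $\mu$-measurability and the pointwise eigen-equation simultaneously for the whole family $\mathcal{A}$ (not just a single operator): this is precisely where the infinite-dimensionality of $F$ matters, and where one must appeal to the full strength of the nuclear spectral theorem in the Hilbert--Schmidt rigging, rather than to the elementary one-operator version. A secondary delicate point will be verifying that the linear span $\mathcal{P}$ is mapped onto a dense subset of $L_2^{(\mu)}(F';\mathbb{C})$, which follows from the totality in (d) combined with cyclicity of $|\Omega\rangle$ for the von Neumann algebra generated by $\{E(\Delta)\}$, but requires checking that no nontrivial $L_2^{(\mu)}$-function is orthogonal to all polynomial symbols $\prod_j\eta(f_j)$.
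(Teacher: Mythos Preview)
The paper does not supply its own proof of this theorem: it is stated as a known structural result and attributed to Berezansky \cite{Ber1,BerK}. There is therefore nothing in the paper to compare your argument against line by line.

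That said, your sketch is the right outline and tracks the standard Berezansky proof from the cited references: one builds a joint resolution of the identity for the commuting family, uses the Hilbert--Schmidt embedding $\Phi_+\hookrightarrow\Phi$ to differentiate the projection-valued measure and extract generalized eigenvectors $\omega(\eta)\in\Phi_-$, defines $\mu$ as the spectral measure at the cyclic vector $|\Omega\rangle$, and checks that the resulting transform is unitary by the totality hypothesis (d). Your identification of the two genuinely delicate points --- simultaneous diagonalization over an infinite-dimensional parameter space $F$ (requiring the nuclear or Hilbert--Schmidt hypothesis, not just a single-operator spectral theorem) and surjectivity onto $L_2^{(\mu)}(F';\mathbb{C})$ --- is accurate. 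One small caution: the density of polynomial symbols $\prod_j\eta(f_j)$ in $L_2^{(\mu)}(F';\mathbb{C})$ is not automatic from totality in $\Phi_+$ alone; in Berezansky's treatment this follows because $\mu$ is \emph{defined} as the cyclic-vector spectral measure, so the image of the cyclic subspace is dense by construction of the spectral representation, not by a separate polynomial-density argument.
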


We assume additionally that the main Hilbert space $\Phi $ possesses the
standard Fock space (bose)-structure \cite{BoBo,Bere,BoPrTa}, that is 
\begin{equation}
{\Phi }=\oplus _{n\in \mathbb{Z}_{+}}{\Phi }_{(s)}^{\otimes n},
\label{FQeq2.1d}
\end{equation}%
where subspaces ${\ \Phi }_{(s)}^{\otimes n},$ $n\in \mathbb{Z}_{+}$, are
the symmetrized tensor products of a Hilbert space $\mathcal{H}:=L_{2}(%
\mathbb{R}^{m};\mathbb{C}).$ If a vector $g:=(g_{0},g_{1},...,g_{n},...)\in
\Phi $, its norm 
\begin{equation}
\Vert g\Vert _{\Phi }:=\left( \sum_{n\in \mathbb{Z}_{+}}\Vert g_{n}\Vert
_{n}^{2}\right) ^{1/2},  \label{FQeq2.2}
\end{equation}%
where $g_{n}\in {\Phi }_{(s)}^{\otimes n}\simeq L_{2,(s)}((\mathbb{R}%
^{m})^{\otimes n};\mathbb{C})$ and $\parallel ...\parallel _{n}$ is the
corresponding norm in ${\Phi }_{(s)}^{\otimes n}$ for all $n\in \mathbb{Z}%
_{+}$. Note here that concerning the rigging structure (\ref{FQeq2.1}),
there holds the corresponding rigging for the Hilbert spaces ${\Phi }%
_{(s)}^{\otimes n}$, $n\in \mathbb{Z}_{+}$, that is 
\begin{equation}
\mathcal{D}_{(s)}^{n}\subset {\Phi }_{(s),+}^{\otimes n}\subset {\Phi }%
_{(s)}^{\otimes n}\subset {\Phi }_{(s),-}^{\otimes n}  \label{FQeq2.3}
\end{equation}%
with some suitably chosen dense and separable topological spaces of
symmetric functions $\mathcal{D}_{(s)}^{n},$ $n\in \mathbb{Z}_{+}.$
Concerning expansion (\ref{FQeq2.1}) we obtain by means of projective and
inductive limits \cite{Bere-1,GeVi} the quasi-nucleus rigging of the Fock
space $\Phi $ in the form (\ref{FQeq2.1}):%
\begin{equation*}
\mathcal{D}\subset {\Phi }_{+}\subset {\Phi }\subset {\Phi }_{-}\subset 
\mathcal{D^{^{\prime }}}.
\end{equation*}

Consider now any vector $|(\alpha )_{n}\rangle \in {\Phi }_{(s)}^{\otimes
n}, $ $n\in \mathbb{Z}_{+},$ which can be written \cite{Bere,BoBo,KoSt} in
the following canonical Dirac ket-form: 
\begin{equation}
|(\alpha )_{n}\rangle :=|\alpha _{1},\alpha _{2},...,\alpha _{n}\rangle ,
\label{FQeq2.4}
\end{equation}%
where, by definition, \ 
\begin{equation}
|\alpha _{1},\alpha _{2},...,\alpha _{n}\rangle :=\frac{1}{\sqrt{n!}}%
\sum_{\sigma \in S_{n}}|\alpha _{\sigma (1)}\rangle \otimes |\alpha _{\sigma
(2)}\rangle ...|\alpha _{\sigma (n)}\rangle  \label{FQeq2.5}
\end{equation}%
and $|\alpha _{j}\rangle \in {\Phi }_{(s)}^{\otimes 1}(\mathbb{R}^{m};%
\mathbb{C}):=\mathcal{H}$ for any fixed $j\in \mathbb{Z}_{+}$. The
corresponding scalar product of base vectors as (\ref{FQeq2.5}) is given as
follows: 
\begin{equation}
\begin{array}{c}
\langle (\beta )_{n}|(\alpha )_{n}\rangle :=\langle \beta _{n},\beta
_{n-1},...,\beta _{2},\beta _{1}|\alpha _{1},\alpha _{2},...,\alpha
_{n-1},\alpha _{n}\rangle \\[5pt] 
=\sum_{\sigma \in S_{n}}\langle \beta _{1}|\alpha _{\sigma (1)}\rangle
...\langle \beta _{n}|\alpha _{\sigma (n)}\rangle :=per\{\langle \beta
_{i}|\alpha _{j}\rangle :i,j=\overline{1,n}\},%
\end{array}
\label{FQeq2.6}
\end{equation}%
where \textquotedblleft $per$\textquotedblright\ denotes the permanent of
matrix and $\langle .|.\rangle $ is the corresponding product in the Hilbert
space $\mathcal{H}$. Based now on representation (\ref{FQeq2.4}) one can
define an operator $a^{+}(\alpha ):{\Phi }_{(s)}^{\otimes n}\longrightarrow {%
\Phi }_{(s)}^{\otimes (n+1)}$ for any $|\alpha \rangle \in \mathcal{H}$ as
follows: 
\begin{equation}
a^{+}(\alpha )|\alpha _{1},\alpha _{2},...,\alpha _{n}\rangle :=|\alpha
,\alpha _{1},\alpha _{2},...,\alpha _{n}\rangle ,  \label{FQeq2.7}
\end{equation}%
which is called the "\textit{creation}" operator in the Fock space $\Phi $.
The adjoint operator $a(\beta ):=(a^{+}(\beta ))^{\ast }:{\Phi }%
_{(s)}^{\otimes (n+1)}\longrightarrow {\Phi }_{(s)}^{\otimes n}$ with
respect to the Fock space ${\Phi }$ (\ref{FQeq2.1}) for any $|\beta \rangle
\in \mathcal{H},$ called the "\textit{annihilation}" operator, acts as
follows: 
\begin{equation}
a(\beta )|\alpha _{1},\alpha _{2},...,\alpha _{n+1}\rangle :=\sum_{\sigma
\in S_{n}}\langle \beta ,\alpha _{j}\rangle |\alpha _{1},\alpha
_{2},...,\alpha _{j-1},\hat{\alpha}_{j},\alpha _{j+1},...,\alpha
_{n+1}\rangle ,  \label{FQeq2.8}
\end{equation}%
where the $"hat"$ over a vector denotes that it should be omitted from the
sequence.

It is easy to check that the commutator relationship 
\begin{equation}
\lbrack a(\alpha ),a^{+}(\beta )]=\langle \alpha ,\beta \rangle
\label{FQeq2.9}
\end{equation}%
holds for any vectors $|\alpha \rangle \in \mathcal{H}$ and $|\beta \rangle
\in \mathcal{H}$. Expression (\ref{FQeq2.9}), owing to the rigged structure (%
\ref{FQeq2.1}), can be naturally extended to the general case, when vectors $%
\ |\alpha \rangle $ and $|\beta \rangle \in \mathcal{H}_{-}$, conserving its
form. In particular, taking $|\alpha \rangle :=|\alpha (x)\rangle =\frac{1}{%
\sqrt{2\pi }}e^{i\langle \lambda ,x\rangle }\in \mathcal{H}_{-}:=L_{2,-}(%
\mathbb{R}^{m};\mathbb{C})$ for any $x\in \mathbb{R}^{m}$, one easily gets
from (\ref{FQeq2.9}) that 
\begin{equation}
\lbrack a(x),a^{+}(y)]=\delta (x-y),  \label{FQeq2.10}
\end{equation}%
where we put, by definition, $a^{+}(x):=a^{+}(\alpha (x))$ and $%
a(y):=a(\alpha (y))$ for all $x,y\in \mathbb{R}^{m}$ and denoted by $\delta
(\cdot )$ the classical Dirac delta-function.

The construction above makes it possible to observe easily that there exists
a unique vacuum vector $|\Omega \rangle \in \mathcal{H}_{+}$, such that for
any $x\in \mathbb{R}^{m}$ 
\begin{equation}
a(x)|\Omega \rangle =0,  \label{FQeq2.11}
\end{equation}%
and the set of vectors 
\begin{equation}
\left( \prod_{j=1}^{n}a^{+}(x_{j})\right) |\Omega \rangle \in {\Phi }%
_{(s)}^{\otimes n}  \label{FQeq2.12}
\end{equation}%
is total in ${\Phi }_{(s)}^{\otimes n}$, that is their linear integral hull
over the dual functional spaces $\hat{\Phi}_{(s)}^{\otimes n}$ is dense in
the Hilbert space ${\Phi }_{(s)}^{\otimes n}$ for every $n\in \mathbb{Z}_{+}$%
. This means that for any vector $g\in \Phi $ the following representation 
\begin{equation}
g=\oplus _{n\in \mathbb{Z}_{+}}\int_{(\mathbb{R}^{m})^{n}}\hat{g}%
_{n}(x_{1},...,x_{n})a^{+}(x_{1})a^{+}(x_{2})...a^{+}(x_{n})|\Omega \rangle
\label{FQeq2.13}
\end{equation}%
holds with the Fourier type coefficients $\hat{g}_{n}\in \hat{\Phi}%
_{(s)}^{\otimes n}$ for all $n\in \mathbb{Z}_{+}$, with $\hat{\Phi}%
_{(s)}^{\otimes 1}:=\mathcal{H}\simeq {L}_{2}(\mathbb{R}^{m};\mathbb{C}).$
The latter is naturally endowed with the Gelfand type quasi-nucleus rigging
dual to 
\begin{equation}
\mathcal{H}_{+}\subset \mathcal{H}\subset \mathcal{H}_{-},  \label{FQeq2.14}
\end{equation}%
making it possible to construct a quasi-nucleolus rigging of the dual Fock
space ${\hat{\Phi}}:=\oplus _{n\in \mathbb{Z}_{+}}{\hat{\Phi}}%
_{(s)}^{\otimes n}.$ Thereby, chain (\ref{FQeq2.14}) generates the dual Fock
space quasi-nucleolus rigging 
\begin{equation}
\mathcal{\hat{D}}\subset {\hat{\Phi}}_{+}\subset {\hat{\Phi}}\subset {\hat{%
\Phi}}_{-}\subset \mathcal{\hat{D}}^{\prime }  \label{FQeq2.15}
\end{equation}%
with respect to the central Fock type Hilbert space ${\hat{\Phi}},$ where $%
\mathcal{\hat{D}}\simeq \mathcal{D},$ easily following from (\ref{FQeq2.1})
and (\ref{FQeq2.14}).

Construct now the following self-adjoint operator $\ \ \rho (x):\Phi
\rightarrow \Phi $ \ \ as 
\begin{equation}
\rho (x):=a^{+}(x)a(x),  \label{FQeq2.16}
\end{equation}%
called the density operator at a point $x\in \mathbb{R}^{m},$ satisfying the
commutation properties: 
\begin{equation}
\begin{array}{c}
\lbrack \rho (x),\rho (y)]=0, \\[5pt] 
\lbrack \rho (x),a(y)]=-a(y)\delta (x-y), \\[5pt] 
\lbrack \rho (x),a^{+}(y)]=a^{+}(y)\delta (x-y)%
\end{array}
\label{FQeq2.17}
\end{equation}%
for all $y\in \mathbb{R}^{m}$.

Now, if to construct the following self-adjoint family $\mathcal{A}:=\left\{
\int_{\mathbb{R}^{m}}\rho (x)f(x)dx:f\in F\right\} $ of linear operators in
the Fock space $\Phi ,$\ where $F$ $:=\mathcal{S}(\mathbb{R}^{m};\mathbb{R})$
is the Schwartz functional space, one can derive, making use of Theorem \ref%
{FQth2.1}, that there exists the generalized Fourier transform (\ref%
{FQeq2.1c}), such that 
\begin{equation}
{\Phi }(\mathcal{H})=L_{2}^{(\mu )}(\mathcal{S}^{\prime };\mathbb{C})\simeq
\int_{\mathcal{S}^{\prime }}^{\oplus }\Phi _{\eta }d\mu (\eta )
\label{FQeq2.17a}
\end{equation}%
for some Hilbert space sets $\Phi _{\eta },$ $\eta \in F^{\prime },$ \ and a
suitable measure $\mu $ on $\mathcal{S}^{\prime },$ \ with respect to which
the corresponding joint eigenvector $\omega (\eta )\in \Phi _{+}$ for any $%
\eta \in F^{\prime }$ generates the Fourier transformed family $\hat{u}%
=\left\{ \eta (f)\in \mathbb{R}:\;\;f\in F\right\} $. Moreover, if $\dim
\Phi _{\eta }=1$ for all $\ \eta \in F,$ the Fourier transformed eigenvector 
$\hat{\omega}(\eta ):=\Omega (\eta )=1$ for all $\eta \in F^{^{\prime }}.$

Now we will consider the family of self-adjoint operators $\hat{u}$ as
generating a unitary family $\mathcal{U}:=\left\{ U(f):f\in F\right\} =\exp
(i\hat{u}),$ where for any $\rho (f)\in \hat{u},$ $f\in F,$ the operator 
\begin{equation}
U(f):=\exp [i\rho (f)]  \label{FQeq2.18}
\end{equation}%
is unitary, satisfying the abelian commutation condition 
\begin{equation}
U(f_{1})U(f_{2})=U(f_{1}+f_{2})  \label{FQeq2.19}
\end{equation}%
for any $f_{1},f_{2}\in F.$

Since, in general, the unitary family $\mathcal{U}=\exp (i\hat{u})$ is
defined in some Hilbert space $\Phi $, not necessarily being of Fock type,
the important problem of describing its Hilbertian cyclic representation
spaces arises, within which the factorization 
\begin{equation}
\rho (f)=\int_{\mathbb{R}^{m}}a^{+}(x)a(x)f(x)dx  \label{FQeq2.20}
\end{equation}%
jointly with relationships (\ref{FQeq2.17}) hold for any $f\in F$. This
problem can be treated using mathematical tools devised both within the
representation theory of $C^{\ast }$-algebras \cite{Dira,Dira-1} and the
Gelfand--Vilenkin \cite{Gold,GeVi} approach. Below we will describe the main
features of the Gelfand--Vilenkin formalism, being much more suitable for
the task, providing a reasonably unified framework of constructing the
corresponding representations.

\begin{definition}
\label{FQdef2.1} Let $F$ be a locally convex topological vector space, $%
F_{0}\subset F$ be a finite dimensional subspace of $F$. \ Let $%
F^{0}\subseteq F^{\prime }$ be defined by 
\begin{equation}
F^{0}:=\left\{ \xi \in F^{\prime }:\;\;\xi |_{F_{0}}=0\right\} ,
\label{FQeq2.21}
\end{equation}%
and called the annihilator of $F_{0}$.
\end{definition}

The quotient space $F^{\prime 0}:=F^{\prime }/F^{0}$ may be identified with $%
F_{0}^{\prime }\subset F^{\prime }$, the adjoint space of $F_{0}$.

\begin{definition}
\label{FQdef2.2} Let $A\subseteq F^{\prime };$ then the subset

\begin{equation}
X_{F^{0}}^{(A)}:=\left\{ \xi \in F^{\prime }:\xi +F^{0}\subset A\right\}
\label{FQeq2.22}
\end{equation}%
is called the cylinder set with base $A$ and generating subspace $F^{0}$.
\end{definition}

\begin{definition}
\label{FQdef2.3} Let $n=\dim F_{0}=\dim F_{0}^{\prime }=\dim F^{\prime 0}.$
One says that a cylinder set $X^{(A)}$ has Borel base, if $A$ is Borel, when
regarded as a subset of $\mathbb{R}^{m}$.
\end{definition}

The family of cylinder sets with Borel base forms an algebra of sets.

\begin{definition}
\label{FQdef2.4} The measurable sets in $F^{\prime }$ are the elements of
the $\sigma -$ algebra generated by the cylinder sets with Borel base.
\end{definition}

\begin{definition}
\label{FQdef2.5} A cylindrical measure in $F^{\prime }$ is a real-valued $%
\sigma -$pre-additive function $\mu $ defined on the algebra of cylinder
sets with Borel base and satisfying the conditions $0\leq \mu (X)\leq 1$ for
any $X,$ $\mu (F^{\prime })=1$ and $\mu \left( \coprod_{j\in \mathbb{Z}%
_{+}}X_{j}\right) =\sum_{j\in \mathbb{Z}_{+}}\mu (X_{j}),$ if all sets $%
X_{j}\subset F^{\prime }$, $j\in \mathbb{Z}_{+}$, have a common generating
subspace $F_{0}\subset F$.
\end{definition}

\begin{definition}
\label{FQdef2.6} A cylindrical measure $\mu $ satisfies the commutativity
condition if and only if for any bounded continuous function $\alpha
:F^{n}\longrightarrow \mathbb{R}$ of $n\in \mathbb{Z}_{+}$ real variables
the function 
\begin{equation}
\alpha \lbrack f_{1},f_{2},...,f_{n}]:=\int_{F^{\prime }}\alpha (\eta
(f_{1}),\eta (f_{2}),...,\eta (f_{n}))d\mu (\eta )  \label{FQeq2.23}
\end{equation}%
is sequentially continuous in $f_{j}\in F,$ $j=\overline{1,m}.$ (It is well
known \cite{GeVi,Gold} that in countably normalized spaces the properties of
sequential and ordinary continuity are equivalent).
\end{definition}

\begin{definition}
\label{FQdef2.7} A cylindrical measure $\mu $ is countably additive if and
only if for any cylinder set $X=\coprod_{j\in \mathbb{Z}_{+}}X_{j}$, which
is the union of countably many mutually disjoints cylinder sets $%
X_{j}\subset F^{\prime },j\in \mathbb{Z}_{+},$ $\mu (X)=\sum_{j\in \mathbb{Z}%
_{+}}\mu (X_{j})$.
\end{definition}

The following propositions hold.

\begin{proposition}
\label{FQpr2.8} A countably additive cylindrical measure $\mu $ can be
extended to a countably additive measure on the $\sigma $- algebra,
generated by the cylinder sets with Borel base. Such a measure will also be
called a cylindrical measure.
\end{proposition}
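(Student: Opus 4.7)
The plan is to recognize the statement as a concrete instance of the Carath\'{e}odory--Hahn extension theorem, and to reduce the verification of its hypotheses to properties already contained in Definitions 6.4--6.7. First I would verify that the family $\mathcal{R}$ of cylinder sets with Borel base forms an algebra of subsets of $F^{\prime }$: closure under complementation is immediate, since if $X=X_{F^{0}}^{(A)}$ with $A\subset F_{0}^{\prime }$ Borel, then $F^{\prime }\setminus X=X_{F^{0}}^{(F_{0}^{\prime }\setminus A)}$. For closure under finite unions one uses the standard reduction: given two cylinder sets $X_{1}=X_{F_{1}^{0}}^{(A_{1})}$ and $X_{2}=X_{F_{2}^{0}}^{(A_{2})}$ with finite--dimensional generating subspaces $F_{1},F_{2}\subset F$, one replaces them by their representations relative to the common generating subspace $F_{0}:=F_{1}+F_{2}$, where both become cylinder sets with Borel base of the same (still finite) dimension, and then one appeals to the Borel structure of $F_{0}^{\prime }\simeq \mathbb{R}^{\dim F_{0}}$.

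Second, I would transfer the countable additivity hypothesis from Definition 6.7 to the algebra $\mathcal{R}$. Suppose $X=\coprod_{j\in \mathbb{Z}_{+}}X_{j}\in \mathcal{R}$ with each $X_{j}\in \mathcal{R}$, where the $X_{j}$ have \emph{a priori} distinct generating subspaces $F_{j}^{0}$. One cannot in general replace them by representations on a single common finite--dimensional subspace, so the argument requires an intermediate step: first one shows that for any finite $N$, the partial union $X\setminus \bigcup _{j\leq N}X_{j}$ belongs to $\mathcal{R}$ (by using the common refinement $F_{1}+\cdots +F_{N}$), and then one invokes the hypothesis of Definition 6.7 on appropriately chosen telescoping collections that share a common generating subspace. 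The precise bookkeeping is the main technical obstacle; it is exactly the step that distinguishes a \emph{pre}--additive cylindrical measure (Definition 6.5) from a genuinely countably additive one (Definition 6.7), and one must be careful to use only the latter hypothesis.

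Third, with countable additivity of $\mu $ on the algebra $\mathcal{R}$ established, I would invoke the Carath\'{e}odory--Hahn extension theorem: any countably additive non--negative set function of total mass $1$ on an algebra extends uniquely to a probability measure on the generated $\sigma $-algebra. Since $\mu \geq 0$, $\mu (F^{\prime })=1$, and $\mu $ is now countably additive on $\mathcal{R}$, the theorem applies verbatim and yields the desired extension. Uniqueness then follows from the $\pi $-$\lambda $ theorem, since $\mathcal{R}$ is closed under finite intersections and generates the target $\sigma $-algebra.

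The only genuinely delicate point, as indicated above, is the reconciliation of the ``common generating subspace'' restriction in Definition 6.7 with arbitrary countable disjoint unions of cylinder sets in $\mathcal{R}$. I would expect to handle this by a standard monotone--class type argument on the refined generating subspaces $F_{0}^{(N)}:=F_{1}+\cdots +F_{N}$, passing to the limit using the boundedness $\mu \leq 1$ and the sequential continuity encoded (for later use) in Definition 6.6. Everything else is routine measure--theoretic extension.
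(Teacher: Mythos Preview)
The paper does not supply a proof of this proposition; it is stated without argument as a standard measure--theoretic fact from the Gelfand--Vilenkin framework, so there is no paper proof to compare against. Your reduction to the Carath\'{e}odory--Hahn extension theorem is the natural and correct route.

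However, you have misread Definition~\ref{FQdef2.7}. The ``common generating subspace'' restriction appears only in Definition~\ref{FQdef2.5} (the definition of a cylindrical measure as a $\sigma$-\emph{pre}-additive set function); Definition~\ref{FQdef2.7} explicitly drops that restriction, requiring $\mu(X)=\sum_{j}\mu(X_{j})$ for \emph{any} cylinder set $X$ written as a countable disjoint union of cylinder sets $X_{j}$, with no hypothesis that the $X_{j}$ share a generating subspace. Consequently the ``genuinely delicate point'' you flag in your second and fourth paragraphs does not exist: a $\mu$ that is countably additive in the sense of Definition~\ref{FQdef2.7} is already countably additive on the algebra $\mathcal{R}$ in the standard sense, and Carath\'{e}odory applies immediately after your first paragraph. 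The telescoping and monotone--class manoeuvres, as well as the appeal to the continuity condition of Definition~\ref{FQdef2.6}, are unnecessary here; sequential continuity is the essential input for the \emph{next} proposition (Proposition~\ref{FQpr2.9}), which upgrades pre-additivity to countable additivity in the nuclear case, and that is the genuinely nontrivial step.
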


\begin{proposition}
\label{FQpr2.9} Let $F$ be a nuclear space. Then any cylindrical measure $%
\mu $ on $F^{\prime }$, satisfying the continuity condition, is countably
additive.
\end{proposition}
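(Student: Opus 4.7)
The plan is to reduce countable additivity to the classical Bochner--Minlos theorem via the characteristic functional of $\mu$. First, I would form the characteristic functional
\begin{equation*}
\chi(f) := \int_{F'} e^{i\eta(f)}\, d\mu(\eta), \qquad f \in F,
\end{equation*}
which makes sense because the integrand depends only on the one-dimensional projection $\eta \mapsto \eta(f)$, so the integral is well defined on the algebra of cylinder sets with Borel base. The continuity hypothesis (Definition~\ref{FQdef2.6}), applied with the bounded continuous function $\alpha(t_1,\ldots,t_n) = \exp(i\sum t_j)$ and its real/imaginary parts, gives sequential continuity of $\chi$ on every finite-dimensional subspace, and more generally continuity of $f \mapsto \chi(f)$ on $F$. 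Positive-definiteness of $\chi$ follows from the standard finite-dimensional check, since $\mu$ restricted to any finite-dimensional cylinder base is an ordinary (positive, mass-one) Borel premeasure on $\mathbb{R}^n$.

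Next I would invoke the Bochner--Minlos theorem: on a nuclear space $F$, every continuous positive-definite functional $\chi: F \to \mathbb{C}$ with $\chi(0)=1$ is the Fourier transform of a unique countably additive probability measure $\tilde{\mu}$ on the $\sigma$-algebra of Borel cylinder sets in $F'$. The crucial input here is nuclearity of $F$: for the given continuous seminorm $p$ controlling $\chi$, one can find a Hilbert seminorm $q \ge p$ whose associated inclusion map of Hilbert completions is Hilbert--Schmidt, and this is precisely what enables the tightness estimate that upgrades the finitely additive cylindrical premeasure to a countably additive one.

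It then remains to identify $\tilde{\mu}$ with $\mu$. For each finite-dimensional $F_0 \subset F$ with dual $F_0'$, the image measures of $\mu$ and $\tilde{\mu}$ under the canonical projection $F' \to F_0'$ are finite Borel measures on $\mathbb{R}^n$ with the same Fourier transform, namely $\chi\!\!\upharpoonright_{F_0}$, so they coincide by the uniqueness of the Fourier transform on $\mathbb{R}^n$. Hence $\mu$ and $\tilde{\mu}$ agree on every cylinder set with Borel base, and Proposition~\ref{FQpr2.8} (or a direct monotone-class argument) shows that this equality extends to the generated $\sigma$-algebra, giving countable additivity of $\mu$.

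The main obstacle is the Bochner--Minlos step itself, which is where nuclearity is genuinely used; once it is in hand, everything else is a routine finite-dimensional Fourier uniqueness argument. In an expository write-up I would therefore either cite Bochner--Minlos directly (it is the standard reference point in Gelfand--Vilenkin \cite{GV}) or sketch its proof: approximate $\chi$ by its restrictions to finite-dimensional subspaces, use the Hilbert--Schmidt factorization of the inclusion $F_q \hookrightarrow F_p$ to produce $\epsilon$-balls in $F'$ of large $\mu$-measure that are compact in a weaker topology, and conclude tightness, hence countable additivity.
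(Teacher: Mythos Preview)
The paper does not supply a proof of this proposition: it is stated as a background result in the Gelfand--Vilenkin preliminaries, with the implicit reference to \cite{GV,Gol}. So there is nothing in the paper to compare your argument against line by line.

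Your proposal is correct in spirit, but you should be aware that it is very close to circular. The proposition \emph{is} Minlos' theorem (in the form: a continuous cylindrical measure on the dual of a nuclear space is $\sigma$-additive), and the Bochner--Minlos theorem you invoke is just the reformulation of that same statement in terms of the characteristic functional. Passing from $\mu$ to $\chi$ and then quoting Bochner--Minlos to recover a countably additive $\tilde{\mu}$ with the same finite-dimensional marginals does not add any new input beyond Minlos' theorem itself; the identification $\tilde{\mu}=\mu$ via Fourier uniqueness on $\mathbb{R}^n$ is routine, as you note. So what you have written is really ``cite Minlos'' dressed up as a reduction, which is fine for an expository write-up---and matches what the paper does by simply stating the result---but it is not an independent proof.

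If you actually want to \emph{prove} the proposition rather than cite it, the substantive work is exactly what you identify in your last paragraph: using nuclearity to factor the relevant inclusion through a Hilbert--Schmidt map and deriving a tightness estimate for the finite-dimensional projections. That sketch is correct, and it is the standard argument in \cite{GV}. I would just be explicit that this \emph{is} the proof, not a lemma feeding into a separate Bochner--Minlos black box.
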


\begin{definition}
\label{FQdef2.10} Let $\mu $ be a cylindrical measure in $F^{\prime }$. The
Fourier transform of $\mu $ is the nonlinear functional 
\begin{equation}
\mathcal{L}(f):=\int_{F^{\prime }}\exp [i\eta (f)]d\mu (\eta ).
\label{FQeq2.24}
\end{equation}
\end{definition}

\begin{definition}
\label{FQdef2.11} The nonlinear functional $\mathcal{L}:F\longrightarrow 
\mathbb{C}$ on $F,$ defined by (\ref{FQeq2.24}), is called positive
definite, if and only if for all $f_{j}\in F$ and $\lambda _{j}\in \mathbb{C}%
,$ $j=\overline{1,n},$ the condition 
\begin{equation}
\sum_{j,k=1}^{n}\bar{\lambda}_{j}\mathcal{L}(f_{k}-f_{j})\lambda _{k}\geq 0
\label{FQeq2.25}
\end{equation}%
holds for any $n\in \mathbb{Z}_{+}$.
\end{definition}

\begin{proposition}
\label{FQpr2.12} The functional $\mathcal{L}:F\longrightarrow\mathbb{C}$ on $%
F $, defined by (\ref{FQeq2.24}), is the Fourier transform of a cylindrical
measure on $F^{\prime }$, if and only if it is positive definite,
sequentially continuous and satisfying the condition $\mathcal{L}(0)=1$.
\end{proposition}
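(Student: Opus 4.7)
The plan is to prove the two directions separately, treating necessity as a routine computation and concentrating the real work on sufficiency via a reduction to the classical finite-dimensional Bochner theorem.

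First, for the \emph{necessity} direction, assume $\mathcal{L}$ arises from a cylindrical measure $\mu$ via (\ref{FQeq2.24}). The normalization $\mathcal{L}(0)=1$ follows immediately from $\mu(F')=1$ in Definition~\ref{FQdef2.5}. Positive definiteness follows from the pointwise identity
\begin{equation*}
\sum_{j,k=1}^{n}\bar{\lambda}_{j}\lambda_{k}\mathcal{L}(f_{k}-f_{j})=\int_{F'}\Bigl|\sum_{j=1}^{n}\lambda_{j}e^{i\eta(f_{j})}\Bigr|^{2}d\mu(\eta)\geq 0,
\end{equation*}
and sequential continuity of $\mathcal{L}$ follows from the commutativity condition of Definition~\ref{FQdef2.6} applied to $\alpha(t_{1},\dots,t_{n})=\exp(i\sum_{j}t_{j})$, together with the dominated convergence theorem.

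Second, for the \emph{sufficiency} direction, I would build the cylindrical measure subspace by subspace. Fix any finite-dimensional $F_{0}\subset F$ with basis $\{f_{1},\dots,f_{n}\}$, and consider the restriction $\mathcal{L}_{F_{0}}:\mathbb{R}^{n}\to\mathbb{C}$ defined by $\mathcal{L}_{F_{0}}(t_{1},\dots,t_{n}):=\mathcal{L}(\sum_{j}t_{j}f_{j})$. Because $\mathcal{L}$ is positive definite and sequentially continuous on $F$ with $\mathcal{L}(0)=1$, the function $\mathcal{L}_{F_{0}}$ is a continuous positive definite function on $\mathbb{R}^{n}$ with $\mathcal{L}_{F_{0}}(0)=1$, so the classical Bochner theorem produces a unique probability measure $\mu_{F_{0}}$ on the Borel sets of $F_{0}'\cong\mathbb{R}^{n}$ whose Fourier transform is $\mathcal{L}_{F_{0}}$. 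I would then define, for each cylinder set $X_{F^{0}}^{(A)}$ with Borel base $A\subset F_{0}'$, the value $\mu(X_{F^{0}}^{(A)}):=\mu_{F_{0}}(A)$, following Definition~\ref{FQdef2.2}.

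The key verification is that this assignment is well defined and pre-additive on the algebra of cylinder sets with Borel base. If $F_{0}\subset F_{1}\subset F$ are two nested finite-dimensional subspaces, the canonical projection $\pi_{10}:F_{1}'\to F_{0}'$ intertwines the two Bochner measures: uniqueness in the finite-dimensional Bochner theorem, combined with the identity $\mathcal{L}_{F_{0}}=\mathcal{L}_{F_{1}}\circ\pi_{10}^{*}$, forces $(\pi_{10})_{*}\mu_{F_{1}}=\mu_{F_{0}}$. This consistency guarantees that different representations of the same cylinder set as $X_{F^{0}}^{(A)}$ yield the same value of $\mu$, and that finite additivity on a common generating subspace reduces to finite additivity of the corresponding $\mu_{F_{0}}$. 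Hence $\mu$ is a well-defined cylindrical measure in the sense of Definition~\ref{FQdef2.5}, and by construction its Fourier transform coincides with $\mathcal{L}$ on every finite-dimensional subspace, hence on all of $F$.

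The main obstacle I anticipate is the consistency step: verifying that the finite-dimensional Bochner measures $\{\mu_{F_{0}}\}_{F_{0}}$ patch together into a genuine set function on the cylinder algebra, independent of the chosen generating subspace. This requires carefully tracking how a cylinder set $X_{F^{0}}^{(A)}$ re-expresses itself when $F_{0}$ is enlarged, and applying the uniqueness clause of finite-dimensional Bochner along each projection $\pi_{10}^{*}$; the sequential continuity of $\mathcal{L}$ is exactly what is needed to push this through for arbitrary enlargements. Note that countable additivity is \emph{not} needed at this stage — it is supplied separately by Proposition~\ref{FQpr2.9} under the additional nuclearity hypothesis — so the present proposition only requires the cylinder-set pre-additivity established above.
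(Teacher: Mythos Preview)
The paper does not supply its own proof of this proposition; it is quoted as a standard result from the Gelfand--Vilenkin framework \cite{GV} and stated without argument. Your proposal is the standard proof and is essentially correct: necessity is a direct computation, and sufficiency proceeds by restricting $\mathcal{L}$ to finite-dimensional subspaces, invoking the classical Bochner theorem there, and then checking projective consistency of the resulting family $\{\mu_{F_{0}}\}$ via the uniqueness clause of Bochner.

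One small point worth tightening in the necessity direction: you derive sequential continuity of $\mathcal{L}$ by appealing to the commutativity condition of Definition~\ref{FQdef2.6}, but the proposition as stated does not explicitly assume that the cylindrical measure satisfies that condition. In the Gelfand--Vilenkin setup this is implicitly part of the package (the bijection is really between cylindrical measures satisfying the continuity condition and sequentially continuous positive-definite normalized functionals), so you should either note that the commutativity condition is being tacitly assumed, or observe that without it one only obtains the weaker equivalence ``positive definite with $\mathcal{L}(0)=1$'' $\Longleftrightarrow$ ``Fourier transform of some cylindrical measure,'' with sequential continuity of $\mathcal{L}$ corresponding precisely to the commutativity condition on $\mu$. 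Apart from this bookkeeping, the argument is sound.
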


Suppose now that we have a continuous unitary representation of the unitary
family $\mathcal{U}$ in a Hilbert space $\Phi $ with a cyclic vector $%
|\Omega \rangle \in \Phi $. Then we can put 
\begin{equation}
\mathcal{L}(f):=\langle \Omega |U(f)|\Omega \rangle  \label{FQeq2.26}
\end{equation}%
for any $f\in F:=\mathcal{S}$, being the Schwartz space on $\mathbb{R}^{m}$,
and observe that functional (\ref{FQeq2.26}) is continuous on $F$ owing to
the continuity of the representation. Therefore, this functional is the
generalized Fourier transform of a cylindrical measure $\mu $ on $\mathcal{S}%
^{\prime }:$ 
\begin{equation}
\langle \Omega |U(f)|\Omega \rangle =\int_{\mathcal{S}^{\prime }}\exp [i\eta
(f)]d\mu (\eta ).  \label{FQeq2.27}
\end{equation}%
From the spectral point of view, based on Theorem \ref{FQth2.1}, there is an
isomorphism between the Hilbert spaces $\Phi $ and $L_{2}^{(\mu )}(\mathcal{S%
}^{\prime };\mathbb{C})$, defined by $|\Omega \rangle \longrightarrow \Omega
(\eta )=1$ and $U(f)|\Omega \rangle \longrightarrow \exp [i\eta (f)]$ and
next extended by linearity upon the whole Hilbert space $\Phi $.

In the case of the non-cyclic case there exists a finite or countably
infinite family of measures $\left\{ \mu _{k}:k\in \mathbb{Z}_{+}\right\} $
on $\mathcal{S}^{\prime },$ with ${\Phi }\simeq \oplus _{k\in \mathbb{Z}%
_{+}}L_{2}^{(\mu _{k})}(\mathcal{S}^{\prime };\mathbb{C})$ and the unitary
operator $U(f):{\Phi }\longrightarrow {\Phi }$ for any $f\in \mathcal{S}%
^{\prime }$ corresponds in all $L_{2}^{(\mu _{k})}(\mathcal{S}^{\prime };%
\mathbb{C})$, $k\in \mathbb{Z}_{+}$, to $\exp [i\eta (f)]$. This means that
there exists a single cylindrical measure $\mu $ on $\mathcal{S}^{\prime }$
and a $\mu -$measurable field of Hilbert spaces ${\Phi }_{\eta }$ on $%
\mathcal{S}^{\prime }$, such that 
\begin{equation}
{\Phi }\simeq \int_{\mathcal{S}^{\prime }}^{\oplus }{\Phi }_{\eta }d\mu
(\eta ),  \label{FQeq2.28}
\end{equation}%
with $U(f):{\Phi }\longrightarrow {\Phi }$, corresponding \cite{GeVi} to the
operator of multiplication by $\exp [i\eta (f)]$ for any $f\in \mathcal{S}$
and $\eta \in \mathcal{S}^{\prime }$. Thereby, having constructed the
nonlinear functional (\ref{FQeq2.24}) in an exact analytical form, one can
retrieve the representation of the unitary family $\mathcal{U}$ in the
corresponding Hilbert space ${\Phi }$ of the Fock type, making use of the
suitable factorization (\ref{FQeq2.20}) as follows: ${\Phi }=\oplus _{n\in 
\mathbb{Z}_{+}}{\Phi }_{n}$, where 
\begin{equation}
{\Phi }_{n}=\underset{f_{n}\in L_{2,s}(\mathbb{R}^{m\times }{}^{n};\mathbb{C}%
)}{span}\left\{ \prod_{j=\overline{1,n}}a^{+}(x_{j})|\Omega \rangle \right\}
,  \label{FQeq2.29}
\end{equation}%
for all $n\in \mathbb{Z}_{+}$. The cyclic vector $|\Omega \rangle \in {\Phi }
$ can be, in particular, obtained as the ground state vector of some
unbounded self-adjoint positive definite Hamilton operator $\mathbb{H}:{\Phi 
}\longrightarrow {\Phi }$, commuting with the self-adjoint particles number
operator 
\begin{equation}
\mathbb{N}:=\int_{\mathbb{R}^{m}}\rho (x)dx,  \label{FQeq2.30}
\end{equation}%
that is $[\mathbb{H},\mathbb{N}]=0$. Moreover, the conditions 
\begin{equation}
\mathbb{H}|\Omega \rangle =0  \label{FQeq2.31}
\end{equation}%
and 
\begin{equation}
\inf_{g\in dom\mathbb{H}}\langle g,\mathbb{H}g\rangle =\langle \Omega |%
\mathbb{H}|\Omega \rangle =0  \label{FQeq2.32}
\end{equation}%
hold for the operator $\mathbb{H}:{\Phi }\longrightarrow {\Phi },$ where $%
dom $ $\mathbb{H}$ denotes its domain of definition.

To find the functional (\ref{FQeq2.26}), which is called the generating
Bogolubov type functional for moment distribution functions 
\begin{equation}
F_{n}(x_{1},x_{2},...,x_{n}):=\langle \Omega |:\rho (x_{1})\rho
(x_{2})...\rho (x_{n}):|\Omega \rangle ,  \label{FQeq2.33}
\end{equation}%
where $x_{j}\in \mathbb{R}^{m}$, $j=\overline{1,n}$, and the normal ordering
operation $:\cdot :$ is defined as 
\begin{equation}
:\rho (x_{1})\rho (x_{2})...\rho (x_{n}):\;=\prod_{j=1}^{n}\left( \rho
(x_{j})-\sum_{k=1}^{j}\delta (x_{j}-x_{k})\right) ,  \label{FQeq2.34}
\end{equation}%
it is convenient to choose the Hamilton operator $\mathbb{H}:{\Phi }%
\longrightarrow {\Phi }$ in the following \cite{GoGrPoSh,Gold,BoPr-1}
algebraic form: 
\begin{equation}
\mathbb{H}:=\frac{1}{2}\int_{\mathbb{R}^{m}}K^{+}(x)\rho
^{-1}(x)K(x)dx+V(\rho ),  \label{FQeq2.35}
\end{equation}%
being equivalent in the Hilbert space $\Phi $ to the positive definite
operator expression 
\begin{equation}
\mathbb{H}:=\frac{1}{2}\int_{\mathbb{R}^{m}}(K^{+}(x)-A(x;\rho ))\rho
^{-1}(x)(K(x)-\mathcal{A}(x;\rho ))dx,  \label{FQeq2.35a}
\end{equation}%
satisfying conditions (\ref{FQeq2.31}) and (\ref{FQeq2.32}), where $A(x;\rho
):\Phi \rightarrow \Phi ,$ $\ x\in \mathbb{R}^{m},$ is some specially chosen
linear self-adjoint operator. The \textquotedblleft potential" operator $%
V(\rho ):{\Phi }\longrightarrow {\Phi }$ \ is, in general, a polynomial (or
analytical) functional of the density operator $\rho (x):{\Phi }%
\longrightarrow {\Phi }$ and the operator is given as 
\begin{equation}
K(x):=\nabla _{x}\rho (x)/2\ +\ iJ(x),  \label{FQeq2.36}
\end{equation}%
where the self-adjoint \textquotedblleft current" operator $J(x):{\Phi }%
\longrightarrow {\Phi }$ can be defined (but non-uniquely) from the equality 
\begin{equation}
\partial \rho /\partial t=\frac{1}{i}[\mathbb{H},\rho (x)]=-<\nabla
_{x},J(x)>_{,}  \label{FQeq2.37}
\end{equation}%
holding for all $x\in \mathbb{R}^{m}.$ Such an operator $J(x):{\Phi }%
\longrightarrow {\Phi },$ $x\in \mathbb{R}^{m},$ can exist owing to the
commutation condition $[\mathbb{H},\mathbb{N}]=0,$ giving rise to the
continuity relationship (\ref{FQeq2.37}), if taking into account that
supports $supp\;\rho $ of the density operator $\rho (x):{\Phi }%
\longrightarrow {\Phi },$ $x\in \mathbb{R}^{m},$ can be chosen arbitrarily
owing to the independence of (\ref{FQeq2.37}) on the potential operator $%
V(\rho ):{\Phi }\longrightarrow {\Phi },$ but its strict dependence on the
corresponding representation (\ref{FQeq2.28}). In particular, based on the
Fock space $\Phi ,$ defined by (\ref{FQeq2.1}) and generated by the
creation-annihilation operators (\ref{FQeq2.7}) and (\ref{FQeq2.8}), the
current operator $\ \ J(x):{\Phi }\longrightarrow {\Phi },$ $x\in \mathbb{R}%
^{m},$ can be constructed as follows: 
\begin{equation}
J(x)=\frac{1}{2i}[a^{+}(x)\nabla a(x)-\nabla a^{+}(x)a(x)],
\label{FQeq2.37a}
\end{equation}%
satisfying jointly with the density operator $\rho (x):{\Phi }%
\longrightarrow {\Phi },$ $x\in \mathbb{R}^{m},$ defined by \ (\ref{FQeq2.16}%
), the following quantum current Lie algebra \cite%
{BoPr-1,GoMeSh,GoMeSh-1,PrBoGoTa,PrMy} relationships:%
\begin{eqnarray}
\lbrack J(g_{1}),J(g_{2})] &=&iJ([g_{1,}g_{2}]),\text{\ \ }
\label{FQeq2.37b} \\
\lbrack J(g_{1}),\rho (f_{1})] &=&i\rho (<g_{1},\nabla f_{1}>),  \notag \\
\lbrack \rho (f_{1}),\rho (f_{2})] &=&0,\text{ }  \notag
\end{eqnarray}%
holding for all $f_{1},f_{1}\in F$ and $g_{1},g_{2}\in F^{3},$ where \ we
put, by definition, 
\begin{equation}
\lbrack g_{1,}g_{2}]:=<g_{1},\nabla )g_{2}-<g_{2},\nabla )g_{1},
\label{FQeq2.37c}
\end{equation}%
being the usual commutator of vector fields in the Euclidean space $\mathbb{E%
}^{m}.$ It is easy to observe that the current algebra (\ref{FQeq2.37b}) is
\ the Lie algebra $\mathcal{G},$ corresponding to the Banach Lie group $%
G=Diff$ $\mathbb{E}^{3}\rightthreetimes F,$\ the semidirect product of the \
Banach Lie group of diffeomorphisms \ \ $Diff$ $\mathbb{E}^{3}$ of the
three-dimensional Euclidean space $\mathbb{E}^{3}$ and the abelian subject
to the multiplicative operation Banach group of smooth functions $F$ . We
note also that representation (\ref{FQeq2.35a}) holds only under the
condition that there exists such a self-adjoint operator $\mathcal{A}(x;\rho
):{\Phi }\longrightarrow {\Phi },$ $x\in \mathbb{R}^{m},$ that 
\begin{equation}
K(x)|\Omega \rangle =\mathcal{A}(x;\rho )|\Omega \rangle  \label{FQeq2.38}
\end{equation}%
for all ground states $|\Omega \rangle \in {\Phi },$ correspond to suitably
chosen potential operators $V(\rho ):{\Phi }\longrightarrow {\Phi }.$

The self-adjointness of the operator $\mathcal{A}(x;\rho ):{\Phi }%
\longrightarrow {\Phi },$ $x\in \mathbb{R}^{m},$ can be stated following
schemes from works \cite{GoGrPoSh,BoPr-1}, under the additional condition of
the existence of such a linear anti-unitary mapping $T:{\Phi }%
\longrightarrow {\Phi }$ that the following invariance conditions hold: 
\begin{equation}
T\rho (x)T^{-1}=\rho (x),\qquad T\;J(x)\;T^{-1}=-J(x),\qquad T|\Omega
\rangle =|\Omega \rangle  \label{FQeq2.39}
\end{equation}%
for any $x\in \mathbb{R}^{m}$. Thereby, owing to conditions (\ref{FQeq2.39}%
), the following expressions 
\begin{equation}
K^{\ast }(x)|\Omega \rangle =\mathcal{A}(x;\rho )|\Omega \rangle
=K(x)|\Omega \rangle  \label{FQeq2.40}
\end{equation}%
hold for any $x\in \mathbb{R}^{m}$, giving rise to the self-adjointness of
the operator $\mathcal{A}(x;\rho ):{\Phi }\longrightarrow {\Phi },$ $x\in 
\mathbb{R}^{m}$.

Based now on the construction above one easily deduces from expression (\ref%
{FQeq2.37}) that the generating Bogolubov type functional (\ref{FQeq2.26})
obeys for all $x\in \mathbb{R}^{m}$ the following functional-differential
equation: 
\begin{equation}
\left[ \nabla _{x}-i\nabla _{x}f\right] \frac{1}{2i}\frac{\delta \mathcal{L}%
(f)}{\delta f(x)}=\mathcal{A}\left( x;\frac{1}{i}\frac{\delta }{\delta f}%
\right) \mathcal{L}(f),  \label{FQeq2.41}
\end{equation}%
whose solutions should satisfy the Fourier transform representation (\ref%
{FQeq2.27}). In particular, a wide class of special so-called Poissonian
white noise type solutions to the functional-differential equation (\ref%
{FQeq2.41}) was obtained in \cite{GoGrPoSh,BoPr-1} by means of
functional-operator methods in the following generalized form: 
\begin{equation}
\mathcal{L}(f)=\exp \left\{ 2\mathcal{A}\left( \frac{1}{i}\frac{\delta }{%
\delta f}\right) \right\} \exp \left( \bar{\rho}\int_{\mathbb{R}^{m}}\{\exp
[if(x)]-1\}dx\right) ,  \label{FQeq2.41a}
\end{equation}%
where $\bar{\rho}:=\langle \Omega |\rho |\Omega \rangle \in \mathbb{R}_{+}$
is a Poisson distribution density parameter.

It is worth to remark here that solutions to equation (\ref{FQeq2.41})
realize the suitable physically motivated representations of the abelian
Banach subgroup $F$ of the Banach group $G=Diff$ $\mathbb{E}%
^{3}\rightthreetimes F,$ mentioned above. In the general case of the Banach
group $\ G=Diff$ $\mathbb{E}^{3}\rightthreetimes F$ one can also construct 
\cite{BoPr-1,GoMeSh} a generalized Bogolubov type functional equation, whose
solutions give rise to \ suitable physically motivated representations of
the corresponding current Lie algebra $\mathcal{G}.$

Consider now the case, when the basic Fock space ${\Phi }=\otimes _{j=1}^{s}{%
\Phi }^{(j)}$, where ${\Phi }^{(j)},$ $j=\overline{1,s}$, are Fock spaces
corresponding to the different types of independent cyclic vectors $|\Omega
_{j}\rangle \in {\Phi }^{(j)},$ $j=\overline{1,s}.$ This, in particular,
means that the suitably constructed creation and annihilation operators $%
a_{j}(x),a_{k}^{+}(y):{\Phi }\longrightarrow {\Phi },$ $j,k=\overline{1,s},$
satisfy the following commutation relations: 
\begin{equation}
\begin{array}{c}
\lbrack a_{j}(x),a_{k}(y)]=0, \\[5pt] 
\lbrack a_{j}(x),a_{k}^{+}(y)]=\delta _{jk}\delta (x-y)%
\end{array}
\label{FQeq2.42}
\end{equation}%
for any $x,y\in \mathbb{R}^{m}$.

\begin{definition}
\label{FQdef2.13} A vector $|u\rangle \in {\Phi },$ $x\in \mathbb{R}^{m},$
is called coherent \cite{Bere,Glau,PrBoGoTa,KoSt} with respect to a mapping $%
u\in L_{2}(\mathbb{R}^{m};\mathbb{R}^{s}):=M,$ if it satisfies the
eigenfunction condition 
\begin{equation}
a_{j}(x)|u\rangle =u_{j}(x)|u\rangle  \label{FQeq2.43}
\end{equation}%
for each $j=\overline{1,s}$ and all $x\in \mathbb{R}^{m}$.
\end{definition}

It is easy to check that the coherent vectors $|u\rangle \in {\Phi }$ exist.
Really, the following vector expression 
\begin{equation}
|u\rangle :=\exp \{(u,a^{+})\}|\Omega \rangle ,  \label{FQeq2.44}
\end{equation}%
where $(.,.)$ is the standard scalar product in the Hilbert space $M$,
satisfies the defining condition (\ref{FQeq2.43}), and moreover, the norm 
\begin{equation}
\Vert u\Vert _{{\Phi }}:=\langle u|u\rangle ^{1/2}=\exp (\frac{1}{2}\Vert
u\Vert ^{2})<\infty ,  \label{FQeq2.45}
\end{equation}%
since $u\in M$ and its norm $\Vert u\Vert :=(u,u)^{1/2}$ is bounded.

\subsection{The Fock space embedding method, nonlinear dynamical systems and
their complete linearization}

Consider any function $u\in M:=L_{2}(\mathbb{R}^{m};\mathbb{R}^{s})$ and
observe that the Fock space embedding mapping 
\begin{equation}
\phi :M\ni u\longrightarrow |u\rangle \in {\Phi },  \label{FQeq3.1}
\end{equation}%
defined by means of the coherent vector expression (\ref{FQeq2.44}) realizes
a smooth isomorphism between Hilbert spaces $M$ and ${\Phi }.$ The inverse
mapping $\phi ^{-1}:{\Phi }\longrightarrow M$ is given by the following
exact expression: 
\begin{equation}
u(x)=\langle \Omega |a(x)|u\rangle ,  \label{FQeq3.2}
\end{equation}%
holding for almost all $x\in \mathbb{R}^{m}$. Owing to condition (\ref%
{FQeq2.45}), one finds from (\ref{FQeq3.2}) that, the corresponding function 
$u\in M.$

In the Hilbert space $M,$ let now define a nonlinear dynamical system (which
can, in general, be non-autonomous) in partial derivatives 
\begin{equation}
du/dt=K[u],  \label{FQeq3.3}
\end{equation}%
where $t\in \mathbb{R}_{+}$ is the corresponding evolution parameter, $%
[u]:=(t,x;u,u_{x},u_{xx},...,u_{rx}),r\in \mathbb{Z}_{+}$, and a mapping $%
K:M\longrightarrow T(M)$ is Frechet smooth. Assume also that the Cauchy
problem 
\begin{equation}
u|_{t=+0}=u_{0}  \label{FQeq3.4}
\end{equation}%
is solvable for any $u_{0}\in M$ in an interval $[0,T)\subset \mathbb{R}%
_{+}^{1}$ for some $T>0.$ Thereby, the smooth evolution mapping is defined 
\begin{equation}
T_{t}:M\ni u_{0}\longrightarrow u(t|u_{0})\in M,  \label{FQeq3.5}
\end{equation}%
for all $t\in \lbrack 0,T).$

It is now natural to consider the following commuting diagram 
\begin{equation}
\begin{array}{ccc}
M & \overset{\phi }{\longrightarrow } & \Phi \\ 
T_{t}\downarrow &  & \downarrow {\mathbb{T}_{t}} \\ 
M & \overset{\phi }{\longrightarrow } & \Phi ,%
\end{array}
\label{FQeq3.6}
\end{equation}%
where the mapping $\mathbb{T}_{t}:{\Phi }\longrightarrow {\Phi }$, $t\in
\lbrack 0,T)$, is defined from the conjugation relationship 
\begin{equation}
\phi \circ T_{t}=\mathbb{T}_{t}.\circ \phi  \label{FQeq3.7}
\end{equation}

Now take coherent vector $|u_{0}\rangle \in {\Phi },$ corresponding to $%
u_{0}\in M,$ and construct the vector 
\begin{equation}
|u\rangle :=\mathbb{T}_{t}\text{ }|u_{0}\rangle  \label{FQeq3.8}
\end{equation}%
for all $t\in \lbrack 0,T).$ Since vector (\ref{FQeq3.8}) is, by
construction, coherent, that is 
\begin{equation}
a_{j}(x)|u\rangle :=u_{j}(x,t|u_{0})|u\rangle  \label{FQeq3.9}
\end{equation}%
for each $j=\overline{1,s}$, $t\in \lbrack 0,T)$ and almost all $x\in 
\mathbb{R}^{m}$, owing to the smoothness of the mapping $\xi
:M\longrightarrow {\Phi }$ with respect to the corresponding norms in the
Hilbert spaces $M$ and $\Phi ,$ we derive that coherent vector (\ref{FQeq3.8}%
) is differentiable with respect to the evolution parameter $t\in \lbrack
0,T)$. Thus, one can easily find \cite{Kowa,KoSt,PrBoGoTa,BoPr-1} that 
\begin{equation}
\frac{d}{dt}|u\rangle =\hat{K}[a^{+},a]|u\rangle ,  \label{FQeq3.10}
\end{equation}%
where 
\begin{equation}
|u\rangle |_{t=+0}=|u_{0}\rangle  \label{FQeq3.11}
\end{equation}%
and a mapping $\hat{K}[a^{+},a]:{\Phi }\longrightarrow {\Phi }$ is defined
by the exact analytical expression 
\begin{equation}
\hat{K}[a^{+},a]:=(a^{+},K[a]).  \label{FQeq3.12}
\end{equation}

As a result of the consideration above we obtain the following theorem.

\begin{theorem}
\label{FQth3.1} Any smooth nonlinear dynamical system (\ref{FQeq3.3}) in
Hilbert space $M:=L_{2}(\mathbb{R}^{m};\mathbb{R}^{s})$ is representable by
means of the Fock space embedding isomorphism $\phi :M\longrightarrow {\Phi }
$ in the completely linear form (\ref{FQeq3.10}).
\end{theorem}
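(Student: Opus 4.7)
The plan is to verify directly that $|u(t)\rangle := \phi(u(t))$ solves the linear Cauchy problem (\ref{FQeq3.10})--(\ref{FQeq3.11}) whenever $u(t)$ solves (\ref{FQeq3.3})--(\ref{FQeq3.4}), thereby realizing the commutation in diagram (\ref{FQeq3.6}). First I would note that the embedding $\phi:u\mapsto |u\rangle = \exp\{(u,a^+)\}|\Omega\rangle$ is Fr\'echet smooth from $M$ into $\Phi$, since the norm bound (\ref{FQeq2.45}) shows term-by-term exponential expansion converges uniformly on bounded sets. Combined with the assumed smooth well-posedness of (\ref{FQeq3.3}) on $[0,T)$, the curve $t\mapsto |u(t)\rangle$ is then $C^1$ in $\Phi$.

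Next, because the creation operators $a_j^+(x)$ mutually commute, the Fr\'echet derivative of $\phi$ in direction $v\in M$ is simply $(v,a^+)\,|u\rangle$, and so the chain rule gives
\begin{equation*}
\frac{d}{dt}|u(t)\rangle \;=\; (K[u(t)],a^+)\,|u(t)\rangle.
\end{equation*}
The key step is then to reinterpret the $c$-number expression $(K[u],a^+)$ as the normal-ordered operator $(a^+,K[a])$ acting on $|u\rangle$. This is where the coherence property (\ref{FQeq2.43}) enters: since $a_j(x)|u\rangle = u_j(x)|u\rangle$ and since $\partial_x^k$ commutes with this spectral relation in the distributional sense, one obtains $K[a]|u\rangle = K[u]|u\rangle$ for any analytic functional of $(t,x;u,u_x,\dots,u_{rx})$. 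Placing the creation operators $a^+$ to the left then yields exactly $\hat K[a^+,a]|u\rangle$ as defined in (\ref{FQeq3.12}), so that $|u(t)\rangle$ satisfies (\ref{FQeq3.10}). The initial condition $|u(t)\rangle|_{t=+0}=|u_0\rangle$ follows at once from the continuity of $\phi$ and (\ref{FQeq3.4}).

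The main obstacle, and the step deserving the most care, is the normal-ordering replacement $K[u]\rightsquigarrow K[a]$ acting on a coherent state when $K$ involves spatial derivatives $u_x, u_{xx},\dots$ and is only assumed analytic (rather than polynomial) in its arguments. To handle this rigorously I would fix a dense invariant core $\mathcal D\subset \Phi$ consisting of coherent states $\{|u\rangle : u\in \mathcal{S}(\mathbb{R}^m;\mathbb{R}^s)\}$, on which the distributional identities $\partial_x^k a_j(x)|u\rangle = \partial_x^k u_j(x)\,|u\rangle$ hold pointwise and the formal power series expansion of $K[a]$ converges termwise, and then carry out the identification $(a^+,K[a])|u\rangle = (a^+,K[u])|u\rangle$ on this core before closing the resulting densely-defined operator $\hat K[a^+,a]$. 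Once the identification is established on $\mathcal D$, the isomorphism $\phi$ of (\ref{FQeq3.1}) together with its smooth inverse (\ref{FQeq3.2}) transports (\ref{FQeq3.10}) back to (\ref{FQeq3.3}), establishing the claimed linearization.
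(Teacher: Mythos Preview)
Your proposal is correct and follows essentially the same route as the paper: differentiate the coherent state $|u(t)\rangle=\phi(u(t))$ using smoothness of $\phi$ and the evolution $du/dt=K[u]$, then invoke the eigenvalue property $a_j(x)|u\rangle=u_j(x)|u\rangle$ to replace $K[u]$ by the normal-ordered $K[a]$, arriving at $\hat K[a^+,a]|u\rangle$. The paper's own argument is considerably terser---it simply notes that $|u\rangle$ is coherent, invokes smoothness to justify differentiability in $t$, and then asserts (\ref{FQeq3.10}) with a citation to \cite{KowS,Kow,PrBGT,PrBGT1}---so your explicit chain-rule computation $(d/dt)|u\rangle=(K[u],a^+)|u\rangle$ and your discussion of the normal-ordering step on a Schwartz-class core actually supply more detail than the paper itself provides.
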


We now make some comments concerning the solution to the linear equation (%
\ref{FQeq3.10}) under the Cauchy condition (\ref{FQeq3.11}). Since any
vector $|u\rangle \in {\Phi }$ allows the series representation 
\begin{equation}
\begin{array}{l}
|u\rangle =\underset{n=\sum_{j=1}^{s}n_{j}\in \mathbb{Z}_{+}}{\bigoplus }%
\frac{1}{(n_{1}!n_{2}!...n_{s}!)^{1/2}}\int_{(\mathbb{R}%
^{m})^{n}}f_{n_{1}n_{2}...n_{s}}^{(n)}(x_{1}^{(1)},x_{2}^{(1)},...,x_{{n}%
_{1}}^{(1)}; \\[10pt] 
\qquad x_{1}^{(2)},x_{2}^{(2)},...,x_{{n}%
_{2}}^{(2)};...;x_{1}^{(s)},x_{2}^{(s)},...,x_{{n}_{s}}^{(s)})%
\prod_{j=1}^{s}\left(
\prod_{k=1}^{n_{j}}dx_{k}^{(j)}a_{j}^{+}(x_{k}^{(j)})\right) |\Omega \rangle
,%
\end{array}
\label{FQeq3.13}
\end{equation}%
where for any $n=\sum_{j=1}^{s}n_{j}\in \mathbb{Z}_{+}$ functions 
\begin{equation}
f_{n_{1}n_{2}...n_{s}}^{(n)}\in \bigotimes_{j=1}^{s}L_{2,s}((\mathbb{R}%
^{m})^{n_{j}};\mathbb{C})\simeq L_{2,s}(\mathbb{R}^{mn_{1}}\times \mathbb{R}%
^{mn_{2}}\times ...\mathbb{R}^{mn_{s}};\mathbb{C}),  \label{FQeq3.14}
\end{equation}%
and the norm 
\begin{equation}
\Vert u\Vert _{{\Phi }}^{2}=\sum_{n=\sum_{j=1}^{s}n_{j}\in \mathbb{Z}%
_{+}}\Vert f_{n_{1}n_{2}...n_{s}}^{(n)}\Vert _{2}^{2}=\exp (\Vert u\Vert
^{2}).  \label{FQeq3.15}
\end{equation}%
By substituting (\ref{FQeq3.13}) into equation (\ref{FQeq3.10}), reduces (%
\ref{FQeq3.10}) to an infinite recurrent set of linear evolution equations
in partial derivatives on coefficient functions (\ref{FQeq3.14}). The latter
can often be solved \cite{Kowa,PrBoGoTa} step by step analytically in exact
form, thereby, making it possible to obtain, owing to representation (\ref%
{FQeq3.2}), the exact solution $u\in M$ to the Cauchy problem (\ref{FQeq3.4}%
) for our nonlinear dynamical system in partial derivatives (\ref{FQeq3.3}).

\begin{remark}
\label{FQre3.2} Concerning some applications of nonlinear dynamical systems
like (\ref{FQeq3.1}) in mathematical physics problems, it is very important
to construct their so called conservation laws or smooth invariant
functionals $\gamma :M\longrightarrow \mathbb{R}$ on $M$. Making use of the
quantum mathematics technique described above one can suggest an effective
algorithm for constructing these conservation laws in exact form.
\end{remark}

Indeed, consider a vector $|\gamma \rangle \in {\Phi },$ satisfying the
linear equation: 
\begin{equation}
\frac{\partial }{\partial t}|\gamma \rangle +\hat{K}^{\ast }[a^{+},a]|\gamma
\rangle =0.  \label{FQeq3.16}
\end{equation}

Then, the following proposition \cite{Kowa,PrBoGoTa} holds.

\begin{proposition}
\label{FQpr3.3} The functional 
\begin{equation}
\gamma :=\langle u|\gamma \rangle  \label{FQeq3.17}
\end{equation}%
is a conservation law for dynamical system (\ref{FQeq3.1}), that is 
\begin{equation}
d\gamma /dt|_{K}=0  \label{FQeq3.18}
\end{equation}%
along any orbit of the evolution mapping (\ref{FQeq3.5}).
\end{proposition}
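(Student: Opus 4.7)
The plan is to differentiate the pairing $\gamma=\langle u|\gamma\rangle$ along the evolution mapping $T_t$ and exhibit an exact cancellation produced by the adjoint pair of linear evolution equations for $|u\rangle$ and $|\gamma\rangle$. Since Theorem \ref{FQth3.1} recasts the smooth nonlinear flow (\ref{FQeq3.3}) as the completely linear Fock space equation (\ref{FQeq3.10}) for the coherent image $|u(t)\rangle=\phi(T_t u_0)$, along any orbit of (\ref{FQeq3.5}) one has
\begin{equation*}
\frac{d}{dt}|u\rangle = \hat{K}[a^{+},a]\,|u\rangle, \qquad |u\rangle|_{t=+0}=|u_0\rangle,
\end{equation*}
so that the coherent trajectory is differentiable in $t$ in the Fock norm by the Fr\'echet smoothness of $K:M\to T(M)$ and the smoothness of the embedding $\phi$.

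Next I would pass to bras: since $\hat{K}[a^{+},a]=(a^{+},K[a])$ is in general not self-adjoint, taking the Hermitian conjugate yields $\frac{d}{dt}\langle u|=\langle u|\,\hat{K}^{\ast}[a^{+},a]$. Combined with the defining equation (\ref{FQeq3.16}) for $|\gamma\rangle$, namely $\frac{d}{dt}|\gamma\rangle=-\hat{K}^{\ast}[a^{+},a]\,|\gamma\rangle$, the Leibniz rule applied to the scalar product produces the instantaneous identity
\begin{equation*}
\frac{d\gamma}{dt}\bigg|_{K} \;=\; \left\langle \tfrac{d}{dt}u\,\Big|\,\gamma \right\rangle + \left\langle u\,\Big|\,\tfrac{d}{dt}\gamma \right\rangle \;=\; \langle u|\hat{K}^{\ast}[a^{+},a]|\gamma\rangle - \langle u|\hat{K}^{\ast}[a^{+},a]|\gamma\rangle \;=\; 0,
\end{equation*}
which, in view of the reconstruction formula (\ref{FQeq3.2}), is exactly the conservation law (\ref{FQeq3.18}) for the underlying nonlinear flow on $M$.

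The main obstacle is not algebraic but analytic: legitimizing the differentiation under $\langle\cdot|\cdot\rangle_{\Phi}$ and the transfer of $\hat{K}[a^{+},a]$ onto the bra as $\hat{K}^{\ast}[a^{+},a]$. Concretely, one must check (i) that for each admissible initial datum the coherent trajectory $|u(t)\rangle$ lies in the domain $\mathrm{dom}\,\hat{K}^{\ast}[a^{+},a]$ for $t\in[0,T)$, which can be read off the monomial structure of $\hat{K}=(a^{+},K[a])$ together with the norm bound (\ref{FQeq2.45}); (ii) that $|\gamma\rangle$ is chosen in a dense $\hat{K}^{\ast}$-invariant subdomain, so that the pairing identity $\langle\hat{K}[a^{+},a]u|\gamma\rangle=\langle u|\hat{K}^{\ast}[a^{+},a]\gamma\rangle$ is valid at every instant; and (iii) that $\frac{d}{dt}$ may be interchanged with the pairing, which follows from joint continuity of the Fock inner product on the appropriate rigged subspaces from (\ref{FQeq2.15}) and the smoothness of $\phi:M\to\Phi$. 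Once these domain and continuity issues are addressed, the cancellation displayed above closes the proof.
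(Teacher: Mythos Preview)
Your argument is correct and is precisely the natural one: differentiate the pairing, use (\ref{FQeq3.10}) to get $\tfrac{d}{dt}\langle u|=\langle u|\hat{K}^{\ast}[a^{+},a]$, use (\ref{FQeq3.16}) to get $\tfrac{d}{dt}|\gamma\rangle=-\hat{K}^{\ast}[a^{+},a]|\gamma\rangle$, and cancel. The paper does not actually write out a proof of this proposition; it merely states the result with a citation to \cite{Kow,PrBGT,PrBGT1}, so there is nothing further to compare against, and your additional remarks on the analytic domain issues go beyond what the paper records.
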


\bigskip

\subsection{The quantum current Lie algebra and \ the magnetic Aharonow-Bohm
effect}

\bigskip

In the Section above we could get convinced that different representations
of the equal-time current algebra (\ref{FQeq2.37b}) 
\begin{eqnarray}
\lbrack \rho (f_{1}),\rho (f_{2})] &=&0,\text{ \ \ }  \label{FQeq3.19} \\
\lbrack J(g_{1}),\rho (f_{1})] &=&i\rho (<g_{1},\nabla f_{1}>),  \notag \\
\lbrack J(g_{1}),J(g_{2})] &=&iJ([g_{2},g_{1}])+i\rho (<B,g_{1}\times
g_{2}>),  \notag
\end{eqnarray}%
where $f_{1},f_{2}\in F$ and $g_{2},g_{1}\in F^{3},$ acting in the Fock
space $\Phi $ and describing a non-relativistic quasi-stationary system
consisting of a test charged particle $q,$ imbedded into a cylindrical
region $\Gamma \subset \mathbb{E}^{3},$ being under influence of the
magnetic field $B=\nabla \times A.$ \ Here $A\in \mathbb{E}^{3}$ is a
magnetic vector potential, the sign \textquotedblleft $\times $%
\textquotedblright\ means the vector product in $\mathbb{E}^{3}$ and the
current $\ \ J(x):{\Phi }\longrightarrow {\Phi },$ $x\in \mathbb{R}^{m},$ is
defined, owing to the \textit{minimal interaction} principle (\ref{FQeq2.13}%
), as 
\begin{equation}
J(x):=\frac{1}{2}a^{+}(x)(\frac{1}{i}\nabla -A)a(x)-[(\frac{1}{i}\nabla
+A)a^{+}(x)]a(x).  \label{FQeq3.20}
\end{equation}%
In particular, it is assumed that $supp$ $B\subset \Gamma $ that gives rise
to the equality $\rho (<B,g_{1}\times g_{2}>=0$ \ for all points $x\in 
\mathbb{E}^{3}\backslash \Gamma .$

As the suitable representations of the current algebra $\mathcal{G},$
defined by (\ref{FQeq3.20}), describe the physical quantum states of the
system $\Gamma $ under regards, we consider them following \cite%
{GoMeSh,GoMeSh-1} as those realized in the Hilbert space $L_{2}(\mathbb{E}%
^{3};\mathbb{C})$ under the condition that the charged test particle $q$ can
penetrate the boundary $\partial \Gamma $ \ of the region $\Gamma .$ Namely,
let for $\psi \in L_{2}($ $\mathbb{\Gamma };\mathbb{C})$ 
\begin{eqnarray}
\rho (f)\psi (x) &:&=f(x)\psi (x),  \label{FQeq3.21} \\
J(g)\psi (x) &=&\frac{1}{2i}\left\{ [<g(x),\cdot \nabla >+<\nabla ,\cdot
g(x)>]-<g(x),\int_{\Gamma }d^{3}y\frac{\nabla \times B(y)}{4\pi |x-y|}%
>\right\} \psi (x),  \notag
\end{eqnarray}%
for all $x\in \mathbb{E}^{3},$ where the sign \textquotedblleft $\cdot $%
\textquotedblright\ means that the natural operator composition. When
deriving (\ref{FQeq3.21}) there was imposed the invariant Coulomb gauge
constraint $\ <\nabla ,A>=0$ allowing to determine the vector potential $%
A\in \mathbb{E}^{3}$ as 
\begin{equation}
A=\int_{\Gamma }d^{3}y\frac{\nabla \times B(y)}{4\pi |x-y|}  \label{FQeq3.22}
\end{equation}%
using the classical Maxwell equations (\ref{L0.6}), since the electric
displacement current component $\ \partial E/\partial t=0.$ The wave
function $\psi \in L_{2}(\mathbb{\Gamma };\mathbb{C})$ satisfies \ in the
cylindrical coordinates $x(r,\theta ,z)\in \Gamma $ \ the natural
quasi-periodical condition%
\begin{equation}
\psi (r,\theta +2\pi n,z)=\exp (i\lambda n)\psi (r,\theta ,z)
\label{FQeq3.23}
\end{equation}%
for some $\lambda \in \mathbb{R}$ and any $n\in \mathbb{Z},$ which should be
determined from the physically realizable representation (\ref{FQeq3.21}).
To do this, we need preliminarily to define the following \cite{GoMeSh-1}
unitary operator in the Hilbert space $L_{2}(\mathbb{E}^{3};\mathbb{C}):$%
\begin{equation}
Q\psi (x):=\left\{ 
\begin{array}{c}
\psi (x),\text{ \ \ \ \ \ \ \ \ \ \ \ \ \ \ \ \ \ \ \ \ \ \ \ \ \ \ \ \ \ \
\ \ \ \ \ \ \ }x\in \Gamma ; \\ 
\exp \left\{ -i\int_{l_{\infty }}^{x}dl(y)\int_{E^{3}\backslash \Gamma
}dy^{\prime }\frac{\nabla \times B(y^{\prime })}{|y-y^{\prime }|}\right\}
\psi (x),\text{ \ \ }x\in \mathbb{E}^{3}\backslash \Gamma ;\text{\ }%
\end{array}%
\right.  \label{FQeq3.24}
\end{equation}%
where the path $l_{\infty }\subset \mathbb{E}^{3}\backslash \Gamma $
connects an infinite point $\infty \in \mathbb{E}^{3}$ with the chosen point 
$x\in \mathbb{E}^{3}\backslash \Gamma .$ Making use now of the unitary
transformation $\ \tilde{J}(g):=QJ(g)Q^{-1}$ and the fact that the magnetic
field 
\begin{equation}
B(x)=\nabla \times \int_{\Gamma }\frac{\nabla \times B(y)d^{3}y}{4\pi |x-y|}
\label{FQeq3.25}
\end{equation}%
for any $x\in \mathbb{E}^{3},$ one obtains easily that the current operator $%
\ \tilde{J}(g):L_{2}(\mathbb{E}^{3};\mathbb{C})\longrightarrow L_{2}(\mathbb{%
E}^{3};\mathbb{C})$ is self-adjoint for any $g\in F^{3}$ and 
\begin{equation}
\tilde{J}(g)\psi (x)=\frac{1}{2i}[<g(x),\cdot \nabla >+<\nabla ,\cdot
g(x)>]\psi (x),  \label{FQeq3.26}
\end{equation}%
and whose domain of definition $dom$ $\tilde{J}(g)\subset $ $L_{2}(\mathbb{E}%
^{3};\mathbb{C})$ is constrained by the functions $\psi \in L_{2}(\mathbb{E}%
^{3};\mathbb{C}),$ satisfying the condition 
\begin{equation}
\psi (r,\theta +2\pi ,z)=\exp [i\lambda (B)])\psi (r,\theta ,z),
\label{FQeq3.27}
\end{equation}%
where, owing to (\ref{FQeq3.24}), 
\begin{equation}
\lambda (B)=-\int_{\partial \Gamma }<B,dS>.  \label{FQeq3.28}
\end{equation}%
Thus, the found above representation (\ref{FQeq3.26}) of the current Lie
algebra (\ref{FQeq3.19}) in the Hilbert space $L_{2}(\mathbb{E}^{3};\mathbb{C%
}),$ \ in the case when $supp$ $B$ $\subset \Gamma ,$ describes the complete
set of observables if \ the charged particle $q$ is not excluded from the
region $\Gamma .$ \ In contrast, if the region $\Gamma $ \ possesses a
potential barrier at the boundary $\partial \Gamma ,$ such that the charged
particle $q$ can not penetrate it and enter the \ region $\Gamma ,$ \ the
value of $\lambda (B)\in \mathbb{R},$ defined by (\ref{FQeq3.28}), remains
constant. This entails that a suitable outside the region $\Gamma $
measurement can certainly indicate the presence of the magnetic field inside 
$\Gamma .$ So, as it was mentioned in \cite{GoMeSh}, the constructed above
current algebra representation completely describes our non-relativistic
quasi-stationary system not giving rise to the Aharonov-Bohm \cite{AhBo}
paradox. \ Moreover, the outside measurements results simply depend on the
representation of the current algebra (\ref{FQeq3.19}), which in turn
depends on the history of the system and the topology of the space outside
the barrier. As a related physical aspect of the explanation above it is
necessary to stress that \ vanishing \ of the magnetic field outside the
region $\Gamma ,$ possessing a nontrivial topology, does not imply the
simultaneous vanishing of the corresponding magnetic potential outside the
region $\Gamma .$ Namely the latter in somewhat obscured form was used in
the analysis of the current algebra representation, suitable for describing
the complete set of physical observables both inside and outside the region $%
\Gamma .$

\bigskip

\subsection{Comments}

Within the scope of this Section we have described the main mathematical
preliminaries and properties of the quantum mathematics techniques suitable
for analytical studying of the important linearization problem for a wide
class of nonlinear dynamical systems in partial derivatives in Hilbert
spaces. This problem was analyzed in much detail using the Gelfand-Vilenkin
representation theory \cite{GeVi,BoLoOkTo} of infinite dimensional groups
and the Goldin-Menikoff-Sharp theory \cite{GoGrPoSh,Gold,GoMeSh} of
generating Bogolubov type functionals, classifying these representations.
The related problem of constructing Fock type space representations and
retrieving their creation-annihilation generating structure still needs a
deeper investigation within the approach devised. Based on the quantum
current Lie algebra description of a bounded non-relativistic quantum system
under an external electromagnetic field within the minimal interaction
principle, the magnetic Aharonow-Bohm effect has been interpreted. As it was
mentioned in \cite{GoMeSh}, the suitably constructed current algebra
representation completely describes our non-relativistic quasi-stationary
system not giving rise to the Aharonov-Bohm \cite{AhBo} paradox. We have
also presented main mathematical preliminaries and properties of the related
quantum mathematics techniques suitable for analytical studying of the
important linearization problem for a wide class of nonlinear dynamical
systems in partial derivatives in Hilbert spaces. Concerning this direction
it is worthy to mention the related results also obtained in \cite%
{SaPrTaPrBl,PrBoGoTa,Kowa,KoSt}, devoted to the application of the Fock
space embedding method to studying solutions to a wide nonlinear dynamical
systems and to constructing quantum computing algorithms.

\newpage

\section{\protect\bigskip Acknowledgements}

A.P. is acknowledged \ to Prof. Edward Kapuscik and Dr. Yuriy Yaremko for
friendly cooperation and important discussions. D.B. thanks the NSF for a
partial support.

\end{document}